\newcommand{\ifarx}[2]{\ifthenelse{\boolean{arxiv}}{#1}{#2}}
\providecommand{\catname}{\mathbf} 
\providecommand{\clsname}{\mathcal}
\providecommand{\oname}[1]{{\operatorname{\mathsf{#1}}}}
\def\defcatname#1{\expandafter\def\csname B#1\endcsname{\catname{#1}}}
\def\defcatnames#1{\ifx#1\defcatnames\else\defcatname#1\expandafter\defcatnames\fi}
\def\defclsname#1{\expandafter\def\csname C#1\endcsname{\clsname{#1}}}
\def\defclsnames#1{\ifx#1\defclsnames\else\defclsname#1\expandafter\defclsnames\fi}
\def\defbbname#1{\expandafter\def\csname BB#1\endcsname{{\bm{\mathsf{#1}}}}}
\def\defbbnames#1{\ifx#1\defbbnames\else\defbbname#1\expandafter\defbbnames\fi}
\def\Set{\catname{Set}}
\DeclareOldFontCommand{\bf}{\normalfont\bfseries}{\mathbf}
\providecommand{\id}{\mathsf{id}}
\providecommand{\xto}[1]{\,\xrightarrow{#1}\,}
\providecommand{\To}{\mathrel{\Rightarrow}}			           %
\providecommand{\dar}{\kern-1.2pt\operatorname{\downarrow}}	
\providecommand{\uar}{\kern-1.2pt\operatorname{\uparrow}}
\providecommand{\inl}{\oname{inl}}
\providecommand{\inr}{\oname{inr}}
\providecommand{\ev}{\oname{ev}}
\providecommand{\pacman}[1]{}					                     %
\newcommand{\undefine}[1]{\let #1\relax}					                       %
\providecommand{\mone}{{\text{\kern.5pt\rmfamily-}\mathsf{\kern-.5pt1}}}
\def\mfix#1{\oname{#1}\@ifnextchar\bgroup\@mfix{}}	       %
\def\@mfix#1{#1\@ifnextchar\bgroup\mfix{}}			           %
\providecommand{\case}[3]{\mfix{case}{\mathbin{}#1}{of}{#2}{\kern-1pt;}{\mathbin{}#3}}
\DeclareMathSymbol{\mathinvertedexclamationmark}{\mathord}{operators}{'074}
\DeclareMathSymbol{\mathexclamationmark}{\mathord}{operators}{'041}
\newcommand{\raisedmathinvertedexclamationmark}{%
  \mathord{\mathpalette\raised@mathinvertedexclamationmark\relax}%
}
\newcommand{\raised@mathinvertedexclamationmark}[2]{%
  \raisebox{\depth}{$\m@th#1\mathinvertedexclamationmark$}%
}
\newcommand{\appp}{\mathsf{app}}
\newcommand{\mybar}[3]{%
  \mathrlap{\hspace{#2}\overline{\scalebox{#1}[1]{\phantom{\ensuremath{#3}}}}}\ensuremath{#3}
}
\newcommand{\thickcdot}{\boldsymbol{\cdot}}
\newcommand{\rp}{{\mathbf{rp}}}
\newcommand{\jrp}{{\mathbf{jrp}}}
\renewcommand{\ne}{{\mathbf{ne}}}
\newcommand{\jne}{{\mathbf{jne}}}
\newcommand{\barf}{\mybar{0.6}{2pt}{f}}
\newcommand{\barh}{\mybar{0.6}{.5pt}{h}}
\newcommand{\barB}{\mybar{0.6}{1.65pt}{B}}
\newcommand{\barD}{\mybar{0.6}{1.65pt}{\D}}
\newcommand{\barR}{\mybar{0.6}{1.65pt}{R}}
\newcommand{\barSigmas}{\mybar{0.9}{0pt}{\Sigmas}}
\newcommand{\barF}{\mybar{0.6}{2pt}{F}}
\newcommand{\bard}{\mybar{0.5}{3pt}{d}}
\newcommand{\barH}{\mybar{0.6}{2pt}{H}}
\newcommand{\barSigma}{\mybar{0.9}{0pt}{\Sigma}}
\newcommand{\TP}{\mathsf{TP}}
\newcommand{\FRel}{\mathbf{FRel}}
\newcommand{\sqleq}{\sqsubseteq}
\newcommand{\ctx}{{\mathrm{ctx}}}
\newcommand{\st}{\mathsf{st}}
\newcommand{\hatP}{\hat{P}}
\newcommand{\qand}{\quad\text{and}\quad}
\newcommand{\qqand}{\qquad\text{and}\qquad}
\newcommand{\SKI}{\ensuremath{\mathbf{SKI}}\xspace}
\newcommand{\BCK}{\ensuremath{\mathbf{BCKI}}\xspace}
\newcommand{\pBCK}{\ensuremath{\mathbf{pBCKI}}\xspace}
\newcommand{\pSKI}{\ensuremath{\mathbf{pSKI}}\xspace}
\renewcommand{\L}{\mathcal{L}}
\newcommand{\B}{\mathbb{B}}
\newcommand{\CLat}{\mathbf{CLat}}
\newcommand{\Sigmas}{\Sigma^{\star}}
\newcommand{\ar}{\mathsf{ar}}
\renewcommand{\epsilon}{\varepsilon}
\renewcommand{\S}{{\mathcal{S}}}
\newcommand{\seq}{\subseteq}
\newcommand{\ol}{\overline}
\newcommand{\outl}{\mathsf{outl}}
\newcommand{\outr}{\mathsf{outr}}
\newcommand{\Qtl}{\mathbf{Qtl}}
\newcommand{\InQtl}{\mathbf{InQtl}}
\providecommand{\C}{}
\providecommand{\D}{}
\providecommand{\E}{}
\renewcommand{\C}{{\mathbb{C}}}
\renewcommand{\D}{{\mathcal{D}}}
\renewcommand{\E}{{\mathbb{E}}}
\renewcommand{\id}{{\mathsf{id}}}
\renewcommand{\Nat}{\mathds{N}}
\renewcommand{\phi}{\varphi}
\newcommand{\f}{\oname{f}}
\newcommand{\takeout}[1]{\empty}
\newcommand{\ini}{\iota}
\DeclareMathOperator{\Alg}{\mathbf{Alg}}
\renewcommand{\rho}{\varrho}
\newcommand{\opp}{\mathsf{op}}
\newcommand{\pullbackangle}[2][]{\arrow[phantom,to path={
                     -- ($ (\tikztostart)!1cm!#2:([xshift=8cm]\tikztostart) $)
                        node[anchor=west,pos=0.0,rotate=#2,
                        inner xsep = 0]
                        {\begin{tikzpicture}[minimum
                        height=1mm,baseline=0,#1]
    \draw[-] (0,0) -- (.5em,.5em) -- (0,1em);
                        \end{tikzpicture}}}]{}}
\renewcommand{\inference}[2]{\infer{~#2~}{~#1~}}
\setlist[enumerate,1]{label=(\arabic*),font=\normalfont,align=left,leftmargin=0pt,labelindent=0pt,listparindent=\parindent,labelwidth=0pt,itemindent=!,topsep=2pt,parsep=0pt,itemsep=2pt,start=1}
\setlist[enumerate,2]{label=(\alph*),font=\normalfont,labelindent=*,leftmargin=*,start=1}
\setlist[itemize]{labelindent=*,leftmargin=*}
\setlist[description]{labelindent=*,leftmargin=*,itemindent=-1 em}
\renewcommand{\c}{\colon}
\newcommand{\HOCoalg}{\mathbf{HOCoalg}}
\newcommand{\pushright}[1]{\ifmeasuring@#1\else\omit\hfill$\displaystyle#1$\fi\ignorespaces}
\newcommand{\pushleft}[1]{\ifmeasuring@#1\else\omit$\displaystyle#1$\hfill\fi\ignorespaces}
\tikzstyle{shiftarr}=[
\tikzset{
    commutative diagrams/.cd,
    arrow style=tikz,
    diagrams={>=stealth},
    row sep=large,
    column sep = huge
}
\theoremstyle{definition}
\newtheorem{defn}[theorem]{Definition} %
\newtheorem{rem}[theorem]{Remark} %
\newcommand\tsup[2][2]{%
 \def\useanchorwidth{T}%
  \ifnum#1>1%
    \stackon[-1.3ex]{\tsup[\numexpr#1-1\relax]{#2}}{\scalebox{2}[1]{$\mathchar"307E$}\kern-.5pt}%
  \else%
    \stackon[-1ex]{#2}{\scalebox{2}[1]{$\mathchar"307E$}\kern-.5pt}%
  \fi%
}
\renewcommand\hat[1]{\hstretch{.71}{\widehat{\hstretch{1.4}{#1}}}}
\numberwithin{equation}{section}
\renewcommand{\xto}[1]{\mathrel{\raisebox{-.75pt}{$\xrightarrow{\;\smash{\raisebox{-1.5pt}{{\scriptsize $#1$}}\;}}$}}}
\let\xmpsto=\xmapsto
\renewcommand{\xmapsto}[1]{\xmpsto{~#1~}}
\DeclareFontFamily{U}{mathc}{}
\DeclareFontShape{U}{mathc}{m}{it}{<->s*[1.03] mathc10}{}
\DeclareMathAlphabet{\morph}{U}{mathc}{m}{it}
\renewcommand{\id}{{\mathsf{id}}}
\renewcommand{\inl}{{\mathsf{inl}}}
\renewcommand{\inr}{{\mathsf{inr}}}
\renewcommand{\outl}{{\mathsf{outl}}}
\renewcommand{\outr}{{\mathsf{outr}}}
\def\monoto{\rightarrowtail}
\renewcommand{\Nat}{\mathbb{N}}
\newcommand{\fset}{\mathbb{F}}
\newcommand{\gcat}{\mathbb{C}}
\newcommand{\wSigma}{
	\mathchoice
		{\boldsymbol{\Sigma}\kern-.61em{\boldsymbol\Sigma}}
		{\boldsymbol{\Sigma}\kern-.61em{\boldsymbol\Sigma}}
		{\boldsymbol{\Sigma}\kern-.45em{\boldsymbol\Sigma}}
		{\boldsymbol{\Sigma}\kern-.35em{\boldsymbol\Sigma}}
}
\newcommand{\app}{\,}
\theoremstyle{definition}
\newtheorem{notation}[theorem]{Notation}
\begin{document}\allowdisplaybreaks

\title[Higher-Order Behavioural Conformances via Fibrations]{Higher-Order Behavioural Conformances via Fibrations}         %

\author{Henning Urbat}
\authornote{Funded by the Deutsche Forschungsgemeinschaft (DFG, German
  Research Foundation) -- project numbers 470467389 and 569130867}  %
\orcid{0000-0002-3265-7168}             %
\affiliation{
  \institution{Friedrich-Alexander-Universität Erlangen-Nürnberg}            %
  \city{Erlangen}
  \country{Germany}                    %
}
\email{henning.urbat@fau.de}          %

\begin{abstract}
Coinduction is a widely used technique for establishing behavioural equivalence of programs in higher-order languages. In recent years, the rise of languages with quantitative (e.g.~probabilistic) features has led to extensions of coinductive methods to more refined types of behavioural conformances, most notably notions of behavioural distance. To guarantee soundness of coinductive reasoning, one needs to show that the behavioural conformance at hand forms a program congruence, i.e.~it is suitably compatible with the operations of the language. This is usually achieved by a complex proof technique known as \emph{Howe's method}, which needs to be carefully adapted to both the specific language and the targeted notion of behavioural conformance. We develop a uniform categorical approach to Howe's method that features two orthogonal dimensions of abstraction: (1) the underlying higher-order language is modelled by an \emph{abstract higher-order specification} (\emph{AHOS}), a novel and very general categorical account of operational semantics, and (2) notions of behavioural conformance (such as relations or metrics) are modelled  via \emph{fibrations} over the base category of an AHOS. Our main result is a fundamental congruence theorem at this level of generality: Under natural conditions on the categorical ingredients and the operational rules of a language modelled by an AHOS, the greatest behavioural (bi)conformance on its operational model forms a congruence. We illustrate our theory by deriving congruence of bisimilarity and behavioural pseudometrics for probabilistic higher-order languages.
\end{abstract}

\begin{CCSXML}
  <ccs2012>
  <concept>
  <concept_id>10003752.10010124.10010131.10010137</concept_id>
  <concept_desc>Theory of computation~Categorical semantics</concept_desc>
  <concept_significance>500</concept_significance>
  </concept>
  <concept>
  <concept_id>10003752.10010124.10010131.10010134</concept_id>
  <concept_desc>Theory of computation~Operational semantics</concept_desc>
  <concept_significance>500</concept_significance>
  </concept>
  </ccs2012>
\end{CCSXML}

\ccsdesc[500]{Theory of computation~Categorical semantics}
\ccsdesc[500]{Theory of computation~Operational semantics}

\keywords{Higher-Order Languages, Behavioural Distances, Howe's Method}

\maketitle

\section{Introduction}
One of the most fundamental, and most challenging, questions in the theory of higher-order languages concerns the \emph{behavioural equivalence} of programs: what precisely does it mean for two programs to `behave in the same way', and how does one actually prove it? In the world of operational semantics, the ubiquitous notion of program equivalence is \emph{contextual equivalence}~\cite{morris}. Two programs $p$ and $q$ are contextually equivalent if, whichever context (program with a hole) $C[\cdot]$ they are plugged into, the resulting programs $C[p]$ and $C[q]$ expose the same observable behaviour, e.g.\ they co-terminate, terminate with the same probability, or output the same values.

While the definition of contextual equivalence is simple and natural, it is somewhat difficult to work with: to prove that  $p$ and $q$ are contextually equivalent, one would need to analyse the observable behaviour of $C[p]$ and $C[q]$ for all (infinitely many) contexts at the same time. Therefore, efficient proof techniques are needed. One fundamental approach is given by \emph{coinduction}. Here, the idea is to view the set of programs as an \emph{applicative transition system}~\cite{Abramsky:lazylambda}, that is, a labelled transition system (LTS) whose states and labels are both given by programs. For instance, in the case of the untyped call-by-name $\lambda$-calculus, the term $\lambda x. t$ has the outgoing transitions $\lambda x. t \xto{e} t[e/x]$ for all $\lambda$-terms~$e$, reflecting the intuition that $\lambda x. t$ behaves like a function mapping the input $e$ to the output $t[e/x]$. Like every LTS, the applicative transition system on $\lambda$-terms comes with a notion of \emph{bisimilarity}~\cite{sangiorgi11} expressing behavioural equivalence of states. The key observation, due to \citet{Abramsky:lazylambda}, is that bisimilarity implies contextual equivalence. Thus, coinduction yields a sound proof technique: to show that $p$ and $q$ are contextually equivalent, it suffices to find a bisimulation containing them. Usually, this is much easier than proving contextual equivalence directly. 

The soundness of coinduction essentially amounts to the fact that the bisimilarity relation is a \emph{congruence}, that is, compatible with the operations of the language. The standard approach to establishing that congruence property is \emph{Howe's method}~\cite{DBLP:conf/lics/Howe89,DBLP:journals/iandc/Howe96}. It has proven extremely robust and has been adapted to derive soundness results for a wide range of languages, including languages with different forms of evaluation (call-by-name~\cite{pitts_2011}, call-by-value~\cite{pitts_2011}, call-by-push-value~\cite{gtu25}), typed languages~\cite{pitts1997operationally}, probabilistic languages (both discrete~\cite{dsa14} and continuous~\cite{dg19}), stateful languages~\cite{jr99,gmstu25}, languages with monadic computational effects~\cite{dgl17}, higher-order process calculi~\cite{ls15}, and many more. Additionally, the emergence of quantitative (e.g.~probabilistic) higher-order languages has led to quantitative extensions of Howe's method, where contextual equivalence is replaced with the more fine-grained notion of \emph{contextual distance} (measuring how much contexts can discriminate programs) and bisimilarity is generalized to \emph{behavioural pseudometrics}~\cite{cdl15,gavazzo18}.

In practice, the use of Howe's method is highly non-trivial. This is mainly for three reasons:

\begin{enumerate}
\item\label{issue1} Howe's method is typically applied on a per case basis, that is, its principles are reinvented from scratch for every individual language and carefully adapted to its features. Every modification of the language at hand (e.g.\ extending its type system, adding states, adding probabilistic features) essentially requires starting anew, or at least laboriously reworking existing proofs.
\item\label{issue2} Orthogonally, there is a wide spectrum of \emph{behavioural conformances}~\cite{bgkmfsw24} of interest. Is the targeted notion plain behavioural equivalence, or some kind of behavioural distance? In the latter case, are distances supposed to be rational numbers, or real numbers, or more complex objects, such as functions~\cite{dgy19}, or elements of a quantale~\cite{gavazzo18}? Does `congruence' mean non-expansivity of the language operations w.r.t.~some distance, or something else, like Lipschitz-continuity? The suitable type of conformance depends on the nature of the given language and the kind of reasoning tasks one aims to address, and again every choice calls for its own incarnation of Howe's method.
\item\label{issue3} Even for a fixed language and conformance type, congruence proofs based on Howe's method tend to be complex and subtle, requiring tedious case distinctions over the syntax and operational rules of the language, while at the same containing plenty of `bookkeeping' tasks that essentially work in the same way for all languages and bloat the argument. The high-level intuition \emph{why} the method works is not always easy to grasp. This difficulty can be partly attributed to the fact that coinduction is an inherently first-order concept that is now applied to a higher-order setting.
\end{enumerate}

\paragraph*{Contribution.} 
We present a novel \emph{categorical} approach to coinductive reasoning on higher-order languages that addresses all three issues discussed above at the same time.
\begin{enumerate}
\item One key reason for the need to reinvent Howe's method for every language is that there is simply no sufficiently universal (categorical) notion of \emph{higher-order language}. We introduce \emph{abstract higher-order specifications} ($\emph{AHOS}$), a general and flexible categorical formalism that presents the operational semantics of a higher-order language as a \emph{rule morphism} in a category~$\B$. Roughly, a rule morphism specifies the operational rules of a language by distributing its syntax (given by an endofunctor $\Sigma\colon \B\to \B$) over its behaviour (given by a mixed variance bifunctor $B\colon \B^\opp\times \B\to \B$). Every AHOS naturally induces an \emph{operational model}, a transition system ({higher-order coalgebra}) $\gamma\colon \Lambda\to B(\Lambda,\Lambda)$ on the object $\Lambda$ of program terms that runs programs according to the given rules. A related idea appears in the \emph{higher-order abstract GSOS} framework recently introduced by \citet{gmstu23}. The latter captures operational rules whose model is definable via \emph{structural induction}, typical for small-step operational semantics. However, it is not capable of handling the more powerful \emph{rule induction} needed for big-step semantics, which is, e.g., the standard semantics for coinductive reasoning on probabilistic languages. AHOS overcome this limitation by combining the principles of higher-order abstract GSOS with Rot's \emph{(first-order) monotone biGSOS}~\cite{rot19} framework.
\item To get a handle on the plethora of different types of behavioural conformance, we put a \emph{fibration} above the base category of an AHOS. Fibrations allow to model notions of behavioural equivalence and behavioural distance for coalgebras in a flexible and uniform way~\cite{hj98,bbkk18}. In fact, in the fibrational framework, moving from qualitative to quantitative reasoning just amounts to switching from a fibration of relations to a fibration of suitable fuzzy relations. 
\item We develop Howe's method at the generality of higher-order languages modelled by AHOS and fibrational behavioural conformances. This enables us to establish a strikingly general congruence result (\Cref{thm:app-struct-cong-bisim}), which informally states:

\smallskip\noindent
\emph{If the operational rules specifying a higher-order language are sufficiently monotone and continuous, then the greatest behavioural (bi)conformance on its operational model forms a congruence.}
\smallskip\noindent 

Here the \emph{greatest behavioural (bi)conformance} is the fibrational abstraction of (bi)similarity and its quantitative analogues. The importance of this abstract result is that it systematically reduces the congruence property of a language to a number of simple conditions on the categorical ingredients and the operational rules of the language, while suppressing the `generic', language-independent aspects of ad hoc congruence proofs. In this way, the complexity of deriving congruence results is dramatically reduced.  We will demonstrate this point by giving short and simple congruence proofs for probabilistic higher-order languages corresponding to probabilistic $\lambda$-calculi~\cite{dsa14,cdl15}.
\end{enumerate}

\paragraph*{Related Work} The need for categorical tools to address the complexities of Howe's method has been identified by several researchers, and systematic categorical accounts of Howe's method have been an active topic of investigation in recent years. However, unlike our present paper, all approaches so far were limited to reasoning about behavioural \emph{equivalence}, rather than quantitative notions of behavioural distance or (fibrational) conformances in general. Moreover, they model languages in categorical frameworks that are more specific than our very general AHOS setup.

\citet{UrbatTsampasEtAl23} develop Howe's method at the generality of the higher-order abstract GSOS framework~\cite{gmstu23}, which models languages with a small-step operational semantics by suitable dinatural transformations. They prove a congruence result for the {similarity} relation on operational models. The non-trivial step to \emph{bi}similarity is missing. In the present paper, we achieve this step by developing Howe's \emph{transitive closure trick} at fibrational generality.

\citet{DBLP:conf/lics/BorthelleHL20} and \citet{DBLP:journals/lmcs/HirschowitzL22} model operational rules 
as endofunctors on a specific presheaf category of \emph{transition systems} over models
of a signature endofunctor, and the initial algebra for the rule
endofunctor represents the induced transition system for the given
semantics. In this setting, they achieve a congruence result for bisimilarity. Their categorical setup is substantially different, and technically more complex, than AHOS. 

Dal Lago et al.~\cite{DBLP:conf/lics/LagoGL17} develop Howe's method and prove a congruence result for bisimilarity in a setting of call-by-value $\lambda$-calculi
with algebraic effects, based on the theory of relators. Their notion
of a \emph{computational} $\lambda$-calculus is parametrized over a
signature $\Sigma$ and a monad $T$ on the category of sets, representing syntax and
effects of the language. 

While fibrations have not appeared in the literature on coinduction for higher-order languages, their general usefulness for operational reasoning has been recognized. For instance, in recent work, \citet{dg24} model notions of operational logical relations in a fibrational setting.

\section{Preliminaries}\label{sec:category-theory}

Our abstract approach to behavioural conformances for higher-order languages uses some basic concepts from category theory~\cite{mac2013categories}, in particular, algebras for a functor (which capture sets of program terms) and coalgebras for a functor (which capture applicative transition systems).  In the following we review some categorical terminology used throughout our paper.

\paragraph*{Notation}
For objects
$X_1, X_2$ of a category $\C$, we denote their product by $X_1\times X_2$, the projections by $\outl\colon X_1\times X_2\to X_1$ and $\outr\colon X_1\times X_2\to X_2$, and
the pairing of morphisms $f_i\c X\to X_i$, $i=1,2$, by $\langle f_1, f_2\rangle\c
X\to X_1\times X_2$. We write
$X_1+X_2$ for the coproduct, $\inl\c X_1\to X_1+X_2$ and
$\inr\c X_2\to X_1+X_2$ for its injections, $[g_1,g_2]\c X_1+X_2\to X$ for the copairing of morphisms $g_i\colon X_i\to X$,
$i=1,2$, and $\nabla=[\id_X,\id_X]\colon X+X\to X$ for the codiagonal. For every category $\C$, we let $\C^{(2)}$ denote the category that has the same objects as $\C$, and as morphisms $(f,g)\colon X\to Y$ all pairs of morphisms from $X$ to $Y$ in $\C$. Composition and identities are formed componentwise. 
 Note that $(\B\times \C)^{(2)}\cong \B^{(2)}\times \C^{(2)}$ and $(\C^\opp)^{(2)}\cong(\C^{(2)})^\opp$. Every functor $F\colon \B\to \C$ induces a functor $F^{(2)}\colon \B^{(2)}\to \C^{(2)}$ given by 
$F^{(2)}X=FX$ and $F^{(2)}(f,g)=(Ff,Fg)$. The category $\C$ forms a non-full subcategory of $\C^{(2)}$ via the embedding functor
$I\colon \C\monoto \C^{2}$ given by $X\mapsto X$ and $f\mapsto (f,f)$.

\paragraph*{Algebras}
Let $\Sigma$ be an endofunctor on a category $\gcat$. A \emph{$\Sigma$-algebra} $(A,a)$ consists of an object~$A$ (the \emph{carrier} of the
algebra) and a morphism $a\colon \Sigma A\to A$ (its \emph{structure}). A
\emph{morphism} from $(A,a)$ to a $\Sigma$-algebra $(B,b)$ is a morphism
$h\colon A\to B$ of~$\gcat$ such that $h\circ a = b\circ \Sigma h$. Algebras
for $\Sigma$ and their morphisms form a category, which we denote by $\Alg(\Sigma)$. An \emph{initial algebra} is an initial object of $\Alg(\Sigma)$; if it exists, we denote its carrier by $\mu \Sigma$ 
and its structure by ${\ini\colon \Sigma(\mu \Sigma) \to \mu \Sigma}$. The structure $\ini$ is an isomorphism in $\C$.
A \emph{free $\Sigma$-algebra} generated by an object $X$ of $\gcat$ is a
$\Sigma$-algebra $(\Sigma^{\star}X,\iota_X)$ together with a morphism
$\eta_X\c X\to \Sigma^{\star}X$ of~$\gcat$ such that for every algebra $(A,a)$
and every morphism $h\colon X\to A$ in $\gcat$, there exists a unique
$\Sigma$-algebra morphism $h^\#\colon (\Sigma^{\star}X,\iota_X)\to (A,a)$
such that $h=h^\#\circ \eta_X$; the morphism $h^\#$ is called the \emph{free
  extension} of $h$. If a free $\Sigma$-algebra exists on every object $X\in \C$, then their formation extends to a monad $\Sigma^\star\colon \C\to \C$, the \emph{free monad} generated by $\Sigma$. For every $\Sigma$-algebra $(A,a)$ we write
$\hat{a} \colon \Sigma^{\star} A \to A$ for the free extension of $\id_A\c A\to A$.

A prime example of functor algebras are algebras for a
signature. An \emph{(algebraic) signature} consists of a set~$\Sigma$
of \emph{operation symbols} and a map $\ar\colon \Sigma\to \Nat$
associating to every $\f\in \Sigma$ its \emph{arity}. Operation symbols of arity $0$ are called \emph{constants}. Every
signature~$\Sigma$ induces an endofunctor on the category $\Set$ of sets and functions, denoted by the
same letter $\Sigma$ and defined by $\Sigma X = \coprod_{\f\in \Sigma} X^{\ar(\f)}$. (Such endofunctors are called \emph{polynomial functors}.) An algebra for the functor $\Sigma$ is
precisely an algebra for the signature~$\Sigma$: a set $A$ with an operation $\f^A\colon A^n\to A$ for every $n$-ary operation symbol $\f\in \Sigma$. Morphisms of $\Sigma$-algebras are
maps respecting the algebraic structure. Given a set $X$ of
variables, the free algebra $\Sigmas X$ is the $\Sigma$-algebra of
$\Sigma$-terms with variables from~$X$; more precisely, $\Sigmas X$ is inductively defined by $X\seq \Sigmas X$ and $\f(t_1,\ldots,t_n)\in \Sigmas X$ for all $\f\in \Sigma$ of arity $n$ and $t_1,\ldots,t_n\in \Sigmas X$.
The free
algebra on the empty set is the initial algebra~$\mu \Sigma$; it is
formed by all \emph{closed terms} of the signature.  For every
$\Sigma$-algebra $(A,a)$, the induced map
$\hat{a}\colon \Sigmas A \to A$ evaluates terms in~$A$.

\paragraph*{Coalgebras}
Dually to the notion of algebra, a \emph{coalgebra} for an endofunctor $B$ on $\C$ is a pair $(C,c)$ consisting of an object $C$ (the
\emph{state space}) and a morphism $c\colon C\to BC$ (its
\emph{structure}). A \emph{cofree $B$-coalgebra} generated by an object $X$ of $\gcat$ is a
$B$-coalgebra $(B^{\infty}X,\tau_X)$ together with a morphism
$\epsilon_X\c B^{\infty}X\to X$ of~$\gcat$ such that for every coalgebra $(C,c)$
and every morphism $h\colon C\to X$ in $\gcat$, there exists a unique
$B$-coalgebra morphism $\hat h\colon (C,c)\to (B^\infty X,\tau_X)$
such that $h=\epsilon_X\circ {\hat{h}}$; the morphism $\hat h$ is called the \emph{cofree
  extension} of $h$. (We use the same notation $\widehat{(-)}$ for free and cofree extensions for symmetry; the kind of extension will always be clear from the context.)
If a cofree coalgebra exists on every $X\in \C$, their formation yields a comonad $B^\infty\colon \C\to \C$, the \emph{cofree comonad} generated by $B$. 

Informally, coalgebras are abstractions of transition systems, and $B^\infty X$ consists of all possible unravelled behaviours of $B$-coalgebras whose states are colored using colors from $X$. For example, for the polynomial functor $BC = C\times C + 1$ on $\Set$, a coalgebra $c\colon C\to C\times C+1$ corresponds to a deterministic labelled transition system with labels $\{l,r\}$ where every state either has a single outgoing $l$-transition and a single outgoing $r$-transition, or is terminating. The cofree coalgebra $B^\infty X$ is given by the set of all (possibly infinite) ordered binary trees with $X$-colored nodes.

A \emph{higher-order coalgebra} for a mixed variance bifunctor $B\colon \C^\opp\times \C\to \C$ is a pair $(C,c)$ of an object $C$ and a morphism $c\colon C\to B(C,C)$, that is, a coalgebra for the endofunctor $B(C,-)$ with state space $C$. We write $\HOCoalg_C(B)$ for the set of all higher-order coalgebras with state space $C$. Higher-order coalgebras can be regarded as an abstract capturing of \emph{applicative transition systems}~\cite{Abramsky:lazylambda}, i.e.~labelled transition systems where both the states and the transition labels are given by programs of some higher-order language; see the next section for examples. Bisimulations and behavioural metrics of higher-order programs are studied in terms of such transition systems.

\paragraph*{Probability Distributions}
A \emph{countable probability distribution} on a set $X$ is a map $\phi\colon X\to [0,1]$ whose support $\{ x\in X\mid \phi(x)\neq 0 \}$ is a countable set and that satisfies $\sum_{x\in X} \phi(x) = 1$. We represent a distribution $\phi$ as a formal sum $\phi=\sum_{i\in I} p_i\cdot x_i$ where $I$ is a countable index set, $x_i\in X$ for all $i\in I$, and $\phi(x)=\sum_{x_i=x} p_i$ for all $x\in X$. For $\phi,\psi\in \D X$ and $p\in [0,1]$ we write $p\cdot \phi + (1-p)\cdot \psi\in \D X$ for the distribution  $x\mapsto p\cdot \phi(x)+(1-p)\cdot \psi(x)$. The \emph{(countable) distribution functor} $\D\colon \Set\to\Set$ associates to each set $X$ the set $\D X$ of countable probability distributions on $X$, and to each map $f\colon X\to Y$ the map $\D f \colon \D X\to \D X$ given by $\D f(\sum_{i\in I} p_i\cdot x_i) =  \sum_{i\in I} p_i\cdot f(x_i)$. The functor $\D$ extends to a monad with unit $\eta\colon X\to \D X$ and multiplication $\mu\colon \D\D X\to \D X$ given by
\[
\eta(x)=1\cdot x\qqand
 \mu(\sum_{i\in I} p_i\cdot \phi_i)=\sum_{i\in I}\sum_{j\in J_i} p_i\cdot p_{i,j}\cdot x_{i,j} \quad\text{where}\quad \phi_i=\sum_{j\in J_i} p_{i,j}\cdot x_{i,j}. 
\]
Finally, we let $\st\colon \D X\times Y\to \D(X\times Y)$ denote the (right) strength of the monad $\D$, given by
\[ \st(\sum_{i\in I} p_i\cdot x_i, y) = \sum_{i\in I} p_i\cdot (x_i,y). \]

\section{Probabilistic Combinatory Logic}\label{sec:prob-comb-logic}
To illustrate the developments of this paper, we introduce as running examples two simple probabilistic higher-order languages that take the form of combinatory logics~\cite{hindley2008lambda}. Recall that combinatory logics serve as alternative presentations of $\lambda$-calculi that avoid explicit notions of variables, binding, $\alpha$-renaming, and substitution. This technical simplification allows us to model the syntax and operational semantics of our languages within the category of sets and focus on the key aspects of our work. More complex higher-order languages, featuring e.g.\ variables, types, store, can be modelled by moving froms sets to suitable presheaf categories (cf.\ \Cref{sec:conclusion} for a discussion). 

Our two example languages, $\pSKI$ and $\pBCK$, are probabilistic extensions of the well-known combinatory logics $\SKI$ and $\BCK$~\cite{hindley2008lambda}, respectively. $\SKI$ is computationally equivalent to the untyped call-by-name $\lambda$-calculus, while $\BCK$ is computationally equivalent to the \emph{affine} untyped call-by-name $\lambda$-calculus. The latter is the restriction of the $\lambda$-calculus to \emph{affine} $\lambda$-terms, where in no subterm a variable is allowed to occur freely more than once. For instance, $\lambda x.\lambda y. x\app y$ is an affine $\lambda$-term, but $\lambda x.\, x\app x$ is not, because the subterm $x\, x$ has two free occurrences of $x$.
 
\subsection{\SKI}
As a warm-up, we first recall the deterministic \SKI calculus. Its terms are formed over the signature
\begin{equation}\label{eq:sigma-SKI} \Sigma_\SKI = \{\, S:0,\, S':1,\, S'':2,\, K:0,\, K':1,\, I:0,\, \Omega:0,\,\appp:2\,\}, \end{equation}
with arities as indicated. We let $\Lambda_\SKI$ denote the set of closed $\Sigma_\SKI$-terms (i.e.\ $\Lambda_\SKI=\mu\Sigma_\SKI$). The intended meaning of the above operations is as follows. The binary operator $\appp$ corresponds to function application; accordingly, we write $t\, s$ for $\mathsf{app}(t,s)$. 
The constant $\Omega$ represents a non-terminating computation. The constants $S$, $K$, $I$ represent the $\lambda$-terms 
$S=\lambda x.\,\lambda y.\,\lambda z.\, (x\, z)\, (y\, z)$,  $K=\lambda x.\,\lambda y.\, x$, and $I=\lambda x.\,x$.
Finally, the primed versions of $S$ and $K$ correspond to applications of the respective $\lambda$-terms: the terms $S'(t)$, $S''(t,s)$, and $K'(t)$ behave like $S\, t$, $S\, t\, s$, and $K\, t$, respectively. Our presentation of combinatory logics using the auxiliary primed operators follows \citet{GianantonioRPO} and leads to a consistent operational semantics where all combinators accept exactly one argument, which is well-suited for reasoning about behavioural conformances. In particular, one cannot tell combinators apart just based on the number of their arguments.

The (big-step) operational semantics of \SKI consists of judgements of the form \[t\Downarrow f \qquad \text{where}\qquad \text{$t\in {\Lambda_{\SKI}}$}\qand \text{$f\in{\Lambda_{\SKI}}^{\Lambda_{\SKI}}$},\] specified by the rules of \Cref{fig:skirules-untyped1} with $s,t,u$ ranging over $\Lambda_\SKI$.
\begin{figure*}[t]
  \begin{gather*}
    \inference{}{S\Downarrow (t\mapsto S'(t))}
    \qquad \inference{}{S'(t)\Downarrow (s\mapsto S''(t,s))} \qquad \inference{}{S''(t,s)\Downarrow  (u\mapsto (t\, u)\, (s\, u))}  
  \\[1ex]
\inference{}{K\Downarrow  (t\mapsto K'(t))} \qquad
 \inference{}{K'(t)\Downarrow (s\mapsto t) } \qquad
  \inference{}{I\Downarrow  (t\mapsto t)} \qquad
   \inference{t\Downarrow f\quad f(s)\Downarrow g}{t\, s \Downarrow g}
  \end{gather*}
  \caption{Operational semantics of \SKI.}
  \label{fig:skirules-untyped0}
\end{figure*}
 Informally, $t\Downarrow f$ means that the program $t$ normalizes and computes the function $f\in {\Lambda_{\SKI}}^{\Lambda_{\SKI}}$. If there is no such $f$ (e.g.\ for $t=\Omega$), then $t$ diverges. To make this precise, we consider the behaviour bifunctor
\begin{equation}\label{eq:beh-pBCK-untyped0} B_0\colon \Set^\opp \times \Set \to \Set \qquad \text{given by}\qquad  B_0(X,Y)=\{\bot\}+Y^{X}.\end{equation}
We regard $B_0({\Lambda_{\SKI}},{\Lambda_{\SKI}})$ as a flat DCPO with bottom element $\bot$. Then the set $\HOCoalg_{\Lambda_\SKI}(B_0)$ of higher-order coalgebras on ${\Lambda_{\SKI}}$ is also a DCPO with the pointwise partial order $\gamma\leq \gamma'$ iff $\gamma(t)\leq \gamma'(t)$ for all $t\in {\Lambda_{\SKI}}$. Given a higher-order coalgebra $\gamma\colon \Lambda_\SKI \to B_0(\Lambda_\SKI,\Lambda_\SKI)$ we write $t\Downarrow f$ if $\gamma(t)=f$, and we say that $\gamma$ \emph{satisfies} the rules of \Cref{fig:skirules-untyped1} if whenever there are $\gamma$-transitions matching the premises of some rule, then there is a $\gamma$-transition given by its conclusion.
The \emph{operational model} of \SKI is the \emph{least} higher-order coalgebra
\begin{equation}\label{eq:op-model-pSKI-untyped0}
\gamma_\SKI\colon \Lambda_{\SKI} \to B_0(\Lambda_\SKI,\Lambda_\SKI)
\end{equation}
satisfying the rules of \Cref{fig:skirules-untyped1}. Note that the least coalgebra $\gamma_{\SKI}$ exists by the Knaster-Tarski theorem\footnote{Every monotone endomap on a DCPO with $\bot$ has a least fixed point.}: it is the least fixed point of the monotone map 
\begin{equation}\label{eq:Gamma-SKI} \Gamma\colon \HOCoalg_{\Lambda_{\SKI}}(B_0)\to\HOCoalg_{\Lambda_{\SKI}}(B_0) \end{equation}
that sends a coalgebra $\gamma$ to the coalgebra $\Gamma(\gamma)$ induced by the rules of \Cref{fig:skirules-untyped1}; for instance, $\Gamma(\gamma)(S)= (t\mapsto S'(t))$ and if $\gamma(t)=f$ and $\gamma(f(s))=g$ where $f,g\in \Lambda_\SKI^{\Lambda_\SKI}$, then $\Gamma(\gamma)(t\, s)=g$.

\subsection{\pSKI}
The probabilistic extension of $\SKI$ adds a binary operator $\oplus$ corresponding to probabilistic choice: the term $t\oplus s$ behaves like $t$ or $s$ with probability $\frac{1}{2}$ each. Thus, the signature of $\pSKI$ is given by
\begin{equation}\label{eq:sigma-pSKI} \Sigma_\pSKI = \{\, S:0,\, S':1,\, S'':2,\, K:0,\, K':1,\, I:0,\, \Omega:0,\,\appp:2,\, \oplus:2 \,\}. \end{equation}

The (big-step) operational semantics of \pSKI consists of judgements of the form \[t\Downarrow \phi \qquad \text{where}\qquad \text{$t\in {\Lambda_{\pSKI}}$}\qand \text{$\phi\in \D(\{\bot\}+{\Lambda_{\pSKI}}^{\Lambda_{\pSKI}})$},\] specified by the rules of \Cref{fig:skirules-untyped1} with $s,t,u$ ranging over $\Lambda_\pSKI$.
\begin{figure*}[t]
  \begin{gather*}
    \inference{}{S\Downarrow 1\cdot(t\mapsto S'(t))}
    \qquad \inference{}{S'(t)\Downarrow 1\cdot(s\mapsto S''(t,s))} \qquad \inference{}{S''(t,s)\Downarrow 1\cdot (u\mapsto (t\, u)\, (s\, u))}  
  \\[1ex]
\inference{}{K\Downarrow 1\cdot (t\mapsto K'(t))} \qquad
 \inference{}{K'(t)\Downarrow 1\cdot(s\mapsto t) } \qquad
  \inference{}{I\Downarrow 1\cdot (t\mapsto t)} 
 \\[1ex]
\inference{}{\Omega\Downarrow  1\cdot \bot}
\qquad
     \inference{t\Downarrow \phi\quad s\Downarrow \psi}{t \oplus s\Downarrow  \frac{1}{2}\cdot \phi + \frac{1}{2}\cdot \psi}  
\qquad
   \inference{t\Downarrow p\cdot \bot + \sum_i p_i\cdot f_i\quad f_i(s)\Downarrow \sum_{j\in J_i} p_{i,j}\cdot b_{i,j}  \, (i\in I)}{t\, s \Downarrow p\cdot \bot + \sum_{i\in I}\sum_{j\in J_i} p_i\cdot p_{i,j}\cdot b_{i,j} }
  \end{gather*}
  \caption{Operational semantics of \pSKI.}
  \label{fig:skirules-untyped1}
\end{figure*}
 Informally, $t\Downarrow p\cdot \bot + \sum_i p_i\cdot f_i$ means that the program $t$ diverges with probability $p$, and computes the function $f_i\in {\Lambda_{\pSKI}}^{\Lambda_{\pSKI}}$ with probability $p_i$. To make this precise, we extend the behaviour bifunctor \eqref{eq:beh-pBCK-untyped0} to
\begin{equation}\label{eq:beh-pBCK-untyped} B\colon \Set^\opp \times \Set \to \Set \qquad \text{given by}\qquad  B(X,Y)=\D(\{\bot\}+Y^{X}).\end{equation}
The set $B({\Lambda_{\pSKI}},{\Lambda_{\pSKI}})$ again carries a DCPO structure given by $\phi\leq\psi$ iff $\phi(f)\leq \psi(f)$ for all $f\in {\Lambda_{\pSKI}}^{\Lambda_{\pSKI}}$, and this yields a DCPO structure on the set $\HOCoalg_{\Lambda_{\pSKI}}(B)$ by pointwise extension. Given a higher-order coalgebra $\gamma\colon \Lambda_\pSKI \to B(\Lambda_\pSKI,\Lambda_\pSKI)$ we write $t\Downarrow \phi$ if $\gamma(t)=\phi$, and we say that $\gamma$ \emph{satisfies} the rules of \Cref{fig:skirules-untyped1} if whenever there are $\gamma$-transitions matching the premises of some rule, then there is a $\gamma$-transition given by its conclusion.
The \emph{operational model} of \pSKI is the \emph{least} higher-order coalgebra
\begin{equation}\label{eq:op-model-pSKI-untyped}
\gamma_\pSKI\colon \Lambda_{p\SKI} \to B(\Lambda_\pSKI,\Lambda_\pSKI)
\end{equation}
satisfying the rules of \Cref{fig:skirules-untyped1}. This corresponds to the standard approximation semantics for probabilistic $\lambda$-calculi~\cite{dallago_zorzi12}. This is the least fixed point of the monotone map 
\begin{equation}\label{eq:Gamma-pSKI} \Gamma\colon \HOCoalg_{\Lambda_{\pSKI}}(B)\to  \HOCoalg_{\Lambda_{\pSKI}}(B) \end{equation}
that sends a coalgebra $\gamma$ to the coalgebra $\Gamma(\gamma)$ induced by the rules of \Cref{fig:skirules-untyped1}; for instance, if $\gamma(t)=\phi$ and $\gamma(s)=\psi$, then $\Gamma(\gamma)(t\oplus s)=\frac{1}{2}\cdot \phi + \frac{1}{2}\cdot \psi$. Monotonicity of $\Gamma$ amounts to the observation that the rules of \Cref{fig:skirules-untyped1} are monotone: whenever the premises of a rule are grown according to the order $\leq$, then the conclusion also grows. For instance, in the case of the rule for $\oplus$, if $\phi\leq \phi'$ and $\psi\leq\psi'$, then $\frac{1}{2}\cdot \phi + \frac{1}{2}\cdot \psi\leq \frac{1}{2}\cdot \phi' + \frac{1}{2}\cdot \psi'$.

A natural notion of program equivalence between $\pSKI$-terms is given by \emph{contextual equivalence}~\cite{dsa14}. A \emph{context} is a $\Sigma_\pSKI$-term in the single variable `$\cdot$' (the `hole'), which appears at most once in the term. We denote a context by $C[\cdot]$ and write $C[t]$ for the outcome of substituting a term $t\in \Lambda$ for the hole. 
\emph{Contextual equivalence} is the equivalence relation on $\Lambda_\pSKI$ defined by
\[ t\approx^\pSKI_\ctx s \iff \forall C[\cdot].\, \gamma_\pSKI(C[t])(\bot)=\gamma_\pSKI(C[s])(\bot). \]
Thus, two programs are equivalent if in every context they terminate with the same probability. Our theory of congruence developed below will provide coinductive methods to reason about $\approx_\ctx^\pSKI$.

\subsection{\pBCK}
The terms of $\pBCK$ are formed over the signature
\begin{equation}\label{eq:sigma-pBCK} \Sigma_\pBCK = \{\, B:0,\, B':1,\, B'':2,\, C:0,\, C':1,\, C'':2,\, K:0,\, K':1,\, I:0,\, \Omega:0,\,\appp:2,\, \oplus:2 \,\}. \end{equation}
We let $\Lambda_\pBCK$ denote the set of closed $\Sigma_\pBCK$-terms. In comparison to $\pSKI$, we have dropped the non-affine combinator $S$ and added the combinators $B$ and $C$, which represent the affine $\lambda$-terms
\[ B=\lambda x.\,\lambda y.\,\lambda z.\, x\, (y\, z), \qquad C = \lambda x.\,\lambda y.\,\lambda z.\, (x\, z)\, y.\]
The operational rules for $B$ and $C$ (and their respective primed versions) are given in \Cref{fig:skirules-untyped}, and the rules for all remaining operations are the same as for $\pSKI$ (\Cref{fig:skirules-untyped1}). As before, the operational model of $\pBCK$ is the least higher-order coalgebra
\begin{equation}\label{eq:op-model-pBCK-untyped} \gamma_\pBCK\colon \Lambda_\pBCK\to B(\Lambda_\pBCK,\Lambda_\pBCK)  \end{equation}
satisfying all operational rules of $\pBCK$. This time, rather than contextual equivalence, we consider the \emph{contextual pseudometric}~\cite{cdl15} for $\pBCK$, which is the pseudometric on $\Lambda_\pBCK$ defined by
\[ d_\ctx^\pBCK(t,s) = \sup_{C[\cdot] \text{ context}} |\gamma(C[t])(\bot)-\gamma(C[s])(\bot)|. \]
Thus $d_\ctx^\pBCK(t,s)$ measures to what extent contexts are able to distinguish $t$ from $s$ by observing termination probabilities. At this stage, the contextual pseudometric is easily defined but hard to use (the reader may try to prove the `obvious' statement $d_\ctx^\pBCK(I,I\oplus \Omega)=\frac{1}{2}$). Again, our categorical congruence results will lead to efficient coinductive proof methods that simplify the reasoning.

 \begin{figure*}[t]
  \begin{gather*}
    \inference{}{B\Downarrow 1\cdot(t\mapsto B'(t))}
    \qquad \inference{}{B'(t)\Downarrow 1\cdot(s\mapsto B''(t,s))} \qquad \inference{}{B''(t,s)\Downarrow 1\cdot (u\mapsto t\, (s\, u))}  
 \\[1ex]
\inference{}{C\Downarrow 1\cdot (t\mapsto C'(t))}
\qquad
  \inference{}{C'(t)\Downarrow 1\cdot (s\mapsto C''(t,s))}   \qquad
  \inference{}{C''(t,s)\Downarrow 1\cdot (u\mapsto (t\, u)\, s)}
  \end{gather*}
  \caption{Operational semantics of \pBCK.}
  \label{fig:skirules-untyped}
\end{figure*}

\section{Abstract Higher-Order Specifications}\label{sec:ahos}
The languages \SKI, \pSKI, and \pBCK exemplify the way how higher-order languages are typically specified: their syntax is given by a signature of operation symbols, and their operational semantics determines some form of transition system (higher-order coalgebra) on the set of program terms whose transitions are specified by a set of operational rules. This process is captured in categorical terms via the notion of \emph{abstract higher-order specification} (\emph{AHOS}), introduced next. AHOS are related to both \emph{biGSOS}~\cite{rot19} and \emph{higher-order abstract GSOS}~\cite{gmstu23} specifications; see \Cref{rem:ho-gsos}.

\begin{definition}[AHOS]
An \emph{abstract higher-order specification} (\emph{AHOS}) $\S=(\B,\Sigma,B,\rho)$ is given by:
\begin{enumerate}
\item a category $\B$;
\item a functor $\Sigma\colon \B\to \B$ with an initial algebra $(\Lambda,\ini)$ and a free algebra $\Sigmas\Lambda$ generated by $\Lambda$.\footnote{The elements of $\Sigmas \Lambda$ are thought of as terms whose variables are terms themselves.}
\item a bifunctor $B\colon \B^\opp\times \B\to \B$ with a cofree $B(\Lambda,-)$-coalgebra generated by $\Lambda$, denoted $B^\infty(\Lambda,\Lambda)$.
\item a morphism $\rho\colon \Sigma B^\infty(\Lambda,\Lambda)\to B(\Lambda,\Sigmas\Lambda)$, the \emph{rule morphism} of $\S$.
\end{enumerate}
A \emph{model} of $\S$ is a higher-order coalgebra $\gamma\colon \Lambda\to B(\Lambda,\Lambda)$ such that the following diagram commutes:
\begin{equation}\label{eq:ahos-op-model}
\begin{tikzcd}
\Sigma \Lambda \ar{r}{\ini} \ar{d}[swap]{\Sigma \hat\gamma} & \Lambda \ar{r}{\gamma} & B(\Lambda,\Lambda)\\
\Sigma B^\infty(\Lambda,\Lambda) \ar{rr}{\rho} && B(\Lambda,\Sigmas\Lambda) \ar{u}[swap]{B(\id,\hat\ini)}
\end{tikzcd}
\end{equation}
Here $\hat\gamma$ and  $\hat\ini$ are the cofree extension and free extension of $\id\colon \Lambda\to\Lambda$, respectively.
\end{definition}

Informally, $\Sigma$ and $B$ specify the syntax and the behaviour type of a higher-order language. The initial $\Sigma$-algebra $\Lambda$ is the object of program terms. The rule morphism $\rho$ describes the set of operational rules of the language and encodes them into a function. For each operation symbol $\f$ in the signature of the language, $\rho$ specifies the behaviour of a program $\f(t_1,\ldots,t_n)$ in terms of the unravelled (infinite-depth) behaviours of its subprograms $t_1,\ldots,t_n$, which are elements of the cofree coalgebra $B^\infty(\Lambda,\Lambda)$. A model of an AHOS is a transition system on the object $\Lambda$ of program terms that `runs' programs according to the operational rules specified by the map $\rho$.

\begin{rem} 
The rule morphism $\rho$ of an AHOS typically extends to a dinatural transformation
\[ \rho_X\colon \Sigma B^\infty(X,X)\to B(X,\Sigmas X) \qquad (X\in \B).\]
Dinaturality means that the operational rules encoded by $\rho$ are parametrically polymorphic, i.e.~they do not inspect the structure of their arguments. For instance, this is the case for the rules of \SKI, \pSKI, and \pBCK. In our congruence results for behavioural conformances (\Cref{sec:congruence-ahos}), parametric polymorphism is not expressed via dinaturality, but in terms of a lifting property of $\rho$.
\end{rem}

AHOS generally allow to specify the behaviour of a program $\f(t_1,\ldots,t_n)$  in terms of the infinite-depth behaviours of its subprograms $t_i$. In applications, it is typically sufficient to restrict to behaviours of some finite depth. For instance, in \pSKI and \pBCK, the operational rule for an application $t\, s$ (\Cref{fig:skirules-untyped1}) requires inspecting the depth-$2$ behaviour of $t$, and the rules for all remaining operations inspect at most the depth-$1$ behaviour of the operands. This is captured by the notion of \emph{depth-$n$ AHOS} defined below. We first introduce some required notation.

\begin{notation}
Given an AHOS $\S=(\B,\Sigma,B,\rho)$ we denote the counit and the $B(\Lambda,-)$-coalgebra structure of the cofree coalgebra $B^\infty(\Lambda,\Lambda)$ by
\[\epsilon\colon B^\infty(\Lambda,\Lambda)\to \Lambda\qqand\tau\colon B^\infty(\Lambda,\Lambda)\to B(\Lambda,B^\infty(\Lambda,\Lambda)).\]
For each $n\geq 0$ we write $B^{(n)}(\Lambda,\Lambda)$ for the $n$-fold application of $B(\Lambda,-)$ to $\Lambda$, that is,
\[ B^{(0)}(\Lambda,\Lambda)=\Lambda\qqand B^{(n+1)}(\Lambda,\Lambda)=B(\Lambda,B^{(n)}(\Lambda,\Lambda)). \]
We define the projections
$c^{(n)}\colon B^\infty(\Lambda,\Lambda)\to B^{(n)}(\Lambda,\Lambda)$
inductively by 
\[c^{(0)}=\epsilon\;\,\text{and}\;\, c^{(n+1)} = (\, B^\infty(\Lambda,\Lambda)\xto{\tau} B(\Lambda,B^\infty(\Lambda,\Lambda)) \xto{B(\Lambda,c^{(n)})} B(\Lambda,B^{(n)}(\Lambda,\Lambda))=B^{(n+1)}(\Lambda,\Lambda) \,).\]
Finally, we define the $n$-fold iteration $\gamma^{(n)}\colon \Lambda\to B^{(n)}(\Lambda,\Lambda)$ of $\gamma\colon \Lambda\to B(\Lambda,\Lambda)$ inductively by
\begin{align*} \gamma^{(0)} &=(\,\Lambda\xto{\id}\Lambda=B^{(0)}(\Lambda,\Lambda)\,) \\
 \gamma^{(n+1)} &=  
(\,\Lambda \xto{\gamma} B(\Lambda,\Lambda) \xto{B(\Lambda,\gamma^{(n)})} B(\Lambda,B^{(n)}(\Lambda,\Lambda)) = B^{(n+1)}(\Lambda,\Lambda)\,).
\end{align*}
\end{notation} 
An easy induction on $n$ shows that:
\begin{lemma}\label{lem:c-gamma}
For every $n\in \Nat$, one has $\gamma^{(n)} = (\, \Lambda \xto{\hat{\gamma}} B^\infty(\Lambda,\Lambda) \xto{c^{(n)}} B^{(n)}(\Lambda,\Lambda) \,)$.
\end{lemma}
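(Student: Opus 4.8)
The plan is to prove the identity $\gamma^{(n)} = c^{(n)} \circ \hat\gamma$ by induction on $n$, leaning on the two characteristic properties of the cofree extension $\hat\gamma$ of $\id\colon\Lambda\to\Lambda$. Since $\hat\gamma$ is by construction the unique $B(\Lambda,-)$-coalgebra morphism $(\Lambda,\gamma)\to(B^\infty(\Lambda,\Lambda),\tau)$ satisfying the triangle identity $\epsilon\circ\hat\gamma=\id_\Lambda$, I would extract exactly two facts from it: (i) the triangle identity $\epsilon\circ\hat\gamma=\id_\Lambda$, and (ii) the coalgebra-morphism square $\tau\circ\hat\gamma = B(\Lambda,\hat\gamma)\circ\gamma$. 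These, together with functoriality of the endofunctor $B(\Lambda,-)$, are all that is needed.

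For the base case $n=0$, I would simply observe that both sides collapse to $\id_\Lambda$: by definition $\gamma^{(0)}=\id_\Lambda$, while $c^{(0)}\circ\hat\gamma=\epsilon\circ\hat\gamma=\id_\Lambda$ by (i). For the inductive step, assuming $\gamma^{(n)}=c^{(n)}\circ\hat\gamma$, I would unfold $c^{(n+1)}=B(\Lambda,c^{(n)})\circ\tau$ and precompose with $\hat\gamma$, then rewrite $\tau\circ\hat\gamma$ using the square (ii) to obtain $B(\Lambda,c^{(n)})\circ B(\Lambda,\hat\gamma)\circ\gamma$. Fusing the first two factors by functoriality of $B(\Lambda,-)$ into $B(\Lambda,c^{(n)}\circ\hat\gamma)$ and applying the induction hypothesis turns this into $B(\Lambda,\gamma^{(n)})\circ\gamma$, which is exactly $\gamma^{(n+1)}$ by its defining equation. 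This closes the induction.

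I do not expect a genuine obstacle, as the statement is a routine unravelling of the cofree-coalgebra universal property; the only point that warrants care is the mixed variance of $B$. Because the contravariant argument is held fixed at $\Lambda$ throughout, every functoriality step uses only the covariant endofunctor $B(\Lambda,-)$, so no contravariant action on morphisms ever intervenes. Keeping the first slot rigidly pinned to $\Lambda$ is what makes the inductive bookkeeping go through cleanly, and it is the one detail I would flag to avoid a spurious variance confusion.
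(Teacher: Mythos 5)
Your proof is correct and matches the paper's approach: the paper simply asserts the lemma follows by "an easy induction on $n$", and your induction — base case via the counit identity $\epsilon\circ\hat\gamma=\id_\Lambda$, inductive step via the coalgebra-morphism square $\tau\circ\hat\gamma=B(\Lambda,\hat\gamma)\circ\gamma$ and functoriality of $B(\Lambda,-)$ — is exactly that argument spelled out. Your remark about the contravariant slot staying pinned at $\Lambda$ is a sensible precaution but raises no real issue.
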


\begin{definition}[Depth-$n$ AHOS]
Let $n\in \Nat$. A \emph{depth-$n$ AHOS} is a quadruple $\S=(\B,\Sigma,B,\rho)$ where $\B$, $\Sigma$, $B$ are given as for AHOS, the category $\B$ has finite products, and $\rho$ is a morphism
\begin{equation}\label{eq:abstract-rule-depth-n} \rho\colon \Sigma \textstyle\prod_{k=0}^n B^{(n)}(\Lambda,\Lambda)\to B(\Lambda,\Sigmas\Lambda). \end{equation}
A \emph{model} of $\S$ is a higher-order bialgebra $(\Lambda,\gamma)$ such that the following diagram commutes:
\begin{equation}\label{eq:rho-model-depth-n}
\begin{tikzcd}
\Sigma \Lambda \ar{r}{\ini} \ar{d}[swap]{\Sigma \langle \gamma^{(k)}\rangle_{k=0}^n} & \Lambda \ar{r}{\gamma} & B(\Lambda,\Lambda)\\
\Sigma \prod_{k=0}^n B^{(k)}(\Lambda,\Lambda) \ar{rr}{\rho} && B(\Lambda,\Sigmas\Lambda) \ar{u}[swap]{B(\id,\hat\ini)}
\end{tikzcd}
\end{equation}
\end{definition}

\begin{rem}
 Every depth-$n$ AHOS $\S=(\B,\Sigma,B,\rho)$ induces the AHOS $\S'=(\B,\Sigma,B,\rho')$ given by
\[ \rho' = (\, \begin{tikzcd}[column sep=50] \Sigma B^\infty(\Lambda,\Lambda) \ar{r}{\Sigma\langle c^{(k)}\rangle_{k=0}^n} & \Sigma \textstyle\prod_{k=0}^n B^{(k)}(\Lambda,\Lambda)\ar{r}{\rho} & B(\Lambda,\Sigmas\Lambda) \end{tikzcd}\,). \]
By \Cref{lem:c-gamma}, every model of $\S$ is a model of $\S'$, and vice versa. Therefore, depth-$n$ AHOS can be regarded as a special case of AHOS. Depth-$n$ AHOS are easier to work with as they do not involve the cofree coalgebra $B^\infty(\Lambda,\Lambda)$, which may be hard to compute or even fail to exist.
\end{rem}

\begin{example}[AHOS for \SKI]\label{ex:ahos-SKI} The language \SKI is modelled by the depth-$2$ AHOS  \[\S_\SKI=(\Set,\Sigma_\SKI,B_0,\rho_\SKI)\] where $\Sigma=\Sigma_\SKI$ is the polynomial functor corresponding to the signature \eqref{eq:sigma-SKI} of $\SKI$, and $B_0$ is the behaviour bifunctor \eqref{eq:beh-pBCK-untyped0}. To define the rule map $\rho_\SKI$, we write $\Lambda=\Lambda_\SKI$ for the set of closed $\SKI$-terms, and note that
\[ B^{(0)}(\Lambda,\Lambda) = \Lambda,\qquad B_0^{(1)}(\Lambda,\Lambda)= \{\bot\}+\Lambda^\Lambda,\qquad B_0^{(2)}(\Lambda,\Lambda)= \{\bot\}+(\{\bot\}+\Lambda^\Lambda)^\Lambda.  \]
Thus $\rho_\SKI$ is a map of type
\begin{equation*} \rho_\SKI\colon \Sigma (\Lambda\times (\{\bot\}+\Lambda^\Lambda)\times (\{\bot\}+(\{\bot\}+\Lambda^\Lambda)^\Lambda)) \to \{\bot\}+(\Sigmas\Lambda)^\Lambda \end{equation*}
that simply presents the operational rules of \Cref{fig:skirules-untyped0}, e.g.\ for $t,s\in \Lambda$ and $F\in (\{\bot\}+\Lambda^\Lambda)^\Lambda$, 
\begingroup
\allowdisplaybreaks
\begin{alignat*}{9}
&S && \mapsto && \;\;  (t\mapsto S'(t)) 
&& \Omega && \mapsto && \;\;  \bot \;\; \\
& S'((t,-,-)) &&\mapsto && \;\;  (s\mapsto S''(t,s))
&&\appp((-,-,F),(s,-,-))\;\; &&\mapsto && \;\; F(s) \\
& S''((t,-,-),(s,-,-)) && \mapsto && \;\;  (u\mapsto (t\, u)\, (s\, u)) \qquad
&& \appp((-,-,\bot),(s,-,-))\;\; &&\mapsto && \;\; \bot 
\end{alignat*}
\endgroup
\end{example}

\begin{example}[AHOS for \pSKI]\label{ex:ahos-pSKI} The language \pSKI is modelled by the depth-$2$ AHOS  \[\S_\pSKI=(\Set,\Sigma_\pSKI,B,\rho_\pSKI)\] with $\Sigma=\Sigma_\pSKI$ given by \eqref{eq:sigma-pSKI} and $B$ given by \eqref{eq:beh-pBCK-untyped}. Putting $\Lambda=\Lambda_\pSKI$, we have
\[ B^{(0)}(\Lambda,\Lambda) = \Lambda,\qquad B^{(1)}(\Lambda,\Lambda)= \D(\{\bot\}+\Lambda^\Lambda),\qquad B^{(2)}(\Lambda,\Lambda)= \D(\{\bot\}+(\D(\{\bot\}+\Lambda^\Lambda))^\Lambda).  \]
Thus $\rho_\pSKI$ is a map of type
\begin{equation*} \rho_\pSKI\colon \Sigma (\Lambda\times \D(\{\bot\}+\Lambda^\Lambda)\times \D(\{\bot\}+(\D(\{\bot\}+\Lambda^\Lambda))^\Lambda)) \to \D(\{\bot\}+(\Sigmas\Lambda)^\Lambda) \end{equation*}
that presents the operational rules of \Cref{fig:skirules-untyped1}:
\begingroup
\allowdisplaybreaks
\begin{alignat*}{9}
&S && \mapsto && \;\; 1\cdot (t\mapsto S'(t)) 
&& K && \mapsto && \;\; 1\cdot (t\mapsto K'(t)) \\
& S'((t,-,-)) &&\mapsto && \;\; 1\cdot (s\mapsto S''(t,s))
&& K'((t,-,-)) && \mapsto && \;\; 1\cdot (s\mapsto t) \\
& S''((t,-,-),(s,-,-)) && \mapsto && \;\; 1\cdot (u\mapsto (t\, u)\, (s\, u))  
&& I && \mapsto && \;\; 1\cdot (t\mapsto t)  \\
&\Omega && \mapsto && \;\; 1\cdot \bot 
&&(-,\phi,-) \oplus (-,\psi,-) && \mapsto && \;\; \frac{1}{2}\cdot \phi + \frac{1}{2}\cdot \psi \\
&\appp((-,-,\Phi),(s,-,-))\;\; &&\mapsto && \;\; p\cdot \bot + \sum_{i\in I}\sum_{j\in J_i} p_i\cdot p_{i,j}\cdot b_{i,j},
\end{alignat*}
\endgroup
where $t,s\in \Lambda$ and $\phi,\psi \in \D(\{\bot\} + \Lambda^\Lambda)$, and in the last clause, 
\[
\Phi = p\cdot \bot + \sum_{i\in I} p_i\cdot F_i \in \D(\{\bot\}+(\D(\{\bot\}+\Lambda^\Lambda))^{\Lambda})\qand 
F_i(s)=\sum_{j\in J_i} p_{i,j}\cdot b_{i,j}\in \D(\{\bot\}+\Lambda^\Lambda).
\]
\end{example}

\begin{remark}
 In the clause for $S$, the expression $S'(t)$ appearing in the map $(t\mapsto S'(t))\in (\Sigmas\Lambda)^\Lambda$ is to be read as a $\Sigma$-term in the variable $t\in \Lambda$, \emph{not} as its interpretation in the initial $\Sigma$-algebra $\Lambda$, i.e.\ an element of $\Lambda\hookrightarrow \Sigmas\Lambda$. Analogously for the terms $S''(t,s)$, $(t\, u)\, (s\, u)$,  $K'(t)$ in the other clauses. 
\end{remark}

\begin{rem}\label{rem:app-mu}
The action of $\rho_\pSKI$ on $\appp$-terms can be expressed in terms of the strength $\st$, the unit $\eta$ and the multiplication $\mu$ of the distribution monad $\D$ (\Cref{sec:category-theory}). Indeed, it is given by
\[ \rho_\pSKI(\appp((-,-,\Phi),(s,-,-)) = j(\Phi,s)\]
where $j\colon B(\Lambda,B(\Lambda,\Lambda))\times \Lambda \to B(\Lambda,\Lambda)$ is the composite shown below:
\allowdisplaybreaks
\[
\begin{tikzcd}
 \D(\{\bot\}+(\D(\{\bot\}+\Lambda^\Lambda))^{\Lambda})\times \Lambda \ar{r}{\st} & 
 \D((\{\bot\}+(\D(\{\bot\}+\Lambda^\Lambda))^{\Lambda})\times \Lambda) \ar{dl}[swap]{\cong} \\
 \D(\{\bot\}\times \Lambda+(\D(\{\bot\}\times \Lambda+\Lambda^\Lambda))^{\Lambda}\times \Lambda) \ar{r}{\D(\outl+\ev)} &
 \D(\{\bot\}+ \D(\{\bot\} + \Lambda^\Lambda)) \ar{dl}[description]{\D(\eta\circ\inl + \id)} \\
 \D(\D(\{\bot\} + \Lambda^\Lambda)+\D(\{\bot\} + \Lambda^\Lambda)) \ar{r}{\D\nabla} &
 \D\D(\{\bot\} + \Lambda^\Lambda) \xto{\mu}
 \D(\{\bot\} + \Lambda^\Lambda)
\end{tikzcd}
\]
\end{rem}

\begin{example}[AHOS for \pBCK]\label{ex:ahos-pBCK} The language \pBCK is modelled by the depth-$2$ AHOS  \[\S_\pBCK=(\Set,\Sigma_\pBCK,B,\rho_\pBCK)\] where $\Sigma=\Sigma_\pBCK$ is the polynomial functor corresponding to the signature \eqref{eq:sigma-pBCK} of $\pBCK$, and $B$ is again the behaviour bifunctor \eqref{eq:beh-pBCK-untyped}. 
Putting $\Lambda=\Lambda_\pBCK$, the rule map
\begin{equation*} \rho_\pBCK\colon \Sigma (\Lambda\times \D(\{\bot\}+\Lambda^\Lambda)\times \D(\{\bot\}+(\D(\{\bot\}+\Lambda^\Lambda))^\Lambda)) \to \D(\{\bot\}+(\Sigmas\Lambda)^\Lambda) \end{equation*}
is given like in \Cref{ex:ahos-pSKI} for $\Omega$, $K$, $K'$, $I$, $\oplus$, $\mathsf{app}$, and for $B$, $B'$, $B''$, $C$, $C'$, $C''$ by
\begingroup
\allowdisplaybreaks
\begin{alignat*}{9}
&B && \mapsto && \;\; 1\cdot (t\mapsto B'(t)) 
&& C && \mapsto && \;\; 1\cdot (t\mapsto C'(t)) \\
& B'((t,-,-)) &&\mapsto&& \;\; 1\cdot (s\mapsto B''(t,s)) \;\;\;\;\;\;\;\;\;\;\;
&&
C'((t,-,-)) &&\mapsto && \;\; 1\cdot (s\mapsto C''(t,s))
 \\
& B''((t,-,-),(s,-,-))\;\; && \mapsto && \;\; 1\cdot (u\mapsto t\,(s\, u)) 
&&C''((t,-,-),(s,-,-)) && \mapsto && \;\; 1\cdot (u\mapsto (t\, u)\, s)  
\end{alignat*}
\endgroup
\end{example}

It remains to explain how to construct a model $\gamma\colon \Lambda\to B(\Lambda,\Lambda)$ of an AHOS. Similar to the construction of the operational models of the languages \SKI, \pSKI, and \pBCK in \Cref{sec:prob-comb-logic}, the idea is that the \emph{least} model is the one to be aimed for, since it carries precisely the information given by the operational rules of the AHOS, and nothing more.
This idea can be formalized by imposing a suitable order structure on the set $\HOCoalg_\Lambda(B)$ of higher-order coalgebras on program terms.

\begin{definition}[Complete AHOS, Canonical Model]\label{def:complete-ahos-can-model}
An \emph{ordered AHOS} $\S=(\B,\Sigma,B,\leq,\rho)$ is an AHOS $(\B,\Sigma,B,\rho)$ together with a partial order $\leq$ on the set $\HOCoalg_\Lambda(B)$. The AHOS $\S$ is \emph{complete} if the order $\leq$ is a DCPO with a least element $\bot$, and moreover the map
\[ \Gamma\colon\HOCoalg_\Lambda(B)\to \HOCoalg_\Lambda(B),\qquad \gamma\mapsto B(\id,\hat\ini)\circ \rho\circ \Sigma\hat\gamma\circ \ini^{-1},  \]
is monotone. The \emph{canonical model} of a complete AHOS $\S$ is the $\leq$-least model of $\S$.
\end{definition}

\begin{rem}\label{rem:complete-ahos}
\begin{enumerate}
\item Since models of $\S$ are precisely fixed points of the map $\Gamma$, the canonical model exists by the Knaster-Tarski fixed point theorem.
\item The canonical model can be constructed iteratively (which corresponds to the iterative proof of the Knaster-Tarski theorem). Consider the ordinal-indexed family of higher-order coalgebras $(\gamma_\alpha\colon \Lambda\to B(\Lambda,\Lambda))_{\alpha}$ defined via transfinite recursion as follows:
\[\gamma_0=\bot,\qquad \gamma_{\alpha+1} = B(\id,\hat\ini) \circ \rho  \circ \Sigma\hat\gamma_\alpha\circ \ini^{-1},\qqand \gamma_\alpha =\bigvee_{\beta<\alpha} \gamma_\beta \;\;\;(\text{$\alpha$ limit ordinal}).
\]
These coalgebras form an ascending chain ($\alpha\leq \beta$ implies $\gamma_\alpha\leq \gamma_\beta$). Since the hom-set $\B(\Lambda,B(\Lambda,\Lambda))$ is small, the chain eventually stabilizes, that is, there exists an ordinal $\alpha_0$ such that $\gamma_\alpha=\gamma_{\alpha_0}$ for all $\alpha\geq \alpha_0$. Then $\gamma=\gamma_{\alpha_0}$ is the least fixed point of $\Gamma$, i.e.\ the canonical model of $\S$.

Intuitively, this iteration computes the canonical model $\gamma$ step by step from below: One starts with a trivial coalgebra $\gamma_0\colon \Lambda\to B(\Lambda,\Lambda)$ that carries no information, and then constructs a sequence $\gamma_1,\gamma_2,\gamma_3,\cdots$ of increasingly more detailed approximations of $\gamma$, until a fixed point is reached.
\end{enumerate}
\end{rem}

\begin{example}[\SKI, \pSKI, \pBCK]
The AHOS $\S_\SKI$ for \SKI (\Cref{ex:ahos-SKI}) is complete w.r.t.\ to the pointwise DCPO structure on $\HOCoalg_{\Lambda_\SKI}(B_0)$. The map $\Gamma$ of \Cref{def:complete-ahos-can-model} is precisely the map \eqref{eq:Gamma-SKI}; thus, the canonical model of $\S_\pSKI$ is the higher-order coalgebra $\gamma_\pSKI$ defined in \eqref{eq:op-model-pSKI-untyped0}. Analogously, the canonical models for $\S_\pSKI$ and $\S_\pBCK$ are  the coalgebras \eqref{eq:op-model-pSKI-untyped} and \eqref{eq:op-model-pBCK-untyped}.
\end{example}
We conclude this section with a discussion of existing categorical frameworks related to AHOS.
\begin{rem}\label{rem:ho-gsos}
AHOS form a common generalization of two operational specification frameworks known in the literature, namely \emph{biGSOS}~\cite{rot19} and \emph{higher-order abstract GSOS}~\cite{gmstu23}. In more detail:
\begin{enumerate}
\item  AHOS are a higher-order version of \emph{monotone biGSOS specifications}~\cite{rot19}. The latter are  natural transformations of type $\Sigma B^\infty \to B\Sigmas$ for an endofunctor (rather than mixed variance bifunctor) $B$. Our construction of the operational model of a complete AHOS as a least fixed point extends a corresponding construction from \emph{op.~cit.} to the higher-order setting.
\item AHOS also relate to the recently introduced \emph{higher-order abstract GSOS} framework~\cite{gmstu23}. The core ingredient of the latter is the notion of \emph{higher-order GSOS law}, which is essentially the same as a depth-$1$ AHOS (subject to additional dinaturality conditions). To construct the operational model of a higher-order GSOS law, no order on $B$ is necessary: there is a \emph{unique} higher-order coalgebra~$\gamma$  making \eqref{eq:rho-model-depth-n} with $n=1$ commute, and it can be defined via structural induction (using initiality of the algebra $\Lambda=\mu\Sigma$) rather than as an order-theoretic least fixed point. This is no longer true for depth-$n$ AHOS where $n\geq 2$. The present notion of AHOS can thus be regarded as a substantially more general and flexible version of higher-order abstract GSOS. In particular, due to restriction to operational models definable by structural induction, the latter framework is inherently tailored to small-step operational semantics. Operational rules corresponding to a big-step semantics can only be captured indirectly in higher-order abstract GSOS, namely by generating them from a small-step specification. This requires syntactic restrictions on the format, such as an explicit syntactic distinction between values and computations, and a complex technical condition on higher-order GSOS laws called \emph{strong separatedness}~\cite{gpt25}. In contrast, our present AHOS framework captures big-step operational semantics directly, and with a remarkably simple categorical setup. The operational model of an AHOS is, in contrast to the approach of higher-order abstract GSOS,  constructed via a form of rule induction (corresponding to the order-theoretic construction of a least fixed point), not via the more restrictive structural induction. It is the use of rule induction that allows modelling rules that require unravelling more than one layer of behaviours (like the rule for application in \Cref{fig:skirules-untyped1}), which is typical for big-step specifications.
\end{enumerate}
\end{rem}

With the abstract categorical account of higher-order languages and their operational semantics provided by AHOS at hand, our primary goal is to reason about notions of \emph{behavioural conformance} (e.g.~behavioural equivalence or behavioural distance) on canonical models of AHOS. This is achieved by putting suitable fibrations over the base category of an AHOS.

\section{Fibrations and Behavioural Conformances}\label{sec:fibrations}
Behavioural conformances for transition systems, and coalgebras in general, can be conveniently modelled by considering liftings of behaviour functors along fibrations~\cite{hj98,bbkk18}. The level of abstraction achieved in this way gives rise to a uniform  framework for reasoning about similarity of programs, bisimilarity of programs, distances of programs, and related notions. In this section, we set up the fibrational foundations for our theory of behavioural conformances on canonical models of AHOS. In particular, we develop a fibrational version of \emph{Howe's closure}, the key construction for proving abstract congruence results for higher-order languages.

\subsection{Complete Lattice Fibrations}
We first recall some terminology and basic results from the theory of fibrations; see~\citet{jacobs99} for a comprehensive introduction to the subject. In the context of the coalgebraic theory of behavioural conformances, a fibration $p\colon \E\to \B$ is thought of as the forgetful functor of a category $\E$ of $\B$-objects carrying additional (typically relational) structure, together with a notion of pulling back that structure along maps in the base category $\B$.

\begin{defn}[Fibration]\label{def:fibration}
Let $p\colon \E\to\B$ be a functor.
\begin{enumerate}
\item An object $P\in \E$ is \emph{above} the object $X\in \B$ if $pP=X$; similarly, a morphism $u\in \E$ is \emph{above} the morphism $f\in \B$ if $pu=f$. The \emph{fiber} at $X\in \B$ is the non-full subcategory $\E_X\hookrightarrow \E$ given by all objects above $X$ and all morphisms above $\id_X$.
\end{enumerate}
\vspace{-.2cm}
\begin{minipage}[t]{0.75\textwidth}
\begin{enumerate}
 \setcounter{enumi}{1} \item Given $X\in \B$ and $P\in \E$, a \emph{cartesian lift} of a morphism $f\colon X\to pP$ is a morphism $\barf\colon f^{*}P \to P$ above $f$ such that for every morphism $g\colon Q\to P$ in~$\E$ and $h\colon pQ\to X$ such that $pg=f\circ h$, there exists a unique $\barh\colon Q\to f^{*}P$ above $h$ such that $g=\barf\circ \barh$. The functor $p$ is a \emph{fibration} if every $f\colon X\to pP$ has a cartesian lift. Every $f\colon X\to Y$ in $\B$ induces the \emph{reindexing} functor
\[f^{*}\colon \E_Y\to \E_X, \qquad P\mapsto f^{*} P.\] 
\end{enumerate}
\end{minipage}
\begin{minipage}[t]{0.23\textwidth}
\vspace{-.4cm}
\[\begin{tikzcd}[column sep=10]
      Q \ar[dashed]{r}{\barh} \ar[shiftarr={yshift=15}]{rr}{g} & f^{*}P \ar{r}{\barf} & P \\
      pQ \ar{r}{h} \ar[shiftarr={yshift=15}]{rr}{pg} & X \ar{r}{f} & pP 
    \end{tikzcd}
\]
\end{minipage} 
\\[.2cm]
\begin{minipage}[t]{0.75\textwidth}
\begin{enumerate}
 \setcounter{enumi}{2} \item  Dually, an \emph{opcartesian lift} of $f\colon pP\to Y$ is a morphism $\barf\colon P \to f_{*}P$ above $f$ such that for every morphism $g\colon P\to Q$ in $\E$ and $h\colon Y\to pQ$ such that $pg=h\circ f$, there exists a unique $\barh\colon f_{*}P\to Q$ above $h$ such that $g=\barh\circ \barf$. The functor $p$ is an \emph{opfibration} if every $f\colon pP\to Y$ has an opcartesian lift. Every $f\colon X\to Y$ in $\B$ induces the \emph{opreindexing} functor
 \[f_{*}\colon \E_X\to \E_Y, \qquad P\mapsto f_{*} P.\]
\end{enumerate}
\end{minipage}
\begin{minipage}[t]{0.23\textwidth}
\vspace{-.4cm}
\[\begin{tikzcd}[column sep=10]
      P \ar{r}{\barf} \ar[shiftarr={yshift=15}]{rr}{g} & f_{*}P \ar[dashed]{r}{\barh} & Q \\
      pP \ar{r}{f} \ar[shiftarr={yshift=15}]{rr}{pg} & Y \ar{r}{h} & pQ 
    \end{tikzcd}
\]
\end{minipage} 
\vspace{.2cm}
\begin{enumerate}
\setcounter{enumi}{3}
\item The functor $p$ is a \emph{bifibration} if it is both a fibration and an opfibration.
\item A fibration $p$ is a \emph{$\CLat_\sqcap$-fibration} if each fiber is a complete lattice, i.e.\ a small poset (viewed as a thin category) where each subset has a join (= least upper bound) and a meet (= greatest lower bound), and moreover the reindexing map $f^{*}\colon \E_Y\to \E_X$ is meet-preserving for every $f\colon X\to Y$. We denote the partial order on $\E_X$ by $\sqleq_X$, or simply $\sqleq$, and joins and meets in $\E_X$ by $\bigsqcup$ and $\bigsqcap$. 
\end{enumerate}
\end{defn}
The next lemma collects some standard facts about $\CLat_\sqcap$-fibrations:

\begin{lemma}\label{lem:clat-fib-props}
For every $\CLat_\sqcap$-fibration $p\colon \E\to \B$, the following statements hold: 
\begin{enumerate}
\item\label{lem:clat-fib-props-faithful} $p$ is faithful, that is, for all $u,v\colon P\to Q$ in $\E$, if $pu=pv$ then $u=v$.
\item\label{lem:clat-fib-props-bifib} $p$ is a bifibration. 
\item\label{lem:clat-fib-props-adjunction} For each $f\colon X\to Y$ in $\B$ there is a Galois connection $f_{*} \dashv f^{*}\colon \E_Y\to \E_X$, that is,
\[ f_{*} P \sqleq Q \iff P\sqleq f^{*} Q\qquad \text{for all $P\in \E_X$ and $Q\in \E_Y$}. \]
In particular, the map $f_{*}$ preserves all joins.
\item\label{lem:clat-fib-probs-split} One has $(g\circ f)_{*}=g_{*}\circ f_{*}$ and $(g\circ f)^{*} = f^{*}\circ g^{*}$ for all $f\colon X\to Y$ and $g\colon Y\to Z$ in $\B$.
\end{enumerate}
\end{lemma}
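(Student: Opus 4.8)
The plan is to treat the four items roughly in the order (1), then (3) together with (2), then (4), since the opfibration structure asserted in (2) is most naturally produced from the adjunction of (3). The guiding principle throughout is that each fiber $\E_X$ is a \emph{thin} category (a poset), so that any two parallel morphisms lying in a single fiber automatically coincide; most of the work is then either a one-line consequence of thinness or a uniqueness-of-adjoints argument.

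For faithfulness (1), suppose $u,v\colon P\to Q$ in $\E$ with $pu=pv=f\colon X\to Y$, where $X=pP$ and $Y=pQ$. I would factor both through the cartesian lift $\mathrm{cart}\colon f^\star Q\to Q$ of $f$ at $Q$: applying the cartesian universal property with $h=\id_X$ (so that $pu=f=f\circ\id_X$) yields unique morphisms $\bar u,\bar v\colon P\to f^\star Q$ above $\id_X$ with $u=\mathrm{cart}\circ\bar u$ and $v=\mathrm{cart}\circ\bar v$. Since $\bar u,\bar v$ both lie in the thin fiber $\E_X$, they are equal, whence $u=v$.

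For (3) and (2), the key input is the adjoint functor theorem for complete lattices: a monotone map $g\colon L\to M$ between complete lattices has a left adjoint iff it preserves all meets, the adjoint being $y\mapsto\bigsqcap\{x: y\sqleq g(x)\}$. Applied to the meet-preserving reindexing $f^\star\colon\E_Y\to\E_X$, this produces for each $f\colon X\to Y$ a monotone map $f_\star\colon\E_X\to\E_Y$, namely $f_\star P=\bigsqcap\{Q\in\E_Y: P\sqleq f^\star Q\}$, obeying $f_\star P\sqleq Q\iff P\sqleq f^\star Q$; this is the Galois connection of (3), and join-preservation of $f_\star$ is the standard fact that left adjoints preserve colimits. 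To deduce (2) I would invoke the classical characterization (see \cite{jacobs99}) that a fibration is an opfibration exactly when every reindexing functor admits a left adjoint: the opcartesian lift of $f$ at $P$ is obtained by composing the unit $P\to f^\star f_\star P$ in the fiber with the cartesian lift $f^\star f_\star P\to f_\star P$, and its universal property follows by a routine chase that reuses the cartesian property together with the adjunction bijection. The resulting opreindexing functor coincides with the left adjoint $f_\star$, keeping the notation consistent.

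For the splitting (4), the contravariant identity comes from pseudofunctoriality of the cleavage, which gives a canonical isomorphism $(g\circ f)^\star\cong f^\star\circ g^\star$ in $\E_X$; as $\E_X$ is a poset, isomorphic objects are equal, and thinness forces the two functors to agree on morphisms as well, so $(g\circ f)^\star=f^\star\circ g^\star$ on the nose. The covariant identity then follows by uniqueness of adjoints: composing the adjunctions from (3) yields $g_\star\circ f_\star\dashv f^\star\circ g^\star=(g\circ f)^\star$, while also $(g\circ f)_\star\dashv(g\circ f)^\star$, and since adjoints between posets are unique we conclude $(g\circ f)_\star=g_\star\circ f_\star$. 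The only genuinely non-formal step is the opfibration claim in (2): translating the order-theoretic adjunction back into the lifting universal property. In the thin-fiber setting this verification is light, but it is the single place where one must pass from the poset adjunction to the actual opcartesian universal property, so I expect it to be the main point requiring care.
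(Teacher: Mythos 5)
Your proof is correct, and it takes a somewhat different route from the paper's. The paper proves the lemma indirectly: it first establishes that $\CLat_\sqcap$-fibrations are exactly the fiber-small topological functors (\Cref{thm:fib-vs-top}) and then reads off all four items from standard facts about topological functors, in particular citing \cite[Thm.~21.3]{ahs90} for faithfulness. Your argument is more self-contained: the direct proof of faithfulness, factoring $u$ and $v$ through the cartesian lift of $f$ and using thinness of the fiber $\E_X$, is a clean replacement for that citation. For items (2) and (3) you use exactly the two ingredients that are hidden inside the paper's proof of \Cref{thm:fib-vs-top}, namely the adjoint functor theorem for complete lattices (meet-preservation of $f^\star$ yields the left adjoint $f_\star$) and Jacobs' characterization of bifibrations via left adjoints to reindexing (\Cref{lem:bifib-char}); you are right that the only non-formal step is translating the poset adjunction back into the opcartesian universal property, which is the content of that cited lemma. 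For item (4) your pseudofunctoriality-plus-thinness argument amounts to the paper's observation that composites of cartesian lifts are cartesian and that cartesian lifts in a poset fibration are unique (\Cref{lem:pos-fib-faithful}); you obtain the covariant half by uniqueness of adjoints where the paper appeals to duality, and both are fine. The trade-off is that the paper's detour buys it \Cref{thm:fib-vs-top} as a statement of independent interest, whereas your version keeps the lemma elementary and essentially citation-free.
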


\begin{rem}
$\CLat_\sqcap$-fibrations are essentially equivalent to another notion from category theory, namely \emph{topological functors}~\cite{ahs90}; in fact, a $\CLat_\sqcap$-fibration is precisely a topological functor with small fibers. This observation is folklore and has been noted by several authors without a proof~\cite{kkhkh19,fsw24}. We give a proof in \ifarx{the appendix (\Cref{app:fibrations-vs-topological})}{\cite[App.~A]{u25_arxiv}} for the convenience of the reader.
\end{rem}

\begin{example}[$\CLat_\sqcap$-Fibrations]\label{ex:fibrations}
\begin{enumerate}
\item \label{ex:fibrations-rel} To capture (bi)simulations in the fibrational setting, we consider a fibration of relations. A \emph{relation} is a pair $(X,R)$ of a set $X$ and a subset $R\seq X\times X$. Given two relations $(X,R)$ and $(Y,S)$, a pair $(f,g)$ of maps $f,g\colon X\to Y$ is \emph{jointly relation-preserving} if $R(x,x')$ implies $S(f(x),g(x'))$ for all $x,x'\in X$. We let $\Rel_\jrp$ denote the category of relations and jointly relation-preserving pairs of maps. The forgetful functor $p_\jrp\colon \Rel_\jrp \to \Set^{(2)}$ (cf.\ \Cref{sec:category-theory}, \emph{Notation}) given by $(X,d)\mapsto X$ and $(f,g)\mapsto (f,g)$ is a $\CLat_\sqcap$-fibration.
We denote the fiber at $X\in \Set^{(2)}$ by $\Rel_{\jrp,X}$. The (op)reindexing maps for $(f,g)\colon X\to Y$ in $\Set^{(2)}$ are given by
\begin{align*}
(f,g)^{*}\colon  \Rel_{\jrp,Y}\to \Rel_{\jrp,X},&\quad (Y,S) \mapsto (X,(f,g)^{*}S),\\
(f,g)_{*}\colon \Rel_{\jrp,X}\to \Rel_{\jrp,Y},&\quad (X,R)\mapsto (Y,(f,g)_{*}R),
\end{align*}
where $(f,g)^{*}S$ and $(f,g)_{*}R$ are formed by taking the (pre)image under the map $f\times g\colon X\times X\to Y\times Y$:
\[ (f,g)^{*}S = (f\times g)^{-1}[S] \qqand  (f,g)_{*}R = (f\times g)[R]. \]
The order $\sqleq$ on a fiber $\Rel_{\jrp,X}$ is given by inclusion of relations: $(X,R)\sqleq (X,S)$ iff $R\seq S$. 
\item  Given two relations $(X,R)$ and $(Y,S)$, a map $f\colon X\to Y$ is \emph{relation-preserving} if $R(x,x')$ implies $S(f(x),f(x'))$ for all $x,x'\in X$, that is, if $(f,f)$ is jointly relation-preserving. We let $\Rel_\rp$ denote the category of relations and relation-preserving maps. The forgetful functor $p_\rp\colon \Rel_\rp\to \Set$ given by $(X,R)\mapsto X$ and $f\mapsto f$ is a $\CLat_\sqcap$-fibration. The (op)reindexing maps for $f\colon X\to Y$ are given in terms of those for $p_\jrp$ by $f^{*}=(f,f)^{*}$ and $f_{*}=(f,f)_{*}$.  
\item\label{ex:fibrations-frel2} Quantitative notions of behavioural distance are captured by moving from relations to fuzzy relations. A \emph{fuzzy relation} is a pair $(X,d)$ of a set $X$ and a map $d\colon X\times X\to [0,1]$. Given two fuzzy relations $(X,d_X)$ and $(Y,d_Y)$, a pair $(f,g)$ of maps $f,g\colon X\to Y$ is \emph{jointly non-expansive} if  $d_Y(f(x),g(x'))\leq d_X(x,x')$ for all $x,x'\in X$. We let $\FRel_\jne$ denote the category of fuzzy relations and jointly non-expansive pairs of maps. The forgetful functor $p_\jne\colon \FRel_\jne \to \Set^{(2)}$ given by $(X,d)\mapsto X$ and $(f,g)\mapsto (f,g)$ is a $\CLat_\sqcap$-fibration. The (op)reindexing maps for $(f,g)\colon X\to Y$ in $\Set^{(2)}$ are given by
\begin{align*}
(f,g)^{*}\colon  \FRel_{\jne,Y}\to \FRel_{\jne,X},&\quad (Y,d) \mapsto (X,(f,g)^{*}d),\\
(f,g)_{*}\colon \FRel_{\jne,X}\to \FRel_{\jne,Y},&\quad (X,d)\mapsto (Y,(f,g)_{*}d),
\end{align*}
with the fuzzy relations $(f,g)^{*}d$ and $(f,g)_{*}d$ defined as follows:
\[ (f,g)^{*}d(x,x')=d(f(x),g(x')) \qand  (f,g)_{*}d(y,y')=\inf \{\, d(x,x') \mid f(x)=y,\, g(x')=y'\,\}. \]
The order $\sqleq$ on a fiber $\FRel_{\jne,X}$ is given by the \emph{reversed} pointwise order of fuzzy relations, that is, $(X,d)\sqleq (X,d')$ iff $d(x,x)\geq d'(x,x')$ for all $x,x'\in X$. 
\item  Given fuzzy relations $(X,d_X)$ and $(Y,d_Y)$, a map $f\colon X\to Y$ is \emph{non-expansive} if $d_Y(f(x),f(x'))\leq d_X(x,x')$ for all $x,x'\in X$, that is, if the pair $(f,f)$ is jointly non-expansive.  The forgetful functor $p_\ne\colon \FRel_\ne\to \Set$ is a $\CLat_\sqcap$-fibration. The (op)reindexing maps for $f\colon X\to Y$ are given in terms of those for $p_\jne$ by $f^{*}=(f,f)^{*}$ and $f_{*}=(f,f)_{*}$.  
\end{enumerate}
\end{example}

\begin{rem}\label{rem:rel-frel}
\begin{enumerate}
\item\label{rem:rel-frel-embedding} A relation $R\seq X\times X$ is identified with the fuzzy relation $d\colon X\times X\to [0,1]$ given by $d(x,x')=0$ if $R(x,x')$ and $d(x,x')=1$ otherwise. Under this identification, jointly relation-preserving pairs of maps correspond precisely to jointly non-expansive pairs of maps, so $\Rel_\jrp$ forms a full subcategory of $\FRel_\jne$. Similarly, $\Rel_\rp$ forms a full subcategory of $\FRel_\ne$.
\item\label{rem:rel-frel-cob} In fibrational terminology, $p_\rp$ and $p_\ne$ arise from $p_\jrp$ and $p_\jne$ via \emph{change of base}~\cite[Lem.~1.5.1]{jacobs99}, that is, the diagrams below are pullbacks in the (superlarge) category of categories and functors. Here $I$ is the inclusion functor (\Cref{sec:category-theory}, \emph{Notation}).
\[
\begin{tikzcd}
\Rel_\rp \ar[tail]{r} \ar{d}[swap]{p_\rp} \pullbackangle{-45} \ar{d} & \Rel_\jrp \ar{d} \ar{d}{p_\jrp} \\
\Set \ar[tail]{r}{I} & \Set^{(2)}
\end{tikzcd}
\qquad
\begin{tikzcd}
\FRel_\ne \ar[tail]{r} \ar{d}[swap]{p_\ne} \pullbackangle{-45} \ar{d} & \FRel_\jne \ar{d} \ar{d}{p_\jne} \\
\Set \ar[tail]{r}{I} & \Set^{(2)}
\end{tikzcd}
\]
\item\label{rem:rel-frel-jointly} In most accounts of behavioural conformances for coalgebras, the authors work with fibrations like $p_\rp$ and $p_\ne$. The slight extension to \emph{jointly} relation-preserving or non-expansive pairs of maps is required for our fibrational development of Howe's method.
\end{enumerate}
\end{rem}

\subsection{Involutive Quantale Fibrations}
Apart from being a complete lattice, the fiber $\Rel_{\jrp,X}$ ($=\Rel_{\rp,X}$) naturally carries an algebraic structure whose operations are given by composition and reversal of relations:
\[ R\cdot S = \{\, (x,y) \mid \exists z\in Z.\, R(x,z)\wedge S(z,y) \, \}\qqand R^\circ = \{ \, (y,x) \mid R(x,y) \, \}. \]  
Similarly for the fiber $\FRel_{\jne,X}$ ($=\FRel_{\ne,X}$), where the fuzzy generalizations of composition and reversal are given as follows (we write $a\boxplus b = \min\{a+b,1\}$ for truncated addition in $[0,1]$):
\begin{align*}
d\cdot e\colon X\times X\to [0,1],&&(d\cdot e)(x,y) & = \inf_{z\in Z} \{\, d(x,z) \boxplus e(z,y)\,\}, \\
d^\circ\colon X\times X\to [0,1],&& d^\circ(x,y) &= d(y,x).
\end{align*}
In both cases, composition and reversal are compatible with the complete lattice structure of the fibers and with reindexing. This is captured abstractly by the following definitions:
\begin{definition}\label{def:quantale}
 A \emph{monoid} $(M,\cdot,1)$ is a set $M$ equipped with a\\
\begin{minipage}[t]{0.66\textwidth}
 binary operation $\cdot\colon M\times M\to M$ and a constant $1\in M$ satisfying \eqref{eq:mon-1} and \eqref{eq:mon-2}. A \emph{involutive monoid} (or \emph{$\circ$-monoid}) is a monoid~$M$ equipped with an additional unary operation $(-)^\circ\colon M\to M$ satisfying \eqref{eq:inv-mon1}--\eqref{eq:inv-mon3}. A \emph{quantale} is a monoid $M$ equipped with complete lattice structure satisfying \eqref{eq:quant-1} and \eqref{eq:quant-2} for all $x\in M$ and $S\seq M$. An \emph{involutive quantale} is a monoid that is both involutive and a quantale, and additionally satisfies \eqref{eq:inv-quant}. A \emph{morphism} $h\colon M\to N$ of ($\circ$-)monoids is a map satisfying, for all $x,y\in M$,
\[ 1=h(1)\qand h(x)\cdot h(y)=h(x\cdot y)\quad (\text{and}\quad  (h(x))^\circ = h(x^\circ)). \]
If $N$ is equipped with a partial order $\sqleq$ (e.g.\ if $N$ is a quantale), a \emph{lax ($\circ$-)morphism} is a map $h\colon M\to N$ satisfying
\[ 1\sqleq h(1)\qand h(x)\cdot h(y)\sqleq h(x\cdot y)\quad (\text{and}\quad  (h(x))^\circ \sqleq h(x^\circ)). \]
\vspace{-.2cm}
\end{minipage}
\begin{minipage}[t]{0.33\textwidth}
\vspace{-1cm}
\begin{align} 
(x\cdot y)\cdot z &= x\cdot (y\cdot z) \label{eq:mon-1} \\
1\cdot x = x &= x\cdot 1 \label{eq:mon-2} \\
(x^\circ)^\circ &= x \label{eq:inv-mon1}\\
1^\circ&=1, \label{eq:inv-mon2}\\
(x\cdot y)^\circ &= y^\circ\cdot x^\circ \label{eq:inv-mon3}\\
x\cdot \bigsqcup S &= \bigsqcup_{s\in S} x\cdot s \label{eq:quant-1}\\
(\bigsqcup S)\cdot x &= \bigsqcup_{s\in S} s\cdot x \label{eq:quant-2}\\
(\bigsqcup S)^\circ &= \bigsqcup_{s\in S} s^\circ \label{eq:inv-quant}
  \end{align}
\end{minipage}
\end{definition}

\begin{definition}\label{def:qtl-fib}
A $\CLat_\sqcap$-fibration $p\colon \E\to \B$ is a \emph{$\Qtl$-fibration} if \\
\begin{minipage}[t]{0.66\textwidth}
each fiber $\E_X$ ($X\in \B$) carries a quantale structure $(\E_X,\cdot,1_X)$ satisfying \eqref{eq:qtl-fibration1} and \eqref{eq:qtl-fibration2} for each $f\colon X\to Y$ and $P,Q\in \E_Y$. It is an \emph{$\InQtl$}-fibration if each fiber carries an involutive quantale structure satisfying \eqref{eq:qtl-fibration1}--\eqref{eq:iqtl-fibration}.
\end{minipage}
\begin{minipage}[t]{0.33\textwidth}
\vspace{-.9cm}
\begin{align} 1_X &\sqleq f^{*}1_Y \label{eq:qtl-fibration1}\\
f^{*} P \cdot f^{*} Q &\sqleq f^{*} (P\cdot Q) \label{eq:qtl-fibration2} \\
(f^{*} P)^\circ & \sqleq f^{*}(P^\circ) \label{eq:iqtl-fibration}
  \end{align}
\end{minipage}
\end{definition}

\begin{definition}\label{def:heterogeneous-iqtl-fib}
A $\CLat_\sqcap$-fibration $p\colon \E\to \B^{(2)}$ is a \emph{heterogeneous} $(\mathbf{In})\Qtl$-fibration if each fiber carries an (involutive) quantale structure $(\E_X,\cdot,1_X)$ such that, for $f,g,h\colon X\to Y$ and $P,Q\in \E_Y$,
\begin{equation}\label{eq:het-qtl-fibration-axioms} 1_X\sqleq (f,f)^{*}1_Y \qand (f,g)^{*} P \cdot (g,h)^{*} Q \sqleq (f,h)^{*} (P\cdot Q) \quad(\text{and}\quad ((f,f)^{*}P)^{\circ} = (f,f)^{*}(P^\circ)).  \end{equation}
\end{definition}
Heterogeneous $(\mathbf{In})\Qtl$-fibrations provide exactly enough structure to make Howe's method work. Hence, this is the level of generality at which our congruence results for AHOS will emerge.

\begin{example}\label{ex:iqtl-fibration}
\begin{enumerate}
\item  The fibration $p_\jrp\colon \Rel_\jrp\to \Set^{(2)}$ of \Cref{ex:fibrations}\ref{ex:fibrations-rel} is a heterogeneous $\InQtl$-fibration. For each set $X$ the involutive quantale structure $(\Rel_{\jrp,X},\cdot,1_X,(-)^\circ)$ is given by composition of relations, the identity relation $1_X\seq X\times X$, and reversal of relations.
\item The fibration $p_\jne\colon \Rel_\jne\to \Set^{(2)}$ of \Cref{ex:fibrations}\ref{ex:fibrations-frel2} is a heterogeneous $\InQtl$-fibration. For each set $X$ the involutive quantale structure $(\FRel_{\jne,X},\cdot,1_X,(-)^\circ)$ is given by composition of fuzzy relations, the identity relation $1_X\seq X\times X$ viewed as a fuzzy relation, and the reversal operator. 
\end{enumerate} 
\end{example}

\begin{definition}
An element $x$ of a quantale is \emph{reflexive} if $1\sqleq x$, and \emph{transitive} if $x\cdot x\sqleq x$. If the quantale is involutive, $x$ is \emph{symmetric} if $x^\circ = x$ (equivalently $x^\circ\sqleq x$). The \emph{transitive closure} of $x$ is given by $x^{+} = \bigsqcup_{n\geq 1} x^n$, where $x^n= x\cdot \cdots \cdot x$ denotes the $n$-fold product.
\end{definition}

Note that a reflexive, symmetric, and transitive element of $\Rel_{\jrp,X}$ and $\FRel_{\jne,X}$ is, respectively, an equivalence relation and a {pseudometric} on $X$.

\subsection{Lifting Functors Along Fibrations}
The key parameter to the fibrational notion of behavioural conformance for coalgebras is the choice of a lifting of the behaviour functor along a fibration. Dually, different notions of \emph{congruence} for algebras are modelled in terms of liftings of the syntax functor. 

\begin{defn}[Lifting]\label{def:lifting}
A \emph{lifting} of an endofunctor $F\colon \B\to \B$ (bifunctor $F\colon \B\times \B\to \B$, mixed variance bifunctor $F\colon \B^\opp\times \B\to \B$) along the fibration $p\colon \E\to \B$ is an endofunctor (bifunctor, mixed variance bifunctor) $\barF$ on $\E$ making the respective diagram below commute:
\begin{equation*}
	\label{eq:liftingRel}
	\begin{tikzcd}
		\E \ar{d}[swap]{p}  \ar{r}{\barF} & \E \ar{d}{p}  \\
		\B \ar{r}{F}  & \B
	\end{tikzcd}
\qquad\qquad
\begin{tikzcd}
		\E\times \E
		\ar{d}[swap]{p\times p} \ar{r}{\barF} & \E \ar{d}{p}  \\
		\B \times \B \ar{r}{F}  & \B
	\end{tikzcd}
\quad\qquad
\begin{tikzcd}
		\E^\opp\times \E
		\ar{d}[swap]{p^\opp\times p} \ar{r}{\barF} & \E \ar{d}{p}  \\
		\B^\opp \times \B \ar{r}{F}  & \B
	\end{tikzcd}
\end{equation*}
If each fiber is a quantale, the lifting $\barF$ is \emph{(laxly) monoidal} if for each $X,Y\in \B$ the induced map 
\[
\barF\colon \E_X\to \E_{FX} \qquad / \qquad \barF\colon \E_X\times \E_Y\to \E_{F(X,Y)} \qquad / \qquad  \barF\colon \E_X^\opp\times \E_Y\to \E_{F(X,Y)}
 \] between fibers is a (lax) monoid morphism. Similarly, if each fiber is an an involutive quantale, a lifting is \emph{(laxly) $\circ$-monoidal} if the above induced maps between fibers are (lax) $\circ$-morphisms. 
\end{defn}

\begin{remark}
A lifting of a functor along a $\CLat_\sqcap$-fibration is uniquely determined by its action on objects since the fibration is faithful (\Cref{lem:clat-fib-props}\ref{lem:clat-fib-props-faithful}).
\end{remark}

\begin{example}[Fuzzy Relation Liftings]
We present a number of liftings of functors on $\Set^{(2)}$ along the fibration $p_\jne\colon \FRel_\jne\to \Set^{(2)}$ (\Cref{ex:fibrations}\ref{ex:fibrations-frel2}). These liftings will feature in the categorical modelling of behavioural conformances for our example languages $\pSKI$ and $\pBCK$. The reader may want to consider \Cref{ex:liftings-rel} below in parallel, where the restrictions of the present liftings from fuzzy relations to relations are discussed.

\label{ex:liftings}
\begin{enumerate}
\item\label{ex:lifting:bot} The endofunctor $\{\bot\} + (-)\colon \Set^{(2)}\to \Set^{(2)}$
lifts to the endofunctor
\[(-)_\bot\colon \FRel_\jne\to \FRel_\jne,\qquad (X,d)_\bot = (\{\bot\}+X,d_\bot),\]
where $d_\bot(x,x')=d(x,x')$ for $x,x'\in X$, $d_\bot(\bot,\ol{x})=0$ for $\ol{x}\in \{\bot\}+X$, and $d_\bot(x,\bot)=1$ for $x\in X$.
\item\label{ex:liftings:prod-coprod} The product and coproduct functors $\times,+\colon \Set^{(2)}\times \Set^{(2)}\to \Set^{(2)}$ (which are formed like in $\Set$) lift to the product and coproduct on $\FRel_\jne$. Indeed, the product of fuzzy relations is given by \[(X,d_X)\times (Y,d_Y)=(X\times Y, d_{X\times Y})\quad \text{where}\quad  d_{X\times Y}((x,y),(x',y'))=\max \{\,d_X(x,x'),d_Y(y,y')\,\},\]
and the coproduct by
\[(X,d_X)+ (Y,d_Y)=(X+Y, d_{X+Y})\]
 where $d_{X+Y}(z,z')$ is $d_X(z,z')$ or $d_Y(z,z')$ if $z,z'$ both lie in $X$ or $Y$, respectively, and $1$ otherwise.

\item\label{ex:lifting:prod-non-can} An alternative lifting of the product functor  $\times$ from $\Set^{(2)}$ to $\FRel_\jne$ is given by
\[ (X,d_X)\,\ol{\times}\,(Y,d_Y) = (X\times Y, d_{+}) \quad\text{where}\quad d_{+}((x,y),(x',y'))=d_X(x,x')\boxplus d_Y(y,y'), \]
Recall that $\boxplus$ denotes truncated addition.
\item\label{ex:liftings:hom-functor} The hom bifunctor 
\[H\colon (\Set^{(2)})^\opp\times \Set^{(2)} \to \Set^{(2)},\qquad H(X,Y)=Y^X,\qquad H((f,g),(h,k))=(h\circ - \circ f, k\circ - \circ g),\]
has a lifting given by
\[\barH\colon \FRel_\jne^\opp\times \FRel_\jne \to \FRel_\jne, \qquad
\barH((X,d_X),(Y,d_Y))=(Y,d_Y)^{(X,d_X)}=(Y^X,d_{Y^X}), \] where the fuzzy relation $d_{Y^X}$ on the function space $Y^X$ is given by \[d_{Y^X}(f,g)=\inf \{\, \epsilon\in [0,1] \mid \forall x,x'\in X.\, d_Y(f(x),g(x'))\leq d_X(x,x')+\epsilon \,\}.\] 
Thus $d_{Y^X}(f,g)$ measures the `defect' of joint non-expansivity of the pair $(f,g)$. Note that the infimum is attained, i.e.\ one has \[d_Y(f(x),g(x'))\leq d_X(x,x')+d_{Y^X}(f,g)\quad \text{for all $x,x'\in X$},\] which means precisely that the evaluation map
\[ \ev\colon (Y^X,d_{Y^X})\,\ol{\times}\, (X,d_X)\to (Y,d_Y),\quad (f,x)\mapsto f(x),  \]
is non-expansive. Indeed, $\ol{\times}$ is a closed monoidal structure with internal homs given by $(Y,d_Y)^{(X,d_X)}$.
\item\label{ex:liftings:dist} Let $\D\colon \Set\to\Set$ be the countable distribution functor, and consider its corresponding functor $\D^{(2)}\colon \Set^{(2)}\to\Set^{(2)}$. The \emph{Wasserstein lifting}~\cite{bbkk18} $\barD\colon \FRel_\jne\to \FRel_\jne$ of $\D^{(2)}$ is given by
\[ \barD(X,d)=(\D X, \barD d) \]
where $\barD d$ is the \emph{Wasserstein distance} on $\D X$: given $\phi=\sum_{i\in I} p_i\cdot x_i$ and $\phi'=\sum_{j\in J} p_j'\cdot x_j'$ in $\D X$, the distance $\barD d(\phi,\phi')$ is the value of a minimum solution of the (countably infinite) linear program

\noindent\begin{minipage}[t]{.45\textwidth}
shown on the right. Note that a minimum solution always exists~\cite[Thm.~2.1]{kortanek_yamasaki95}. We refer to this linear program as the \emph{transportation problem from $\phi$ to $\phi'$}. One interprets $p_i$ as the supply of some commodity available at location $x_i$, and $p_j'$ as the demand for that commodity at location $x_j'$, and $d(x_i,x_j')$ as the cost of transporting one unit of the commodity from $x_i$ to $x_j'$. A solution $(t_{i,j})_{i\in I,j\in J}$ 
\end{minipage}
\begin{minipage}[t]{.5\textwidth}
\begin{equation*}
\begin{array}{ll@{}ll}
\text{minimize}  & \displaystyle\sum\limits_{i,j}& \,d(x_i,x_j') \cdot t_{i,j}&\\
\text{subject to}& \displaystyle\sum\limits_{j}&  \, t_{i,j} = p_i  & (i\in I)\\
& \displaystyle\sum\limits_{i}&   \,t_{i,j} = p_j'  & (j\in J) \\
&& \,t_{i,j}\geq 0 & (i\in I,\, j\in J)
\end{array}
\end{equation*}
\end{minipage}

\noindent of the problem is a transportation plan that determines the amount $t_{i,j}$ of the commodity to be transported from $x_i$ to $x_j'$ for each $i,j$ and requires that all supplies and demands are matched. The minimum value of the transportation problem is the cost of a cheapest transportation plan.
\end{enumerate}
\end{example}

Let us mention a few useful properties of the Wasserstein distance:

\begin{lemma}\label{lem:wasserstein-props}
Let $(X,d),(Y,d')$ be fuzzy relations.
\begin{enumerate}
\item\label{lem:wasserstein-props-convex-combinations} For all $\phi,\phi',\psi,\psi'\in \D X$ and $p\in [0,1]$ we have 
\begin{equation}\label{eq:convex-wasserstein}\barD d(p\cdot \phi+(1-p)\cdot \psi,\, p\cdot \phi'+(1-p)\cdot \psi') \leq p\cdot \barD d(\phi,\phi')+(1-p)\cdot \barD d(\psi,\psi').  \end{equation}
\item\label{lem:wasserstein-props-st-non-expansive} The map $\st\colon \barD(X,d)\,\ol{\times}\, (Y,d')\to \barD((X,d)\,{\ol\times}\, (Y,d'))$ is non-expansive.
\item\label{lem:wasserstein-props-eta-non-expansive} The map $\eta\colon (X,d)\to \barD(X,d)$ is non-expansive.
\item\label{lem:wasserstein-props-mu-non-expansive} The map $\mu\colon \barD\,\barD(X,d)\to \barD(X,d)$ is non-expansive.
\end{enumerate}
\end{lemma}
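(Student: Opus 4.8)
The plan is to prove all four inequalities by one and the same elementary device: for each statement, take optimal transportation plans witnessing the distances appearing on the right-hand (larger) side, assemble them into a \emph{feasible} transportation plan for the problem on the left-hand (smaller) side, and bound the cost of that plan. Since $\barD d(\phi,\phi')$ is by definition the \emph{minimum} cost over all feasible plans, and a minimizing plan always exists (as cited), exhibiting any feasible plan of the required cost yields the desired inequality. Throughout I represent distributions by their formal-sum expansions and treat the index sets as disjoint, so that the marginal constraints can be checked slotwise. The simplest case is \ref{lem:wasserstein-props-eta-non-expansive}: the transportation problem from $\eta(x)=1\cdot x$ to $\eta(x')=1\cdot x'$ has the single plan $t=1$ moving all unit mass from $x$ to $x'$, of cost $d(x,x')$, so $\barD d(\eta(x),\eta(x'))\leq d(x,x')$.

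For \ref{lem:wasserstein-props-convex-combinations}, let $(t_{i,j})$ and $(s_{k,l})$ be optimal plans for $\barD d(\phi,\phi')$ and $\barD d(\psi,\psi')$. Then $p\cdot t_{i,j}$ together with $(1-p)\cdot s_{k,l}$ is feasible from $p\cdot\phi+(1-p)\cdot\psi$ to $p\cdot\phi'+(1-p)\cdot\psi'$: the source marginal at slot $x_i$ is $p\cdot p_i$ and at $y_k$ is $(1-p)\cdot q_k$, and symmetrically for the targets. Its cost is exactly $p\cdot\barD d(\phi,\phi')+(1-p)\cdot\barD d(\psi,\psi')$, which dominates the distance of the combined distributions. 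For \ref{lem:wasserstein-props-st-non-expansive}, write $\phi=\sum_i p_i\cdot x_i$, $\phi'=\sum_j p_j'\cdot x_j'$ and fix an optimal plan $(t_{i,j})$ for $\barD d(\phi,\phi')$. Since the $Y$-coordinate is constant, the same coefficients $(t_{i,j})$ transport $\st(\phi,y)=\sum_i p_i\cdot(x_i,y)$ to $\st(\phi',y')=\sum_j p_j'\cdot(x_j',y')$. Writing $D=\barD d(\phi,\phi')$ and $e=d'(y,y')$, the cost under the metric $d_+$ of the lifted product $\ol{\times}$ is $\sum_{i,j}(d(x_i,x_j')\boxplus e)\,t_{i,j}$; each summand is bounded both by $(d(x_i,x_j')+e)\,t_{i,j}$ and by $t_{i,j}$, so summing gives a bound by $\min\{D+e,\,1\}=D\boxplus e$, exactly the distance in the domain.

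The main obstacle is \ref{lem:wasserstein-props-mu-non-expansive}, the gluing argument for the monad multiplication. Writing $\Phi=\sum_i P_i\cdot\phi_i$ and $\Psi=\sum_j Q_j\cdot\psi_j$ in $\D\D X$, I would first take an optimal \emph{outer} plan $(T_{i,j})$ for $\barD(\barD d)(\Phi,\Psi)$ (a transportation plan between distributions over $\D X$, of cost $\sum_{i,j}\barD d(\phi_i,\psi_j)\,T_{i,j}$), and then, for every pair $(i,j)$, an optimal \emph{inner} plan $(t^{i,j}_{k,l})$ for $\barD d(\phi_i,\psi_j)$ transporting $\phi_i=\sum_k p^i_k\cdot x^i_k$ to $\psi_j=\sum_l q^j_l\cdot y^j_l$. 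The candidate plan from $\mu\Phi$ to $\mu\Psi$ moves mass $T_{i,j}\cdot t^{i,j}_{k,l}$ from $x^i_k$ to $y^j_l$. The delicate point is the marginal check: the total mass leaving the slot for $x^i_k$ is $\sum_{j,l}T_{i,j}\,t^{i,j}_{k,l}=\sum_j T_{i,j}\,p^i_k=P_i\,p^i_k$, which is precisely its weight in $\mu\Phi=\sum_i P_i\cdot\phi_i$, and symmetrically for the targets in $\mu\Psi$. The cost then telescopes: $\sum_{i,j,k,l}d(x^i_k,y^j_l)\,T_{i,j}\,t^{i,j}_{k,l}=\sum_{i,j}T_{i,j}\,\barD d(\phi_i,\psi_j)=\barD(\barD d)(\Phi,\Psi)$, giving $\barD d(\mu\Phi,\mu\Psi)\leq\barD(\barD d)(\Phi,\Psi)$. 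Beyond this bookkeeping, the only care needed is that all the sums range over countable index sets and that optimal plans exist at both levels, which is exactly what the cited existence theorem for the countably infinite transportation problem guarantees.
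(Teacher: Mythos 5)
Your proof is correct and follows essentially the same route as the paper's: in each part you build a feasible plan for the left-hand transportation problem out of optimal plans for the right-hand ones and bound its cost (a convex combination of plans for (1), the unchanged plan for (2), the single all-mass plan for (3), and the glued plan $T_{i,j}\cdot t^{i,j}_{k,l}$ with the same marginal check and telescoping cost for (4)). Your explicit handling of truncated addition in (2) --- bounding each summand by both $(d(x_i,x_j')+e)\,t_{i,j}$ and $t_{i,j}$ so that the total is at most $\min\{D+e,1\}=D\boxplus e$ --- is in fact slightly more careful than the paper's computation, which silently replaces $\boxplus$ by $+$ at that point.
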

Parts \ref{lem:wasserstein-props-eta-non-expansive} and \ref{lem:wasserstein-props-mu-non-expansive} imply that $\barD$ not only lifts $\D^{(2)}$ as a functor, but as a \emph{monad}.

\begin{example}[Relation Liftings]\label{ex:liftings-rel}
All liftings of \Cref{ex:liftings} restrict to liftings along the fibration $p_\jrp\colon \Rel_\jrp\to \Set^{(2)}$, where $\Rel_\jrp$ is regarded as the full subcategory of $\FRel_\jne$ given by $\{0,1\}$-valued fuzzy relations. We denote the restricted liftings again by $(-)_\bot$, $\barH$, $\barD$ etc. More explicitly:
\begin{enumerate}
\item The relation lifting $(-)_\bot\colon \Rel_\jrp\to \Rel_\jrp$ of $\{\bot\}+(-)$ maps the relation $(X,R)$ to $(\{\bot\}+X,R_\bot)$ where $R_\bot$ is given by $R_\bot(x,x')$ iff either $x,x'\in X$ and $R(x,x')$, or $x=\bot$. 
\item The relation lifting $+\colon \Rel_\jrp\times \Rel_\jrp \to \Rel_\jrp$ of the coproduct maps the relations $(X,R)$ and $(Y,S)$ to $(X+Y,R+S)$. Similarly, the relation lifting $\times \colon \Rel_\jrp\times \Rel_\jrp \to \Rel_\jrp$ of the product maps $(X,R)$ and $(Y,S)$ to $(X\times Y,R\times S)$ where $R\times S((x,y),(x',y'))$ iff $R(x,x')$ and $S(y,y')$.
\item The alternative lifting $\ol{\times}$ of the product coincides with $\times$ in the case of relations.
\item The lifting $\barH\colon \Rel_\jrp^\opp \times \Rel_\jrp \to \Rel_\jrp$ of the hom bifunctor $H$ maps two relations $(X,R)$ and $(Y,S)$ to the relation $(Y^X, S^R)$ where $S^R(f,g)$ iff for all $x,x'\in X$, if $R(x,x')$ then $S(f(x),g(x'))$.
\item The lifting $\barD\colon \Rel_\jrp\to \Rel_\jrp$ of $\D^{(2)}$ maps $(X,R)$ to $(\D X,\barD R)$ where $\barD R(\sum_i p_i\cdot x_i, \sum_j p_j'\cdot x_j')$ iff there exist non-negative reals $t_{i,j}\geq 0$ ($i\in I$, $j\in J$) such that, for all $i\in I$ and $j\in J$,
\[\sum_{j\in J:\, R(x_i,x_j')} t_{i,j} = p_i \qqand \sum_{i\in I:\, R(x_i,x_j')} t_{i,j} = p_j'.\]
This says that there exists a transportation plan sending goods only along connections given by $R$.
\end{enumerate}
\end{example}

We conclude this subsection by noting an important property of $\CLat_\sqcap$-fibrations: free algebras and cofree coalgebras of lifted endofunctors can both be lifted from the base category. To the best of our knowledge, this result is new. In the case where $\B$ has binary products and coproducts, it follows from a corresponding result for initial algebras and final coalgebras~\cite[Lem.~III.5]{fsw24}.

\begin{proposition}[Lifting of Free Algebras]\label{prop:initial-alg-cofree-coalg-lift}
Let $p\colon \E\to\B$ be a $\CLat_\sqcap$-fibration, and suppose that $F\colon \B\to \B$ is an endofunctor with a lifting $\barF\colon \E\to \E$ along $p$. For every $P\in \E_X$, if $F$ has a free algebra $F^{\star}X$ generated by $X$, then $\barF$ has a free algebra $\barF^{\star} P$ generated by $P$. Moreover,
\begin{enumerate}
\item the structure and unit of $\barF^\star P$ are above the structure and unit of $F^{\star}X$, respectively;
\item for every $\barF$-algebra $(A,a)$ and $f\colon P\to A$ in $\E$, the free extension $f^\#\colon \barF^\star P\to (A,a)$ of $f$ is above the free extension $(pf)^\#\colon F^{\star} X\to (pA,pa)$ of $pf\colon X\to pA$.
\end{enumerate}
\end{proposition}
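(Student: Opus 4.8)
The plan is to construct the carrier $\barF^\star P$ as a meet inside the fiber $\E_{F^\star X}$ and then read off its universal property from the one already available for $F^\star X$ downstairs. Throughout I write $\iota_X\colon F(F^\star X)\to F^\star X$ and $\eta_X\colon X\to F^\star X$ for the structure and unit of the free $F$-algebra. I would first record two bookkeeping facts that turn everything into inequalities between fibers. Since $p$ is a bifibration with fibrewise Galois connections $f_\star\dashv f^\star$ (\Cref{lem:clat-fib-props}), for $Q,R\in\E$ with $pQ=X$, $pR=Y$ and $f\colon X\to Y$ there is a morphism $Q\to R$ above $f$ iff $f_\star Q\sqleq R$ iff $Q\sqleq f^\star R$, and it is then unique by faithfulness. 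Secondly, applying $\barF$ to the cartesian lift of any $f\colon X\to Y$ yields a morphism $\barF(f^\star R)\to\barF R$ above $Ff$, so that
\[ \barF(f^\star R)\sqleq (Ff)^\star\,\barF R\qquad\text{for all } f\colon X\to Y,\ R\in\E_Y. \]
I also note that the restriction of $\barF$ to any fiber is monotone, since it sends vertical morphisms to vertical morphisms.

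Next I would call $R\in\E_{F^\star X}$ \emph{closed} if $P\sqleq\eta_X^\star R$ and $\barF R\sqleq(\iota_X)^\star R$; by the first fact these say exactly that $R$ admits a unit $P\to R$ above $\eta_X$ and a structure $\barF R\to R$ above $\iota_X$. The class of closed objects is closed under arbitrary meets: reindexing preserves meets (definition of $\CLat_\sqcap$-fibration) and $\barF$ is fibrewise monotone, so if each $R_i$ is closed then $P\sqleq\bigsqcap_i\eta_X^\star R_i=\eta_X^\star\bigsqcap_i R_i$ and $\barF\bigsqcap_i R_i\sqleq\barF R_i\sqleq(\iota_X)^\star R_i$ for every $i$, whence $\barF\bigsqcap_i R_i\sqleq\bigsqcap_i(\iota_X)^\star R_i=(\iota_X)^\star\bigsqcap_i R_i$. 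I therefore define $\barF^\star P$ to be the meet of all closed objects, which is itself the $\sqleq$-least closed object. Being closed, it carries a unit $\bar\eta_P\colon P\to\barF^\star P$ above $\eta_X$ and a structure $\barF(\barF^\star P)\to\barF^\star P$ above $\iota_X$, giving the $\barF$-algebra together with claim~(1).

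To establish the universal property, I would take an $\barF$-algebra $(A,a)$ lying above the $F$-algebra $(A_0,a_0)=(pA,pa)$ and a morphism $f\colon P\to A$ above $pf\colon X\to A_0$, and set $g=(pf)^\#\colon F^\star X\to A_0$, so $g\circ\eta_X=pf$ and $g\circ\iota_X=a_0\circ Fg$. The heart of the argument is to show that $g^\star A\in\E_{F^\star X}$ is closed. For the unit condition, $\eta_X^\star(g^\star A)=(pf)^\star A$ and $P\sqleq(pf)^\star A$ since $f$ is above $pf$. For the structure condition, $a$ above $a_0$ gives $\barF A\sqleq a_0^\star A$, and combining the second bookkeeping fact with monotonicity of reindexing and the splitting property yields
\[ \barF(g^\star A)\sqleq (Fg)^\star\barF A\sqleq (Fg)^\star a_0^\star A=(a_0\circ Fg)^\star A=(g\circ\iota_X)^\star A=(\iota_X)^\star(g^\star A). \]
Hence $g^\star A$ is closed, so $\barF^\star P\sqleq g^\star A$, which supplies a morphism $f^\#\colon\barF^\star P\to A$ above $g$. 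Both sides of the $\barF$-algebra law $f^\#\circ\bar\iota_P=a\circ\barF f^\#$ lie above the $F$-algebra law $g\circ\iota_X=a_0\circ Fg$, and both $f$ and $f^\#\circ\bar\eta_P$ lie above $pf$; faithfulness of $p$ forces both equalities, so $f^\#$ is an $\barF$-algebra morphism extending $f$. Any such morphism is above $(pf)^\#=g$ by the universal property downstairs, hence unique by faithfulness, which simultaneously settles uniqueness and claim~(2).

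The one genuinely load-bearing step will be the verification that $g^\star A$ is closed, and within it the inequality $\barF(g^\star A)\sqleq(Fg)^\star\barF A$ obtained by transporting the cartesian lift through $\barF$; everything else is routine manipulation of the Galois connection, the splitting property, and faithfulness. A minor point demanding care is keeping the direction of $\sqleq$ straight, since in the motivating fibration $\FRel_\jne$ the fiber order is the \emph{reversed} pointwise order on fuzzy relations; the abstract argument, however, never unfolds this and treats all $\CLat_\sqcap$-fibrations uniformly.
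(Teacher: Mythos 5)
Your proof is correct, and it reaches the same carrier as the paper by a slightly different route. The paper defines $\barF^{\star}P$ concretely as the meet $\bigsqcap_{(A,a),\,f\colon P\to A}((pf)^{\#})^{\star}A$ over all $\barF$-algebras and all morphisms out of $P$, and then verifies separately that this object admits a unit above $\eta_X$, a structure above the algebra structure of $F^{\star}X$ (via an explicit diagram chase through the universal property of the cartesian lift $\ol{(pf)^\#}$), and the universal property. You instead take the meet of all ``closed'' objects of the fiber $\E_{F^{\star}X}$, i.e.\ those admitting a unit and a structure above the ones downstairs, prove once that this class is stable under meets, and then recover the universal property by showing that each $((pf)^{\#})^{\star}A$ is closed. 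The two objects coincide (each is below the other: yours is below every $((pf)^{\#})^{\star}A$ since those are closed, and the paper's is below yours by instantiating $(A,a)$ with your algebra and $f$ with your unit), but your decomposition buys a cleaner argument: the inequality $\barF(f^{\star}R)\sqleq(Ff)^{\star}\barF R$, obtained by applying $\barF$ to a cartesian lift, packages exactly the content of the paper's diagram chase in part (b) and is reused verbatim in the closedness check for $g^{\star}A$. The paper's formulation, on the other hand, makes claim (2) essentially immediate from the definition of the meet, whereas you derive it at the end from faithfulness and the universal property downstairs; both derivations are sound. Your final remarks on uniqueness via faithfulness and on the direction of $\sqleq$ are accurate.
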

A dual statement applies to cofree coalgebras $F^\infty X$.

\subsection{Behavioural Conformances and Biconformances}
Liftings of behaviour functors along fibrations give rise to an abstract notion of (bi)simulation and its quantitative counterpart, behavioural distances~\cite{hj98}. 

\begin{notation}\label{not:decent}
Let $p\colon \E\to \B$ be a $\CLat_\sqcap$-fibration. We denote an object $P\in \E_X$ by $(X,P)$, alluding to the intuition that $\E$-objects are $\B$-objects with extra structure, and call $P$ a \emph{conformance} over the object $X$~\cite{kkhkh19,bgkmfsw24}. For $(X,P),(Y,Q)$ in $\E$ and $f\colon X\to Y$ in $\B$, we write $f\colon (X,P)\xto{\thickcdot} (Y,Q)$ if there exists a (necessarily unique) morphism $\dot{f}\colon (X,P)\to (Y,Q)$ in $\E$ above $f$. Note that
\[ f\colon (X,P)\xto{\thickcdot} (Y,Q) \;\;\iff\;\; P\sqleq f^{*}Q \;\;\iff\;\; f_{*}P\sqleq Q. \]
\end{notation}

\begin{defn}[Behavioural Conformance]\label{def:simulation}
Let $F\colon \B\to\B$ be an endofunctor with a lifting $\barF\colon \E\to\E$ along the $\CLat_\sqcap$-fibration $p\colon \E\to \B$. An \emph{$\barF$-behavioural conformance} on an $F$-coalgebra $(C,c)$ is a conformance $P\in \E_C$ such that the following equivalent conditions hold:
\[ c\colon (C,P)\xto{\thickcdot} (FC,\barF P) \;\;\iff\;\; P\sqleq c^{*}\barF P \;\;\iff\;\; c_{*}P\sqleq \barF P. \]  
If a $\sqleq$-greatest $\barF$-behavioural conformance on $(C,c)$ exists, we denote it by $P_c^\to$.
\end{defn}

For $\Qtl$-fibrations and under mild assumptions on the lifting $\barF$, the greatest $\barF$-behavioural conformance $P_c^\to$ always exists:

\begin{lemma}\label{lem:sim-props}
Let $F\colon \B\to \B$ be an endofunctor with a laxly monoidal lifting $\barF\colon \E\to \E$ along the $\Qtl$-fibration $p\colon \E\to \B$, and let $(C,c)$ be an $F$-coalgebra.
\begin{enumerate}
\item The transitive closure of a $\barF$-behavioural conformance on $(C,c)$ is a $\barF$-behavioural conformance.
\item $P_c^\to$ is given by the join in $\E_C$ of all $\barF$-behavioural conformances on $(C,c)$. 
\item\label{lem:sim-props-refl-trans} $P_c^\to$ is reflexive and transitive.
\end{enumerate}
\end{lemma}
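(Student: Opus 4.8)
The plan is to recognise an $\barF$-behavioural conformance on $(C,c)$ as a \emph{post-fixed point} of the monotone operator $\Phi\colon \E_C\to\E_C$, $\Phi(Q)=c^\star\barF Q$, and then to use the $\Qtl$-fibration axioms \eqref{eq:qtl-fibration1}--\eqref{eq:qtl-fibration2} together with lax monoidality of $\barF$ to control the quantale multiplication. The operator $\Phi$ is monotone: $\barF$ restricts to a functor $\E_C\to\E_{FC}$ between thin fibers and is therefore monotone, and each reindexing map $c^\star$ is meet-preserving by the $\CLat_\sqcap$ assumption, hence monotone. With this observation in hand I would prove the three items in the order (2), (1), (3), since (1) supplies exactly the tool needed for the transitivity part of (3). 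For item (2) I would invoke the standard fact that post-fixed points of a monotone endomap on a complete lattice are closed under arbitrary joins: writing $P^\ast$ for the join of all $\barF$-behavioural conformances, every such conformance $Q$ satisfies $Q\sqleq P^\ast$, hence $Q\sqleq\Phi(Q)\sqleq\Phi(P^\ast)$ by monotonicity of $\Phi$; taking the join over all $Q$ yields $P^\ast\sqleq\Phi(P^\ast)$, so $P^\ast$ is itself a conformance and, being above all others, equals $P_c^\to$. This step uses only monotonicity, no quantale structure.

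For item (1), given a conformance $P$ (so $P\sqleq c^\star\barF P$), I would prove by induction on $n\geq 1$ that each power $P^n$ is again a conformance. The base case is the hypothesis. For the inductive step, writing $P^{n+1}=P\cdot P^n$, I would first use monotonicity of the quantale multiplication (immediate from \eqref{eq:quant-1}--\eqref{eq:quant-2}) together with the two pieces of inductive data to get $P^{n+1}\sqleq (c^\star\barF P)\cdot(c^\star\barF P^n)$; then apply the fibration axiom \eqref{eq:qtl-fibration2} with $f=c$ to obtain $(c^\star\barF P)\cdot(c^\star\barF P^n)\sqleq c^\star(\barF P\cdot\barF P^n)$; then use lax monoidality of $\barF$, namely $\barF P\cdot\barF P^n\sqleq\barF(P\cdot P^n)=\barF(P^{n+1})$, followed by monotonicity of $c^\star$. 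Chaining these three inequalities yields $P^{n+1}\sqleq c^\star\barF(P^{n+1})$. Finally, since $P^n\sqleq P^+$ for all $n$ and both $\barF$ and $c^\star$ are monotone, each power satisfies $P^n\sqleq c^\star\barF(P^+)$, and taking the join over $n$ gives $P^+\sqleq c^\star\barF(P^+)$, so the transitive closure $P^+$ is a conformance.

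For item (3), reflexivity follows by checking that the unit $1_C$ is already a conformance: the fibration axiom \eqref{eq:qtl-fibration1} gives $1_C\sqleq c^\star 1_{FC}$, the lax unit law of $\barF$ gives $1_{FC}\sqleq\barF 1_C$, and applying the monotone map $c^\star$ and chaining yields $1_C\sqleq c^\star\barF 1_C$; hence $1_C\sqleq P_c^\to$ by maximality, so $P_c^\to$ is reflexive. Transitivity is where item (1) pays off: since $(P_c^\to)^+$ is a conformance by (1), maximality gives $(P_c^\to)^+\sqleq P_c^\to$, and as $P_c^\to\cdot P_c^\to=(P_c^\to)^2\sqleq(P_c^\to)^+$, we conclude $P_c^\to\cdot P_c^\to\sqleq P_c^\to$.

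I expect the main obstacle to be the inductive step of item (1): all three inequalities there point in the same direction, so the delicate part is keeping track of which monotone operation ($\barF$, $c^\star$, or the quantale multiplication) is applied at each stage and invoking the correct fibration or monoidality axiom at precisely the right point. Isolating a one-line preliminary observation that quantale multiplication is monotone in each argument will streamline this bookkeeping, after which items (2) and (3) are essentially formal consequences of maximality together with the closure property established in (1).
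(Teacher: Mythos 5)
Your proposal is correct and follows essentially the same route as the paper: the same three-step inequality chain (quantale monotonicity, the $\Qtl$-fibration axiom \eqref{eq:qtl-fibration2}, lax monoidality of $\barF$) for closure under composition, the same unit argument for reflexivity, and maximality for transitivity. The only cosmetic differences are that you obtain closure under joins from monotonicity of $Q\mapsto c^\star\barF Q$ rather than from $c_\star$ preserving joins as a left adjoint, and you derive transitivity of $P_c^\to$ via the transitive closure instead of directly from closure under composition; both variants are equally valid.
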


\begin{example}[Simulations]\label{ex:beh-conformance}
Consider the behaviour bifunctor $B(X,Y)=\D(\{\bot\}+Y^X)$ for $\pSKI$ and $\pBCK$, and the associated \emph{endo}functor $F=B(L,-)\colon \Set\to \Set$ that fixes the contravariant component of $B$ to some set $L$ of labels. Given a coalgebra $c\colon C\to FC$, we write $x\Downarrow \phi$ if $c(x)=\phi$. 
\begin{enumerate}
\item\label{ex:beh-conformance-rel} Let $\barF$ be the following lifting of $F$ along the fibration $p_\rp\colon \Rel_\rp \to \Set$:
\[  \barF(X,R) = (\D(\{\bot\}+ X^L), \barD((R^L)_\bot)) \]
where $(-)_\bot$ is the lifting of $\{\bot\}+(-)$ from \Cref{ex:liftings-rel} and $R^L$ is the $L$-fold product of the relation $(X,R)$. Then a $\barF$-behavioural conformance on a coalgebra $(C,c)$ corresponds to a \emph{probabilistic simulation}, i.e.\ a relation $R\seq C\times C$ such that, for all $x,y\in C$,
\[  R(x,y) \wedge x\Downarrow \phi \wedge y\Downarrow \psi \implies \barD((R^L)_\bot)(\phi,\psi).\] 
The conclusion $\barD((R^L)_\bot)(\phi,\psi)$ says that there exists a transportion plan from $\phi$ to $\psi$ along $(R^L)_\bot$.
\item\label{ex:beh-conformance-frel} We now extend the above above lifting of $F$ to a lifting along the fibration $p_\ne\colon \FRel_\ne \to \Set$:
\[  \barF(X,d) = (\D(\{\bot\}+ X^L), \barD((d^L)_\bot)) \]
where $(-)_\bot$ is the fuzzy relation lifting of $\{\bot\}+(-)$ and $(X,d)^L = (X, d^L)$ is the $L$-fold product of $(X,d)$. Then a $\barF$-behavioural conformance on a coalgebra $(C,c)$ is a \emph{probabilistic fuzzy simulation}, i.e.\ a fuzzy relation $d\colon C\times C\to [0,1]$ such that, for all $x,y\in C$,
\begin{equation}\label{eq:fuzzy-sim}  x\Downarrow \phi \wedge y\Downarrow \psi \implies \barD((d^L)_\bot)(\phi,\psi)\leq d(x,y).\end{equation}
\end{enumerate} 
\end{example}
The notions of probabilistic (fuzzy) simulation apply, in particular, to the operational models of \pSKI and \pBCK, which are $F$-coalgebras $\gamma\colon \Lambda \to B(\Lambda,\Lambda)$ for $\Lambda\in \{ \Lambda_\pSKI, \Lambda_\pBCK \}$. Here $L=\Lambda$.

As indicated by the above examples, behavioural conformances are to be thought of as an abstract notion of simulation (or quantitative versions thereof). This is supported by the fact that the greatest behavioural conformance on a coalgebra is always a preorder (\Cref{lem:sim-props}\ref{lem:sim-props-refl-trans}), but generally not symmetric. To achieve symmetry, we consider behavioural \emph{bi}conformances, which are fibrational abstractions of \emph{bi}simulations and naturally emerge in the setting of $\InQtl$-fibrations:

\begin{defn}[Behavioural Biconformance]\label{def:biconformance}
Let $F\colon \B\to\B$ be an endofunctor with a lifting $\barF\colon \E\to\E$ along the $\InQtl$-fibration $p\colon \E\to \B$. An \emph{$\barF$-behavioural biconformance} on an $F$-coalgebra $(C,c)$ is a conformance $P\in \E_C$ such that both $P$ and $P^\circ$ are $\barF$-behavioural conformances on $(C,c)$. If a $\sqleq$-greatest $\barF$-behavioural biconformance on $(C,c)$ exists, we denote it by $P_c^\leftrightarrow$.
\end{defn}

Much analogous to \Cref{lem:sim-props}, we have:
\begin{lemma}\label{lem:bisim-props}
Let $F\colon \B\to \B$ be an endofunctor with a laxly monoidal lifting $\barF\colon \E\to \E$ along the $\InQtl$-fibration $p\colon \E\to \B$, and let $(C,c)$ be an $F$-coalgebra.
\begin{enumerate}
\item $P_c^\leftrightarrow$ is given by the join in $\E_C$ of all $\barF$-behavioural biconformances on $(C,c)$. 
\item $P_c^\leftrightarrow$ is reflexive, symmetric, and transitive.
\end{enumerate}
\end{lemma}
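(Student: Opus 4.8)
The plan is to mirror the three assertions of \Cref{lem:sim-props} for conformances, exploiting that the fibrewise involution $(-)^\circ$ on each fibre quantale is a join-preserving anti-endomorphism. As a preliminary observation I would record that \emph{arbitrary} joins of $\barF$-behavioural conformances are again behavioural conformances: if $P_i \sqleq c^\star\barF P_i$ for all $i$, then since $\barF$ and the reindexing map $c^\star$ are monotone (both being functors between the thin fibre categories), we get $P_i \sqleq c^\star\barF P_i \sqleq c^\star\barF(\bigsqcup_j P_j)$ for each $i$, and taking the join over $i$ yields $\bigsqcup_i P_i \sqleq c^\star\barF(\bigsqcup_j P_j)$. (This is exactly the monotonicity that underlies \Cref{lem:sim-props}; the joins exist because the fibres are complete lattices.)

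For part~(1), set $J = \bigsqcup\{\,P \mid P \text{ a biconformance}\,\}$. Then $J$ is a join of conformances, hence a conformance by the preliminary observation. Moreover, by the involutive-quantale axiom \eqref{eq:inv-quant} we have $J^\circ = \bigsqcup_P P^\circ$, and each $P^\circ$ is a conformance because each $P$ is a biconformance; so $J^\circ$ is again a join of conformances and thus a conformance. Hence $J$ is itself a biconformance, and being above every biconformance it is the greatest one, so $P_c^\leftrightarrow = J$.

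For part~(2) I treat the three properties in turn. \textbf{Reflexivity:} I first show the fibrewise unit $1_C$ is a biconformance. The $\Qtl$-fibration axiom \eqref{eq:qtl-fibration1} gives $1_C \sqleq c^\star 1_{FC}$, while lax monoidality of $\barF$ gives $1_{FC}\sqleq \barF 1_C$; applying the monotone map $c^\star$ and composing yields $1_C \sqleq c^\star\barF 1_C$, so $1_C$ is a conformance, and since $1_C^\circ = 1_C$ by \eqref{eq:inv-mon2} it is a biconformance. Hence $1_C \sqleq P_c^\leftrightarrow$. \textbf{Symmetry:} the notion of biconformance is symmetric under $(-)^\circ$ (as $(P^\circ)^\circ = P$ by \eqref{eq:inv-mon1}), so the set of biconformances is closed under $(-)^\circ$; combining this with the join-formula of part~(1) and \eqref{eq:inv-quant} gives $(P_c^\leftrightarrow)^\circ = \bigsqcup_P P^\circ = \bigsqcup_P P = P_c^\leftrightarrow$. \textbf{Transitivity:} I would show that the transitive closure $(P_c^\leftrightarrow)^+$ is again a biconformance. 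Indeed $P_c^\leftrightarrow$ is a conformance, so $(P_c^\leftrightarrow)^+$ is a conformance by the transitive-closure part of \Cref{lem:sim-props}; and using $(x\cdot y)^\circ = y^\circ\cdot x^\circ$ from \eqref{eq:inv-mon3} together with \eqref{eq:inv-quant} one computes $((P_c^\leftrightarrow)^+)^\circ = ((P_c^\leftrightarrow)^\circ)^+$, which is the transitive closure of the conformance $(P_c^\leftrightarrow)^\circ$ and hence again a conformance. Thus $(P_c^\leftrightarrow)^+$ is a biconformance, so $(P_c^\leftrightarrow)^+ \sqleq P_c^\leftrightarrow$; since $P_c^\leftrightarrow\cdot P_c^\leftrightarrow \sqleq (P_c^\leftrightarrow)^+$, we conclude $P_c^\leftrightarrow\cdot P_c^\leftrightarrow \sqleq P_c^\leftrightarrow$.

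The argument is largely a routine ``dualisation'' of \Cref{lem:sim-props}, and its only genuinely non-trivial input is reused wholesale, namely the closure of behavioural conformances under transitive closure, which is where lax monoidality of $\barF$ is actually needed. The main thing to get right is the bookkeeping of how $(-)^\circ$ interacts with joins and with the monoid product --- precisely that it preserves joins \eqref{eq:inv-quant} and reverses products \eqref{eq:inv-mon3} --- so that every statement about $P$ transfers verbatim to $P^\circ$. I expect no serious obstacle beyond this.
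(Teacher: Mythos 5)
Your proof is correct and follows essentially the same route as the paper's (given in the appendix as an expanded version of \Cref{lem:sim-props}): reduce everything to the closure properties of behavioural conformances and track how $(-)^\circ$ interacts with joins via \eqref{eq:inv-quant} and with products via \eqref{eq:inv-mon3}, so that each closure property transfers to converses. The only cosmetic differences are that you prove closure under joins via monotonicity of $c^\star\barF$ rather than join-preservation of $c_\star$, and you obtain transitivity via the biconformance $(P_c^\leftrightarrow)^{+}$ where the paper uses $P_c^\leftrightarrow\cdot P_c^\leftrightarrow$ directly; both are immaterial.
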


\begin{example}[Bisimulations]\label{ex:bisim}
\begin{enumerate} 
\item\label{ex:bisim-bisim} In the setting of \Cref{ex:beh-conformance}\ref{ex:beh-conformance-rel}, a
$\barF$-behavioural biconformance on a coalgebra $(C,c)$ corresponds to a \emph{probabilistic bisimulation} $R\seq C\times C$, where 
\[  R(x,y) \wedge x\Downarrow \phi \wedge y\Downarrow \psi \implies \barD((R^L)_\bot)(\phi,\psi) \wedge \barD((R^\circ)^L)_\bot)(\psi,\phi).\] 
 This notion essentially goes back to \citet{ls91}, except that we do not insist on bisimulations to be equivalence relations. However, by \Cref{lem:bisim-props}, \emph{probabilistic bisimilarity} (the greatest probabilistic bisimulation) is always an equivalence relation. 
\item\label{ex:bisim-pseudomet} Similarly, in \Cref{ex:beh-conformance}\ref{ex:beh-conformance-frel}, a $\barF$-behavioural biconformance on $(C,c)$ is precisely a \emph{probabilistic fuzzy bisimulation} $d\colon C\times C\to [0,1]$, where in the conclusion of \eqref{eq:fuzzy-sim} we additionally require $\barD(((d^\circ)^L)_\bot)(\psi,\phi)\leq d(x,y)$. The greatest probabilistic fuzzy bisimulation on  $(C,c)$ is the \emph{behavioural pseudometric}~\cite{bbkk18} on $(C,c)$; note that it is indeed a pseudometric by \Cref{lem:bisim-props}.
\end{enumerate}
\end{example}
The notions of probabilistic bisimilarity and behavioural pseudometric apply to the canonical models of \pSKI and \pBCK, and correspond to analogous notions for probabilistic $\lambda$-calculi~\cite{dsa14,cdl15}.

\subsection{Congruences}

Dually to behavioural conformances on coalgebras, notions of congruence for algebras are modelled by liftings of their syntax functor:
\begin{defn}[Congruence]\label{def:congruence}
Let $\Sigma\colon \B\to\B$ be an endofunctor with a lifting $\barSigma\colon \E\to\E$ along the $\CLat_\sqcap$-fibration $p\colon \E\to \B$. A \emph{$\barSigma$-congruence} on a $\Sigma$-algebra $(A,a)$ is a conformance $P\in \E_A$ such that the following equivalent conditions hold:
\[ a\colon (\Sigma A,\barSigma P)\xto{\thickcdot} (A,P) \;\;\iff\;\; \barSigma P\sqleq a^{*}P \;\;\iff\;\; a_{*}\barSigma P\sqleq P. \]  
\end{defn}

\begin{example}\label{ex:congruence}
Let $\Sigma$ be a signature with its associated polynomial functor $\Sigma\colon \Set\to \Set$. Depending on the choice of fibration and lifting, we recover various notions of congruence on $\Sigma$-algebras:
\begin{enumerate}
\item\label{ex:congruence-rel} For the fibration $p_\rp\colon \Rel_\rp\to \Set$ and the lifting
\[ \barSigma (X,R) = \coprod_{\sigma\in\Sigma} \underbrace{(X,R)\times \cdots \times (X,R)}_{\text{$\ar(\sigma)$ factors}}, \]
a $\barSigma$-congruence on an algebra $A$ is the usual notion from universal algebra: a relation $R\seq A\times A$ such that if $\f\in \Sigma$ is $n$-ary and $R(a_i,a_i')$ for $i=1,\ldots,n$, then $R(\f^A(a_1,\ldots, a_n),\f^A(a_1',\ldots,a_n'))$.
\item\label{ex:congruence-fuzzy} For the fibration $p_\ne\colon \FRel_\ne\to \Set$ and the liftings
\[ \widetilde{\Sigma} (X,d) = \coprod_{\sigma\in\Sigma} \underbrace{(X,d)\times \cdots \times (X,d)}_{\text{$\ar(\sigma)$ factors}} \qqand \barSigma (X,d) = \coprod_{\sigma\in\Sigma} \underbrace{(X,d) \mathbin{\ol{\times}} \cdots \mathbin{\ol{\times}} (X,d)}_{\text{$\ar(\sigma)$ factors}}, \]
a $\widetilde{\Sigma}$-congruence on an algebra $A$ is fuzzy relation $d\colon A\times A\to [0,1]$ satisfying the inequality $d(\f^A(a_1,\ldots,a_n),\f^A(a_1',\ldots,a_n'))\leq \max_{i=1,\ldots,n} d(a_i,a_i')$ for all $a_i,a_i'$ and $n$-ary $\f\in \Sigma$. A $\barSigma$-congruence requires the weaker property $d(\f^A(a_1,\ldots,a_n'),\f^A(a_1',\ldots,a_n'))\leq \sum_{i=1}^n d(a_i,a_i')$ instead.
\end{enumerate}
\end{example}

The next lemma gives some properties of congruences; recall that $(-)^{+}$ is the transitive closure.

\begin{lemma}\label{lem:cong-props} Let $\Sigma\colon \B\to\B$ be an endofunctor with a monoidal lifting $\barSigma\colon \E\to\E$ along the $\Qtl$-fibration $p\colon \E\to \B$, and let $(A,a)$ be a $\Sigma$-algebra. 
\begin{enumerate}
\item\label{lem:cong-props-comp} If $P$, $Q$ are $\barSigma$-congruences on $(A,a)$, then so is $P\cdot Q$.
\item\label{lem:cong-props-transhull} If $\barSigma\colon \E_A\to \E_{\Sigma A}$ preserves directed joins and $P$ is a reflexive $\barSigma$-congruence on $(A,a)$, then so is $P^{+}$.
\end{enumerate}
\end{lemma}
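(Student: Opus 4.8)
The plan is to work throughout with the characterisation from \Cref{def:congruence} that a conformance $P$ is a $\barSigma$-congruence on $(A,a)$ precisely when $a_\star\barSigma P\sqleq P$, and to exploit two structural facts: that the fibre multiplication $\cdot$ is monotone in each argument (being join-preserving, by \eqref{eq:quant-1}--\eqref{eq:quant-2}), and that the lifting $\barSigma$ preserves it strictly, $\barSigma(P\cdot Q)=\barSigma P\cdot\barSigma Q$, by monoidality. The one genuinely fibrational ingredient I need is that opreindexing is \emph{oplax monoidal}: for every $f\colon X\to Y$ and $P,Q\in\E_X$ one has $f_\star(P\cdot Q)\sqleq f_\star P\cdot f_\star Q$.

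First I would establish this oplax law from the $\Qtl$-fibration axiom \eqref{eq:qtl-fibration2} together with the Galois connection $f_\star\dashv f^\star$ (\Cref{lem:clat-fib-props}\ref{lem:clat-fib-props-adjunction}). Using the unit $P\sqleq f^\star f_\star P$ (and likewise for $Q$) and monotonicity of $\cdot$, one gets $P\cdot Q\sqleq (f^\star f_\star P)\cdot(f^\star f_\star Q)\sqleq f^\star(f_\star P\cdot f_\star Q)$, the last step by \eqref{eq:qtl-fibration2}; transposing across the adjunction yields $f_\star(P\cdot Q)\sqleq f_\star P\cdot f_\star Q$. With this in hand, Part \ref{lem:cong-props-comp} is a short calculation. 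Applying $a_\star$ to $\barSigma(P\cdot Q)=\barSigma P\cdot\barSigma Q$, the oplax law gives $a_\star\barSigma(P\cdot Q)\sqleq a_\star\barSigma P\cdot a_\star\barSigma Q$, and since $P,Q$ are congruences ($a_\star\barSigma P\sqleq P$ and $a_\star\barSigma Q\sqleq Q$) monotonicity of $\cdot$ finishes with $a_\star\barSigma P\cdot a_\star\barSigma Q\sqleq P\cdot Q$. Hence $P\cdot Q$ is a congruence.

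For Part \ref{lem:cong-props-transhull} I would first note, by induction on $n\geq 1$ using Part \ref{lem:cong-props-comp}, that every power $P^n$ is a congruence ($P^1=P$ by hypothesis, $P^{n+1}=P^n\cdot P$ by the inductive step). Reflexivity $1_A\sqleq P$ then gives $P^n=P^n\cdot 1_A\sqleq P^n\cdot P=P^{n+1}$, so $(P^n)_{n\geq 1}$ is an ascending $\omega$-chain and $P^{+}=\bigsqcup_{n\geq 1}P^n$ is a \emph{directed} join. Now the hypothesis that $\barSigma$ preserves directed joins yields $\barSigma P^{+}=\bigsqcup_n\barSigma P^n$, and since $a_\star$ preserves all joins (\Cref{lem:clat-fib-props}\ref{lem:clat-fib-props-adjunction}) we obtain $a_\star\barSigma P^{+}=\bigsqcup_n a_\star\barSigma P^n\sqleq\bigsqcup_n P^n=P^{+}$, using that each $P^n$ is a congruence. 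Thus $P^{+}$ is a congruence.

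The routine part is the monotonicity bookkeeping; the step requiring care is the oplax monoidality of $a_\star$, which is where the direction of the $\Qtl$-fibration inequality \eqref{eq:qtl-fibration2} matters: the lax direction available for $f^\star$ must be turned around via the adjunction, and this is precisely why the lifting is assumed strictly monoidal rather than merely lax. In Part \ref{lem:cong-props-transhull} the essential—and easily overlooked—point is that reflexivity is exactly what makes the defining join of the transitive closure directed, so that the directed-join preservation hypothesis on $\barSigma$ becomes applicable.
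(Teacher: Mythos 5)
Your proof is correct and follows essentially the same route as the paper's: both derive the oplax law $f_\star(P\cdot Q)\sqleq f_\star P\cdot f_\star Q$ from axiom \eqref{eq:qtl-fibration2} via the Galois connection, use it together with strict monoidality of $\barSigma$ for part \ref{lem:cong-props-comp}, and observe that reflexivity makes $(P^n)_{n\geq1}$ an ascending chain so that directed-join preservation applies in part \ref{lem:cong-props-transhull}. The only (immaterial) difference is that in part \ref{lem:cong-props-transhull} you first show each $P^n$ is a congruence by induction via part \ref{lem:cong-props-comp}, whereas the paper instead expands $\barSigma P^n=(\barSigma P)^n$ and iterates the oplax law directly inside one chain of inequalities.
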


In order to establish congruence properties of behavioural (bi)conformances on models of AHOS, we will make use of a fibrational generalization of \emph{Howe's method}~\cite{DBLP:conf/lics/Howe89,DBLP:journals/iandc/Howe96}. Its key ingredient is a purely algebraic concept, the \emph{Howe closure} of a conformance:

\begin{definition}[Howe Closure]\label{def:howe-closure}
Let $\Sigma\colon \B\to\B$ be an endofunctor with a lifting $\barSigma\colon \E\to\E$ along the $\Qtl$-fibration $p\colon \E\to \B$, and let $(A,a)$ be a $\Sigma$-algebra. The \emph{Howe closure} of a conformance $P\in \E_A$ w.r.t.\ the algebra $(A,a)$ is the $\sqleq$-least conformance $\hat P\in \E_A$ such that
\[ \hat P = P\sqcup (a_{*} \barSigma\hatP)\cdot P,\] 
i.e.\ the least fixed point of the monotone endomap $Q\mapsto P\sqcup (a_{*} \barSigma Q)\cdot P$ on the complete lattice $\E_A$.
\end{definition}

The key properties of the Howe closure are the following:

\begin{lemma}\label{lem:howe-closure-props}
Let $\Sigma\colon \B\to\B$ be an endofunctor with a lifting $\barSigma\colon \E\to\E$ along the $\Qtl$-fibration $p\colon \E\to \B$, let $(A,a)$ be a $\Sigma$-algebra, and let $\hatP$ be the Howe closure of $P\in \E_A$ w.r.t.\ $(A,a)$.  
\begin{enumerate}
\item\label{lem:howe-closure-props-reflexive} If $P$ is reflexive, then $\hat P$ is reflexive and a $\barSigma$-congruence on $(A,a)$.
\item\label{lem:howe-closure-props-weakly-transitive} If $P$ is transitive, then $\hat P\cdot P \sqleq \hat P$.
\item\label{lem:howe-closure-props-symmetric} Suppose that $p$ is an $\InQtl$-fibration, the lifting $\barSigma$ is $\circ$-monoidal, and $\barSigma\colon \E_A\to \E_{\Sigma A}$ preserves directed joins. If $P$ is reflexive and symmetric, then $(\hatP)^{+}$ is symmetric.
\end{enumerate}
\end{lemma}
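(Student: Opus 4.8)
The plan is to work throughout from the defining fixed-point equation $\hatP = P\sqcup (a_\star\barSigma\hatP)\cdot P$ of \Cref{def:howe-closure}, together with the quantale laws — in particular that $x\cdot(-)$ and $(-)\cdot x$ preserve joins (hence are monotone) and the unit law $1\cdot x = x = x\cdot 1$. For \ref{lem:howe-closure-props-reflexive}, reflexivity of $\hatP$ is immediate since $1\sqleq P\sqleq\hatP$ by reflexivity of $P$ and the fixed-point equation. For the congruence property $a_\star\barSigma\hatP\sqleq\hatP$, I would use $1\sqleq P$ and monotonicity of multiplication to obtain $a_\star\barSigma\hatP = (a_\star\barSigma\hatP)\cdot 1\sqleq (a_\star\barSigma\hatP)\cdot P\sqleq\hatP$, the last step because the right-hand side is a joinand of the fixed-point equation. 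For \ref{lem:howe-closure-props-weakly-transitive}, I would distribute the product over the join, writing $\hatP\cdot P = (P\cdot P)\sqcup\bigl((a_\star\barSigma\hatP)\cdot(P\cdot P)\bigr)$ via associativity; transitivity $P\cdot P\sqleq P$ then bounds the first summand by $P\sqleq\hatP$ and the second by $(a_\star\barSigma\hatP)\cdot P\sqleq\hatP$.

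The substance is \ref{lem:howe-closure-props-symmetric}. Since $(-)^\circ$ preserves joins and reverses products, one has $((\hatP)^+)^\circ = ((\hatP)^\circ)^+$; as $(-)^+$ is monotone and idempotent (routine quantale facts: $(x^+)^+ = x^+$ since exponents add), symmetry of $(\hatP)^+$ reduces to the single inequality $(\hatP)^\circ\sqleq(\hatP)^+$, whence $((\hatP)^\circ)^+\sqleq((\hatP)^+)^+ = (\hatP)^+$ and therefore $((\hatP)^+)^\circ\sqleq(\hatP)^+$, which forces equality. I would prove the key inequality $(\hatP)^\circ\sqleq(\hatP)^+$ by transfinite induction along the standard iterative construction $\hatP = \bigsqcup_\alpha\hatP_\alpha$ of the least fixed point, where successor stages apply the Howe operator and limit stages are joins. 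The base and limit steps are routine, using $\bot^\circ=\bot$ and join-preservation of $(-)^\circ$. The successor step rests on computing $(\hatP_{\alpha+1})^\circ = P\sqcup\bigl(P\cdot(a_\star\barSigma\hatP_\alpha)^\circ\bigr)$, using symmetry $P^\circ = P$.

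Two ingredients feed the successor step. The $\circ$-monoidality of $\barSigma$ gives $(\barSigma\hatP_\alpha)^\circ = \barSigma\bigl((\hatP_\alpha)^\circ\bigr)$, after which monotonicity of $\barSigma$ and the induction hypothesis $(\hatP_\alpha)^\circ\sqleq(\hatP)^+$ yield $(a_\star\barSigma\hatP_\alpha)^\circ\sqleq a_\star\barSigma\bigl((\hatP)^+\bigr)$; since $\hatP$ is a reflexive congruence by \ref{lem:howe-closure-props-reflexive} and $\barSigma$ preserves directed joins, \Cref{lem:cong-props}\ref{lem:cong-props-transhull} makes $(\hatP)^+$ a congruence, so $a_\star\barSigma\bigl((\hatP)^+\bigr)\sqleq(\hatP)^+$, and then $P\sqleq(\hatP)^+$ with transitivity of $(\hatP)^+$ closes the step. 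The main obstacle is the step hidden in the first ingredient: commuting reversal past the \emph{opreindexing}, i.e.\ establishing $(a_\star Q)^\circ\sqleq a_\star(Q^\circ)$. This does not follow from the $\InQtl$-fibration axiom \eqref{eq:iqtl-fibration}, which constrains only the contravariant reindexing $a^\star$; I would instead derive it by combining \eqref{eq:iqtl-fibration} with the Galois connection $a_\star\dashv a^\star$ of \Cref{lem:clat-fib-props}\ref{lem:clat-fib-props-adjunction}. Concretely, from the unit $Q^\circ\sqleq a^\star a_\star(Q^\circ)$ one applies $(-)^\circ$ to get $Q\sqleq\bigl(a^\star a_\star(Q^\circ)\bigr)^\circ$, then uses \eqref{eq:iqtl-fibration} to obtain $Q\sqleq a^\star\bigl((a_\star(Q^\circ))^\circ\bigr)$; transposing across the adjunction gives $a_\star Q\sqleq(a_\star(Q^\circ))^\circ$, and a final application of $(-)^\circ$ delivers $(a_\star Q)^\circ\sqleq a_\star(Q^\circ)$. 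I expect this adjunction-and-involution juggling to be the only genuinely delicate point, the remainder being bookkeeping with the quantale structure.
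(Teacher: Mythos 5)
Your proposal is correct. Parts \ref{lem:howe-closure-props-reflexive} and \ref{lem:howe-closure-props-weakly-transitive} coincide with the paper's proof line for line. For part \ref{lem:howe-closure-props-symmetric} you reach the same destination by a slightly different road: you reduce symmetry of $(\hatP)^{+}$ to $(\hatP)^\circ\sqleq(\hatP)^{+}$ and establish that inequality by transfinite induction along the Knaster--Tarski iteration of the Howe operator, whereas the paper reduces to the $(-)^\circ$-image of your inequality, namely $\hatP\sqleq(\hatP^\circ)^{+}$, and proves it in one stroke by exhibiting $(\hatP^\circ)^{+}$ as a pre-fixed point of $Q\mapsto P\sqcup(a_\star\barSigma Q)\cdot P$ and invoking leastness of $\hatP$. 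The two arguments use identical ingredients (that $(\hatP)^{+}$, resp.\ $(\hatP^\circ)^{+}$, is a congruence via part \ref{lem:howe-closure-props-reflexive} and \Cref{lem:cong-props}\ref{lem:cong-props-transhull}, symmetry of $P$, and commuting $(-)^\circ$ past $a_\star$ and $\barSigma$); your induction is essentially the unfolded form of the pre-fixed-point argument, so the paper's version saves you the base, limit, and bookkeeping steps. Your handling of the one genuinely delicate point is also sound: the inequality $(a_\star Q)^\circ\sqleq a_\star(Q^\circ)$ indeed does not follow directly from \eqref{eq:iqtl-fibration}, and your derivation via the unit of $a_\star\dashv a^\star$ from \Cref{lem:clat-fib-props}\ref{lem:clat-fib-props-adjunction} is exactly right (the paper needs the mirror inequality $a_\star(Q^\circ)\sqleq(a_\star Q)^\circ$ in its auxiliary lemma on reversal of congruences and obtains it by the dual adjunction computation; since $(-)^\circ$ is an involution, the two inequalities are in fact equivalent and both hold with equality).
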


\section{Congruence Theorems for Abstract Higher-Order Specifications}\label{sec:congruence-ahos}
We now put the categorical machinery of \Cref{sec:ahos,sec:fibrations} together and develop a theory of congruence for behavioural (bi)conformances on canonical models of AHOS, which generalizes corresponding results for applicative (bi)simulations and their quantitative extensions and brings them onto a common foundation. The data for our congruence theorems is given by a complete AHOS (specifiying the syntax and operational semantics of the higher-order language at hand) together with a choice of liftings of its syntax and behaviour functors along a fibration, specifying the targeted forms of congruence and behavioural (bi)conformance.

\begin{definition}[Lifting Situation]\label{def:lifting-situation}
Let $\S=(\B,\Sigma,B,\leq,\rho)$ be a complete AHOS. A \emph{lifting situation} for $\S$ is a triple $\L=(p,\barSigma,\barB)$ given by a $\CLat_\sqcap$-fibration $p\colon \E\to \B^{(2)}$ and liftings $\barSigma$, $\barB$ of the functors $\Sigma^{(2)}$, $B^{(2)}$ (cf.\ \Cref{sec:category-theory}, \emph{Notation}) along the fibration $p$.
\begin{equation*}
	\begin{tikzcd}
		\E \ar{d}[swap]{p}  \ar{r}{\barSigma} & \E \ar{d}{p}  \\
		\B^{(2)} \ar{r}{\Sigma^{(2)}}  & \B^{(2)}
	\end{tikzcd}
\qquad\qquad
\begin{tikzcd}
		\E^\opp\times \E
		\ar{d}[swap]{p^\opp\times p} \ar{r}{\barB} & \E \ar{d}{p}  \\
		(\B^{(2)})^\opp \times \B^{(2)} \ar{r}{B^{(2)}}  & \B^{(2)}
	\end{tikzcd}
\end{equation*}
\end{definition}
In our applications (\Cref{sec:applications}), the fibration $p$ is chosen as $p_\jrp\colon \Rel_\jrp\to \Set^{(2)}$ (corresponding to bisimulations) or $p_\jne\colon \FRel_\jrp\to \Set^{(2)}$ (corresponding to behavioural distances). The requirement for the base category to be $\B^{(2)}$ instead of $\B$ is necessary for our approach to Howe's method.

\begin{notation}\label{not:restricted-liftings}
We let $p_0\colon \E_0\to \B$ denote the fibration arising via the change of base shown in the first diagram below. More explicitly, $\E_0$ is the non-full subcategory of $\E$ that has the same objects as~$\E$, and all morphisms of $\E$ that lie above some $\B^{(2)}$-morphism of the form $(f,f)$. The functor $p_0$ is a restriction of $p$. Note that $\barSigma$ and $\barB$ restrict to liftings of $\Sigma$ and $B$ along $p_0$. We denote the restricted liftings by the same letters $\barSigma$ and $\barB$, as shown in the second and third diagram below:
\[
\begin{tikzcd}
\E_0 \ar[tail]{r} \ar{d}[swap]{p_0} \pullbackangle{-45} \ar{d} & \E \ar{d} \ar{d}{p} \\
\B \ar[tail]{r}{I} & \B^{(2)}
\end{tikzcd}
\qquad\qquad
	\begin{tikzcd}
		\E_0 \ar{d}[swap]{p_0}  \ar{r}{\barSigma} & \E_0 \ar{d}{p_0}  \\
		\B \ar{r}{\Sigma}  & \B
	\end{tikzcd}
\qquad\qquad
\begin{tikzcd}
		\E_0^\opp\times \E_0
		\ar{d}[swap]{p_0^\opp\times p_0} \ar{r}{\barB} & \E_0 \ar{d}{p_0}  \\
		\B^\opp \times \B \ar{r}{B}  & \B
	\end{tikzcd}
\]
\end{notation}
For example, for the fibrations $p_\jrp\colon \Rel_\jrp\to \Set^{(2)}$ and $p_\jne\colon \FRel_\jrp\to \Set^{(2)}$, the change of base yields the respective restrictions $p_\rp\colon \Rel_\rp\to \Set$ and $p_\jne\colon \FRel_\jne\to \Set$ (\Cref{rem:rel-frel}\ref{rem:rel-frel-cob}).

In the following we isolate the two core conditions on a lifting situation that make our abstract congruence results work. The first condition states that the DCPO structure on $\HOCoalg_\Lambda(B)$, which is used to construct the canonical model, interacts well with the order on fibers and reindexing:
\begin{definition}[Compatibility]\label{def:compatibility}
A complete AHOS $\S=(\B,\Sigma,B,\leq,\rho)$ is \emph{compatible} with the lifting situation $\L=(p,\barSigma,\barB)$ if for every $\leq$-directed family $c_i\colon \Lambda\to B(\Lambda,\Lambda)$ ($i\in I$)  and every $d\colon \Lambda\to B(\Lambda,\Lambda)$, one has
 \[
(\bigvee_{i\in I} c_i, d)^{*}\barB(P,Q) = \bigsqcap_{i\in I}\, (c_i,d)^{*}\barB(P,Q)\qquad \text{for all $P,Q\in \E_\Lambda$}.\]  
\end{definition}

The second condition says that the rule morphism should lift along the given fibration:
\begin{definition}[Liftability]\label{def:liftability}
Let $\S=(\B,\Sigma,B,\leq,\rho)$ be an AHOS and let $\L=(p,\barSigma,\barB)$ be a lifting situation for $\S$. We say that the rule morphism $\rho$ is \emph{liftable w.r.t.\ $\L$} if 
\[ \rho\colon \barSigma\barB^\infty(P,P)\xto{\thickcdot} \barB(P,\barSigma^\star P) \qquad \text{for every $P\in \E_\Lambda$.} \]
(Note that $\barSigma\barB^\infty(P,P)$ and $\barB(P,\barSigma^\star P)$ are above $\Sigma B^\infty(\Lambda,\Lambda)$ and $B(\Lambda,\Sigmas\Lambda)$, resp., by \Cref{prop:initial-alg-cofree-coalg-lift}.)
\end{definition}

Liftability turns out to be the key to our congruence results below. Informally, it corresponds to a form of `continuity' requirement on the operational rules represented by $\rho$: whichever conformance, e.g.\ (fuzzy) relation, $P$ is put above $\Lambda$, if the premises of a rule are related according to $P$, then the conclusions should be related according to $P$ as well. In particular, since $P$ is arbitrary, this requires that the operational rules should be parametrically polymorphic, i.e.\ they treat the program terms appearing in them as black-box variables and do not inspect their content. We will support these intuitions in our analysis of \pSKI and \pBCK in \Cref{sec:applications}.

We are prepared to state the first half of our main result, which establishes congruence of the greatest behavioural conformance on the canonical model of a complete AHOS $\S=(\B,\Sigma,B,\leq,\rho)$ w.r.t.~a  lifting situation $\L=(p,\barSigma,\barB)$.  We regard the canonical model $\gamma\colon \Lambda\to B(\Lambda,\Lambda)$ as a coalgebra for the endofunctor $B(\Lambda,-)\colon \B\to \B$, which has a lifting $\barB(1_\Lambda,-)$ along $p_0$ (\Cref{not:restricted-liftings}). The greatest $\barB(1_\Lambda,-)$-behavioural conformance on $(\Lambda,\gamma)$ is denoted by $P_{\gamma}^\to$ as in \Cref{def:simulation}.

\begin{theorem}[Congruence of Behavioural Conformances]\label{thm:app-struct-cong}
Suppose that the following is given:
\begin{itemize}
\item a complete AHOS $\S=(\B,\Sigma,B,\leq,\rho)$ with its canonical model $(\Lambda,\gamma)$;
\item a compatible lifting situation $\L=(p,\barSigma,\barB)$, where $p$ is a heterogeneous $\Qtl$-fibration and the liftings $\barSigma$, $\barB$ are laxly monoidal.
\end{itemize} If the rule morphism $\rho$ is liftable w.r.t.\ $\L$, then $P_{\gamma}^\to$ is a $\barSigma$-congruence on the initial algebra $(\Lambda,\ini)$.
\end{theorem}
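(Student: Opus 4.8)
The plan is to run \emph{Howe's method} at fibrational generality: build the Howe closure of $P_{\gamma}^\to$, show it is simultaneously a congruence and a behavioural conformance, and then conclude by maximality. Write $P = P_{\gamma}^\to$ and let $\hatP\in\E_\Lambda$ be the Howe closure of $P$ with respect to the initial algebra $(\Lambda,\ini)$ (\Cref{def:howe-closure}), taken along the restricted $\Qtl$-fibration $p_0\colon\E_0\to\B$ of \Cref{not:restricted-liftings}. By \Cref{lem:sim-props}\ref{lem:sim-props-refl-trans}, $P$ is reflexive and transitive, so \Cref{lem:howe-closure-props}\ref{lem:howe-closure-props-reflexive} already tells us that $\hatP$ is reflexive and a $\barSigma$-congruence on $(\Lambda,\ini)$, while \Cref{lem:howe-closure-props}\ref{lem:howe-closure-props-weakly-transitive} gives the weak transitivity $\hatP\cdot P\sqleq\hatP$. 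Moreover $P\sqleq\hatP$ by the defining equation $\hatP = P\sqcup(\ini_\star\barSigma\hatP)\cdot P$. Hence everything reduces to the following \emph{Key Lemma}: $\hatP$ is a $\barB(1_\Lambda,-)$-behavioural conformance on the canonical model $(\Lambda,\gamma)$. Once this is shown, maximality of $P$ forces $\hatP\sqleq P$, whence $\hatP = P$, and the congruence property of $\hatP$ transfers to $P$.

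To prove the Key Lemma I must establish $\hatP\sqleq\gamma^\star\barB(1_\Lambda,\hatP)$. Splitting along the join $\hatP = P\sqcup(\ini_\star\barSigma\hatP)\cdot P$, the summand $P$ is harmless: since $P$ is a behavioural conformance, $P\sqleq\gamma^\star\barB(1_\Lambda,P)\sqleq\gamma^\star\barB(1_\Lambda,\hatP)$ by monotonicity of $\barB(1_\Lambda,-)$ and of reindexing. The real content is the compound summand $(\ini_\star\barSigma\hatP)\cdot P$. Intuitively, if $\f(\vec s)$ is related to $\f(\vec u)$ by $\ini_\star\barSigma\hatP$ (i.e.\ $\vec s\mathrel{\hatP}\vec u$ componentwise) and $\f(\vec u)\mathrel{P}t$, then I match the behaviour of $\f(\vec s)$ against that of $t$ in two moves. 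First, the model equation $\gamma\circ\ini = B(\id,\hat\ini)\circ\rho\circ\Sigma\hat\gamma$ together with liftability of $\rho$ (\Cref{def:liftability}, $\rho\colon\barSigma\barB^\infty(\hatP,\hatP)\xto{\thickcdot}\barB(\hatP,\barSigma^\star\hatP)$) and the fact that $\hatP$ is a $\barSigma$-congruence (so that $\hat\ini$ carries $\barSigma^\star\hatP$ to $\hatP$, using \Cref{prop:initial-alg-cofree-coalg-lift}) yields a $\barB(1_\Lambda,\hatP)$-match between $\f(\vec s)$ and $\f(\vec u)$; here the contravariant label component $\hatP$ is weakened to $1_\Lambda$ via reflexivity $1_\Lambda\sqleq\hatP$ and contravariance of $\barB$. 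Second, the behavioural-conformance property of $P$ gives a $\barB(1_\Lambda,P)$-match between $\f(\vec u)$ and $t$. Lax monoidality of $\barB$ composes these to a $\barB(1_\Lambda,\hatP\cdot P)$-match, which collapses to $\barB(1_\Lambda,\hatP)$ by weak transitivity $\hatP\cdot P\sqleq\hatP$. The heterogeneous quantale axioms \eqref{eq:het-qtl-fibration-axioms} are exactly what licenses this composition of reindexed conformances.

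The main obstacle is a genuine circularity hidden in the first move: to apply liftability I need the \emph{unravelled} behaviours $\hat\gamma(\vec s)$ and $\hat\gamma(\vec u)$ to be $\barB^\infty(\hatP,\hatP)$-related, which is tantamount to $\hatP$ already being a behavioural conformance — precisely what is being proved. This is resolved by \emph{rule induction}, i.e.\ by exploiting the construction of $\gamma$ as the least fixed point $\gamma = \bigvee_\alpha\gamma_\alpha$ of \Cref{rem:complete-ahos}. The crucial point, and the reason the base category of the fibration is $\B^{(2)}$ rather than $\B$, is that I prove the \emph{asymmetric} statement $\hatP\sqleq(\gamma_\alpha,\gamma)^\star\barB(1_\Lambda,\hatP)$ by transfinite induction on $\alpha$, reindexing along the pair $(\gamma_\alpha,\gamma)$ in the heterogeneous fibration $p$, so that the \emph{left} operand runs under the $\alpha$-th approximant while the \emph{right} operand runs under the full model. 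The base case $\gamma_0=\bot$ is vacuous; the successor step is the two-move matching argument above, where the equation for $\gamma_{\alpha+1}=\Gamma(\gamma_\alpha)$ reduces the behaviour of $\f(\vec s)$ to the unravelled behaviour $\hat\gamma_\alpha(\vec s)$ of its subterms, on which the induction hypothesis (propagated to the cofree coalgebra via \Cref{prop:initial-alg-cofree-coalg-lift}) supplies the required $\barB^\infty(\hatP,\hatP)$-relation; and the limit step is where \emph{compatibility} (\Cref{def:compatibility}) is used to commute the directed join $\bigvee_{\beta<\alpha}\gamma_\beta$ in the left component past reindexing and the fibred meet.

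Evaluating the asymmetric statement at the stabilisation ordinal $\alpha_0$ of \Cref{rem:complete-ahos}, where $\gamma_{\alpha_0}=\gamma$ and $(\gamma,\gamma)^\star=\gamma^\star$, delivers the Key Lemma, and the reduction of the first paragraph then finishes the proof. Besides the circularity, the only remaining work is bookkeeping: keeping the contravariant label component under control throughout, and lifting the single-step matches to the cofree coalgebra by the dual of \Cref{prop:initial-alg-cofree-coalg-lift}. I expect the transfinite induction together with the compatibility condition at limit stages to be the technically delicate part, since this is exactly where the passage from structural induction (à la higher-order abstract GSOS) to the genuine rule induction needed for big-step semantics takes place.
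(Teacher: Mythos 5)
Your proposal is correct and follows essentially the same route as the paper: reduce everything to the Key Lemma (the paper's \Cref{prop:howe-closure-app-struc}, that the Howe closure of the reflexive transitive conformance $P_\gamma^\to$ is again a behavioural conformance), prove it by transfinite induction over the approximants $\gamma_\alpha$ of the canonical model using the asymmetric reindexing $(\gamma_\alpha,\gamma)^\star$ in the heterogeneous fibration, with compatibility discharging the limit step and liftability of $\rho$ plus the cofree-coalgebra lifting, lax monoidality of $\barB$, and weak transitivity $\hatP\cdot P\sqleq\hatP$ discharging the successor step, then conclude $\hatP=P$ by maximality. One small repair: the induction invariant must be the stronger $\hatP\sqleq(\gamma_\alpha,\gamma)^\star\barB(\hatP,\hatP)$ rather than $\hatP\sqleq(\gamma_\alpha,\gamma)^\star\barB(1_\Lambda,\hatP)$ as you state it, since only the former gives $\hatP$ the structure of a $\barB(\hatP,-)$-coalgebra above $(\gamma_\alpha,\gamma)$ needed to propagate to $\barB^\infty(\hatP,\hatP)$ and invoke liftability; the weakening of the contravariant slot to $1_\Lambda$ via reflexivity happens only once, at the very end.
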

The data of the congruence theorem is summarized by the three commutative diagrams below:
\[
\begin{tikzcd}[column sep=5, row sep=10, scale cd=.8]
\Sigma \Lambda \ar{r}{\ini} \ar{dd}[swap]{\Sigma \hat\gamma} & \Lambda \ar{r}{\gamma} & B(\Lambda,\Lambda)\\
&& \\
\Sigma B^\infty(\Lambda,\Lambda) \ar{rr}{\rho} && B(\Lambda,\Sigmas\Lambda) \ar{uu}[swap]{B(\id,\hat\ini)}
\end{tikzcd}
\quad\,
\begin{tikzcd}[column sep=10, row sep=11, scale cd=.8]
\E \ar{rrr}{\barSigma} \ar{ddd}[swap]{p} & & & \E \ar{ddd}{p} \\
& \E_0 \ar[tail]{ul} \ar{r}{\barSigma} \ar{d}[swap]{p_0} & \E_0 \ar{d}{p_0} \ar[tail]{ur} & \\
& \B \ar[tail]{dl} \ar{r}{\Sigma} & \B \ar[tail]{dr} & \\
\B^{(2)} \ar{rrr}{\Sigma^{(2)}} & & & \B^{(2)}  
\end{tikzcd}
\quad\,
\begin{tikzcd}[column sep=10, row sep=10, scale cd=.8]
\E^\opp\times \E \ar{rrr}{\barB} \ar{ddd}[swap]{p} & & & \E \ar{ddd}{p} \\
& \E_0^\opp\times \E_0 \ar[tail]{ul} \ar{r}{\barB} \ar{d}[swap]{p_0} & \E_0 \ar{d}{p_0} \ar[tail]{ur} & \\
& \B^\opp\times \B  \ar[tail]{dl} \ar{r}{B} & \B \ar[tail]{dr} & \\
(\B^{(2)})^\opp \times \B^{(2)} \ar{rrr}{B^{(2)}} & & & \B^{(2)}  
\end{tikzcd}
\]
Our proof of \Cref{thm:app-struct-cong} rests on a fibrational interpretation of Howe's method~\cite{DBLP:conf/lics/Howe89,DBLP:journals/iandc/Howe96}, which makes use of the abstract version of \emph{Howe closure} introduced in \Cref{def:howe-closure}. The heart of the proof lies in the next proposition, which says that Howe closures preserve behavioural conformances:
\begin{proposition}[Howe Closures of Behavioural Conformances]\label{prop:howe-closure-app-struc}
In the setting of \Cref{thm:app-struct-cong}, suppose that $P\in \E_\Lambda$ is a reflexive and transitive $\barB(1_\Lambda,-)$-behavioural conformance on $(\Lambda,\gamma)$. Then the Howe closure~$\hat P$ w.r.t.\ the initial algebra $(\Lambda,\ini)$ is a $\barB(1_\Lambda,-)$-behavioural conformance on $(\Lambda,\gamma)$.
\end{proposition}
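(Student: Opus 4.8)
The plan is to verify the behavioural-conformance condition $\hat P \sqleq \gamma^\star \barB(1_\Lambda,\hat P)$ by playing off the two fixed-point characterisations available to us: that of $\hat P$ as the least solution of $\hat P = P \sqcup (\ini_\star\barSigma\hat P)\cdot P$ (\Cref{def:howe-closure}), and that of the canonical model $\gamma=\bigvee_\alpha\gamma_\alpha$ as a least fixed point built by transfinite iteration (\Cref{rem:complete-ahos}). Throughout I will use freely that $\hat P$ is reflexive and a $\barSigma$-congruence, and that $\hat P\cdot P\sqleq\hat P$, all supplied by \Cref{lem:howe-closure-props}.

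First I would reduce the goal to a single statement about the rule morphism. Since $\hat P$ is the join of $P$ and $(\ini_\star\barSigma\hat P)\cdot P$, it suffices to bound each summand below $\gamma^\star\barB(1_\Lambda,\hat P)$. For the first summand this is immediate: $P$ is a behavioural conformance, so $P\sqleq\gamma^\star\barB(1_\Lambda,P)\sqleq\gamma^\star\barB(1_\Lambda,\hat P)$ using $P\sqleq\hat P$ and monotonicity. For the composite summand I would invoke the $\Qtl$-fibration law \eqref{eq:qtl-fibration2}, lax monoidality of $\barB$ (whence $\barB(1_\Lambda,X)\cdot\barB(1_\Lambda,Y)\sqleq\barB(1_\Lambda,X\cdot Y)$, using $1_\Lambda\cdot 1_\Lambda=1_\Lambda$), and $\hat P\cdot P\sqleq\hat P$, so that everything collapses to the core claim $\ini_\star\barSigma\hat P\sqleq\gamma^\star\barB(1_\Lambda,\hat P)$; transposing across $\ini$ via the Galois connection and $(g\circ f)^\star=f^\star g^\star$ (\Cref{lem:clat-fib-props}), this reads $\barSigma\hat P\sqleq(\gamma\circ\ini)^\star\barB(1_\Lambda,\hat P)$.

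Next I would discharge the core claim using the model diagram $\gamma\circ\ini = B(\id,\hat\ini)\circ\rho\circ\Sigma\hat\gamma$, which factors the reindexing as $(\Sigma\hat\gamma)^\star\rho^\star B(\id,\hat\ini)^\star$. Reading the required inequality from the inside out, I would assemble it from three ingredients: liftability of $\rho$ (\Cref{def:liftability}), giving $\barSigma\barB^\infty(\hat P,\hat P)\sqleq\rho^\star\barB(\hat P,\barSigma^\star\hat P)$; the fact that $\hat\ini$ lifts to a morphism $(\Sigmas\Lambda,\barSigma^\star\hat P)\xto{\thickcdot}(\Lambda,\hat P)$, which holds because $\hat P$ is a $\barSigma$-congruence and $\hat\ini$ is a free extension (\Cref{prop:initial-alg-cofree-coalg-lift}(2)), and which — combined with reflexivity $1_\Lambda\sqleq\hat P$ and contravariance of $\barB$ in its first argument — yields $\barB(\hat P,\barSigma^\star\hat P)\sqleq B(\id,\hat\ini)^\star\barB(1_\Lambda,\hat P)$; and finally the \emph{behaviour-map lifting} $\hat\gamma\colon(\Lambda,\hat P)\xto{\thickcdot}(B^\infty(\Lambda,\Lambda),\barB^\infty(\hat P,\hat P))$, which upon applying the functor $\barSigma$ supplies the outermost $(\Sigma\hat\gamma)^\star$ step.

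The main obstacle is this last ingredient. By the cofree-coalgebra version of \Cref{prop:initial-alg-cofree-coalg-lift}, lifting $\hat\gamma$ into $\barB^\infty(\hat P,\hat P)$ requires $(\Lambda,\gamma)$ to carry a $\barB(\hat P,-)$-coalgebra structure above $\gamma$ — that is, it presupposes a behavioural-conformance property of $\gamma$ essentially equivalent to the one we are proving, so the argument is circular. I would break the circularity exactly as big-step Howe's method does by induction on evaluation derivations: perform transfinite induction along the iteration $\gamma=\bigvee_\alpha\gamma_\alpha$, replacing $\hat\gamma$ by the behaviour map $\hat{\gamma_\alpha}$ of the $\alpha$-th approximant and running the reduction above with $\gamma_{\alpha+1}=\Gamma(\gamma_\alpha)$ in place of $\gamma$, so that the behaviour-map lifting needed at stage $\alpha+1$ is exactly what the induction hypothesis provides at stage $\alpha$. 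The base case $\gamma_0=\bot$ is immediate, since the bottom coalgebra's behaviour is related to everything, and the limit steps together with the final passage from the approximants back to $\gamma$ are precisely where \emph{compatibility} (\Cref{def:compatibility}) is used, to commute reindexing along the directed join $\bigvee_\alpha\gamma_\alpha$ with the meet of the fibrewise reindexings. Getting the interaction of the two fixed-point inductions right, and in particular tracking the contravariant (label) argument of $\barB$ correctly as it is weakened between $\hat P$ and $1_\Lambda$ across the iteration, is the delicate heart of the proof.
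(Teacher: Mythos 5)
Your skeleton does match the paper's: the same ingredients appear (liftability of $\rho$, the model square $\gamma\circ\ini=B(\id,\hat\ini)\circ\rho\circ\Sigma\hat\gamma$, the lifting of $\hat\ini$ obtained from the $\barSigma$-congruence property of $\hatP$ via \Cref{prop:initial-alg-cofree-coalg-lift}, and a transfinite induction along $\gamma=\bigvee_\alpha\gamma_\alpha$ with compatibility handling limit stages), and you correctly diagnose that the cofree-extension lifting of $\hat\gamma$ is the circular step that forces the induction. However, the induction as you describe it --- ``running the reduction above with $\gamma_{\alpha+1}$ in place of $\gamma$'' --- does not go through, and the missing idea is exactly the one for which the paper sets the whole framework over $\B^{(2)}$: the induction invariant must be the \emph{asymmetric} statement $\hatP\sqleq(\gamma_\alpha,\gamma)^{\star}\barB(\hatP,\hatP)$, in which only the first (contravariant) index is iterated while the second stays fixed at the full canonical model $\gamma$.

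Two of your steps break under the homogeneous substitution. First, $P$ is a behavioural conformance only on $(\Lambda,\gamma)$, not on $(\Lambda,\gamma_{\alpha+1})$. Hence both your ``immediate'' bound on the summand $P$ and the absorption of the trailing $\cdot P$ in the summand $(\ini_\star\barSigma\hatP)\cdot P$ must compose something reindexed along $\gamma_{\alpha+1}$ with something reindexed along $\gamma$, and the only available composition law is the heterogeneous one $(f,g)^{\star}X\cdot(g,h)^{\star}Y\sqleq(f,h)^{\star}(X\cdot Y)$, which requires the middle indices to agree. This forces the covariant index of the invariant to be $\gamma$ throughout, and it also means $P\sqleq(\gamma_{\alpha+1},\gamma)^{\star}\barB(\hatP,\hatP)$ is not immediate inside the induction but has to be manufactured as $1_\Lambda\cdot P$ using reflexivity of $\hatP$ and that same composition. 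Second, you weaken the contravariant slot to $1_\Lambda$ too early: the invariant must carry $\barB(\hatP,\hatP)$, because at stage $\alpha+1$ the induction hypothesis must equip $\hatP$ with a $\barB(\hatP,-)$-coalgebra structure above the pair $(\gamma_\alpha,\gamma)$ so that its cofree extension lands in $\barSigma\barB^\infty(\hatP,\hatP)$, the only shape to which liftability of $\rho$ applies; the weakening from $\hatP$ to $1_\Lambda$ happens once, at the very end, after the meet over all $\alpha$ has been taken via compatibility. You flag ``tracking the contravariant argument'' as the delicate heart, but the proposal does not supply the device --- the heterogeneous pair invariant --- that resolves it, and without it the successor step of your induction is stuck.
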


Using this proposition, the congruence theorem now easily follows:

\begin{proof}[Proof of \Cref{thm:app-struct-cong}]
Let $P=P_{\gamma}^\to$ be the greatest $\barB(1_\Lambda,-)$-behavioural conformance on $(\Lambda,\gamma)$. It is reflexive and transitive by \Cref{lem:howe-closure-props}. Thus, by \Cref{prop:howe-closure-app-struc}, the Howe closure $\hatP$ is a $\barB(1_\Lambda,-)$-behavioural conformance on $(\Lambda,\gamma)$. It follows that $\hatP=P$;
indeed, $P\sqleq \hatP$ holds by definition of $\hatP$, and $\hatP\sqleq P$ holds because $P$ is the greatest $\barB(1_\Lambda,-)$-behavioural conformance on $(\Lambda,\gamma)$. 
Since $\hatP$ is a $\barSigma$-congruence on $(\Lambda,\ini)$ by \Cref{lem:howe-closure-props}\ref{lem:howe-closure-props-reflexive}, we conclude that $P$ is a $\bar\Sigma$-congruence on $(\Lambda,\ini)$.
\end{proof}

Under slightly stronger assumptions on the fibration and the lifted functors, the congruence theorem extends from behavioural conformances to \emph{bi}conformances. Following \Cref{def:biconformance}, we let $P_{\gamma}^\leftrightarrow$ denote the greatest $\barB(1_\Lambda,-)$-behavioural biconformance on the canonical model $(\Lambda,\gamma)$.

\enlargethispage*{1pt}
\begin{theorem}[Congruence of Behavioural Biconformances]\label{thm:app-struct-cong-bisim}
Suppose that the following is given:
\begin{itemize}
\item a complete AHOS $\S=(\B,\Sigma,B,\leq,\rho)$ with its canonical model $(\Lambda,\gamma)$;
\item a compatible lifting situation $\L=(p,\barSigma,\barB)$, where $p$ is a heterogeneous $\InQtl$-fibration, the lifting $\barSigma$ is $\circ$-monoidal, the lifting $\barB$ is laxly monoidal, and $\barSigma\colon \E_\Lambda\to \E_{\Sigma \Lambda}$ preserves directed joins.
\end{itemize} If the rule morphism $\rho$ is liftable w.r.t.\ $\L$, then $P_{\gamma}^\leftrightarrow$ is a $\barSigma$-congruence on the initial algebra $(\Lambda,\ini)$.
\end{theorem}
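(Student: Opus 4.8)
The plan is to mirror the proof of \Cref{thm:app-struct-cong}, but to interpose \emph{Howe's transitive closure trick} so as to recover symmetry. Put $P=P_{\gamma}^\leftrightarrow$. By \Cref{lem:bisim-props}, $P$ is reflexive, symmetric, and transitive, and in particular it is a $\barB(1_\Lambda,-)$-behavioural conformance on $(\Lambda,\gamma)$. Exactly as in the conformance case I would form the Howe closure $\hatP$ of $P$ w.r.t.\ the initial algebra $(\Lambda,\ini)$. The hypotheses of \Cref{thm:app-struct-cong}---a heterogeneous $\Qtl$-fibration with laxly monoidal liftings and a liftable $\rho$---are all subsumed here (an $\InQtl$-fibration is a $\Qtl$-fibration, and a $\circ$-monoidal lifting is monoidal, hence laxly monoidal), so \Cref{prop:howe-closure-app-struc} applies and tells us that $\hatP$ is again a $\barB(1_\Lambda,-)$-behavioural conformance. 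Moreover, since $P$ is reflexive, \Cref{lem:howe-closure-props}\ref{lem:howe-closure-props-reflexive} gives that $\hatP$ is reflexive and a $\barSigma$-congruence on $(\Lambda,\ini)$.

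The difficulty---and the reason the similarity argument of \Cref{thm:app-struct-cong} does not transfer to bisimilarity unchanged---is that $\hatP$ is in general \emph{not symmetric}, so it is a conformance but not yet a biconformance. This is precisely where the transitive closure enters. I would show that $(\hatP)^{+}$ is a behavioural biconformance. First, $\hatP$ is a behavioural conformance, so by \Cref{lem:sim-props}(1) (closure of behavioural conformances under transitive closure) its transitive closure $(\hatP)^{+}$ is a behavioural conformance as well. Second, because $P$ is reflexive and symmetric and the present hypotheses---an $\InQtl$-fibration, a $\circ$-monoidal lifting $\barSigma$, and directed-join preservation by $\barSigma$---are exactly those of \Cref{lem:howe-closure-props}\ref{lem:howe-closure-props-symmetric}, that lemma yields that $(\hatP)^{+}$ is symmetric. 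Hence $((\hatP)^{+})^{\circ}=(\hatP)^{+}$ is also a behavioural conformance, which by \Cref{def:biconformance} makes $(\hatP)^{+}$ a behavioural \emph{bi}conformance.

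The maximality of $P=P_{\gamma}^\leftrightarrow$ then forces an equality. On one hand $(\hatP)^{+}\sqleq P$ since $P$ is the greatest behavioural biconformance; on the other hand $P\sqleq \hatP\sqleq (\hatP)^{+}$, the first inequality being the defining property of the Howe closure and the second the $n=1$ summand of the transitive closure. Therefore $P=(\hatP)^{+}$. Finally, $\hatP$ is a reflexive $\barSigma$-congruence and $\barSigma\colon \E_\Lambda\to\E_{\Sigma\Lambda}$ preserves directed joins, so \Cref{lem:cong-props}\ref{lem:cong-props-transhull} shows that $(\hatP)^{+}$ is a $\barSigma$-congruence. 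Since $P=(\hatP)^{+}$, we conclude that $P_{\gamma}^\leftrightarrow$ is a $\barSigma$-congruence on $(\Lambda,\ini)$.

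Compared with \Cref{thm:app-struct-cong}, the one genuinely new ingredient is the symmetry of $(\hatP)^{+}$ supplied by \Cref{lem:howe-closure-props}\ref{lem:howe-closure-props-symmetric}; the remainder is a reassembly of already-established facts. I expect the real subtlety to lie in the requirement that passing to the transitive closure must \emph{simultaneously} preserve the conformance property (\Cref{lem:sim-props}), preserve the congruence property (\Cref{lem:cong-props}\ref{lem:cong-props-transhull}), and manufacture symmetry (\Cref{lem:howe-closure-props}\ref{lem:howe-closure-props-symmetric})---three facts that each rest on the compatibility of the quantale structure of the fibers with the (lax) monoidality of the liftings, and whose conjunction is exactly what makes the step from conformances to biconformances go through.
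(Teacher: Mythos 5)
Your proof is correct and follows essentially the same route as the paper's: form the Howe closure $\hatP$ of $P_\gamma^\leftrightarrow$, apply \Cref{prop:howe-closure-app-struc} to see it is a behavioural conformance, pass to the transitive closure $(\hatP)^{+}$ to obtain symmetry via \Cref{lem:howe-closure-props}\ref{lem:howe-closure-props-symmetric}, conclude $(\hatP)^{+}=P_\gamma^\leftrightarrow$ by maximality, and transfer the congruence property through \Cref{lem:cong-props}\ref{lem:cong-props-transhull}. Your explicit verification that the hypotheses of \Cref{prop:howe-closure-app-struc} and of the cited lemmas are subsumed by the present assumptions is a welcome (and accurate) elaboration of steps the paper leaves implicit.
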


The proof uses a fibrational generalization of Howe's \emph{transitive closure trick}~\cite{DBLP:journals/iandc/Howe96,pitts_2011}, which allows to reduce the congruence of behavioural biconformances to that of behavioural conformances.

\begin{proof}[Proof of \Cref{thm:app-struct-cong-bisim}]
Let $P=P^{\leftrightarrow}_{\gamma}$ be the greatest $\barB(1_\Lambda,-)$-behavioural biconformance on $(\Lambda,\gamma)$. It is reflexive, symmetric, and transitive by \Cref{lem:bisim-props}. By \Cref{prop:howe-closure-app-struc}, the Howe closure~$\hatP$ w.r.t.\ $(\Lambda,\ini)$ is a $\barB(1_\Lambda,-)$-behavioural conformance. Then also the transitive closure $(\hatP)^{+}$ of $\hatP$ is a $\barB(1_\Lambda,-)$ behavioural conformance (\Cref{lem:sim-props}), and it is symmetric by \Cref{lem:howe-closure-props}\ref{lem:howe-closure-props-symmetric}. In particular, $(\hatP)^{+}$ is a $\barB(1_\Lambda,-)$-behavioural biconformance, and so
$ (\hatP)^{+}=P $ 
because $P$ is the greatest $\barB(1_\Lambda,-)$-behavioural biconformance. Since $\hatP$ is a $\barSigma$-congruence (\Cref{lem:howe-closure-props}), its transitive closure $(\hatP)^{+}$ is also a $\barSigma$-congruence (\Cref{lem:cong-props}\ref{lem:cong-props-transhull}), and so $P$ is a $\barSigma$-congruence.
\end{proof}

\begin{remark} We have stated \Cref{thm:app-struct-cong,thm:app-struct-cong-bisim} for general AHOS, whose rule morphisms assume existence of a cofree coalgebra $B^\infty(\Lambda,\Lambda)$. However, both theorems and their proofs can be straightforwardly restricted to depth-$n$ AHOS, where a cofree coalgebra is not required.
\end{remark}

Our general congruence theorems for AHOS (\Cref{thm:app-struct-cong,thm:app-struct-cong-bisim}) reduce congruence proofs for higher-order languages to the systematic verification of a number of elementary conditions on the underlying fibrations, liftings, and rules. In comparison to usual ad-hoc congruence proofs, this can greatly simplify the derivation of congruence results for concrete languages, since the generic, boiler-plate aspects and much of the complexity of such proofs are hidden from the user of our theorems. We will demonstrate this point in the next section.

\section{Applications}\label{sec:applications}
We now put our abstract theory to use derive congruence results for the probabilistic higher-order languages \pSKI and \pBCK by simple instantiation of \Cref{thm:app-struct-cong-bisim}.

\subsection{\pBCK}\label{sec:applications-pBCK}
As our first showcase, we study the behavioural pseudometric for the language \pBCK, which corresponds to an affine probabilistic untyped $\lambda$-calculus. For that purpose, we instantiate \Cref{thm:app-struct-cong-bisim} to the AHOS $\S_\pBCK=(\Set,\Sigma_\pBCK,B,\rho_\pBCK)$ for \pBCK (\Cref{ex:ahos-pSKI}) 
and the lifting situation $\L=(p,\barSigma,\barB)$ defined as follows:
\begin{enumerate}
\item $p=p_\jne\colon \Rel_\jne\to \Set^{(2)}$ is the heterogeneous $\InQtl$-fibration of \Cref{ex:fibrations}\ref{ex:fibrations-frel2}.
\item The lifting $\barSigma\colon \FRel_\jne\to \FRel_\jne$ of $\Sigma^{(2)}$ is given as in \Cref{ex:congruence}\ref{ex:congruence-fuzzy}; that is, we lift products via the non-standard lifting $\ol{\times}$ of \Cref{ex:liftings}\ref{ex:lifting:prod-non-can} that computes sums of distances.
\item The lifting $\barB\colon \FRel_\jne^\opp\times \FRel_\jne\to \FRel_\jne$ of $B^{(2)}$ is given by the composite
\[ \barB = (\,\FRel_\jne^\opp\times \FRel_\jne \xto{\barH} \FRel_\jne \xto{(-)_\bot} \FRel_\jne \xto{\barD}  \FRel_\jne\,), \]
where $\barH$ is the lifting of $H(X,Y)=Y^X$ of \Cref{ex:liftings}\ref{ex:liftings:hom-functor}, $(-)_\bot$ is the lifting of the functor $X\mapsto \{\bot\}+X$ of \Cref{ex:liftings}\ref{ex:lifting:bot}, and $\barD$ is the lifting of the distribution functor $\D^{(2)}$ of \Cref{ex:liftings}\ref{ex:liftings:dist}.
\end{enumerate}

For brevity, we drop subscripts and put $\Sigma=\Sigma_\pBCK$, $\Lambda=\Lambda_\pBCK$ and $\rho=\rho_\pBCK$. Moreover, we let $d^\pBCK$ denote the behavioural pseudometric on the operational model $\gamma=\gamma_\pBCK\colon \Lambda\to B(\Lambda,\Lambda)$, i.e.\ the greatest $\barB(1_\Lambda,-)$-fuzzy bisimulation (\Cref{ex:bisim}\ref{ex:bisim-pseudomet}). Our abstract congruence result for behavioural biconformances (\Cref{thm:app-struct-cong-bisim}) then instantiates as follows:
\begin{theorem}[Congruence of $d^{\pBCK}$]\label{thm:cong-bck}
For all $n$-ary operations $\f\in \Sigma_\pBCK$ and $t_i,s_i\in \Lambda$,
\[ d^{\pBCK}(\f(t_1,\ldots,t_n),\f(s_1,\ldots,s_n))\leq d^{\pBCK}(t_1,s_1)+\cdots+d^{\pBCK}(t_n,s_n).\] 
\end{theorem}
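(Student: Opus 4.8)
The plan is to obtain \Cref{thm:cong-bck} as a direct instantiation of \Cref{thm:app-struct-cong-bisim} to the AHOS $\S_\pBCK$ and the lifting situation $\L=(p,\barSigma,\barB)$ described above. First I would observe that the two sides of the sought conclusion already match the abstract statement: the greatest $\barB(1_\Lambda,-)$-behavioural biconformance $P_\gamma^\leftrightarrow$ is exactly the behavioural pseudometric $d^\pBCK$ (\Cref{ex:bisim}\ref{ex:bisim-pseudomet}), and for the lifting $\barSigma$ built from coproducts and the summing product $\ol{\times}$, the assertion that a fuzzy relation is a $\barSigma$-congruence unwinds (\Cref{ex:congruence}\ref{ex:congruence-fuzzy}) to precisely the inequality
\[ d(\f(t_1,\ldots,t_n),\f(s_1,\ldots,s_n))\leq \textstyle\sum_{i=1}^n d(t_i,s_i). \]
Hence the whole theorem reduces to verifying the hypotheses of \Cref{thm:app-struct-cong-bisim} for this concrete data.

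Most of the structural hypotheses are already available. Completeness of $\S_\pBCK$ has been established, and $p=p_\jne$ is a heterogeneous $\InQtl$-fibration by \Cref{ex:iqtl-fibration}. The lifting $\barSigma$ is assembled from the coproduct lifting and the $\ol{\times}$-lifting, both of which are $\circ$-monoidal and preserve directed joins, so the same holds for $\barSigma\colon \E_\Lambda\to\E_{\Sigma\Lambda}$. For $\barB=\barD\circ(-)_\bot\circ\barH$, lax monoidality follows from the non-expansiveness properties of the Wasserstein lifting collected in \Cref{lem:wasserstein-props} together with those of $(-)_\bot$ and $\barH$. Thus only the two genuinely language-dependent conditions — compatibility (\Cref{def:compatibility}) and liftability (\Cref{def:liftability}) — remain.

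Compatibility would follow from Scott-continuity of the Wasserstein distance in its distribution argument. Concretely, in $\FRel_\jne$ reindexing along coalgebra maps is evaluation, the fibrewise meet is a pointwise supremum of distances, and the DCPO-join $\bigvee_i c_i$ is computed pointwise on distributions; so the required identity $(\bigvee_i c_i,d)^\star\barB(P,Q)=\bigsqcap_i (c_i,d)^\star\barB(P,Q)$ amounts to the statement that $\barB(P,Q)$, evaluated at a directed supremum of distributions, equals the supremum of its values. This can be read off from the linear-programming characterisation of the Wasserstein distance.

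The main obstacle is liftability: I must show $\rho\colon \barSigma\barB^\infty(P,P)\xto{\thickcdot}\barB(P,\barSigma^\star P)$ for every fuzzy relation $P$ on $\Lambda$, which (working with the depth-$2$ presentation) means checking that the rule map of each operation is non-expansive once premises carry the $P$-induced fuzzy relation and conclusions carry $\barSigma^\star P$ on $\Sigmas\Lambda$. For the deterministic combinator rules ($B,B',B'',C,C',C'',K,K',I,\Omega$) the conclusion is a Dirac distribution $1\cdot(\ldots)$, so non-expansiveness reduces to non-expansiveness of $\eta$ (\Cref{lem:wasserstein-props}\ref{lem:wasserstein-props-eta-non-expansive}) together with non-expansiveness of the term-forming maps into $\barSigma^\star P$, the latter holding because substituting $P$-related terms into a fixed $\Sigma$-term is non-expansive for $\ol{\times}$. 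For $\oplus$ I would invoke the convexity bound of \Cref{lem:wasserstein-props}\ref{lem:wasserstein-props-convex-combinations}, and for application I would use the presentation of $\rho$ via the strength, multiplication and evaluation map (\Cref{rem:app-mu}) and chain the non-expansiveness of $\st$ and $\mu$ (\Cref{lem:wasserstein-props}) with that of $\ev$ (\Cref{ex:liftings}\ref{ex:liftings:hom-functor}). Crucially, all rules are parametrically polymorphic, so $P$ is treated as a black box throughout. Once liftability is checked, \Cref{thm:app-struct-cong-bisim} applies verbatim and delivers that $d^\pBCK=P_\gamma^\leftrightarrow$ is a $\barSigma$-congruence, i.e.\ the claimed inequality.
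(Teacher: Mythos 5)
Your overall strategy is exactly the paper's: reduce the statement to \Cref{thm:app-struct-cong-bisim}, unwind $P_\gamma^\leftrightarrow$ as $d^\pBCK$ and the $\barSigma$-congruence condition as the summed inequality, and then verify compatibility (via continuity of the Wasserstein distance, read off the linear-programming characterisation) and liftability (by a case analysis over the operations, using convexity of $\barD$ for $\oplus$ and the $\st$/$\mu$/$\ev$ decomposition for application). All of that matches \Cref{lem:lifting-compatible,lem:rho-pBCK-non-exp}.

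There is, however, one genuine gap, and it sits at the only point where the argument actually depends on which language you are working with. In the combinator case you justify liftability by asserting that ``substituting $P$-related terms into a fixed $\Sigma$-term is non-expansive for $\ol{\times}$.'' That claim is false in general. The fuzzy relation on $\barSigma^\star P$ induced by the summing lifting $\ol{\times}$ assigns to two terms of the same shape the sum of the distances of their corresponding leaves, so if a variable $t_i$ occurs $k_i$ times in the term $T$, then $\bard(T[\vec{t}\,],T[\vec{t}\,'])$ is bounded only by $\sum_i k_i\cdot d(t_i,t_i')$, which exceeds $\sum_i d(t_i,t_i')$ as soon as some $k_i\geq 2$. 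The correct argument (as in the proof of \Cref{lem:rho-pBCK-non-exp}) must invoke that every term $T$ appearing in the conclusions of the \pBCK rules is \emph{affine} --- each variable occurs at most once --- which is precisely what makes substitution non-expansive here. This is not a cosmetic omission: if your blanket claim were true, the identical argument would establish liftability of $\rho_\pSKI$ along $p_\jne$ and hence congruence of $d^\pSKI$, which the paper shows fails (the $S''$ rule duplicates its argument $u$ in $(t\,u)\,(s\,u)$, and the behavioural pseudometric for \pSKI is not a congruence). A smaller imprecision: lax monoidality of $\barD$ is the inequality $\barD d\cdot\barD e\sqleq\barD(d\cdot e)$, proved by composing transportation plans; it does not follow from the non-expansiveness facts of \Cref{lem:wasserstein-props} that you cite.
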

For the proof, all we need to do is to systematically check that the conditions of \Cref{thm:app-struct-cong-bisim} hold for $\S$ and $\L$. This is done in the next two lemmas. Let us start with the required properties of the lifted functors, which follow via a routine verification:

\begin{lemma}\label{lem:lifting-compatible}
The lifting situation $\L=(p,\barSigma,\barB)$ is compatible with $\S_\pBCK$, the lifting $\barSigma$ is $\circ$-monoidal, the lifting $\barB$ is laxly monoidal, and the map $\barSigma\colon \FRel_{\jne,\Lambda}\to \FRel_{\jne,\Sigma \Lambda}$ preserves directed joins.
\end{lemma}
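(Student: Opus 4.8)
The plan is to verify the four assertions separately; three of them are routine bookkeeping, while compatibility carries essentially all of the technical weight. I would dispatch the routine parts first. For $\circ$-monoidality of $\barSigma$, recall that $\barSigma(X,d)=\coprod_{\sigma\in\Sigma}(X,d)^{\ol{\times}\,\ar(\sigma)}$, so the induced map on the fiber over $X$ sends a fuzzy relation $d$ to the relation on $\Sigma X$ that, on the $\sigma$-summand, assigns to a pair of argument tuples the truncated sum $\min(1,\sum_k d(x_k,x_k'))$ and the value $1$ across distinct summands. Preservation of the identity $1$ and of the involution $(-)^\circ$ is immediate from this formula. Preservation of composition reduces, after noting that pairs in distinct summands contribute the value $1$ and hence are never selected by the infimum defining the composite, to the identity $\min(1,\sum_k\min(1,m_k))=\min(1,\sum_k m_k)$ for nonnegative reals $m_k$ together with the coordinatewise decoupling $\inf_{\vec z}\sum_k a_k(z_k)=\sum_k\inf_{z_k}a_k(z_k)$. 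The very same truncation identity, combined with the fact that finite sums commute with infima over downward $\sqleq$-directed families (a standard $\epsilon/n$ argument using directedness), shows that $\barSigma\colon\FRel_{\jne,\Lambda}\to\FRel_{\jne,\Sigma\Lambda}$ preserves directed joins; here one uses that joins in the fibers are computed as pointwise \emph{infima} of distance functions, since the fiber order is the reversed pointwise order.

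For lax monoidality of $\barB=\barD\circ(-)_\bot\circ\barH$, I would check that each of the three factors is laxly monoidal and then compose. Concretely, $\barH$ satisfies $\barH(P_1,Q_1)\cdot\barH(P_2,Q_2)\sqleq\barH(P_2\cdot P_1,Q_1\cdot Q_2)$ and $1\sqleq\barH(1,1)$ (a composition/triangle property of the function-space distance), $(-)_\bot$ satisfies $(P)_\bot\cdot(Q)_\bot\sqleq(P\cdot Q)_\bot$ and $1\sqleq(1)_\bot$, and $\barD$ satisfies $\barD P\cdot\barD Q\sqleq\barD(P\cdot Q)$ and $1\sqleq\barD(1)$ (the Wasserstein composition inequality, closely tied to non-expansivity of $\mu$ in \Cref{lem:wasserstein-props}). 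Since all three liftings act monotonically on fibers, chaining these inequalities (respecting the reversed order in the contravariant argument of $\barH$) yields lax monoidality of $\barB$.

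The crux is compatibility. Unfolding \Cref{def:compatibility} with the reindexing formula $(f,g)^\star d(x,x')=d(f(x),g(x'))$ and using that the fiber meet $\bigsqcap$ is the pointwise supremum of distances, the required equation evaluates, at each pair $(t,t')$, to
\[\barB(P,Q)\bigl(\textstyle\bigvee_{i}\phi_i,\,\psi\bigr)=\sup_{i}\barB(P,Q)(\phi_i,\psi),\]
where $\phi_i=c_i(t)$ is a directed family in $B(\Lambda,\Lambda)=\D(\{\bot\}+\Lambda^\Lambda)$ (directed joins of coalgebras being pointwise) and $\psi=d(t')$. Writing $e=(\barH(P,Q))_\bot$, this is exactly a \emph{continuity} statement for the Wasserstein distance $\barD(e)$ in its first argument along these joins, which I would prove in two halves. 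The inequality `$\geq$' follows from monotonicity of $\phi\mapsto\barD(e)(\phi,\psi)$ for the DCPO order: since $e(\bot,-)=0$, transport out of $\bot$ is free, so moving mass from $\bot$ onto genuine functions (which is precisely what increasing in the DCPO order does) can only raise the optimal cost, and a transport plan for the larger distribution restricts to one of no greater cost for the smaller. The inequality `$\leq$' is the substantive direction. I would first note, by monotone convergence over the countable support, that $\sum_f\phi_i(f)\uparrow\sum_f\phi(f)$, so the total-variation gap $\eta_i=\sum_f(\phi(f)-\phi_i(f))=\phi_i(\bot)-\phi(\bot)$ tends to $0$ along the directed family. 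Then, given an optimal plan from $\phi_i$ to $\psi$, I would perform \emph{plan surgery}: $\phi=\bigvee_i\phi_i$ carries exactly $\eta_i$ more mass on functions and $\eta_i$ less mass on $\bot$ than $\phi_i$, so one discards $\eta_i$ worth of the free $\bot$-transport and reroutes the $\eta_i$ surplus of function-mass to the freed demand. As every transport cost is bounded by $1$, this adds at most $\eta_i$ to the cost, giving $\barD(e)(\phi,\psi)\le\barD(e)(\phi_i,\psi)+\eta_i\le\sup_i\barD(e)(\phi_i,\psi)+\eta_i$; letting $\eta_i\to0$ yields `$\leq$'.

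I expect this surgery/continuity argument for the Wasserstein lifting to be the main obstacle, as it is the only place where the interaction between the DCPO used to build the canonical model and the metric structure of $\barB$ is genuinely exploited. Its two essential ingredients are that transport out of $\bot$ is free (making $\barD(e)(-,\psi)$ monotone for the DCPO order) and that all costs lie in $[0,1]$ (so the total-variation gap controls the cost defect); both hinge on the specific shape $(-)_\bot$ of the relation lifted by $\barD$. Everything else is routine manipulation of truncated addition and directedness.
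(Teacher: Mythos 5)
Your proof is correct and establishes all four claims. For three of them it follows essentially the paper's route: $\circ$-monoidality of $\barSigma$ and preservation of directed joins both reduce to exchanging a finite (truncated) sum with an infimum over tuples, respectively over a downward directed family, and lax monoidality of $\barB$ is obtained by checking $(-)_\bot$, $\barH$ and $\barD$ separately and composing --- exactly as in the appendix (you are in fact a bit more careful, since you also verify the unit and involution axioms for $\barSigma$ and make the truncation $\boxplus$ explicit, which the paper's computation glosses over). The genuine divergence is in the compatibility claim. Both you and the paper reduce it, via pointwise reindexing and the fact that fiber meets are pointwise suprema of distances, to the continuity statement $\barD(e)(\bigvee_i\phi_i,\psi)=\sup_i\barD(e)(\phi_i,\psi)$ for $e=(\barH(P,Q))_\bot$ and a DCPO-directed family $(\phi_i)$. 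The paper proves this (\Cref{lem:wasserstein-cont}) by linear-programming duality: it takes an optimal \emph{bounded dual} solution of the transportation problem from $\bigvee_i\phi_i$ to $\psi$ (existence cited from Kortanek--Yamasaki), observes that the same dual solution is feasible for each $\TP(\phi_i,\psi)$ with objective value perturbed by at most $a\cdot\sum_f\lvert\phi(f)-\phi_i(f)\rvert$, and derives a contradiction. You instead perform surgery on \emph{primal} plans: monotonicity of $\barD(e)(-,\psi)$ because shipments out of $\bot$ cost nothing, and $\barD(e)(\phi,\psi)\leq\barD(e)(\phi_i,\psi)+\eta_i$ because all costs lie in $[0,1]$ and the total-variation gap $\eta_i$ tends to $0$ by directedness. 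Your argument is more elementary (no appeal to strong duality for infinite linear programs) but more tightly bound to the specific shape $(-)_\bot$ of the lifted relation; the duality argument uses neither the zero-cost row at $\bot$ nor the $[0,1]$ bound in its key step. A small point in your favour: the paper's proof of \Cref{lem:wasserstein-cont} only makes explicit the inequality $\barD(e)(\phi,\psi)\leq\sup_i\barD(e)(\phi_i,\psi)$ and leaves the converse (which genuinely requires $e(\bot,-)=0$) implicit, whereas your monotonicity step supplies it.
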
 

The key condition of \Cref{thm:app-struct-cong-bisim}, liftability of $\rho$ w.r.t.\ $\L$, corresponds to the following statement:

\begin{lemma}\label{lem:rho-pBCK-non-exp}
For every fuzzy relation $d$ on $\Lambda$, the rule map \eqref{eq:rho-pBCK-nonexp} is non-expansive:
\begin{equation}\label{eq:rho-pBCK-nonexp} \rho=\rho_\pBCK \c \barSigma((\Lambda,d)\times \barB((\Lambda,d),(\Lambda,d))\times \barB((\Lambda,d),\barB((\Lambda,d),(\Lambda,d)))) \to
    \barB((\Lambda,d), \barSigma^{\star} (\Lambda,d)).\end{equation}
\end{lemma}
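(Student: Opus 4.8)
The plan is to exploit that $\Sigma_\pBCK$ is a polynomial functor, so that $\barSigma$ applied to the domain of \eqref{eq:rho-pBCK-nonexp} is a coproduct $\coprod_\f$ of $\ol{\times}$-products indexed by the operation symbols $\f\in\Sigma_\pBCK$. Since the coproduct lifting in $\FRel_\jne$ assigns distance $1$ to any pair drawn from distinct summands (\Cref{ex:liftings}\ref{ex:liftings:prod-coprod}) and every codomain distance is bounded by $1$, non-expansivity across summands is automatic; as $\rho_\pBCK$ is defined clause by clause it respects the summand decomposition, so it suffices to verify non-expansivity separately for each operation symbol. Throughout I will use two standing facts: the unit $\eta_\Lambda\colon (\Lambda,d)\to\barSigma^\star(\Lambda,d)$ is non-expansive, and $\barSigma^\star$ is built from $\ol{\times}$, so each $n$-ary term constructor satisfies $(\barSigma^\star d)(\f(a_1,\dots,a_n),\f(a_1',\dots,a_n'))\le (\barSigma^\star d)(a_1,a_1')\boxplus\cdots\boxplus(\barSigma^\star d)(a_n,a_n')$. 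Both follow from \Cref{prop:initial-alg-cofree-coalg-lift} and the definition of $\ol{\times}$, and together they bound $\barSigma^\star$-distances of built terms by truncated sums of the $d$-distances of the inserted $\Lambda$-terms.

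For the combinator clauses ($B,B',B'',C,C',C'',K,K',I,\Omega$) the output is a Dirac distribution $1\cdot f$ whose value $f$ is obtained by currying a term constructor over the depth-$0$ ($\Lambda$-)components of the arguments. For such clauses the Wasserstein distance between the two Dirac outputs equals the function-space distance between the two output functions, which, via the defining infimum of $\barH$ (\Cref{ex:liftings}\ref{ex:liftings:hom-functor}) and the constructor inequality above, is bounded by the truncated sum of the $d$-distances of the relevant depth-$0$ arguments; this is in turn dominated by the $\ol{\times}$-distance on the domain. Hence each combinator clause is non-expansive by a short direct computation.

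The clause for $\oplus$ uses the depth-$1$ components and returns the convex combination $\tfrac12\phi+\tfrac12\psi$. Here I would invoke convexity of the Wasserstein distance (\Cref{lem:wasserstein-props}\ref{lem:wasserstein-props-convex-combinations}) to bound the output distance by $\tfrac12\,\barB(d,d)(\phi,\phi')+\tfrac12\,\barB(d,d)(\psi,\psi')$, and then use the elementary inequality $\tfrac12 a+\tfrac12 b\le a\boxplus b$ for $a,b\in[0,1]$ to dominate this by $\barB(d,d)(\phi,\phi')\boxplus\barB(d,d)(\psi,\psi')$, which is precisely a lower bound for the $\ol{\times}$-distance on the domain.

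The main obstacle is the clause for application, which genuinely inspects the depth-$2$ behaviour of its first argument. I would handle it through the explicit decomposition of \Cref{rem:app-mu}, writing the clause as $(\Phi,s)\mapsto j(\Phi,s)$ with $j$ the displayed composite of $\st$, a canonical distributivity isomorphism, $\D(\outl+\ev)$, $\D(\eta\circ\inl+\id)$, $\D\nabla$, and $\mu$. The strategy is to show each factor is non-expansive and conclude by composition: $\st$ is non-expansive by \Cref{lem:wasserstein-props}\ref{lem:wasserstein-props-st-non-expansive} and crucially sends the $\ol{\times}$-product to $\barD$ of the $\ol{\times}$-product, matching the $\barSigma$-lifting; the distributivity isomorphism is an isometry as it merely relabels supports; $\outl$, $\inl$, $\nabla$, and $\ev$ are non-expansive (the last by \Cref{ex:liftings}\ref{ex:liftings:hom-functor}) and $\eta$ is non-expansive by \Cref{lem:wasserstein-props}\ref{lem:wasserstein-props-eta-non-expansive}, so their images under the functor $\barD$ stay non-expansive; and $\mu$ is non-expansive by \Cref{lem:wasserstein-props}\ref{lem:wasserstein-props-mu-non-expansive}. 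Composing yields non-expansivity of $j$, and the isometric inclusion $\Lambda\hookrightarrow\Sigmas\Lambda$ on the codomain side (which only decreases distances) lands us in $\barB(d,\barSigma^\star d)$. The delicate part of this argument, and where the real work lies, is keeping the $\ol{\times}$- versus $\times$-structures aligned along the composite — in particular checking that the product fed to $\ev$ after the distributivity isomorphism is the truncated-sum product $\ol{\times}$ produced by $\st$, as required for $\ev$ to be non-expansive.
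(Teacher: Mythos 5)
Your overall route is the same as the paper's: split by operation symbol using the polynomial/coproduct structure, handle the combinators via the function-space distance between the two Dirac outputs, use convexity of the Wasserstein distance for $\oplus$, and reduce the application clause to non-expansivity of the composite $j$ from \Cref{rem:app-mu}. The $\oplus$ and $\mathsf{app}$ cases are fine as you describe them: the paper proves exactly the factor-by-factor non-expansivity of $j$ that you outline, and your concern about keeping $\ol{\times}$ aligned with the product produced by $\st$ is legitimate and is resolved by \Cref{lem:wasserstein-props}\ref{lem:wasserstein-props-st-non-expansive}.

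The gap is in the combinator case. Your chain of inequalities is: hom-set distance of the two curried functions $\le$ truncated sum of the $d$-distances of the \emph{inserted} $\Lambda$-terms $\le$ $\ol{\times}$-distance on the domain. The first inequality, obtained by iterating the constructor bound for $\ol{\times}$, counts each inserted term once \emph{per occurrence} in the term $T$ defining the clause; the second inequality therefore holds only if every variable occurs in $T$ at most once, i.e.\ only if $T$ is \emph{affine}. If some $t_i$ occurred twice, the occurrence-counted truncated sum could be as large as $2\cdot d(t_i,t_i')$, which is not dominated by the single summand $\epsilon_i$ that the domain's $\ol{\times}$-product contributes for the $i$-th argument (take $n=1$ and $d(t_1,t_1')=0.4$ with a term using $t_1$ twice). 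You never state or verify affinity of the $\pBCK$ clauses, yet this is precisely the load-bearing fact: the paper's proof isolates it explicitly, and it is the exact reason the same argument \emph{fails} for the $S''$-clause of $\pSKI$, whose term $(t\, u)\,(s\, u)$ is not affine, so that $\rho_\pSKI$ does not lift along $p_\jne$. As written, your argument would go through unchanged for $S''$ and prove a false statement. To close the gap you must observe that every term occurring in a $\pBCK$ combinator clause is affine — including affinity in the bound variable $t$, which is what keeps the hom-set infimum of the form $d(t,t')+\sum_i\epsilon_i$ rather than involving a multiple of $d(t,t')$.
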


\begin{proof}
For ease of notation, we use the letter $\bard$ to denote all the fuzzy relations induced by $d$ via the various liftings, e.g.\ the fuzzy relations on
 \[(\Lambda,d)^{(\Lambda,d)},\quad
 \barB((\Lambda,d),(\Lambda,d)),\quad \barB((\Lambda,d), \barB((\Lambda,d),(\Lambda,d))), \quad \barSigma^\star(\Lambda,d).\]
For example, we put $(\Lambda,d)^{(\Lambda,d)}=(\Lambda^\Lambda,\bard)$ etc. Note that $\times$ in the domain of $\rho$ is the categorical product in $\FRel_\ne$ given by the maximum of distances, while the products appearing in the polynomial functor~$\Sigma$ are lifted via the sum of distances. Thus, non-expansivity of \eqref{eq:rho-pBCK-nonexp} means that for every $\f$ in $\Sigma$, every $(t_i,\phi_i,\Phi_i),(t_i',\phi_i',\Phi_i')\in \Lambda\times {B}(\Lambda,\Lambda)\times  {B}(\Lambda, {B}(\Lambda,\Lambda))$ and every $\epsilon_i\in [0,1]$ ($i=1,\ldots,n$),
\begin{align*}
& d(t_i,t_i),\, \bard(\phi_i,\phi_i'),\, \bard(\Phi_i,\Phi_i')\leq \epsilon_i \quad (i=1,\ldots,n) \\
\implies~&
 \bard(\rho(\f((t_1,\phi_1,\Phi_1),\ldots, (t_n,\phi_n,\Phi_n))), \rho(\f((t_1',\phi_1',\Phi_1'),\ldots, (t_n',\phi_n',\Phi_n')))) 
\leq 
\textstyle \sum_{i=1}^n \epsilon_i. \end{align*}
We will now verify this property for every operator $\f$.\\

\noindent\fbox{$\Omega$}~\,We have $\bard(\rho(\Omega),\rho(\Omega)) = \bard(1\cdot \bot,1\cdot\bot)=0$, using that $\bard(\bot,\bot)=0$.\\

\noindent\fbox{$B,B',B'',C,C',C'',K,K',I$}~\,On each of these operations $\f$, the map $\rho$ is defined as
\[ \rho(\f((t_1,-,-),\ldots,(t_n,-,-))) = 1\cdot (t\mapsto T)    \]
where $T=T[t,t_1,\ldots,t_n]\in \Sigmas\Lambda$ is a term in the variables $t,t_1,\ldots,t_n\in \Lambda$ and each of the variables appears at most once in $T$, that is, $T$ is an \emph{affine} term.  Now let $t_i,t_i'\in \Lambda$ and suppose that $d(t_i,t_i')\leq \epsilon_i$ for $i=1,\ldots,n$. By affinity of $T$ and the definition of the fuzzy relation $\bard$ on the term algebra $\barSigma^\star(\Lambda,d)$,
\[  \bard(T[t,t_1,\dots,t_n], T[t',t_1',\ldots,t_n']) \leq d(t,t') + \sum_{i=1}^n \epsilon_i \qquad \text{for all $t,t'\in \Lambda$}, \]
where $T[t',t_1',\cdots,t_n']$ is obtained from $T$ by substituting the variables $t,t_1,\ldots,t_n$ for $t',t_1',\ldots,t_n'$.
This means that the functions $t\mapsto T[t,t_1,\ldots,t_n]$ and $t\mapsto T[t,t_1',\ldots,t_n']$ have distance at most $\sum_{i=1}^n \epsilon_i$ in the hom-set fuzzy relation from \Cref{ex:liftings}\ref{ex:liftings:hom-functor}, which proves that
\begin{align*}
&  \,\bard(\rho(\f((t_1,-,-),\ldots,(t_n,-,-)),\rho(\f((t_1',-,-),\ldots,(t_n',-,-))))\\ 
=& \,\bard(1\cdot (t\mapsto T[t,t_1,\ldots,t_n]),1\cdot (t\mapsto T[t,t_1'\ldots,t_n'])) \leq \textstyle\sum_{i=1}^n \epsilon_i. \end{align*} 

\noindent\fbox{$\oplus$}~\, For $k=1,2$ let $\phi_k,\phi_k'\in B(\Lambda,\Lambda)=\D(\{\top\} + \Lambda^\Lambda)$, and suppose
that $\bard(\phi_k,\phi_k')\leq \epsilon_k$. Then
\begin{align*}
&~\bard(\rho((-,\phi_1,-) \oplus (-,\phi_2,-)), \rho((-,\phi_1',-) \oplus (-,\phi_2',-)))\\
=&~ \bard(\frac{1}{2}\cdot \phi_1+\frac{1}{2}\cdot \phi_2,\frac{1}{2}\cdot \phi_1'+\frac{1}{2}\cdot \phi_2') \leq  \frac{1}{2}\cdot \bard(\phi_1,\phi_1')+\frac{1}{2}\cdot \bard(\phi_2,\phi_2') \leq \frac{1}{2}\cdot \epsilon_1+\frac{1}{2}\cdot \epsilon_2 \leq \epsilon_1+\epsilon_2
\end{align*}
where the inequality in the second step follows from \Cref{lem:wasserstein-props}\ref{lem:wasserstein-props-convex-combinations}.\\

\noindent\fbox{$\mathsf{app}$}~\, Let $\Phi,\Phi'\in \D(\{\bot\}+(\D(\{\bot\}+\Lambda^\Lambda))^{\Lambda})$ and $s,s'\in \Lambda$ and suppose that $\bard(\Phi,\Phi')\leq \epsilon_1$ and $d(s,s')\leq \epsilon_2$. Using the map $j$ from \Cref{rem:app-mu}, we have
\begin{align*} \bard(\app((-,-,\Phi),(s,-,-)), \app((-,-,\Phi'),(s',-,-)) 
= \bard(j(\Phi,s),j(\Phi',s')) 
\leq \bard(\Phi,\Phi')+d(s,s') 
\leq \epsilon_1+\epsilon_2. 
\end{align*}
In the second step we use that
$j\colon B((\Lambda,d),B((\Lambda,d),(\Lambda,d)))\, \ol{\times}\, (\Lambda,d) \to B((\Lambda,d),(\Lambda,d))$ is non-expansive. Indeed, all the maps appearing in the definition of $j$ are non-expansive: for $\st$, $\eta$, $\mu$, this follows from \Cref{lem:wasserstein-props}, for $\ev$ from \Cref{ex:liftings}\ref{ex:liftings:hom-functor}, and for  $\outl$, $\inl$, $\nabla$ it is clear. \qedhere
\end{proof}

The congruence property of $d^\pBCK$ has some useful consequences. In particular, it entails that $d^\pBCK$ relates well to the \emph{contextual pseudometric} $d_\ctx^\pBCK$ (\Cref{sec:prob-comb-logic}). Let us first note that the latter has a convenient alternative characterization.  We say that a pseudometric $d$ on $\Lambda$ is \emph{adequate (for termination)} if 
$ d(t,s) \geq |\gamma(t)(\bot)-\gamma(s)(\bot)|$ for all $t,s\in \Lambda$. Then the following holds:

\begin{proposition}\label{prop:ctx-greatest}
The pseudometric $d_\ctx^\pBCK$ is the $\sqleq$-greatest adequate $\barSigma$-congruence on $(\Lambda,\ini)$.
\end{proposition}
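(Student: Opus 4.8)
The plan is to verify the two halves of the extremal characterization separately: first that $d_\ctx^\pBCK$ is \emph{itself} an adequate $\barSigma$-congruence, and second that it lies $\sqleq$-above every adequate $\barSigma$-congruence. The one point to keep fixed throughout is that the order on the fiber $\FRel_{\jne,\Lambda}$ is the \emph{reversed} pointwise order, so that ``$\sqleq$-greatest'' means \emph{pointwise smallest}. Concretely, the second half amounts to showing $d_\ctx^\pBCK(t,s)\leq d(t,s)$ for every adequate $\barSigma$-congruence $d$ and all $t,s\in\Lambda$.

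For the first half, adequacy is immediate from the trivial context $C[\cdot]=\cdot$, for which $C[t]=t$ and $C[s]=s$, so that $d_\ctx^\pBCK(t,s)\geq |\gamma(t)(\bot)-\gamma(s)(\bot)|$. For the congruence inequality $d_\ctx^\pBCK(\f(t_1,\ldots,t_n),\f(s_1,\ldots,s_n))\leq \sum_i d_\ctx^\pBCK(t_i,s_i)$, I would fix an arbitrary context $C$ and interpolate through the hybrid terms $r_i=\f(s_1,\ldots,s_i,t_{i+1},\ldots,t_n)$, so that $r_0=\f(t_1,\ldots,t_n)$ and $r_n=\f(s_1,\ldots,s_n)$. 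The triangle inequality for $|\cdot|$ gives
\[ |\gamma(C[r_0])(\bot)-\gamma(C[r_n])(\bot)| \leq \sum_{i=1}^n |\gamma(C[r_{i-1}])(\bot)-\gamma(C[r_i])(\bot)|, \]
and since $r_{i-1}$ and $r_i$ differ only in the $i$-th argument, each summand equals $|\gamma(C_i[t_i])(\bot)-\gamma(C_i[s_i])(\bot)|$ for the context $C_i[\cdot]=C[\f(s_1,\ldots,s_{i-1},\cdot,t_{i+1},\ldots,t_n)]$, which is legal because the hole still occurs at most once. Each summand is therefore bounded by $d_\ctx^\pBCK(t_i,s_i)$, and taking the supremum over $C$ yields the congruence inequality.

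For the second half, I would first record that every $\barSigma$-congruence $d$ is automatically \emph{reflexive} on $\Lambda$: applied to a constant $\f$ (arity $0$, so the right-hand sum is empty) the congruence inequality forces $d(c,c)\leq 0$, and subadditivity then propagates $d(u,u)=0$ to all closed terms $u$ by structural induction. The crux is that any adequate $\barSigma$-congruence makes context substitution non-expansive, $d(C[t],C[s])\leq d(t,s)$ for every context $C$, which I prove by induction on $C$. The base cases are $C[\cdot]=\cdot$ (equality) and $C$ hole-free (where $C[t]=C[s]$, so $d(C[t],C[s])=0$). For $C[\cdot]=\f(u_1,\ldots,C_j[\cdot],\ldots,u_n)$ with the $u_k$ closed, congruence gives $d(C[t],C[s])\leq \sum_{k\neq j} d(u_k,u_k)+d(C_j[t],C_j[s])=d(C_j[t],C_j[s])$ by reflexivity, and the induction hypothesis bounds this by $d(t,s)$. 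Combining non-expansivity with adequacy yields $|\gamma(C[t])(\bot)-\gamma(C[s])(\bot)|\leq d(C[t],C[s])\leq d(t,s)$, and taking the supremum over $C$ gives $d_\ctx^\pBCK(t,s)\leq d(t,s)$, i.e.\ $d\sqleq d_\ctx^\pBCK$.

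The main obstacle is less a hard computation than two points that must be handled correctly: the reversal of the fiber order (so that ``greatest'' is pointwise minimal and both inequalities point the expected way), and the automatic reflexivity of $\barSigma$-congruences, which is exactly what licenses discarding the closed ``spectator'' subterms $u_k$ in the inductive step. Once these are in place, both halves reduce to the triangle inequality together with the affine shape of contexts: the hole occurring at most once guarantees that the hybrid terms and the inductive decomposition stay within legal contexts.
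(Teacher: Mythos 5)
Your proof is correct and follows essentially the same route as the paper: adequacy via the trivial context, the congruence inequality via the telescoping/hybrid-term decomposition through the contexts $C_i[\cdot]=C[\f(s_1,\ldots,s_{i-1},\cdot,t_{i+1},\ldots,t_n)]$, and maximality via the context-substitution non-expansivity lemma $d(C[t],C[s])\leq d(t,s)$ proved by structural induction on $C$. Your explicit derivation of reflexivity of a $\barSigma$-congruence from the nullary case plus structural induction is a welcome sharpening of the paper's terser justification, but it is the same argument, not a different one.
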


It follows that $d^\pBCK$ yields a sound method for proving upper bounds to contextual distances:
\begin{corollary}[Soundness of $d^\pBCK$]\label{cor:sound}
For all $t,s\in \Lambda $ one has $d_\ctx^\pBCK(t,s)\leq d^\pBCK(t,s)$.
\end{corollary}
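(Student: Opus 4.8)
The plan is to deduce the inequality from the characterisation of the contextual pseudometric provided by \Cref{prop:ctx-greatest}, which identifies $d_\ctx^\pBCK$ as the $\sqleq$-greatest adequate $\barSigma$-congruence on $(\Lambda,\ini)$. Here $\sqleq$ is the \emph{reversed} pointwise order on the fiber $\FRel_{\jne,\Lambda}$ (\Cref{ex:fibrations}\ref{ex:fibrations-frel2}). Consequently, it suffices to show that $d^\pBCK$ is itself an adequate $\barSigma$-congruence: maximality of $d_\ctx^\pBCK$ then yields $d^\pBCK\sqleq d_\ctx^\pBCK$, which unfolds precisely to $d_\ctx^\pBCK(t,s)\leq d^\pBCK(t,s)$ for all $t,s\in \Lambda$. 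The directions match exactly because $\sqleq$ reverses the numerical order.

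The congruence half is immediate: $d^\pBCK$ is a $\barSigma$-congruence by \Cref{thm:cong-bck}, whose displayed inequality is precisely the defining condition of a $\barSigma$-congruence for the additive lifting $\barSigma$ (\Cref{ex:congruence}\ref{ex:congruence-fuzzy}). It remains to verify adequacy, i.e.\ $d^\pBCK(t,s)\geq |\gamma(t)(\bot)-\gamma(s)(\bot)|$ for all $t,s\in \Lambda$. For this I will use that $d^\pBCK$ is the greatest $\barB(1_\Lambda,-)$-fuzzy bisimulation (\Cref{ex:bisim}\ref{ex:bisim-pseudomet}), so that by the fuzzy (bi)simulation condition \eqref{eq:fuzzy-sim}, together with symmetry $d^\pBCK=(d^\pBCK)^\circ$, both $\barD(((d^\pBCK)^L)_\bot)(\gamma(t),\gamma(s))\leq d^\pBCK(t,s)$ and its swapped variant $\barD(((d^\pBCK)^L)_\bot)(\gamma(s),\gamma(t))\leq d^\pBCK(t,s)$ hold, where $L=\Lambda$.

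The key step is then a transportation argument that bounds the difference of termination probabilities by the Wasserstein distance. In the lifting $(-)_\bot$ (\Cref{ex:liftings}\ref{ex:lifting:bot}), moving mass out of the point $\bot$ costs nothing while moving function-valued mass into $\bot$ costs one unit, i.e.\ $d_\bot(\bot,-)=0$ but $d_\bot(f,\bot)=1$. Hence, in any transportation plan from $\gamma(t)$ to $\gamma(s)$, the $\bot$-mass of $\gamma(s)$ in excess of $\gamma(t)(\bot)$ must be supplied from function-valued points at unit cost, so that $\barD(((d^\pBCK)^L)_\bot)(\gamma(t),\gamma(s))\geq \max\{0,\gamma(s)(\bot)-\gamma(t)(\bot)\}$, and symmetrically with $t,s$ swapped. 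Combining these two estimates with the two fuzzy-bisimulation inequalities above gives $|\gamma(t)(\bot)-\gamma(s)(\bot)|\leq d^\pBCK(t,s)$, which is adequacy; the corollary then follows as in the first paragraph. I expect this transportation estimate to be the only real (and still modest) obstacle: one must argue carefully about the asymmetric costs that the $(-)_\bot$ lifting assigns to $\bot$ and confirm that the excess $\bot$-mass genuinely incurs unit cost in every optimal plan. The remaining ingredients are immediate from results already established, provided one keeps track of the order reversal in the fiber.
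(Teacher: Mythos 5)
Your proposal is correct and follows essentially the same route as the paper: reduce the claim via \Cref{prop:ctx-greatest} and \Cref{thm:cong-bck} to adequacy of $d^\pBCK$, and establish adequacy by a transportation-cost argument for the Wasserstein lifting of $(-)_\bot$. Your treatment is in fact slightly more careful than the paper's one-line argument, since you explicitly invoke symmetry of $d^\pBCK$ and both directions of the biconformance condition to handle the asymmetric costs $d_\bot(\bot,-)=0$ versus $d_\bot(x,\bot)=1$, whereas the paper's phrasing glosses over which direction the excess $\bot$-mass flows.
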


\begin{proof}
By \Cref{prop:ctx-greatest} and since the pseudometric $d^\pBCK$ is a $\barSigma$-congruence (\Cref{thm:cong-bck}), we only need to prove that $d^\pBCK$ is adequate. This follows from the observation that given $\phi,\psi\in \D(\{\bot\}+\cdots)$,  every transportation plan from $\phi$ to $\psi$ must send at least $|\phi(\bot)-\psi(\bot)|$ units from locations distinct from $\bot$ to the location $\bot$, which incurs a cost of at least $|\phi(\bot)-\psi(\bot)|$.
\end{proof}

\begin{example}
We prove  $d_\ctx^\pBCK(I,I\oplus \Omega)=\frac{1}{2}$. For `$\geq$', consider the empty context. For `$\leq$', put $d=d^\pBCK$. By \Cref{cor:sound} it suffices to show $d(I,I\oplus \Omega)\leq\frac{1}{2}$. Since $d$ is a fuzzy simulation, for that it is enough to show that $\gamma(I)=1\cdot (t\mapsto t)$ and $\gamma(I\oplus \Omega)=\frac{1}{2}\cdot \bot + \frac{1}{2}\cdot (t\mapsto t)$ have distance at most $\frac{1}{2}$ w.r.t.~the fuzzy relation on $\D(\{\bot\}+\Lambda^{\Lambda})$ induced by $d$. This holds as $d$ is reflexive.
\end{example}

\subsection{\pSKI}\label{sec:applications-pSKI}
As our second application, we consider the language $\pSKI$, which corresponds to a (non-affine) probabilistic untyped $\lambda$-calculus. Unlike $d^\pBCK$, it turns out the behavioural pseudometric $d^\pSKI$ is not a congruence. This is due to the fact that non-affine contexts can amplify distances, leading to trivialization of the contextual pseudometric~\cite{cdl15}. In our categorical setup, the failure of congruence is conceptually reflected by the fact that the rule map $\rho_\pSKI$ of $\pSKI$ does not lift along the fibration $p_\jne\colon \FRel_\jrp\to\Set$. Indeed, the argument for the combinator case in \Cref{lem:rho-pBCK-non-exp} uses that the behaviour of all combinators is given by affine terms, which is not true for the combinator $S$. 

However, \pSKI still has a well-behaved qualitative notion of bisimulation, which emerges in our setting by restricting fuzzy relations to relations. Thus we instantiate \Cref{thm:app-struct-cong-bisim} to the AHOS $\S_\pSKI=(\Set,\Sigma_\pSKI,B,\rho_\pSKI)$
and the lifting situation $\L=(p,\barSigma,\barB)$ defined as follows:
\begin{enumerate}
\item $p=p_\jrp\colon \Rel_\jrp\to \Set^{(2)}$ is the heterogeneous $\InQtl$-fibration of \Cref{ex:fibrations}\ref{ex:fibrations-rel}.
\item The lifting $\barSigma\colon \Rel_\jrp\to \Rel_\jrp$ of $\Sigma^{(2)}\colon \Set^{(2)}\to \Set^{(2)}$ is given by \Cref{ex:congruence}\ref{ex:congruence-rel}.
\item The lifting $\barB\colon \Rel_\jrp^\opp\times \Rel_\jrp\to \Rel_\jrp$ of $B^{(2)}$ is given by the composite
\[ \barB = (\,\Rel_\jrp^\opp\times \Rel_\jrp \xto{\barH} \Rel_\jrp \xto{(-)_\bot} \Rel_\jrp \xto{\barD}  \Rel_\jrp\,), \]
using the liftings $\barH$, $(-)_\bot$, $\barD$ as defined in \Cref{ex:liftings-rel}.
\end{enumerate}
The derivation of the congruence result is now much analogous to the case of $\pBCK$. In the following we put $\Sigma=\Sigma_\pSKI$, $\Lambda=\Lambda_\pSKI$ and $\rho=\rho_\pSKI$. Moreover, we let $\approx^\pSKI$ denote the bisimilarity relation on the canonical model $\gamma=\gamma_\pSKI\colon \Lambda\to B(\Lambda,\Lambda)$, that is, the greatest $\barB(1_\Lambda,-)$-bisimulation (\Cref{ex:bisim}\ref{ex:bisim-bisim}). Our abstract congruence result for behavioural biconformances (\Cref{thm:app-struct-cong-bisim}) then instantiates to:
\begin{theorem}[Congruence of $\approx^\pSKI$]\label{thm:cong-pSKI}
For all $n$-ary operations $\f\in \Sigma_\pSKI$ and $t_i,s_i\in \Lambda$,
\[ t_1\approx^\pSKI s_1\wedge \ldots \wedge t_n\approx^\pSKI s_n \implies \f(t_1,\ldots,t_n)\approx^{\pSKI} \f(s_1,\ldots,s_n). \] 
\end{theorem}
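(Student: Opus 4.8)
The plan is to derive \Cref{thm:cong-pSKI} as a direct instance of the abstract congruence theorem for biconformances, \Cref{thm:app-struct-cong-bisim}, applied to the complete AHOS $\S_\pSKI$ together with the lifting situation $\L=(p_\jrp,\barSigma,\barB)$ fixed above. Since $\approx^\pSKI$ is by definition the greatest $\barB(1_\Lambda,-)$-behavioural biconformance on the canonical model $\gamma_\pSKI$, and since a $\barSigma$-congruence on $(\Lambda,\ini)$ in the relation fibration is precisely a relation closed under the operations of $\Sigma_\pSKI$ (\Cref{ex:congruence}\ref{ex:congruence-rel}), the assertion of \Cref{thm:cong-pSKI} is exactly the conclusion of \Cref{thm:app-struct-cong-bisim} read off inside $\Rel_\jrp$. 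The entire proof therefore reduces to verifying the hypotheses: (i) $p_\jrp$ is a heterogeneous $\InQtl$-fibration; (ii) $\barSigma$ is $\circ$-monoidal, $\barB$ is laxly monoidal, and $\barSigma\colon \E_\Lambda\to\E_{\Sigma\Lambda}$ preserves directed joins; (iii) $\L$ is compatible with $\S_\pSKI$; and (iv) $\rho_\pSKI$ is liftable w.r.t.\ $\L$.

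Condition (i) is \Cref{ex:iqtl-fibration}. Conditions (ii) and (iii) are structural properties of the liftings, and I would package them as a single lemma parallel to \Cref{lem:lifting-compatible}, then dispatch them by the same routine verification: the behaviour functor $B$ and its lifting $\barB=\barD\circ(-)_\bot\circ\barH$ are shared between the two languages, and $\circ$-monoidality, lax monoidality, compatibility, and preservation of directed joins all restrict from $\FRel_\jne$ to the full subcategory $\Rel_\jrp$ (\Cref{rem:rel-frel}), additionally using that the product liftings $\times$ and $\ol\times$ coincide on relations (\Cref{ex:liftings-rel}).

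The heart of the proof is condition (iv). Liftability of $\rho_\pSKI$ unfolds, exactly as in \Cref{lem:rho-pBCK-non-exp} but now in $\Rel_\jrp$, to the statement that for every relation $R$ on $\Lambda$ the rule map is jointly relation-preserving: whenever premises related component-wise by $R$, $\barB(R,R)$, and $\barB(R,\barB(R,R))$ are fed to an operator $\f$, the two resulting conclusions lie in $\barB(R,\barSigma^\star R)$. For the operators $\Omega$, $K$, $K'$, $I$, $\oplus$, and $\appp$ shared with $\pBCK$, I would transport the corresponding cases of \Cref{lem:rho-pBCK-non-exp} along the embedding $\Rel_\jrp\hookrightarrow\FRel_\jne$, under which relation-preservation is simply the $\{0,1\}$-valued instance of non-expansivity; in particular the application case again reduces to relation-preservation of the composite $j$ of \Cref{rem:app-mu}, which holds since each of its constituents ($\st$, $\eta$, $\mu$, $\ev$, $\outl$, $\inl$, $\nabla$) preserves relations. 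The genuinely new cases are the combinators $S$, $S'$, $S''$. Each sends related arguments $(t_i,-,-)$ to $1\cdot(t\mapsto T[t,t_1,\ldots,t_n])$ for a $\Sigma_\pSKI$-term $T$, and here I would use only that substituting $R$-related terms into a fixed term yields $\barSigma^\star R$-related results, which is immediate from the definition of the relation lifting of $\Sigma^\star$ on the term algebra.

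The crucial point — and the conceptual difference from $\pBCK$ — is precisely where the $\pBCK$ argument invoked affinity of $T$. The combinators $S$ and $S'$ still produce affine terms and are handled exactly like $K$, $K'$, $I$; but $S''$ produces the \emph{non-affine} conclusion $1\cdot(u\mapsto (t\,u)\,(s\,u))$, in which the abstracted variable $u$ occurs twice. In the metric fibration this duplication makes the distance of the resulting functions grow with $d(u,u')$ beyond what the input distance credits, breaking non-expansivity, which is exactly why $d^\pSKI$ fails to be a congruence. In $\Rel_\jrp$, by contrast, relation-preservation is insensitive to the number of occurrences of a variable: from $R(u,u')$ one may conclude $R(u,u')$ as often as needed, so the repeated leaf causes no difficulty in checking that $(t\,u)\,(s\,u)$ and $(t'\,u')\,(s'\,u')$ are $\barSigma^\star R$-related. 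I therefore expect the main (and essentially only) subtlety to be making this multiplicity-insensitivity of $\barSigma^\star$ precise on the term $(t\,u)\,(s\,u)$, after which the $S''$ case, and hence liftability, and hence the theorem, follow.
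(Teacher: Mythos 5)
Your proposal is correct and follows essentially the same route as the paper: instantiate \Cref{thm:app-struct-cong-bisim} with $\S_\pSKI$ and $\L=(p_\jrp,\barSigma,\barB)$, obtain conditions (i)--(iii) by restricting the $\pBCK$ verification (\Cref{lem:lifting-compatible}) along $\Rel_\jrp\hookrightarrow\FRel_\jne$, and reduce liftability to the observation that the affinity hypothesis in the combinator case of \Cref{lem:rho-pBCK-non-exp} is superfluous for $\{0,1\}$-valued relations, since relation-preservation is insensitive to the multiplicity of variable occurrences in the conclusion term. Your pinpointing of $S''$ as the sole genuinely non-affine case is a slightly sharper diagnosis than the paper's uniform treatment of $S,S',S''$, but the substance of the argument is identical.
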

Again, we only need to check the conditions of \Cref{thm:app-struct-cong-bisim}. 

\begin{lemma}
The lifting situation $\L=(p,\barSigma,\barB)$ is compatible with $\S_\pSKI$, the lifting $\barSigma$ is $\circ$-monoidal, the lifting $\barB$ is laxly monoidal, and the map $\barSigma\colon \FRel_{\jne,\Lambda}\to \FRel_{\jne,\Sigma \Lambda}$ preserves directed joins.
\end{lemma}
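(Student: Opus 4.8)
The plan is to verify the four assertions one at a time, reading each of them directly off the explicit building blocks of the lifting situation $\L$ rather than transferring them from the analogous \pBCK lemma (\Cref{lem:lifting-compatible}): the liftings $\barSigma$ and $\barB$ are assembled from the same primitive constructions ($\coprod$, the product lifting fixed in $\L$, $\barH$, $(-)_\bot$, $\barD$), so every property decomposes along these. Three of the four claims ($\circ$-monoidality of $\barSigma$, lax monoidality of $\barB$, and preservation of directed joins by $\barSigma$) are fibrewise algebraic facts that I would dispatch by direct computation with the defining formulas for composition, reversal, and the fiber order. The remaining claim, compatibility, is the one genuinely substantial point, since it couples the DCPO order on coalgebras (used to build the canonical model) with reindexing, and I expect it to be the main obstacle.

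For $\barSigma$: on each fiber, $\barSigma P$ is the coproduct over operation symbols $\sigma$ of the $\ar(\sigma)$-fold tensor of $P$, so I would check the three lax $\circ$-morphism conditions componentwise. The identity/diagonal and reversal decompose coordinatewise across a fixed symbol and force maximal distance between distinct symbols, which yields $\barSigma 1_\Lambda = 1_{\Sigma\Lambda}$ and strict preservation of $(-)^\circ$; the multiplication inequality $\barSigma P\cdot\barSigma Q\sqleq\barSigma(P\cdot Q)$ follows because any intermediate element realising a composite on $\Sigma\Lambda$ must lie in the same $\sigma$-component, so the infimum defining composition factors through the coordinates (truncation only weakening the bound in the harmless direction). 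Preservation of directed joins is the statement that the fiber join — a pointwise infimum in the reversed order — commutes with the finite coproduct-of-tensors defining $\barSigma$; this holds precisely because the family is directed, which is exactly what lets the infimum pass through the finitary tensor and coproduct.

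For $\barB=\barD\circ(-)_\bot\circ\barH$: lax monoidality of a composite of liftings follows from lax monoidality of each factor, so I would treat the three blocks separately. The hom-lifting $\barH$ is lax monoidal because relatedness on function spaces is closed under pre- and post-composition, the mixed variance being handled by the opposite monoid structure on the contravariant argument; $(-)_\bot$ is strictly monoidal, since the adjoined point composes trivially and the value $e_\bot(\bot,-)=0$ is absorbed; and monoidality of the Wasserstein lifting $\barD$ is the gluing/triangle property of optimal transport, i.e.\ the same non-expansiveness of $\st$, $\eta$, and $\mu$ recorded in \Cref{lem:wasserstein-props}. Composing the three gives the required lax monoid morphism $\barB\colon\E_X^\opp\times\E_Y\to\E_{B(X,Y)}$.

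The crux is compatibility (\Cref{def:compatibility}), and this is where I expect the real work. Unfolding reindexing as precomposition, the left-hand side evaluates the behaviour conformance $\barB(P,Q)$ at the point $((\bigvee_i c_i)(x),d(x'))$, while the right-hand side is the fiber meet over $i$ of its values at $(c_i(x),d(x'))$; since the DCPO join $\bigvee_i c_i$ acts pointwise, the identity reduces to a Scott-continuity statement for the behaviour distance $\phi\mapsto\barB(P,Q)(\phi,\psi)$ along directed increasing families $\phi_i=c_i(x)$ in the approximation order on $\D(\{\bot\}+\Lambda^\Lambda)$, with $\psi=d(x')$ fixed. Monotonicity is immediate from the structure of $(-)_\bot$: mass sitting on $\bot$ is a free transport source ($e_\bot(\bot,-)=0$), so moving mass off $\bot$ (i.e.\ increasing in the approximation order) can only raise the transport cost, giving $\sup_i\barB(P,Q)(\phi_i,\psi)\leq\barB(P,Q)(\bigvee_i\phi_i,\psi)$. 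The reverse inequality is the obstacle: I would take optimal plans for each $(\phi_i,\psi)$ — which exist by \cite[Thm.~2.1]{kortanek_yamasaki95} — and argue, via lower semicontinuity of the transportation cost under the monotone (setwise) convergence $\phi_i\uparrow\bigvee_i\phi_i$, that they accumulate to a plan for $(\bigvee_i\phi_i,\psi)$ of cost at most $\sup_i\barB(P,Q)(\phi_i,\psi)$. Once this continuity of optimal transport is in place, compatibility follows and \Cref{thm:app-struct-cong-bisim} applies. The identical argument, with the qualitative transport relation replacing the Wasserstein distance, covers the relational instance $p_\jrp$ in which the lifting situation for \pSKI is actually phrased.
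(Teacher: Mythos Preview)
Your direct-verification approach is sound in outline, but the paper dispatches this lemma in one sentence: the lifting situation for \pSKI is literally the restriction of the one for \pBCK along the full inclusion $\Rel_\jrp\hookrightarrow\FRel_\jne$ of $\{0,1\}$-valued fuzzy relations (\Cref{rem:rel-frel}\ref{rem:rel-frel-embedding}), and each of the primitive liftings $(-)_\bot$, $\barH$, $\barD$, $\times$, $+$ restricts along that inclusion (\Cref{ex:liftings-rel}), so all four properties are inherited verbatim from \Cref{lem:lifting-compatible}. You explicitly decline this transfer and redo the work from scratch; that is legitimate and your componentwise arguments for $\circ$-monoidality of $\barSigma$, lax monoidality of $\barB$, and preservation of directed joins are correct (indeed they essentially reproduce the appendix proof of \Cref{lem:lifting-compatible}), but it is strictly more labour than the paper expends here.

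The one substantive divergence is your treatment of compatibility. The paper's argument (appendix, via the auxiliary \Cref{lem:wasserstein-cont}) goes through LP \emph{duality}: fix an optimal bounded dual solution $(a_i),(b_j)$ for the transport problem from $\phi=\bigvee_k\phi_k$ to $\psi$, observe that it is feasible for every $\TP_{\mathsf{d}}(\phi_k,\psi)$, and use boundedness of $(a_i)$ together with $\sum_i|p_i-p_i^{(k)}|\to 0$ to force the dual (hence primal) values at $\phi_k$ arbitrarily close to that at $\phi$. Your proposed \emph{primal} route --- extract a limiting coupling from optimal plans $\pi_k$ for $(\phi_k,\psi)$ and invoke lower semicontinuity of the cost --- is standard optimal-transport methodology and can be completed in the countable discrete setting, but the step ``they accumulate to a plan for $(\bigvee_i\phi_i,\psi)$'' hides a tightness/compactness argument that you have not supplied; the paper's duality approach sidesteps this entirely and is the cleaner choice here.
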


This is immediate from the corresponding statement for $\pBCK$ (\Cref{lem:lifting-compatible}) since the lifting situation for $\pSKI$ restricts that for $\pBCK$. Liftability of $\rho$ w.r.t.\ $\L$ amounts to the following:

\begin{lemma}\label{lem:rho-pSKI-non-exp}
For every relation $R$ on $\Lambda$, the rule map \eqref{eq:rho-pSKI-relpre} is relation-preserving:
\begin{equation}\label{eq:rho-pSKI-relpre} \rho=\rho_\pSKI \c \barSigma((\Lambda,R)\times \barB((\Lambda,R),(\Lambda,R))\times \barB((\Lambda,R),\barB((\Lambda,R),(\Lambda,R)))) \to
    \barB((\Lambda,R), \barSigma^{\star} (\Lambda,R)). \end{equation}
\end{lemma}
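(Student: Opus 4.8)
The plan is to establish liftability by verifying directly that $\rho=\rho_\pSKI$ is relation-preserving, proceeding through the operation symbols of $\Sigma_\pSKI$ in exactly the same order as the proof of \Cref{lem:rho-pBCK-non-exp}, but now over the relational fibration $p_\jrp$ rather than the metric fibration $p_\jne$. Unfolding the relation liftings $\barH$, $(-)_\bot$, $\barD$ from \Cref{ex:liftings-rel}, relation-preservation of \eqref{eq:rho-pSKI-relpre} means concretely: for each $\f\in\Sigma_\pSKI$ and all inputs $(t_i,\phi_i,\Phi_i),(t_i',\phi_i',\Phi_i')$ with $R(t_i,t_i')$ and the appropriate $\barB$-relations holding between the $\phi$- and $\Phi$-components, the outputs $\rho(\f(\dots))$ and $\rho(\f(\dots'))$ are related by $\barD(((\barSigma^\star R)^R)_\bot)$ on $\D(\{\bot\}+(\Sigmas\Lambda)^\Lambda)$. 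Here $\barSigma^\star R$ is the relation on the term algebra $\Sigmas\Lambda$ furnished by the lifting of free algebras (\Cref{prop:initial-alg-cofree-coalg-lift}); concretely, $\barSigma^\star R(T,T')$ holds iff $T$ and $T'$ arise from a common $\Sigma$-term context by filling corresponding variable positions with $R$-related terms.

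First I would dispatch $\Omega$, where both outputs equal the Dirac distribution $1\cdot\bot$ and relatedness is immediate, since $((\barSigma^\star R)^R)_\bot$ relates $\bot$ to $\bot$ by definition of $(-)_\bot$. For each combinator $S,S',S'',K,K',I$, the rule outputs a Dirac $1\cdot(u\mapsto T)$ whose value is a $\Sigma$-term $T$ in $u$ and the operands; it then suffices to show the two functions $u\mapsto T[u,t_1,\dots,t_n]$ and $u\mapsto T[u,t_1',\dots,t_n']$ are related in the hom-relation $(\barSigma^\star R)^R$, i.e.\ that $R(u,u')$ and $R(t_i,t_i')$ together force $\barSigma^\star R$-relatedness of the two substituted terms. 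This is just closure of $\barSigma^\star R$ under substituting $R$-related terms into a common term context. \emph{The decisive observation}, and the sole point where \pSKI departs from \pBCK, is that this step requires \emph{no} affinity of $T$: the non-affine combinator $S$, whose term $(t\,u)(s\,u)$ contains two occurrences of $u$, poses no problem, because relation-preservation may invoke $R(u,u')$ once for each occurrence of $u$ --- whereas in the metric setting each occurrence would contribute additively and destroy non-expansivity.

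The two remaining cases proceed exactly as for \pBCK. For $\oplus$, the outputs are the convex combinations $\frac{1}{2}\phi_1+\frac{1}{2}\phi_2$ and $\frac{1}{2}\phi_1'+\frac{1}{2}\phi_2'$, and from transportation plans along $((\barSigma^\star R)^R)_\bot$ witnessing relatedness of $\phi_k$ with $\phi_k'$ one obtains such a plan for the convex combination by taking the corresponding convex combination of plans --- the relational analogue of \Cref{lem:wasserstein-props}\ref{lem:wasserstein-props-convex-combinations}. For $\mathsf{app}$, I would reuse the composite $j$ of \Cref{rem:app-mu} and check that $j$ is relation-preserving by noting that each of its building blocks is: the monad maps $\st$, $\eta$, $\mu$ preserve the relation liftings (the relational restriction of \Cref{lem:wasserstein-props}, i.e.\ naturality of the $\D$-structure w.r.t.\ $\barD$), the map $\ev$ preserves the hom-relation by the very definition of $\barH$, and $\outl$, $\inl$, $\nabla$ do so trivially.

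I expect no genuine obstacle here: the argument is strictly lighter than its \pBCK counterpart, as all the affinity bookkeeping is simply dropped. The only conceptual care needed is to recognize that moving from $p_\jne$ to $p_\jrp$ is precisely what neutralizes the variable duplication in the rule for $S$ --- the same mechanism that underlies the failure of $d^\pSKI$ to be a congruence while $\approx^\pSKI$ is one.
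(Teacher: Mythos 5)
Your proposal is correct and follows essentially the same route as the paper: the paper's proof simply reuses the \pBCK argument restricted to $\{0,1\}$-valued fuzzy relations for all operations and observes that for the combinator case the affinity of the term $T$ is no longer needed, which is precisely the ``decisive observation'' you isolate for $S$, $S'$, $S''$. Your explicit re-derivation of the $\Omega$, $\oplus$, and $\mathsf{app}$ cases is sound but redundant relative to the paper, which obtains them directly as the relational restriction of \Cref{lem:rho-pBCK-non-exp}.
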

The proof is identical to that of \Cref{lem:rho-pBCK-non-exp}, restricted to relations (i.e.\ $\{0,1\}$-valued fuzzy relations). We only need to observe that in the proof case for combinators, the affinity of the term $T$ is not needed if $d$ is $\{0,1\}$-valued. Therefore, for such $d$ the argument also applies to $S$, $S'$, $S''$. 

Analogously to \Cref{cor:sound}, we get soundness of bisimulations for \emph{contextual equivalence}:
\begin{corollary}[Soundness of $\approx^\pSKI$] 
For all $t,s\in \Lambda_\pSKI$, if $t\approx^\pSKI s$ then $t\approx^\pSKI_\ctx s$.
\end{corollary}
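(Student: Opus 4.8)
The plan is to mirror the proof of \Cref{cor:sound}. Write $R={\approx^\pSKI}$. The qualitative analogue of \Cref{prop:ctx-greatest} is that $\approx^\pSKI_\ctx$ is the $\sqleq$-greatest adequate reflexive $\barSigma$-congruence on $(\Lambda,\ini)$, where a relation $R$ on $\Lambda$ is \emph{adequate} (for termination) if $R(t,s)$ implies $\gamma(t)(\bot)=\gamma(s)(\bot)$. Granting this, soundness reduces to three properties of $R$: it is reflexive (\Cref{lem:bisim-props}), it is a $\barSigma$-congruence (\Cref{thm:cong-pSKI}), and it is adequate. From these, $R$ is an adequate reflexive congruence, hence $R\sqleq{\approx^\pSKI_\ctx}$, i.e.\ $t\approx^\pSKI s$ implies $t\approx^\pSKI_\ctx s$.

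The one new ingredient is adequacy of $\approx^\pSKI$, which I would establish as the relational counterpart of the cost estimate in \Cref{cor:sound}. Suppose $t\approx^\pSKI s$ and set $\phi=\gamma(t)$ and $\psi=\gamma(s)$ in $\D(\{\bot\}+\Lambda^\Lambda)$. By \Cref{ex:bisim}\ref{ex:bisim-bisim}, being a bisimulation gives both a transportation plan from $\phi$ to $\psi$ along $(R^\Lambda)_\bot$ and one from $\psi$ to $\phi$ along $((R^\circ)^\Lambda)_\bot$. The decisive observation is that the lifting $(-)_\bot$ of \Cref{ex:liftings-rel} relates a source to the target $\bot$ only when the source is itself $\bot$; no location in $\Lambda^\Lambda$ is $(-)_\bot$-related to $\bot$. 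Hence no plan may route mass from a function into the location $\bot$, so the demand $\psi(\bot)$ can be supplied only from $\phi(\bot)$, giving $\psi(\bot)\leq\phi(\bot)$; the symmetric plan gives $\phi(\bot)\leq\psi(\bot)$. Therefore $\gamma(t)(\bot)=\gamma(s)(\bot)$, as required.

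It remains to justify the relational analogue of \Cref{prop:ctx-greatest}, which proceeds exactly as for \pBCK. Contextual equivalence is adequate (take the empty context) and is a $\barSigma$-congruence (substitute one argument at a time, using transitivity of $\approx^\pSKI_\ctx$), so it is itself an adequate reflexive congruence. Conversely, if $R$ is an adequate reflexive $\barSigma$-congruence and $R(t,s)$, then for any single-hole context $C[\cdot]$ an induction on $C$—using reflexivity of $R$ on the hole-free subterms and compatibility of $R$ with the operations at the hole—yields $R(C[t],C[s])$, whence adequacy gives $\gamma(C[t])(\bot)=\gamma(C[s])(\bot)$, i.e.\ $t\approx^\pSKI_\ctx s$. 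I expect the only genuinely delicate point to be the adequacy argument—precisely, reading off from \Cref{ex:liftings-rel} that $(-)_\bot$ forbids transport into $\bot$—while the remaining steps are routine bookkeeping already carried out for \pBCK in \Cref{prop:ctx-greatest,cor:sound}.
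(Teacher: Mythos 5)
Your proposal is correct and follows essentially the same route the paper intends (the paper only gestures at this proof via ``analogously to \Cref{cor:sound}''): reduce soundness to the relational analogue of \Cref{prop:ctx-greatest} plus congruence (\Cref{thm:cong-pSKI}), reflexivity (\Cref{lem:bisim-props}), and adequacy of $\approx^\pSKI$. Your adequacy argument — that $(R^L)_\bot$ relates nothing in $\Lambda^\Lambda$ to $\bot$, so the two transportation plans force $\gamma(t)(\bot)=\gamma(s)(\bot)$ — is the correct relational counterpart of the cost estimate in the paper's proof of \Cref{cor:sound}.
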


The reader should have noted that our proofs of \Cref{thm:cong-bck,thm:cong-pSKI} are fairly short, modular, and almost mechanical. This simplicity is enabled by the clean separation and isolation of required conditions on the lifted functors and the operational rules of the language in our abstract congruence results (\Cref{thm:app-struct-cong,thm:app-struct-cong-bisim}). In particular, we mention that our above proofs are markedly simpler than existing congruence proofs for probabilistic $\lambda$-calculi, which rest on non-trivial calculations based on linear programming duality~\cite{cdl15} or combinatorial properties of sets of real numbers~\cite{dsa14}.

\subsection{Languages with Variables}
 While we have considered combinatory logics for simplicity, languages with variables can be modelled in the AHOS framework as well, by moving from sets to presheaves. This works much like in the higher-order abstract GSOS framework~\cite{gmstu23}. We sketch the idea for the call-by-name $\lambda$-calculus with its big-step operational semantics. Take the base category $\B=\Set^\fset$, where $\fset$ is the category of finite cardinals and functions. Following Fiore, Turi, and Plotkin~\cite{DBLP:conf/lics/FiorePT99}, the presheaf $\Lambda\in \Set^\fset$ of $\lambda$-terms sending $n\in\fset$ to the set of $\lambda$-terms (modulo $\alpha$-equivalence) with free variables from $n$, forms the initial algebra for the endofunctor $\Sigma$ on $\Set^\fset$ given by $(\Sigma X)(n) = n+X(n+1)+X(n)\times X(n)$. Its summands correspond to the constructors of $\lambda$-terms (variables, $\lambda$-abstraction, application). We take the behaviour bifunctor $B$ on $\Set^\fset$ given by $B(X,Y)(n)=(\{\bot\}+\Set^\fset(X^n,Y))\times (\{\bot\}+Y^X(n))$.
The first component captures substitution behaviour (e.g.\ every $\lambda$-term $t\in \Lambda(n)$ yields a natural transformation $(\phi_t)_m\colon \Lambda^n(m)\to \Lambda(m)$ given by $\vec{s} \mapsto t[\vec{s}]$), and the second component captures big-step transitions, where $Y^X$ is the exponential in $\Set^\fset$. (For probabilistic $\lambda$-calculi, one composes the second component with $\D$.) Analogous to ~\cite[Sec.~5]{gmstu23}, the operational semantics can be encoded into an AHOS $\S_\lambda=(\B,\Sigma,B,\rho)$ satisfying all the requirements of our abstract congruence results.

\section{Conclusion and Future Work}\label{sec:conclusion}
We have introduced \emph{abstract higher-order specifications} (\emph{AHOS}), a categorical framework for specifying operational semantics of higher-order languages that combines the strengths of \emph{higher-order abstract GSOS} and \emph{monotone biGSOS} and does away with some of their limitations. Our main application of the new framework is a theory of congruence of behavioural conformances, parametric in a choice of a fibration and based on a fibrational generalization of Howe's method.

While we have put the focus of this paper on the foundational aspects of the AHOS framework and restricted ourselves to relatively simple instantiations, we expect that advanced language features that have previously been modelled in the higher-order abstract GSOS framework, such as call-by-value and call-by-push-value evaluation~\cite{gtu25}, simple~\cite{gmstu24} and recursive~\cite{gmtu24lics} types, and higher-order store~\cite{gmstu25} are captured in much the same way in the AHOS framework.

As a challenging application of our results, we aim to study probabilistic higher-order languages over \emph{continuous} distributions in the AHOS framework, aiming for a theory of behavioural distance for such languages. So far, only a qualitative notion of bisimilarity has emerged in the literature~\cite{dg19}.

A further natural step is to develop a fibrational theory of \emph{logical relations}, which besides coinduction are the most popular operational technique for establishing contextual equivalence. We expect that the recent work on fibrational differential logical relations~\cite{dg24} will provide inspiration. If logical relations and their congruence properties can be developed at the generality of AHOS, they may apply to relational reasoning far beyond contextual equivalence, e.g.~to \emph{parametricity}~\cite{reynolds83}.

\clearpage

\section*{Acknowledgement}
The author wishes to thank Paul Wild and Jonas Forster for discussions on the Wasserstein metric.

\bibliographystyle{ACM-Reference-Format}
\bibliography{mainBiblio}

@PREAMBLE{ {\providecommand{\noopsort}[1]{}} }

@string{acm="ACM"}

@string{springer="Springer"}

@string{lipics="LIPIcs"}

@string{dagstuhl="Schloss Dagstuhl -- Leibniz-Zentrum f{\"u}r Informatik"}

@misc{u25_arxiv,
      title={Higher-Order Behavioural Conformances via Fibrations}, 
      author={Henning Urbat},
      year={2025},
      eprint={2507.18509},
      archivePrefix={arXiv},
      primaryClass={cs.PL},
      url={https://arxiv.org/abs/2507.18509},
}

@misc{gpt25,
      title={Big Steps in Higher-Order Mathematical Operational Semantics}, 
      author={Sergey Goncharov and Pouya Partow and Stelios Tsampas},
      year={2025},
      eprint={2506.01076},
      archivePrefix={arXiv},
      primaryClass={cs.PL},
      url={https://arxiv.org/abs/2506.01076}, 
      note={To appear in Proc.~ICFP 2025}
}

@inproceedings{bgkmfsw24,
  author       = {Harsh Beohar and
                  Sebastian Gurke and
                  Barbara K{\"{o}}nig and
                  Karla Messing and
                  Jonas Forster and
                  Lutz Schr{\"{o}}der and
                  Paul Wild},
  OPTeditor       = {Olaf Beyersdorff and
                  Mamadou Moustapha Kant{\'{e}} and
                  Orna Kupferman and
                  Daniel Lokshtanov},
  title        = {Expressive Quantale-Valued Logics for Coalgebras: An Adjunction-Based
                  Approach},
  booktitle    = {41st International Symposium on Theoretical Aspects of Computer Science,
                  {STACS} 2024, March 12-14, 2024, Clermont-Ferrand, France},
  series       = {LIPIcs},
  volume       = {289},
  pages        = {10:1--10:19},
  publisher    = {Schloss Dagstuhl - Leibniz-Zentrum f{\"{u}}r Informatik},
  year         = {2024},
  doi          = {10.4230/LIPICS.STACS.2024.10},
}

@inproceedings{reynolds83,
  author       = {John C. Reynolds},
  OPTeditor       = {R. E. A. Mason},
  title        = {Types, Abstraction and Parametric Polymorphism},
  booktitle    = {Information Processing 83, Proceedings of the {IFIP} 9th World Computer
                  Congress, Paris, France, September 19-23, 1983},
  pages        = {513--523},
  publisher    = {North-Holland/IFIP},
  year         = {1983},
}

@article{dg24,
  author       = {Francesco Dagnino and
                  Francesco Gavazzo},
  title        = {A Fibrational Tale of Operational Logical Relations: Pure, Effectful
                  and Differential},
  journal      = {Log. Methods Comput. Sci.},
  volume       = {20},
  number       = {2},
  year         = {2024},
  doi          = {10.46298/LMCS-20(2:1)2024},
}

@InProceedings{dgy19,
  author =	{Dal Lago, Ugo and Gavazzo, Francesco and Yoshimizu, Akira},
  title =	{{Differential Logical Relations, Part I: The Simply-Typed Case}},
  booktitle =	{46th International Colloquium on Automata, Languages, and Programming (ICALP 2019)},
  pages =	{111:1--111:14},
  series =	{Leibniz International Proceedings in Informatics (LIPIcs)},
  year =	{2019},
  volume =	{132},
  OPTeditor =	{Baier, Christel and Chatzigiannakis, Ioannis and Flocchini, Paola and Leonardi, Stefano},
  publisher =	{Schloss Dagstuhl -- Leibniz-Zentrum f{\"u}r Informatik},
  doi =		{10.4230/LIPIcs.ICALP.2019.111},
}

@inproceedings{gavazzo18,
  author       = {Francesco Gavazzo},
  OPTeditor       = {Anuj Dawar and
                  Erich Gr{\"{a}}del},
  title        = {Quantitative Behavioural Reasoning for Higher-order Effectful Programs:
                  Applicative Distances},
  booktitle    = {Proceedings of the 33rd Annual {ACM/IEEE} Symposium on Logic in Computer
                  Science, {LICS} 2018, Oxford, UK, July 09-12, 2018},
  pages        = {452--461},
  publisher    = {{ACM}},
  year         = {2018},
  doi          = {10.1145/3209108.3209149},
}

@misc{gmstu25,
      title={Bialgebraic Reasoning on Stateful Languages}, 
      author={Sergey Goncharov and Stefan Milius and Lutz Schröder and Stelios Tsampas and Henning Urbat},
      year={2025},
      eprint={2503.10955},
      archivePrefix={arXiv},
      primaryClass={cs.PL},
      url={https://arxiv.org/abs/2503.10955}, 
      note={To appear in Proc.~ICFP 2025}
}

@INPROCEEDINGS{jr99,
  author={Jeffrey, A. and Rathke, J.},
  booktitle={Proceedings. 14th Symposium on Logic in Computer Science (Cat. No. PR00158)}, 
  title={Towards a theory of bisimulation for local names}, 
  year={1999},
  volume={},
  number={},
  pages={56-66},
  doi={10.1109/LICS.1999.782586}}

@InProceedings{ls15,
  author =	{Lenglet, Serguei and Schmitt, Alan},
  title =	{{Howe's Method for Contextual Semantics}},
  booktitle =	{26th International Conference on Concurrency Theory (CONCUR 2015)},
  pages =	{212--225},
  series =	{Leibniz International Proceedings in Informatics (LIPIcs)},
  year =	{2015},
  volume =	{42},
  editor =	{Aceto, Luca and de Frutos Escrig, David},
  publisher =	{Schloss Dagstuhl -- Leibniz-Zentrum f{\"u}r Informatik},
  doi =		{10.4230/LIPIcs.CONCUR.2015.212},
}

@inproceedings{dgl17,
  author       = {Ugo Dal Lago and
                  Francesco Gavazzo and
                  Paul Blain Levy},
  title        = {Effectful applicative bisimilarity: Monads, relators, and Howe's method},
  booktitle    = {32nd Annual {ACM/IEEE} Symposium on Logic in Computer Science, {LICS}
                  2017, Reykjavik, Iceland, June 20-23, 2017},
  publisher    = {{IEEE} Computer Society},
  year         = {2017},
  doi          = {10.1109/LICS.2017.8005117},
}

@article{dg19,
title = {On Bisimilarity in Lambda Calculi with Continuous Probabilistic Choice},
journal = {Electronic Notes in Theoretical Computer Science},
volume = {347},
pages = {121-141},
year = {2019},
note = {Proceedings of the Thirty-Fifth Conference on the Mathematical Foundations of Programming Semantics},
doi = {https://doi.org/10.1016/j.entcs.2019.09.007},
author = {Ugo Dal Lago and Francesco Gavazzo},
}

@article{gtu25,
author = {Goncharov, Sergey and Tsampas, Stelios and Urbat, Henning},
title = {Abstract Operational Methods for Call-by-Push-Value},
year = {2025},
publisher = {Association for Computing Machinery},
volume = {9},
number = {POPL},
doi = {10.1145/3704871},
journal = {Proc. ACM Program. Lang.},
articleno = {35},
numpages = {27},
}

@book{sangiorgi11, 
place={Cambridge}, 
title={Introduction to Bisimulation and Coinduction},
 publisher={Cambridge University Press}, 
author={Sangiorgi, Davide}, 
year={2011}}

@inproceedings{dsa14,
  author       = {Ugo Dal Lago and
                  Davide Sangiorgi and
                  Michele Alberti},
  OPTeditor       = {Suresh Jagannathan and
                  Peter Sewell},
  title        = {On coinductive equivalences for higher-order probabilistic functional
                  programs},
  booktitle    = {The 41st Annual {ACM} {SIGPLAN-SIGACT} Symposium on Principles of
                  Programming Languages, {POPL} '14, San Diego, CA, USA, January 20-21,
                  2014},
  pages        = {297--308},
  publisher    = {{ACM}},
  year         = {2014},
  doi          = {10.1145/2535838.2535872},
}

@article{ls91,
  author       = {Kim Guldstrand Larsen and
                  Arne Skou},
  title        = {Bisimulation through Probabilistic Testing},
  journal      = {Inf. Comput.},
  volume       = {94},
  number       = {1},
  pages        = {1--28},
  year         = {1991},
  doi          = {10.1016/0890-5401(91)90030-6},
}

@article{rot19,
  author       = {Jurriaan Rot},
  title        = {Distributive laws for monotone specifications},
  journal      = {Acta Informatica},
  volume       = {56},
  number       = {7-8},
  pages        = {585--617},
  year         = {2019},
  doi          = {10.1007/S00236-019-00333-X},
}

@article{fsw24,
  author       = {Jonas Forster and
                  Lutz Schr{\"{o}}der and
                  Paul Wild},
  title        = {Conformance Games for Graded Semantics},
  journal      = {CoRR},
  volume       = {abs/2411.03069},
  year         = {2024},
  url          = {https://doi.org/10.48550/arXiv.2411.03069},
  doi          = {10.48550/ARXIV.2411.03069},
  eprinttype    = {arXiv},
  eprint       = {2411.03069},
  timestamp    = {Wed, 01 Jan 2025 10:39:14 +0100},
  biburl       = {https://dblp.org/rec/journals/corr/abs-2411-03069.bib},
  bibsource    = {dblp computer science bibliography, https://dblp.org}
}

@article{dallago_zorzi12,
  author       = {Ugo Dal Lago and
                  Margherita Zorzi},
  title        = {Probabilistic operational semantics for the lambda calculus},
  journal      = {{RAIRO} Theor. Informatics Appl.},
  volume       = {46},
  number       = {3},
  pages        = {413--450},
  year         = {2012},
  doi          = {10.1051/ITA/2012012},
}

@article{kortanek_yamasaki95,
title = {Discrete infinite transportation problems},
journal = {Discrete Applied Mathematics},
volume = {58},
number = {1},
pages = {19-33},
year = {1995},
doi = {https://doi.org/10.1016/0166-218X(93)E0139-P},
author = {Kenneth O. Kortanek and Maretsugu Yamasaki},
}

@InProceedings{GianantonioRPO,
author="Di Gianantonio, Pietro
and Honsell, Furio
and Lenisa, Marina",
editor="Amadio, Roberto",
title="RPO, Second-Order Contexts, and $\lambda$-Calculus",
booktitle="Foundations of Software Science and Computational Structures",
year="2008",
publisher="Springer Berlin Heidelberg",
address="Berlin, Heidelberg",
pages="334--349",
abstract="We apply Leifer-Milner RPO approach to the $\lambda$-calculus, endowed with lazy and call by value reduction strategies. We show that, contrary to process calculi, one can deal directly with the $\lambda$-calculus syntax and apply Leifer-Milner technique to a category of contexts, provided that we work in the framework of weak bisimilarities. However, even in the case of the transition system with minimal contexts, the resulting bisimilarity is infinitely branching, due to the fact that, in standard context categories, parametric rules such as $\beta$ can be represented only by infinitely many ground rules. To overcome this problem, we introduce the general notion of second-order context category. We show that, by carrying out the RPO construction in this setting, the lazy (call by value) observational equivalence can be captured as a weak bisimilarity equivalence on a finitely branching transition system. This result is achieved by considering an encoding of $\lambda$-calculus in Combinatory Logic.",
isbn="978-3-540-78499-9"
}

@INPROCEEDINGS{cdl15,
  author={Crubill\'e, Raphaëlle and Dal Lago, Ugo},
  booktitle={30th Annual ACM/IEEE Symposium on Logic in Computer Science (LICS 2015)}, 
  title={Metric reasoning about $\lambda$-terms: The affine case}, 
  year={2015},
  pages={633-644},
  publisher = {IEEE},
}

@INPROCEEDINGS{kkhkh19,
  author={Komorida, Yuichi and Katsumata, Shin-ya and Hu, Nick and Klin, Bartek and Hasuo, Ichiro},
  booktitle={34th Annual ACM/IEEE Symposium on Logic in Computer Science (LICS 2019)}, 
  title={Codensity Games for Bisimilarity}, 
  year={2019},
  pages={1-13},
  keywords={Games;Measurement;Probabilistic logic;Topology;Informatics;Markov processes;Kernel},
  publisher={IEEE},
}

@BOOK{jacobs99, 
AUTHOR       = "B. Jacobs", 
TITLE        = "Categorical Logic and Type Theory",
PUBLISHER    = "North Holland", 
SERIES       = "Studies in Logic and the Foundations of Mathematics",
NUMBER       = "141",
YEAR         = "1999"}

@Book{ahs90,
  author = 	 {Ad\'amek, Ji\v{r}\'i and Herrlich, Horst and Strecker, George E.},
  ALTeditor = 	 {},
  title = 	 {Abstract and Concrete Categories},
  publisher = 	 {John Wiley and Sons},
  year = 	 {1990},
  OPTkey = 	 {},
  OPTvolume = 	 {},
  OPTnumber = 	 {},
  OPTseries = 	 {},
  OPTaddress = 	 {},
  OPTedition = 	 {},
  OPTmonth = 	 {},
  note = 	 {Free online version: \url{http://www.tac.mta.ca/tac/reprints/articles/17/tr17abs.html}},
  OPTannote = 	 {},
}

@article{hj98,
  title = {Structural {Induction} and {Coinduction} in a {Fibrational Setting}},
  author = {Hermida, Claudio and Jacobs, Bart},
  year = {1998},
  OPTmonth = sep,
  journal = {Information and Computation},
  volume = {145},
  number = {2},
  pages = {107--152},
  issn = {0890-5401},
  doi = {10.1006/inco.1998.2725}
}

@inproceedings{gmstu23,
author = {Goncharov, Sergey and Milius, Stefan and Schr\"{o}der, Lutz and Tsampas, Stelios and Urbat, Henning},
title = {Towards a Higher-Order Mathematical Operational Semantics},
booktitle = {50th ACM SIGPLAN Symposium on Principles of Programming Languages (POPL 2023)},
year = {2023},
issue_date = {January 2023},
publisher = acm,
OPTaddress = {New York, NY, USA},
volume = {7},
OPTnumber = {POPL},
doi = {10.1145/3571215},
abstract = {Compositionality proofs in higher-order languages are notoriously involved, and general semantic frameworks guaranteeing compositionality are hard to come by. In particular, Turi and Plotkin’s bialgebraic abstract GSOS framework, which has been successfully applied to obtain off-the-shelf compositionality results for first-order languages, so far does not apply to higher-order languages. In the present work, we develop a theory of abstract GSOS specifications for higher-order languages, in effect transferring the core principles of Turi and Plotkin’s framework to a higher-order setting. In our theory, the operational semantics of higher-order languages is represented by certain dinatural transformations that we term pointed higher-order GSOS laws. We give a general compositionality result that applies to all systems specified in this way and discuss how compositionality of the SKI calculus and the λ-calculus w.r.t. a strong variant of Abramsky’s applicative bisimilarity are obtained as instances.},
series = {Proc. ACM Program. Lang.},
OPTmonth = {Jan},
articleno = {22},
numpages = {27},
keywords = {Categorical semantics, Higher-order reasoning, Abstract GSOS}
}

@phdthesis{morris,
  timestamp = {2017-01-25T21:04:25Z},
  title = {Lambda-Calculus Models of Programming Languages},
  school = {Massachusetts Institute of Technology},
  author = {Morris, James H.},
  year = {1968}
}

@inproceedings{DBLP:conf/lics/Howe89,
  author    = {Douglas J. Howe},
  title     = {Equality In Lazy Computation Systems},
  booktitle = {4th Annual Symposium on Logic in Computer Science (LICS 1989)},
  pages     = {198--203},
  publisher = {{IEEE} Computer Society},
  year      = {1989},
  doi       = {10.1109/LICS.1989.39174},
  timestamp = {Wed, 16 Oct 2019 14:14:54 +0200},
  bibsource = {dblp computer science bibliography, https://dblp.org}
}

@article{DBLP:journals/iandc/Howe96,
  author    = {Douglas J. Howe},
  title     = {Proving Congruence of Bisimulation in Functional Programming Languages},
  journal   = {Inf. Comput.},
  volume    = {124},
  number    = {2},
  pages     = {103--112},
  year      = {1996},
  doi       = {10.1006/inco.1996.0008},
  timestamp = {Thu, 18 May 2017 09:54:17 +0200},
  bibsource = {dblp computer science bibliography, https://dblp.org}
}

@inproceedings{DBLP:conf/lics/FiorePT99,
  author    = {Marcelo P. Fiore and
               Gordon D. Plotkin and
               Daniele Turi},
  title     = {Abstract Syntax and Variable Binding},
  booktitle = {14th Annual {IEEE} Symposium on Logic in Computer Science (LICS 1999)},
  pages     = {193--202},
  publisher = {{IEEE} Computer Society},
  year      = {1999},
  doi       = {10.1109/LICS.1999.782615},
  timestamp = {Wed, 16 Oct 2019 14:14:54 +0200},
  biburl    = {https://dblp.org/rec/conf/lics/FiorePT99.bib},
  bibsource = {dblp computer science bibliography, https://dblp.org}
}

@article{bbkk18,
  author       = {Paolo Baldan and
                  Filippo Bonchi and
                  Henning Kerstan and
                  Barbara K{\"{o}}nig},
  title        = {Coalgebraic Behavioral Metrics},
  journal      = {Log.~Methods~Comput.~Sci.},
  volume       = {14},
  number       = {3},
  year         = {2018},
}

@article{pitts1997operationally,
  title={Operationally-based theories of program equivalence},
  author={Pitts, Andrew M},
  journal={Semantics and Logics of Computation},
  volume={14},
  pages={241},
  year={1997}
}

@book{mac2013categories,
  title = {Categories for the {{Working Mathematician}}},
  author = {Mac Lane, S.},
  year = {1978},
  series = {Graduate {{Texts}} in {{Mathematics}}},
  edition = {2},
  volume = {5},
  publisher = springer,
  OPTaddress = {{New York}},
  url = {http://link.springer.com/10.1007/978-1-4757-4721-8},
  isbn = {978-0-387-98403-2},
  langid = {english}
}

@InCollection{Abramsky:lazylambda,
  author =       "S.  Abramsky",
  booktitle =    "Research topics in Functional Programming",
  publisher =    "Addison Wesley",
  title =        "The lazy $\lambda$-calculus",
  year =         "1990",
  pages =        "65--117",
  document-size = "129. 7 kbytes",
}

@article{DBLP:journals/lmcs/HirschowitzL22,
  author    = {Tom Hirschowitz and
               Ambroise Lafont},
  title     = {A categorical framework for congruence of applicative bisimilarity
               in higher-order languages},
  journal   = {Log. Methods Comput. Sci.},
  volume    = {18},
  number    = {3},
  year      = {2022},
  url       = {https://doi.org/10.46298/lmcs-18(3:37)2022},
  doi       = {10.46298/lmcs-18(3:37)2022},
  timestamp = {Thu, 13 Oct 2022 16:02:18 +0200},
  biburl    = {https://dblp.org/rec/journals/lmcs/HirschowitzL22.bib},
  bibsource = {dblp computer science bibliography, https://dblp.org}
}

@inbook{pitts_2011,
  place={Cambridge},
  series={Cambridge Tracts in Theoretical Computer Science},
  title={Howe's method for higher-order languages},
  DOI={10.1017/CBO9780511792588.006},
  booktitle={Advanced Topics in Bisimulation and Coinduction},
  publisher={Cambridge University Press},
  author={Pitts, Andrew},
  editor={Sangiorgi, Davide and Rutten, JanEditors},
  year={2011},
  pages={197232},
  collection={Cambridge Tracts in Theoretical Computer Science}
}

@inproceedings{DBLP:conf/lics/BorthelleHL20,
  author    = {Peio Borthelle and
               Tom Hirschowitz and
               Ambroise Lafont},
  editor    = {Holger Hermanns and
               Lijun Zhang and
               Naoki Kobayashi and
               Dale Miller},
  title     = {A Cellular {Howe} Theorem},
  booktitle = {35th Annual {ACM/IEEE} Symposium on Logic in Computer
               Science, LICS'20},
  pages     = {273--286},
  publisher = {{ACM}},
  year      = {2020},
  url       = {https://doi.org/10.1145/3373718.3394738},
  doi       = {10.1145/3373718.3394738},
  timestamp = {Thu, 23 Jun 2022 19:58:49 +0200},
  biburl    = {https://dblp.org/rec/conf/lics/BorthelleHL20.bib},
  bibsource = {dblp computer science bibliography, https://dblp.org}
}

@inproceedings{DBLP:conf/lics/LagoGL17,
  author    = {Dal Lago, Ugo and
               Gavazzo, Francesco and
               Levy, Paul Blain},
  title     = {Effectful applicative bisimilarity: Monads, relators, and {H}owe's method},
  booktitle = {32nd Annual {ACM/IEEE} Symposium on Logic in Computer Science (LICS 2017)},
  pages     = {1--12},
  publisher = {{IEEE} Computer Society},
  year      = {2017},
  doi       = {10.1109/LICS.2017.8005117},
  timestamp = {Fri, 27 Mar 2020 08:47:04 +0100},
  biburl    = {https://dblp.org/rec/conf/lics/LagoGL17.bib},
  bibsource = {blp computer science bibliography, https://dblp.org}
}

@InProceedings{UrbatTsampasEtAl23,
  author        = {Henning Urbat and Stelios Tsampas and Sergey Goncharov and
                  Stefan Milius and Lutz Schr{\"o}der},
  title         = {Weak Similarity in Higher-Order Mathematical Operational
                  Semantics},
  year          = "2023",
  publisher     = {IEEE Computer Society Press},
  booktitle     = {38th Annual ACM/IEEE Symposium on Logic in Computer Science (LICS 2023)},
  OPTeditor        = {Igor Walukiewicz},
  doi           = {10.1109/LICS56636.2023.10175706}
}

@book{hindley2008lambda,
  title = {Lambda-{{Calculus}} and {{Combinators}}: {{An Introduction}}},
  shorttitle = {Lambda-{{Calculus}} and {{Combinators}}},
  author = {Hindley, J. Roger and Seldin, Jonathan P.},
  year = {2008},
  edition = {2},
  publisher = {{Cambridge University Press}},
  OPTaddress = {{Cambridge}},
  doi = {10.1017/CBO9780511809835},
  isbn = {978-0-521-89885-0}
}

@article{gmstu24,
author = {Goncharov, Sergey and Santamaria, Alessio and Schr\"{o}der, Lutz and Tsampas, Stelios and , Henning},
title = {Logical Predicates in Higher-Order Mathematical Operational Semantics},
year = {2024},
editor = {Naoki Kobayashi and James Worrell},
booktitle = {Foundations of Software Science and Computation Structures - 27th
                  International Conference, FoSSaCS 2024}
}

@InProceedings{gmtu24lics,
  author        = {Sergey Goncharov and Stefan Milius and Stelios Tsampas and Henning Urbat},
  title     = {Bialgebraic Reasoning on Higher-Order Program Equivalence},
  year          = "2024",
  publisher     = {IEEE Computer Society Press},
  booktitle     = {39th Annual ACM/IEEE Symposium on Logic in Computer Science (LICS 2024)},
  doi           = {10.1145/3661814.3662099},
  note = {Preprint: \url{https://arxiv.org/abs/2402.00625}}
}

\clearpage
\appendix 

\section{Complete Lattice Fibrations and Topological Functors}\label{app:fibrations-vs-topological}
In this appendix we give a proof of the folklore result that $\CLat_\sqcap$-fibrations are exactly fiber-small topological functors (\Cref{thm:fib-vs-top}). Let us first recall the notion of topological functor~\cite{ahs90}:

\begin{defn}[Topological functor]
Let $p\colon \E\to\B$ be a functor.
\begin{enumerate}
\item A \emph{$p$-source} is a possible large family of morphisms $(f_i\colon X\to pY_i)_{i\in I}$ in $\B$.
\item An \emph{initial lift} of $(f_i)$ is a source $(\barf_i\colon \ol{X}\to Y_i)_{i\in I}$ in $\E$ such that $p\barf_i = f_i$ for each $i\in I$, and moreover for every source $(g_i\colon Z\to Y_i)$ and $m\colon pZ\to X$ such that $pg_i = f_i\circ m$ for all $i\in I$, there exists a unique $\ol{m}\colon Z\to Y$ such that $p\ol{m}=m$ and $g_i=\barf_i\circ \ol{m}$ for all $i\in I$. The dual notion is that of a \emph{final lift} of a \emph{$p$-sink} $(f_i\colon pX_i\to Y)_{i\in I}$. 
\item $p$ is a \emph{topological functor} if every $p$-source has a unique initial lift.
\item A topological functor is \emph{fiber-small} if each fiber $\E_X$, $X\in \B$, is a small category.
\end{enumerate}
\end{defn}

\begin{lemma}[\cite{ahs90}, Exercise 21C]\label{lem:top-char}
A functor $p\colon \E\to \C$ is topological iff the following conditions hold:
\begin{enumerate}
\item Each fiber is a (possibly large) complete lattice.
\item Each morphism $f\colon X\to pY$ (viewed as a singleton $p$-source) has a unique initial lift.
\item Each morphism $f\colon pX\to Y$ (viewed as singleton $p$-sink) has a unique final lift.
\end{enumerate} 
\end{lemma}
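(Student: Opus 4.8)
The plan is to prove the two implications separately, treating the forward direction (topological $\Rightarrow$ (1)--(3)) as essentially a matter of restricting to singleton (co)sources together with standard structural facts about topological functors, and reserving the real work for the converse.

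For the forward direction I assume $p$ is topological, so that every $p$-source has a unique initial lift. Condition (2) is then immediate, being the special case of singleton sources. For the remaining conditions I would invoke the well-known structural properties of topological functors \cite{ahs90}: such a functor is faithful and amnestic, so each fiber $\E_X$ is a (possibly large) partially ordered class (faithfulness forces at most one morphism between two objects of a fiber, and amnesticity gives antisymmetry); moreover a topological functor is self-dual, i.e.\ every $p$-sink also has a unique final lift, which yields (3) as the singleton case. Completeness of the fibers then follows by computing meets and joins as lifts of identity-(co)sources: the meet $\bigsqcap_{i} P_i$ of a family $(P_i)_{i\in I}$ in $\E_X$ is the initial lift of the source $(\id_X\colon X\to pP_i)_{i\in I}$, and dually the join is the final lift of the corresponding sink. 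The universal property of these lifts is precisely the defining property of a fiber meet/join, so each $\E_X$ is a complete lattice.

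For the converse I assume (1)--(3) and must produce, for an arbitrary $p$-source $(f_i\colon X\to pY_i)_{i\in I}$, a unique initial lift. The first observation is that (2) makes $p$ a fibration and (3) makes it an opfibration, hence a bifibration; consequently, for each morphism $m$ of $\C$ the opreindexing $m_\star$ is left adjoint to the reindexing $m^\star$, so every reindexing functor $m^\star$ preserves all meets. I then build the candidate lift: by (2) form the cartesian lift $\barf_i\colon f_i^\star Y_i\to Y_i$ of each $f_i$, and by (1) form the meet $M=\bigsqcap_{i\in I} f_i^\star Y_i$ in the complete lattice $\E_X$. The claim is that $M$, equipped with the composites $M\to f_i^\star Y_i\xto{\barf_i} Y_i$ (the first map witnessing $M\sqleq f_i^\star Y_i$), is the initial lift. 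To verify its universal property, suppose $(g_i\colon Z\to Y_i)_{i\in I}$ and $m\colon pZ\to X$ satisfy $pg_i=f_i\circ m$. Cartesianness of each $\barf_i$ yields a unique map $Z\to f_i^\star Y_i$ above $m$, which by reindexing along $m\colon pZ\to X$ corresponds to $Z\sqleq m^\star(f_i^\star Y_i)$ in $\E_{pZ}$. Since this holds for all $i$ and $m^\star$ preserves meets, $Z\sqleq \bigsqcap_i m^\star(f_i^\star Y_i)=m^\star M$; composing the resulting map $Z\to m^\star M$ (above $\id_{pZ}$) with the cartesian map $m^\star M\to M$ produces the required factorization $\bar m\colon Z\to M$ above $m$, and checking $\barf_i\circ(\mathrm{proj}_i)\circ\bar m=g_i$ is a routine diagram chase using functoriality of reindexing. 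Uniqueness comes for free: any morphism $Z\to M$ above $m$ factors through the cartesian $m^\star M\to M$ via a morphism into $m^\star M$ above $\id_{pZ}$, and there is at most one such morphism because the fiber $\E_{pZ}$ is a poset.

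The main obstacle is exactly this amalgamation step in the converse, where the family of fibered comparisons $Z\to f_i^\star Y_i$ must be combined into a single map into the meet $M$: this is where the three hypotheses are genuinely used together, as it requires the fibration structure (to reindex along $m$), the opfibration structure from (3) (to know $m^\star$ is a right adjoint, hence meet-preserving), and the completeness of the fibers from (1) (for $M$ to exist at all). In the forward direction, by contrast, the only nonroutine inputs are the standard facts that topological functors are faithful, amnestic, and self-dual, for which I would simply refer to \cite{ahs90}.
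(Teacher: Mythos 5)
The paper does not actually prove this lemma: it is stated with a bare citation to \cite{ahs90} (Exercise 21C) and used as a black box, so there is no in-paper argument to compare yours against. Your proof is correct and is the standard one. In the forward direction you rightly reduce (2) to the singleton case, get (3) from self-duality of topological functors, and obtain posetal fibers from faithfulness and amnesticity and completeness from initial lifts of identity sources; all of these are the cited facts from \cite{ahs90}, so deferring to them is fine. In the converse, the construction of the initial lift of $(f_i\colon X\to pY_i)_{i\in I}$ as the fiber meet $M=\bigsqcap_i f_i^{\star}Y_i$, together with the amalgamation via meet-preservation of $m^{\star}$ (which you correctly extract from the adjunction $m_\star\dashv m^\star$ supplied by (3)), is exactly the right argument and handles the empty source uniformly. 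Two small points you wave at but should make explicit: the ``routine diagram chase'' showing $\barf_i\circ\pi_i\circ\ol{m}=g_i$ is cleanest via faithfulness of $p$, which does not come for free in the converse but follows already from (1) and (2) (any two morphisms above the same $f$ factor through the cartesian lift of $f$ via morphisms above an identity, and the fiber is a poset); and the uniqueness of the initial lift itself, not just of the factorizing morphism $\ol{m}$, also needs a word, though it is immediate since isomorphic objects of a posetal fiber coincide. Neither point is a real gap.
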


Recall the definition of a fibration $p\colon \E\to \B$  (\Cref{def:fibration}). We say that $p$ is a \emph{poset fibration} if each fiber $\E_X$ ($X\in B$) is a small poset. Thus every $\CLat_\sqcap$-fibration is a poset fibration.

\begin{lemma}\label{lem:pos-fib-faithful}
In a poset fibration, cartesian lifts are unique.
\end{lemma}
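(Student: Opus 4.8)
The plan is to run the textbook ``uniqueness of universal morphisms'' argument and then exploit that, in a poset fibration, vertical isomorphisms collapse to genuine equalities. Concretely, I would fix a morphism $f\colon X\to pP$ in $\B$ and suppose that $\barf\colon Q\to P$ and $\barf'\colon Q'\to P$ are two cartesian lifts of $f$ over the same object $P$. Since each is above $f$, both source objects $Q$ and $Q'$ lie in the fiber $\E_X$, and $Q$, $Q'$ are exactly the candidate values of $f^{\star}P$.

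Next I would manufacture comparison morphisms from the defining universal property of cartesian lifts (\Cref{def:fibration}). Applying cartesianness of $\barf$ to the morphism $\barf'\colon Q'\to P$ together with $h=\id_X$ (which satisfies $p\barf'=f=f\circ\id_X$) yields a unique vertical $u\colon Q'\to Q$ with $\barf\circ u=\barf'$. Symmetrically, cartesianness of $\barf'$ produces a unique vertical $v\colon Q\to Q'$ with $\barf'\circ v=\barf$. The point to check carefully here is \emph{verticality}: the factorizations supplied by the universal property are above $\id_X$, so they genuinely live in $\E_X$.

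The decisive step is then purely order-theoretic. The morphisms $u$ and $v$ witness $Q\sqleq Q'$ and $Q'\sqleq Q$ in the poset $\E_X$, so antisymmetry forces $Q=Q'$. Once the two objects coincide, $u$ and $v$ are vertical endomorphisms in a thin category and must therefore equal the identity, so $\barf'=\barf\circ u=\barf$. Hence both the object $f^{\star}P=Q$ and the lift $\barf$ are uniquely determined.

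I do not anticipate a real obstacle: all the fibrational content is already packaged into the universal property of \Cref{def:fibration}, and the remaining work is done entirely by thinness (at most one morphism between any pair of objects) and antisymmetry of the fiber poset. The only subtlety worth stating explicitly is the verticality bookkeeping above; notably, faithfulness of $p$ is \emph{not} needed for this lemma, which is fitting since the uniqueness of cartesian lifts is one of the ingredients used to \emph{derive} faithfulness later.
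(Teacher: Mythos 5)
Your proof is correct and is essentially the paper's argument: the paper notes that the universal property makes the cartesian lift unique up to vertical isomorphism, and that isomorphisms in a poset fiber are identities, which is exactly what your explicit construction of $u$ and $v$ followed by antisymmetry and thinness unpacks. Your closing remark that faithfulness is not needed (and is in fact derived later) is also consistent with the paper's order of development.
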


\begin{proof}
By its universal property, the cartesian lift of a morphism $f\colon X\to pY$ is unique up to isomorphism in the fiber $\E_X$. Hence it is unique as isomorphisms in the poset $\E_X$ are identities.
\end{proof}

\begin{lemma}[\cite{jacobs99}, Lemma 9.1.2]\label{lem:bifib-char}
A fibration $p\colon \E\to \B$ is a bifibration iff for each $f\colon X\to Y$ in $\B$ the reindexing functor $f^{*}\colon \E_Y\to \E_X$ has a left adjoint.
\end{lemma}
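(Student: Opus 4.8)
The plan is to reduce the biconditional to a statement purely about the \emph{op}fibration structure, since $p$ is assumed to be a fibration and hence the reindexing functors $f^\star$ already exist. Concretely, I would show that $p$ is an opfibration if and only if each $f^\star$ has a left adjoint, and that this left adjoint is necessarily the opreindexing functor $f_\star$. Throughout I use the defining universal properties of cartesian and opcartesian lifts from \Cref{def:fibration}.

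For the forward direction, assume $p$ is a bifibration and fix $f\colon X\to Y$. For $P\in\E_X$ and $Q\in\E_Y$ I would establish a bijection
\[ \E_Y(f_\star P,Q)\;\cong\;\{\,g\colon P\to Q \mid pg=f\,\}\;\cong\;\E_X(P,f^\star Q). \]
The first bijection is the universal property of the opcartesian lift $\barf\colon P\to f_\star P$: every $g$ above $f$ factors uniquely as $g=u\circ\barf$ with $u\colon f_\star P\to Q$ above $\id_Y$. The second is dually the universal property of the cartesian lift $\barf\colon f^\star Q\to Q$: every $g$ above $f$ factors uniquely as $g=\barf\circ v$ with $v\colon P\to f^\star Q$ above $\id_X$. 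Composing these yields a bijection natural in $P$ and $Q$, which is exactly the adjunction $f_\star\dashv f^\star$.

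For the converse, suppose each $f^\star$ has a left adjoint $L_f$, with unit $\eta\colon \Id\Rightarrow f^\star L_f$. I would manufacture opcartesian lifts by setting $f_\star P:=L_f P$ and defining $\barf\colon P\to L_f P$ as the composite of $\eta_P\colon P\to f^\star L_f P$ (above $\id_X$) with the cartesian lift $f^\star L_f P\to L_f P$ (above $f$); this composite lies above $f$ as required. To verify opcartesianness, I take $g\colon P\to Q$ with $pg=h\circ f$, factor $g$ through the cartesian lift of $h$ to obtain $g'\colon P\to h^\star Q$ above $f$, factor $g'$ through the cartesian lift of $f$ to obtain a fiber map $P\to f^\star h^\star Q$, and finally transpose along $L_f\dashv f^\star$ to obtain a fiber map $L_f P\to h^\star Q$; postcomposing with the cartesian lift of $h$ produces the candidate mediating morphism $\barh\colon f_\star P\to Q$ above $h$.

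The routine but delicate part---and the main obstacle---is checking that this $\barh$ actually satisfies $g=\barh\circ\barf$ and is the \emph{unique} such morphism above $h$. This is a diagram chase combining the uniqueness clauses of the two cartesian factorizations with the triangle identities of the adjunction $L_f\dashv f^\star$; the same bookkeeping simultaneously shows that $P\mapsto L_f P$ is functorial and compatible with $\eta$, so that the constructed lifts are genuinely opcartesian and $p$ becomes an opfibration, hence a bifibration.
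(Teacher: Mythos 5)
Your proposal is correct, and it is essentially the standard textbook argument for this fact; note that the paper itself offers no proof here but imports the statement from Jacobs (Lemma~9.1.2), so there is nothing in the paper to diverge from. Both directions are set up properly: the forward direction correctly splices the universal properties of the opcartesian lift $P\to f_\star P$ and the cartesian lift $f^\star Q\to Q$ into the hom-set bijection $\E_Y(f_\star P,Q)\cong\{g\colon P\to Q\mid pg=f\}\cong\E_X(P,f^\star Q)$, and the converse correctly manufactures the opcartesian lift as the composite of the unit $\eta_P\colon P\to f^\star L_fP$ with the cartesian lift of $f$ at $L_fP$. The only thing you defer is the final factorization/uniqueness chase; it does close as you describe (uniqueness of the two cartesian factorizations reduces the mediating morphism to the adjoint transpose of a fiber map, which is unique by the bijectivity of transposition), and the remark about having to establish functoriality of $L_f$ is superfluous, since $L_f$ is a functor by hypothesis and opcartesianness of the individual lifts is all that \Cref{def:fibration} requires.
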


The desired correspondence result is now a straightforward consequence of the above lemmas:

\begin{theorem}\label{thm:fib-vs-top}
A functor $p\colon \E\to \B$ is a $\CLat_\sqcap$-fibration iff it is topological and fiber-small.
\end{theorem}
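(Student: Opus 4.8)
The plan is to prove both implications by routing through the characterization of topological functors in \Cref{lem:top-char}, after first recording the elementary dictionary between the two flavours of ``lift'' on singletons: a cartesian lift of $f\colon X\to pY$ is \emph{precisely} an initial lift of the singleton $p$-source $(f)$, and dually an opcartesian lift of $f\colon pX\to Y$ is \emph{precisely} a final lift of the singleton $p$-sink $(f)$, since the defining universal properties coincide verbatim once one sets the target object of the source/sink to be the chosen $\E$-object. Under this dictionary, conditions (2) and (3) of \Cref{lem:top-char} assert exactly that $p$ is, respectively, a fibration and an opfibration, in both cases with \emph{unique} (op)cartesian lifts.

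For the forward direction, assume $p$ is a $\CLat_\sqcap$-fibration. Fiber-smallness is immediate, since by definition each fiber is a small complete lattice. To see that $p$ is topological, I verify the three conditions of \Cref{lem:top-char}. Condition (1) holds because the fibers are complete lattices. For condition (2), $p$ is a fibration, so every $f\colon X\to pY$ has a cartesian lift, and this lift is unique by \Cref{lem:pos-fib-faithful}; via the dictionary, the singleton source $(f)$ has a unique initial lift. For condition (3), $p$ is a bifibration by \Cref{lem:clat-fib-props}\ref{lem:clat-fib-props-bifib}, so every $f\colon pX\to Y$ has an opcartesian lift, which is unique by the order-dual of \Cref{lem:pos-fib-faithful}; this is the required unique final lift. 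Hence $p$ is topological.

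For the backward direction, assume $p$ is topological and fiber-small. By \Cref{lem:top-char}, each fiber is a complete lattice, which together with fiber-smallness makes it a \emph{small} complete lattice. Condition (2) of the lemma gives unique cartesian lifts of singleton sources, so $p$ is a fibration; condition (3) gives unique final lifts of singleton sinks, i.e.\ opcartesian lifts, so $p$ is additionally an opfibration, hence a bifibration. It remains to check that each reindexing functor $f^\star\colon \E_Y\to \E_X$ preserves meets. Since $p$ is a bifibration, \Cref{lem:bifib-char} supplies a left adjoint $f_\star\dashv f^\star$; being a right adjoint between the complete-lattice fibers, $f^\star$ preserves all limits, in particular all meets (which in a poset are exactly the limits). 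Therefore $p$ is a $\CLat_\sqcap$-fibration.

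The only genuinely delicate point is the bookkeeping flagged at the outset: one must confirm that the universal property of a cartesian (respectively opcartesian) lift matches the initiality (respectively finality) condition of \Cref{lem:top-char} specialized to singletons, and that it is exactly the poset structure of the fibers that upgrades ``existence of lifts'' to ``\emph{uniqueness} of lifts''. Beyond this alignment, every step is a direct appeal to the cited lemmas, so I do not anticipate a substantial obstacle.
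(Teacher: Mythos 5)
Your proof follows essentially the same route as the paper's: both directions go through \Cref{lem:top-char}, using the observation that an initial (resp.\ final) lift of a singleton source (resp.\ sink) is the same thing as a cartesian (resp.\ opcartesian) lift, with uniqueness supplied by the poset structure of the fibers, and the backward direction is verbatim the paper's argument. There is, however, one logical snag in your forward direction: to establish condition (3) of \Cref{lem:top-char} you invoke \Cref{lem:clat-fib-props}\ref{lem:clat-fib-props-bifib}, but in the paper that lemma is itself \emph{proved} by appealing to \Cref{thm:fib-vs-top}, so your citation is circular relative to the paper's logical organization. The repair is immediate and is exactly what the paper does at this point: since each reindexing functor $f^\star\colon \E_Y\to \E_X$ is a meet-preserving map between small complete lattices, the adjoint functor theorem (for posets) yields a left adjoint $f_\star\dashv f^\star$, whence $p$ is a bifibration by \Cref{lem:bifib-char}; opcartesian lifts then exist and are unique by the dual of \Cref{lem:pos-fib-faithful}. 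With that one substitution your argument is complete and coincides with the paper's.
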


\begin{proof}
($\To$) Suppose that $p$ is a $\CLat_\sqcap$-fibration. Since each reindexing functor $f^{*}$ preserves meets, it has a left adjoint by the adjoint functor theorem, so $p$ is a bifibration (\Cref{lem:bifib-char}).  To prove that $p$ is topological, we verify the conditions of \Cref{lem:top-char}. Condition (1) holds by hypothesis. Condition (2) follows from $p$ being a fibration and by \Cref{lem:pos-fib-faithful}; note that an initial lift of a morphism is the same as a cartesian lift. Dually, condition (3) follows from $p$ being an opfibration. 

\medskip\noindent
($\Leftarrow$) Suppose that $p$ is a fiber-small topological functor. By condition (1) of \Cref{lem:top-char}, each fiber is a complete lattice. By conditions (2) and (3) of \Cref{lem:top-char}, the functor $p$ is a bifibration; in particular, it is a fibration, and each reindexing functor $f^{*}$ has a left adjoint (\Cref{lem:bifib-char}), which implies that it preserves meets. Therefore, $p$ is a $\CLat_\sqcap$-fibration.
\end{proof}

\section{Omitted Proofs}

\subsection*{Proof of \Cref{lem:clat-fib-props}}
The four statements of the lemma are immediate from the fact every $\CLat_\sqcap$-fibration $p\colon \E\to \B$ is a topological functor (\Cref{thm:fib-vs-top}).
\begin{enumerate}
\item $p$ is faithful by \cite[Thm.~21.3]{ahs90}.
\item $p$ is a bifibration by \Cref{lem:top-char}.
\item There is a Galois connection $f_{*} \dashv f^{*}$ by \Cref{lem:bifib-char}.
\item Since $f^{*}\circ g^{*}$ is a cartesian lift of $gf$, we have $f^{*}\circ g^{*} = (gf)^{*}$ by the uniqueness of cartesian lifts (\Cref{lem:pos-fib-faithful}). The other statement is dual. 
\end{enumerate}

\subsection*{Details for \Cref{ex:iqtl-fibration}}
We prove that $p_\jne\colon \FRel_\jne\to \Set^{(2)}$ is a heterogeneous $\InQtl$-fibration (the proof for $\Rel_\jrp$ is analogous).
Let us verify the three axioms \eqref{eq:het-qtl-fibration-axioms}. 

For the first axiom  of \eqref{eq:het-qtl-fibration-axioms}, let $d_{=}$ denote the fuzzy relation corresponding to the identity relation, i.e.\ $d_{=}(x,x')=0$ if $x=x'$ and $d_{=}(x,x')=1$ otherwise. Then
\[ (f,f)^{*} d_{=}(x,x') = d_{=}(f(x),f(x')) \leq d_{=}(x,x'). \]
For the last inequality one distinguishes two cases:  for $x\neq x'$ it holds trivially since $d_{=}(x,x')=1$, and for $x=x'$ it holds because both sides are equal to $0$.

For the second axiom of \eqref{eq:het-qtl-fibration-axioms}, let $d,e\colon Y\times Y\to [0,1]$ be fuzzy relations. Then
\begin{align*} (f,h)^{*}(d\cdot e)(x,x') &= (d\cdot e)(f(x),h(x')) \\
&= \inf_{y\in Y} d(f(x),y) + e(y,h(x')) \\
&\leq \inf_{x''\in X} d(f(x),g(x''))+e(g(x''),h(x'))\\
&= \inf_{x''\in X} (f,g)^{*} d(x,x'')+(g,h)^{*} e(x'',x')\\
&= ((f,g)^{*} d \cdot (g,h)^{*} e)(x,x').
\end{align*}
The third axiom of \eqref{eq:het-qtl-fibration-axioms} actually holds with equality: for every $d\colon Y\times Y\to [0,1]$, we have
\[ ((f,f)^{*} d)^\circ(x,x') = ((f,f)^{*} d)(x',x) = d(f(x'),f(x)) = d^\circ(f(x),f(x')) = ((f,f)^{*} d^\circ)(x,x').\]

\subsection*{Details for \Cref{ex:liftings}\ref{ex:liftings:hom-functor}}
We prove that $\barH$ is properly defined on morphisms, that is, for any two pairs of jointly non-expansive maps $(f,g)\colon (\ol{X},d_{\ol{X}})\to (X,d_X)$ and $(h,k)\colon (Y,d_Y)\to (\ol{Y},d_{\ol{Y}})$, the pair $H((f,g),(h,k))$ is jointly non-expansive from $(Y^X,d_{Y^X})$ to $(\ol{Y}^{\ol{X}},d_{\ol{Y}^{\ol{X}}})$:
\begin{equation}\label{eq:hom-functor-lift-proof} d_{\ol{Y}^{\ol{X}}}(h\circ u\circ f, k\circ v \circ g ) \leq d_{Y^X}(u,v) \qquad \text{for all $u,v\in Y^X$}.  \end{equation}
To see this, we compute for all $x,x'\in \ol{X}$:
\begin{align*}
d_{\ol{Y}}(h\circ u\circ f(x), k\circ v\circ g(x')) & \leq d_Y(u\circ f(x),v\circ g(x')) & \text{$(h,k)$ jointly non-exp.}\\
&\leq d_X(f(x),g(x'))+d_{Y^X}(u,v) & \text{def.\ $d_{Y^X}$}\\
&\leq d_{\ol{X}}(x,x')+d_{Y^X}(u,v) & \text{$(f,g)$ jointly non-exp.}
\end{align*}
which proves \eqref{eq:hom-functor-lift-proof} by definition of $d_{\ol{Y}^{\ol{X}}}$.

\subsection*{Details for \Cref{ex:liftings}\ref{ex:liftings:dist}}
Let us check that $\D$ is properly defined on morphisms, that is, for every jointly non-expansive pair $(f,g)\colon (X,d_X)\to (Y,d_Y)$ the pair $(\D f,\D g)\colon (\D X, \bard_X)\to (\D Y, \bard_Y)$ is jointly non-expansive:
\[ \bard_Y(\D f(\phi),\D g(\phi'))\leq \bard_X(\phi,\phi')\qquad\text{for all $\phi,\phi'\in \D X$}. \]
 This follows from the observation that every solution $(t_{i,j})$ for the transportation from $\phi=\sum_{i\in I} p_i\cdot x_i$ and $\phi'=\sum_{j\in J} p_j'\cdot x_j'$ in $\D X$ is also a solution for the transportation problem from $\D f(\phi)=\sum_{i\in I} p_i\cdot f(x_i)$ and $\D g(\phi')=\sum_{j\in J} p_j'\cdot g(x_j')$, and moreover the cost of the latter is 
\[\sum_{i,j} d_Y(f(x_i),g(x_j'))\cdot t_{i,j}\leq \sum_{i,j} d_X(x_i,x_{j'})\cdot t_{i,j},\]
i.e.\ it is bounded above by the cost the solution for the original transportation problem.

\subsection*{Proof of \Cref{lem:wasserstein-props}}
Given a fuzzy relation $(X,d)$ we put $\bard = \barD d$.
\begin{enumerate}
\item Let $p\in [0,1]$ and let $\phi,\phi',\psi,\psi'\in \D X$ be distributions
\[ \phi=\sum_{i\in I} p_i\cdot x_i,\quad \phi'=\sum_{j\in J} p_j'\cdot x_j',\quad \psi=\sum_{i\in I} q_i\cdot x_i,\quad \psi'=\sum_{j\in J} q_j'\cdot x_j', \]
so that
\[ p\cdot \phi+(1-p)\cdot\psi= \sum_{i\in I} (p\cdot p_i + (1-p)\cdot q_i)\cdot x_i \qqand  p\cdot \phi'+(1-p)\cdot \psi'= \sum_{j\in J} (p\cdot p_j'+(1-p)\cdot q_j')\cdot x_j'.\]
Let $(s_{i,j})_{i,j}$ and $(t_{i,j})_{i,j}$ be optimal solutions of the transportation problem from $\phi$ to $\phi'$ and from $\psi$ to $\psi'$, respectively. Thus
\[
 \displaystyle\sum\limits_{j} s_{i,j} = p_{i}\; (i\in I),\qquad \displaystyle\sum\limits_{i}   s_{i,j} = p_{j}'  \; (j\in J),\qquad \displaystyle\sum\limits_{i,j} {d}(x_i,x_j') \cdot s_{i,j} = \bard(\phi,\phi').
\]
\[
 \displaystyle\sum\limits_{j} t_{i,j} = q_{i}\; (i\in I),\qquad \displaystyle\sum\limits_{i}  t_{i,j} = q_{j}'  \; (j\in J),\qquad \displaystyle\sum\limits_{i,j} {d}(x_i,x_j') \cdot t_{i,j} = \bard(\psi,\psi').
\] 
Then $(r_{i,j})_{i,j}:=(p\cdot s_{i,j}+(1-p)\cdot t_{i,j})_{i,j}$ is a solution of the transportation problem from $p\cdot \phi+(1-p)\cdot\psi$ to $p\cdot \phi'+(1-p)\cdot \psi'$, that is,
\[
 \displaystyle\sum\limits_{j} r_{i,j} = p\cdot p_{i} + (1-p)\cdot q_{i} \; (i\in I),\qquad \displaystyle\sum\limits_{i}   r_{i,j} =  p\cdot p_{j}' + (1-p)\cdot q_j'  \; (j\in J).
\]
Indeed, the first equation follows from
\[  
\sum_{j} r_{i,j}= \sum_j p\cdot s_{i,j}+(1-p)\cdot t_{i,j} = p\cdot \sum_j s_{i,j} + (1-p)\cdot \sum_j t_{i,j} =  p\cdot p_i + (1-p)\cdot q_i.
\]
The proof of the second equation is analogous. The cost of the solution $(r_{i,j})_{i,j}$ is
\[ \sum_{i,j} d(x_i,x_j')\cdot r_{i,j} = p\cdot \sum_{i,j} d(x_i,x_j')\cdot s_{i,j} + (1-p)\cdot \sum_{i,j} d(x_i,x_j')\cdot t_{i,j} = p\cdot \bard(\phi,\psi)+(1-p)\cdot\bard(\phi',\psi'). \]
Thus $(r_{i,j})_{i,j}$ witnesses the inequality \eqref{eq:convex-wasserstein}.
\item Let $\phi,\phi'\in \D X$ and $y,y'\in Y$, where
\[ \phi=\sum_{i\in I} p_i\cdot x_i,\quad \phi'=\sum_{j\in J} p_j'\cdot x_j',\quad \st(\phi,y)=\sum_{i\in I} p_i\cdot (x_i,y),\quad \st(\phi',y')=\sum_{j\in J} p_j'\cdot (x_j',y'). \]
Letting $\bard_{+}$ denote the Wasserstein metric on $\barD((X,d)\ol{\times}\,(Y,d'))$, we need to prove
\begin{equation}\label{eq:st-non-exp} 
\bard_{+}(\st(\phi,y),\st(\phi',y')) \leq \bard(\phi,\phi')+d'(y,y').
\end{equation}
Let $(t_{i,j})_{i,j}$ be an optimal solution of the transportation problem from $\phi$ to $\phi'$; that is,
\[
 \displaystyle\sum\limits_{j} t_{i,j} = p_{i}\; (i\in I),\qquad \displaystyle\sum\limits_{i}   t_{i,j} = p_{j}'  \; (j\in J),\qquad \displaystyle\sum\limits_{i,j} {d}(x_i,x_j') \cdot t_{i,j} = \bard(\phi,\phi').
\]
Then $(t_{i,j})_{i,j}$ is also a solution of the transportation problem from $\st(\phi,y)$ to $\st(\phi',y')$ of cost
\begin{align*}
\sum_{i,j} d_{+}((x_i,y),(x_j',y'))\cdot t_{i,j} &= \sum_{i,j} (d(x_i,x_j')+d'(y,y'))\cdot t_{i,j}\\
&=\sum_{i,j} d(x_i,x_j')\cdot t_{i,j} + d'(y,y')\cdot \sum_{i,j}t_{i,j} \\
&= \bard(\phi,\phi')+d'(y,y').
\end{align*}
using that $\sum_{i,j} t_{i,j}=1$. This proves \eqref{eq:st-non-exp}.
\item For $x,x'\in X$ the unique solution of the transportation problem from $\eta(x)=1\cdot x$ to $\eta(x')=1\cdot x'$ is given by sending all available supply from $x$ to $x'$; the cost of this solution is $d(x,x')$. Therefore
\[ \bard(\eta(x),\eta(x')) = d(x,x'),\]
proving that $\eta$ is non-expansive (in fact, isometric).
\item To show non-expansivity of $\mu$, let $\Phi,\Phi'\in \D\D X$, say
\[ \Phi = \sum_{i\in I} p_i\cdot \phi_i\qquad\text{where}\qquad \phi_i=\sum_{k\in K_i} p_{i,k}\cdot x_{i,k}, \] 
\[ \Phi' = \sum_{j\in J} p_j'\cdot \phi_j'\qquad\text{where}\qquad \phi_j'=\sum_{l\in L_j} p_{j,l}'\cdot x_{j,l}'. \] 
Letting $\tilde{d}$ denote the Wasserstein distance on $\barD\barD (X,d)$, we need to prove
\begin{equation}\label{eq:mu-non-exp}
\bard(\mu(\Phi),\mu(\Phi'))=\bard(\sum_{i\in I}\sum_{k\in K_i} p_i\cdot p_{i,k}\cdot x_{i,k},\, \sum_{j\in J}\sum_{l\in L_j} p_{j}'\cdot p_{j,l}'\cdot x_{j,l}') \leq \tilde{d}(\Phi,\Phi').
\end{equation}
Let $(s_{i,j})_{i,j}$ be an optimal solution of the transportation problem from $\Phi$ to $\Phi'$; that is,
\[
 \displaystyle\sum\limits_{j} s_{i,j} = p_{i}\; (i\in I),\qquad \displaystyle\sum\limits_{i}   s_{i,j} = p_{j}'  \; (j\in J),\qquad \displaystyle\sum\limits_{i,j} \bard(\phi_i,\phi_j') \cdot s_{i,j} = \tilde{d}(\Phi,\Phi').
\]
Moreover, for each $i\in I$ and $j\in J$ let $(t_{k,l}^{i,j})_{k\in K_i,l\in L_j}$ be an optimal solution of the transportation problem from $\phi_i$ to $\phi_j'$; that is,
\[
 \displaystyle\sum\limits_{l\in L_j} t_{k,l}^{i,j} = p_{i,k}\; (k\in K_i),\qquad \displaystyle\sum\limits_{k\in K_i}   t_{k,l}^{i,j} = p_{j,l}'  \; (l\in L_j),\qquad \displaystyle\sum\limits_{k\in K_i,l\in L_j} d(x_{i,k},x_{j,l}') \cdot t_{k,l}^{i,j} = \bard(\phi_i,\phi_j').
\]
Then $(r_{(i,k),(j,i)})_{(i,k),(j,l)} := (s_{i,j}\cdot t_{k,l}^{i,j})_{(i,k),(j,l)}$ is a solution to the transportation problem from $\mu(\Phi)$ to $\mu(\Phi')$; that is,
\[ \sum_{j\in J}\sum_{l\in L_j} r_{(i,k),(j,l)} = p_i\cdot p_{i,k} \; (i\in I,\, k\in K_i),\qquad  \sum_{i\in I}\sum_{k\in K_i} r_{(i,k),(j,l)} = p_j'\cdot p_{j,l}' \; (j\in J,\, l\in L_j). \]
Indeed, the first equation follows from the computation
\[ \sum_{j}\sum_l r_{(i,k),(j,l)} = \sum_{j}\sum_l s_{i,j}\cdot t_{k,l}^{i,j} = \sum_j s_{i,j} \sum_l t_{k,l}^{i,j} = \sum_j s_{i,j}\cdot p_{i,k} = p_i\cdot p_{i,k};  \]
analogously for the second equation. The cost of the solution $(r_{(i,k),(j,i)})_{(i,k),(j,l)}$ is
\begin{align*}
\sum_{i,k,j,l} d(x_{i,k},x'_{j,l})\cdot r_{(i,k),(j,l)} &= \sum_{i,k,j,l} d(x_{i,k},x'_{j,l})\cdot s_{i,j}\cdot t_{k,l}^{i,j}\\
 &= \sum_{i,j} s_{i,j}\cdot \sum_{k,l} d(x_{i,k},x'_{j,l})\cdot  t_{k,l}^{i,j}\\
&= \sum_{i,j} s_{i,j}\cdot \bard(\phi_i,\phi_j') \\
&= \tilde{d}(\Phi,\Phi').
\end{align*}
This proves \eqref{eq:mu-non-exp}.
\end{enumerate}

\subsection*{Proof of \Cref{prop:initial-alg-cofree-coalg-lift}}
Let $F^\star X$ be the free $F$-algebra on $X$, with structure and unit 
\[\ini_X\colon FF^{\star}X \to F^{\star}X\qqand\eta_X\colon X\to F^{\star} X\]
For every $\barF$-algebra $(A,a)$, the pair $(pA,pa)$ is an $F$-algebra. Therefore, for every $f\colon P\to A$ in $\E$, the morphism $pf\colon X\to pA$ uniquely extends to an $F$-algebra morphism \[(pf)^\#\colon (F^{\star}X,\ini_X)\to (pA,pa)\qquad\text{such that}\qquad pf=(pf)^\#\circ \eta_X. \] 
We define the object $\barF^\star P\in \E_{F^{\star}X}$ by
\[\barF^{\star} P:=\bigsqcap_{(A,a)\in \Alg(\ol{F}),\, f\colon P\to A} ((pf)^{\#})^{*} A.\]
The proposition then immediately follows from the following statements:
\begin{enumerate}[label=(\alph*)]
\item\label{proof:a} There exists a (necessarily unique) morphism $\eta_P\colon P\to \barF^{\star}P$ above $\eta_X$.
\item\label{proof:b} There exists a (necessarily unique) morphism $\ini_P\colon \barF\barF^{\star}P\to \barF^{\star}P$ above $\ini_X$.
\item\label{proof:c} $(\barF^\star P,\ini_P)$ is the free $\barF$-algebra with unit $\eta_P$: for every $(A,a)\in \Alg(\barF)$ and $f\colon P\to A$ in $\E$, there exists unique $\barF$-algebra morphism $f^\#\colon (\barF^{\star}P,\ini_P) \to (A,a)$ such that $f=f^\#\circ \eta_P$. Moreover, the morphism $f^\#$ is above $(pf)^\#$.
\end{enumerate}
\medskip\noindent \emph{Proof of \ref{proof:a}}:
We need to prove $P\sqleq \eta_X^{*}\barF^{\star}P$. For every $(A,a)\in \Alg(\barF)$ and $f\colon P\to A$ we have
\[ P\sqleq (pf)^{*}A = ((pf)^\#\circ \eta_X)^{*}A = \eta_X^{*}((pf)^\#)^{*}A, \]
where the first step uses that $f\colon P\to A$ in $\E$ and the second step uses \Cref{lem:clat-fib-props}\ref{lem:clat-fib-probs-split}. Since $\eta_X^{*}$ preserves meets, it follows that 
\[ P \sqleq \bigsqcap_{(A,a)\in \Alg(\ol{F}),\, f\colon P\to A} \eta_X^{*}((pf)^{\#})^{*} A = \eta_X^{*} \bigsqcap_{(A,a)\in \Alg(\ol{F}),\, f\colon P\to A} ((pf)^{\#})^{*} A = \eta_X^{*}\barF^{\star}P.  \]

\medskip\noindent\emph{Proof of \ref{proof:b}}: We need to prove $\barF \barF^{\star}P \sqleq \ini_X^{*}\barF^{\star}P$. Fix $(A,a)\in \Alg(\barF)$ and $f\colon P\to A$, and 
denote the cartesian lift of $(pf)^\#\colon F^{\star}X\to pA$ by 
\[\ol{(pf)^\#}\colon ((pf)^\#)^{*}A\to A.\]
Then the diagram below commutes, where $i\colon \barF^{\star}P\to ((pf)^\#)^{*}A$ is the unique morphism in $\E_{F^{\star}X}$ witnessing that $\barF^{\star}P\sqleq ((pf)^\#)^{*}A$.
\[
\begin{tikzcd}
p\barF\barF^{\star}P \ar{r}{p\barF i}  \ar[equals]{d} & p\barF((pf)^\#)^{*}A \ar[equals]{d} \ar{rr}{p\bar F\ol{(pf)^\#}} && p\barF A \ar[equals]{dd} \ar{r}{pa} & pA \ar[equals]{dd} \\
Fp\barF^{\star}P \ar[equals]{d} \ar{r}{Fpi} & Fp((pf)^\#)^{*}A \ar{drr}{Fp\ol{(pf)^\#}}  &&& \\
FF^{\star}X \ar[equals]{d} \ar{rrr}{F(pf)^\#} &&& FpA \ar{r}{pa} & pA \ar[equals]{d} \\
FF^{\star}X \ar{rr}{\ini_X} && F^{\star}X \ar{rr}{(pf)^\#} && pA \\
\end{tikzcd}
\]
Indeed, the three upper parts commute because $Fp=p\barF$, the middle left part commutes because $pi=\id$ and $\ol{(pf)^\#}$ is above $(pf)^\#$ by definition, and the lower part commutes because $(pf)^\#\colon (F^{\star}X,\ini_X)\to (pA,pa)$ is an $F$-algebra morphism. The outside of the above diagram thus shows that the composite $a\circ \bar F\ol{(pf)^\#}\circ \barF i$ is above $(pf)^\#\circ \ini_X$. Therefore, the universal property of the cartesian lift $\ol{(pf)^\#}$ implies that there exists a unique dashed morphism above $\ini_X$ making the diagram below commute:
\[ 
\begin{tikzcd}
\barF\barF^{\star}P \ar{r}{\barF i} \ar[dashed]{dr} & \barF((pf)^\#)^{*}A \ar{r}{\bar F\ol{(pf)^\#}} & \barF A \ar{r}{a} & A \ar[equals]{d} \\
& ((pf)^\#)^{*}A \ar{rr}{\ol{(pf)^\#}} && A 
\end{tikzcd}
\]
This proves $\barF \barF^{\star}P \sqleq \ini_X^{*}((pf)^\#)^{*}A$ for every $(A,a)\in \Alg(F)$ and $f\colon P\to X$, whence $\barF \barF^{\star}P \sqleq \ini_X^{*}\barF^{\star}P$
because $\ini_X^{*}$ preserves meets.

\medskip\noindent\emph{Proof of \ref{proof:c}}: Given $(A,a)\in \Alg(\barF)$ and $f\colon P\to A$, the morphism $f^\#:=\ol{(pf)^\#}\circ i$ is an $\barF$-algebra morphism from $(\barF^\star P, \dot\ini_P)$ to $(A,a)$ satisfying $f= f^\# \circ\eta_P$, that is, the diagram below commutes: 
\[
\begin{tikzcd}
\barF \barF^{\star}P \ar{r}{\ini_P} \ar{d}[swap]{\barF i} & \barF^{\star}P \ar{d}{i} & P \ar{l}[swap]{\eta_P} \ar{dd}{f} \\
\barF ((pf)^\#)^{*}A \ar{d}[swap]{F\ol{(pf)^\#}} & ((pf)^\#)^{*}A \ar{d}{\ol{(pf)^\#}}\\
\barF A \ar{r}{a} & A \ar[equals]{r} & A 
\end{tikzcd}
\]
Indeed, it suffices to show that the diagram commutes with the faithful functor $p$ applied to it, and this yields precisely the commutative diagram stating that $(pf)^\#\colon (F^{\star}X,\ini_X)\to (pA,pa)$ is an $F$-algebra morphism satisfying $pf=(pf)^\#\circ \eta_X$. 

To show uniqueness of $f^\#$, suppose that $h\colon (\barF^{\star}P,\ini_P)\to (A,a)$ is an $\barF$-algebra morphism satisfying $f=h\circ \eta_P$; that is, the diagram below commutes: 
\[
\begin{tikzcd}
\barF \barF^{\star}P \ar{r}{\ini_P} \ar{d}[swap]{\barF h} & \barF^{\star}P \ar{d}{h} & P \ar{l}[swap]{\eta_P} \ar{d}{f} \\
\barF A \ar{r}{a} & A \ar[equals]{r} & A 
\end{tikzcd}
\] 
Applying $p$ to this diagram shows that $ph\colon (F^{\star}X,\ini_X)\to (pA,pa)$ is an $F$-algebra morphism satisfying $pf=ph\circ \eta_X$. Therefore $ph=(pf)^\#$, which uniquely determines $h$ because $p$ is faithful.

\subsection*{Proof of \Cref{lem:sim-props}}
The statements of \Cref{lem:sim-props} are subsumed by the following lemma, which establishes additional properties of behavioural conformances:

\begin{lemma}\label{lem:sim-props-app}
Let $F\colon \B\to \B$ be an endofunctor with a laxly monoidal lifting $\barF\colon \E\to \B$ along the $\Qtl$-fibration $p\colon \E\to \B$, and suppose that $(C,c)$ is an $F$-coalgebra.
\begin{enumerate}
\item\label{lem:sim-props-app-identity} The unit $1_C\in \E_C$ is a $\barF$-behavioural conformance on $(C,c)$.
\item\label{lem:sim-props-app-composition} If $P,Q\in \E_C$ are $\barF$-behavioural conformances on $(C,c)$, then so is $P\cdot Q$.
\item\label{lem:sim-props-app-join} If $P_i\in \E_C$ ($i\in I$) are $\barF$-behavioural conformances on $(C,c)$, then so is $\bigsqcup_{i\in I} P_i$. 
\item\label{lem:sim-props-app-transclos} If $P\in \E_C$ is a $\barF$-behavioural conformance on $(C,c)$, then so is its transitive closure $P^{+}$.
\item\label{lem:sim-props-app-greatest} The coalgebra $(C,c)$ has a greatest $\barF$-behavioural conformance $P_c^\to$, given by the join in $\E_C$ of all $\barF$-behavioural conformances.
\item \label{lem:sim-props-app-greatest-refl-trans} The greatest $\barF$-behavioural conformance $P_c^\to$ on $(C,c)$ is reflexive and transitive.
\end{enumerate}
\end{lemma}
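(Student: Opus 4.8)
The plan is to prove the six statements in the order listed, since each later one reduces to the earlier ones. Every step rests on just two ingredients: the $\Qtl$-fibration axioms \eqref{eq:qtl-fibration1}--\eqref{eq:qtl-fibration2}, and the lax monoidality of $\barF$, which says that the induced fiber map $\barF\colon \E_C\to \E_{FC}$ is a lax monoid morphism, i.e.\ $1_{FC}\sqleq \barF 1_C$ and $\barF P\cdot \barF Q\sqleq \barF(P\cdot Q)$. I will work throughout with the characterization $P\sqleq c^\star\barF P$ of a behavioural conformance, exploiting that both $c^\star$ and $\barF$ are monotone.

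For \ref{lem:sim-props-app-identity} I would chain $1_C\sqleq c^\star 1_{FC}\sqleq c^\star\barF 1_C$, where the first inequality is axiom \eqref{eq:qtl-fibration1} and the second applies monotonicity of $c^\star$ to the unit law of the lax morphism $\barF$. The crux of the whole lemma is \ref{lem:sim-props-app-composition}: assuming $P\sqleq c^\star\barF P$ and $Q\sqleq c^\star\barF Q$, I compute
\[ P\cdot Q\sqleq (c^\star\barF P)\cdot(c^\star\barF Q)\sqleq c^\star(\barF P\cdot\barF Q)\sqleq c^\star\barF(P\cdot Q), \]
using monotonicity of the quantale product, the reindexing axiom \eqref{eq:qtl-fibration2}, and finally the multiplicativity of $\barF$ together with monotonicity of $c^\star$. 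This is the only place where the full strength of both hypotheses is needed, so I expect it to be the main (though still routine) locus of the argument.

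The remaining parts are bookkeeping on top of these two. For \ref{lem:sim-props-app-join}, each $P_i\sqleq c^\star\barF P_i\sqleq c^\star\barF(\bigsqcup_j P_j)$ by monotonicity of $\barF$ and $c^\star$; taking the join over $i$ gives $\bigsqcup_i P_i\sqleq c^\star\barF(\bigsqcup_j P_j)$. Note this needs only monotonicity, not join-preservation of the reindexing. Then \ref{lem:sim-props-app-transclos} is immediate: each power $P^n$ is a conformance by induction from \ref{lem:sim-props-app-composition}, and $P^+=\bigsqcup_{n\geq 1}P^n$ is one by \ref{lem:sim-props-app-join}. For \ref{lem:sim-props-app-greatest}, since the fiber $\E_C$ is a small complete lattice the behavioural conformances form a set, so their join exists; it is a conformance by \ref{lem:sim-props-app-join} and is plainly the greatest, hence equals $P_c^\to$. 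Finally \ref{lem:sim-props-app-greatest-refl-trans} follows by maximality: reflexivity $1_C\sqleq P_c^\to$ holds because $1_C$ is a conformance by \ref{lem:sim-props-app-identity}, and transitivity $P_c^\to\cdot P_c^\to\sqleq P_c^\to$ holds because $P_c^\to\cdot P_c^\to$ is a conformance by \ref{lem:sim-props-app-composition}.
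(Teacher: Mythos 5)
Your proof is correct and follows essentially the same route as the paper's: the same inequality chains for the unit and for composition, powers plus joins for the transitive closure, and maximality for reflexivity and transitivity of $P_c^\to$. The only (inessential) divergence is in the closure under joins, where you argue via $P_i\sqleq c^\star\barF(\bigsqcup_j P_j)$ and monotonicity alone, whereas the paper uses the equivalent $c_\star$-formulation and the fact that $c_\star$ preserves joins; both are valid.
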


\begin{proof}
\begin{enumerate}
\item $1_C$ is $\barF$-behavioural conformance because
\[ 1_C \sqleq c^{*} 1_{FC} \sqleq c^{*} \barF 1_C;\]
the first step uses that $p$ is a $\Qtl$-fibration and the second one that $\barF$ is laxly monoidal.
\item Given $\barF$-behavioural conformances $P,Q$, the composite $P\cdot Q$ is a $\barF$-behavioural conformance because
\[P\cdot Q \sqleq c^{*}\barF P \cdot c^{*}\barF Q\sqleq c^{*}(\barF P\cdot \barF Q) \sqleq c^{*}\barF(P\cdot Q);
\]
the first step uses that $P$ and $Q$ are $\barF$-behavioural conformances, the second one that $p$ is a $\Qtl$-fibration, and the third one that $\barF$ is laxly monoidal.
\item If $P_i$ ($i\in I$) are $\barF$-behavioural conformances, where $I\neq \emptyset$, then $P=\bigsqcup_i P_i$ is a $\barF$-behavioural conformance because 
\[ c_{*} P = c_{*}\bigsqcup_{i\in I} P_i = \bigsqcup_{i\in I} c_{*} P_i \sqleq \bigsqcup_{i\in I} \barF P_i \sqleq \bigsqcup_{i\in I} \barF P \sqleq \barF P.  \]
The second step uses that $p$ is a $\CLat_\sqcap$-fibration (so $c_{*}$ preserves all joins), the third one that each $P_i$ is a $\barF$-behavioural conformance, and the fourth one that $P_i\sqleq P$ and $\barF$ is a functor.
\item If $P$ is an $\barF$-behavioural conformance, then $P^{+}=\bigsqcup_{n\geq 0} P^n$ is a $\barF$-behavioural conformance by parts \ref{lem:sim-props-app-composition} and \ref{lem:sim-props-app-join}.
\item Immediate from part \ref{lem:sim-props-app-join}.
\item By parts \ref{lem:sim-props-app-identity} and \ref{lem:sim-props-app-composition}, we know that $1_C$ and $P_c^\to\cdot P_c^\to$ are $\barF$-behavioural conformances on $(C,c)$. Since $P_c^\to$ is the greatest $\barF$-behavioural conformance, it follows that $1_C\sqleq P_c^\to$ and $P_c^\to\cdot P_c^\to\sqleq P_c^\to$, so $P_c^\to$ is reflexive and transitive.\qedhere
\end{enumerate}
\end{proof}

\subsection*{Proof of \Cref{lem:bisim-props}}
We need a simple auxiliary result:
\begin{lemma}\label{lem:trans-clos}
Let $M$ be a quantale and $x\in M$.
\begin{enumerate}
\item\label{lem:trans-clos-clos} The element $x^{+}$ is the least transitive element of $M$ such that $x\sqleq x^{+}$. 
\item\label{lem:trans-clos-inv} If $M$ is involutive, then $(x^\circ)^{+}=(x^{+})^\circ$.
\end{enumerate}
\end{lemma}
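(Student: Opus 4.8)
The plan is to derive both statements directly from the quantale axioms, using the distributivity of $\cdot$ over arbitrary joins (\eqref{eq:quant-1} and \eqref{eq:quant-2}) together with the definition $x^{+}=\bigsqcup_{n\geq 1} x^n$. Throughout I would freely use that $\cdot$ is monotone in each argument, which is immediate from join-preservation. For Part \ref{lem:trans-clos-clos}, I would first record the two easy facts. Since $x=x^1$ is one of the elements of the join defining $x^{+}$, we have $x\sqleq x^{+}$. For transitivity, I would distribute over both joins:
\[ x^{+}\cdot x^{+} = \Bigl(\bigsqcup_{m\geq 1} x^m\Bigr)\cdot\Bigl(\bigsqcup_{n\geq 1} x^n\Bigr) = \bigsqcup_{m,n\geq 1} x^{m+n} \sqleq \bigsqcup_{k\geq 1} x^k = x^{+}, \]
where the inequality holds because every exponent $m+n$ with $m,n\geq 1$ is itself $\geq 1$, so each term $x^{m+n}$ already occurs in $x^{+}$.

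Next I would establish leastness. Suppose $y$ is transitive with $x\sqleq y$; I claim $x^n\sqleq y$ for all $n\geq 1$, proved by induction on $n$. The base case $n=1$ is the hypothesis $x\sqleq y$. In the inductive step, monotonicity of $\cdot$ in both arguments gives $x^{n+1}=x\cdot x^n\sqleq y\cdot y$, and transitivity of $y$ gives $y\cdot y\sqleq y$, whence $x^{n+1}\sqleq y$. Taking the join over all $n\geq 1$ yields $x^{+}=\bigsqcup_{n\geq 1} x^n\sqleq y$, so $x^{+}$ is indeed the least transitive element above $x$.

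For Part \ref{lem:trans-clos-inv}, the key observation is the pointwise identity $(x^n)^\circ=(x^\circ)^n$ for all $n\geq 1$, which I would prove by induction using the anti-homomorphism law \eqref{eq:inv-mon3}: fixing the convention $x^{n+1}=x\cdot x^n$, one computes $(x^{n+1})^\circ=(x\cdot x^n)^\circ=(x^n)^\circ\cdot x^\circ=(x^\circ)^n\cdot x^\circ=(x^\circ)^{n+1}$, the last step by associativity. Since reversal preserves joins (\eqref{eq:inv-quant}), it then follows that
\[ (x^{+})^\circ = \Bigl(\bigsqcup_{n\geq 1} x^n\Bigr)^\circ = \bigsqcup_{n\geq 1} (x^n)^\circ = \bigsqcup_{n\geq 1}(x^\circ)^n = (x^\circ)^{+}. \]

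I do not expect a substantial obstacle here: every step is a direct application of a single axiom. The only genuine care required is bookkeeping — applying both the left and right infinite distributivity laws correctly in the transitivity computation, and respecting the order of multiplication fixed by the convention $x^{n+1}=x\cdot x^n$ when invoking \eqref{eq:inv-mon3} in the reversal induction.
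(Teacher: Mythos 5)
Your proof is correct and follows essentially the same route as the paper's: the same double-distributivity computation for transitivity, the same induction showing $x^n\sqleq y$ for leastness (the paper phrases it as $x^n\sqleq y^n\sqleq y$), and the same join-preservation argument for the reversal identity. You are merely a bit more explicit about the inductions the paper leaves implicit.
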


\begin{proof}
\begin{enumerate}
\item The element $x^{+}$ is transitive because
\[ x^{+}\cdot x^{+} = (\bigsqcup_{n\geq 1} x^n)\cdot (\bigsqcup_{m\geq 1} x^m) = \bigsqcup_{n,m\geq 1} x^{n+m} \sqleq \bigsqcup_{n\geq 1} x^n = x^{+}.\]
Moreover, if $x\sqleq y$ and $y$ is transitive, then $x^n\sqleq y^n\sqleq y$ for all $n\geq 1$, hence $x^{+}\sqleq y$.
\item This follows from the computation
\[
(x^\circ)^{+} = \bigsqcup_{n\geq 1} (x^\circ)^n = \bigsqcup_{n\geq 1} (x^n)^\circ = (\bigsqcup_{n\geq 1} x^n)^\circ = (x^{+})^\circ, 
\]  
where the second and the third step use that $M$ is an involutive quantale.
\qedhere
\end{enumerate}
\end{proof}

The statement of \Cref{lem:bisim-props} is subsumed by the following lemma, which establishes additional properties of behavioural biconformances:

\begin{lemma}\label{lem:bisim-props-app}
Let $F\colon \B\to \B$ be an endofunctor with a laxly monoidal lifting $\barF\colon \E\to \B$ along the $\InQtl$-fibration $p\colon \E\to \B$, and suppose that $(C,c)$ is an $F$-coalgebra.
\begin{enumerate}
\item\label{lem:bisim-props-app-identity} The unit $1_C\in \E_C$ is a $\barF$-behavioural biconformance on $(C,c)$.
\item\label{lem:bisim-props-app-composition} If $P,Q\in \E_C$ are $\barF$-behavioural biconformances on $(C,c)$, then so is $P\cdot Q$.
\item\label{lem:bisim-props-app-converse} If $P\in \E_C$ is a $\barF$-behavioural biconformance on $(C,c)$, then so is $P^\circ$.
\item\label{lem:bisim-props-app-join} If $P_i\in \E_C$ ($i\in I$) are $\barF$-behavioural biconformances in $(C,c)$, then so is $\bigsqcup_{i\in I} P_i$. 
\item\label{lem:bisim-props-app-transclos} If $P\in \E_C$ is a $\barF$-behavioural biconformance on $(C,c)$, then so is its transitive closure $P^{+}$.
\item\label{lem:bisim-props-app-greatest} The coalgebra $(C,c)$ has a greatest $\barF$-behavioural biconformance $P_c^\leftrightarrow$, given by the join in $\E_C$ of all $\barF$-behavioural biconformances.
\item \label{lem:bisim-props-app-greatest-refl-trans} The greatest $\barF$-behavioural biconformance $P^\leftrightarrow_c$ on $(C,c)$ is reflexive, symmetric, and transitive.
\end{enumerate}
\end{lemma}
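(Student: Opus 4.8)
The plan is to derive each of the seven statements from its counterpart for behavioural conformances (\Cref{lem:sim-props-app}) by systematically pairing a conformance argument with its involution. The organising observation is that $P\in\E_C$ is a biconformance precisely when both $P$ and $P^\circ$ are conformances, so each claim splits into two checks: first, that the operation in question preserves conformances—which \Cref{lem:sim-props-app} already supplies—and second, that the same operation commutes suitably with the involution $(-)^\circ$ so that the converse also lands among conformances. The involution axioms of the $\InQtl$-fibration, \eqref{eq:inv-mon1}--\eqref{eq:inv-quant}, are exactly what make the second half go through, so the whole proof is a short bookkeeping argument with no genuinely new ideas beyond \Cref{lem:sim-props-app} and \Cref{lem:trans-clos}.

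Concretely, for \ref{lem:bisim-props-app-identity} I would combine that $1_C$ is a conformance (\Cref{lem:sim-props-app}\ref{lem:sim-props-app-identity}) with $1_C^\circ=1_C$, which is \eqref{eq:inv-mon2}. For \ref{lem:bisim-props-app-composition}, $P\cdot Q$ is a conformance by \Cref{lem:sim-props-app}\ref{lem:sim-props-app-composition}, while $(P\cdot Q)^\circ=Q^\circ\cdot P^\circ$ by \eqref{eq:inv-mon3} is a conformance for the same reason, since $P^\circ,Q^\circ$ are conformances by hypothesis. For \ref{lem:bisim-props-app-converse}, biconformance of $P^\circ$ is immediate: $P^\circ$ is a conformance by hypothesis and $(P^\circ)^\circ=P$ by \eqref{eq:inv-mon1}. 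For \ref{lem:bisim-props-app-join}, $\bigsqcup_{i\in I}P_i$ is a conformance by \Cref{lem:sim-props-app}\ref{lem:sim-props-app-join}, and $(\bigsqcup_{i\in I}P_i)^\circ=\bigsqcup_{i\in I}P_i^\circ$ by \eqref{eq:inv-quant} is again one. For \ref{lem:bisim-props-app-transclos}, $P^{+}$ is a conformance by \Cref{lem:sim-props-app}\ref{lem:sim-props-app-transclos}, and the crucial identity $(P^{+})^\circ=(P^\circ)^{+}$ from \Cref{lem:trans-clos}\ref{lem:trans-clos-inv} reduces the converse to another application of the same item.

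The final two statements follow formally. Item \ref{lem:bisim-props-app-greatest} is immediate from \ref{lem:bisim-props-app-join}, as the join of all biconformances is itself a biconformance and hence the greatest. For \ref{lem:bisim-props-app-greatest-refl-trans}, reflexivity and transitivity of $P_c^\leftrightarrow$ come from comparing it against the biconformances $1_C$ and $P_c^\leftrightarrow\cdot P_c^\leftrightarrow$ (items \ref{lem:bisim-props-app-identity} and \ref{lem:bisim-props-app-composition}), exactly as in \Cref{lem:sim-props-app}\ref{lem:sim-props-app-greatest-refl-trans}; symmetry comes from \ref{lem:bisim-props-app-converse}, since $(P_c^\leftrightarrow)^\circ$ is a biconformance and hence $(P_c^\leftrightarrow)^\circ\sqleq P_c^\leftrightarrow$, which is the definition of symmetry. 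I do not expect a real obstacle here: the only points requiring care are invoking the correct involution axiom in each clause and, for \ref{lem:bisim-props-app-transclos}, recognising that the commutation of transitive closure with the involution is precisely \Cref{lem:trans-clos}\ref{lem:trans-clos-inv} and need not be reproved.
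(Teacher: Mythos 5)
Your proposal is correct and follows essentially the same route as the paper's proof: each item is reduced to its counterpart in \Cref{lem:sim-props-app} for the conformance $P$ itself, paired with the involution axioms (and, for the transitive closure, \Cref{lem:trans-clos}\ref{lem:trans-clos-inv}) to handle $P^\circ$, and the final two items follow formally exactly as you describe. No gaps.
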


\begin{proof}
\begin{enumerate}
\item This follows from \Cref{lem:sim-props-app}\ref{lem:sim-props-app-identity} since $1_C^\circ = 1_C$.
\item Let $P,Q$ be behavioural biconformances. Then $P,P^\circ,Q,Q^\circ$ are behavioural conformances, so $P\cdot Q$ and $(P\cdot Q)^\circ = Q^\circ \cdot P^\circ$ are behavioural conformances by \Cref{lem:sim-props-app}\ref{lem:sim-props-app-composition}. This shows that $P\cdot Q$ is a behavioural biconformance.
\item Immediate from the definition of a behavioural biconformance, since $(P^{\circ})^\circ = P$.
\item Let $P_i$ ($i\in I$) be behavioural biconformances, where $I\neq\emptyset$. Then $P_i$ and $P_i^\circ$ are behavioural conformances, so $\bigsqcup_i P_i$ and $(\bigsqcup_i P_i)^\circ = \bigsqcup_i P_i^\circ$ are behavioural conformances by \Cref{lem:sim-props-app}\ref{lem:sim-props-app-join}, so $\bigsqcup_i P_i$ is a behavioural biconformance. 
\item Let $P$ be a behavioural biconformance. Then $P$ and $P^\circ$ are behavioural conformances, so $P^{+}$ and $(P^\circ)^{+}$ are behavioural conformances by \Cref{lem:sim-props-app}\ref{lem:sim-props-app-transclos}. Since
$(P^{+})^\circ = (P^\circ)^{+}$ by \Cref{lem:trans-clos}\ref{lem:trans-clos-inv}, we conclude that $P^{+}$ is a behavioural biconformance.
\item Immediate from part \ref{lem:sim-props-app-join}.
\item By parts \ref{lem:bisim-props-app-identity}, \ref{lem:bisim-props-app-composition}, \ref{lem:bisim-props-app-converse} we know that $1_C$, $P_c^\leftrightarrow\cdot P_c^\leftrightarrow$, and $(P_c^\leftrightarrow)^\circ$ are behavioural biconformances on $(C,c)$. Since $P_c^\leftrightarrow$ is the greatest behavioural biconformance, it follows that \[1_C\sqleq P_c^\leftrightarrow,\qquad P_c^\leftrightarrow\cdot P_c^\leftrightarrow\sqleq P_c^\leftrightarrow, \qqand (P_c^\leftrightarrow)^\circ\sqleq P_c^\leftrightarrow,\]
which proves that $P_c^\leftrightarrow$ is reflexive, symmetric, and transitive.\qedhere
\end{enumerate}
\end{proof}

\subsection*{Proof of \Cref{lem:cong-props}}

\begin{rem}
Recall that a $\Qtl$-fibration $p\colon \E\to \B$ satisfies
\[ f^{*} P \cdot f^{*} Q \sqleq f^{*} (P\cdot Q) \qquad\text{for all $f\colon X\to Y$ in $\B$}.  \] 
Using the adjunction $f_{*} \dashv f^{*}$, one readily verifies that this implies a corresponding property of $f_{*}$:
\[ f_{*} (P \cdot Q) \sqleq f_{*} P \cdot f_{*} Q \qquad\text{for all $f\colon X\to Y$ in $\B$}. \] 
\end{rem}
We now prove the two statements of \Cref{lem:cong-props}.

\begin{enumerate}
\item This follows from the computation
\[ a_{*}\barSigma (P\cdot Q) = a_{*}(\barSigma P\cdot \barSigma Q) \sqleq a_{*}\barSigma P \cdot a_{*}\barSigma Q \sqleq P\cdot Q.  \]
The first step uses that $\barSigma$ is monoidal, the second one that $p$ is $\Qtl$-fibration, and the last one that $P$ and $Q$ are congruences.
\item This follows from the computation
\begin{align*}
a_{*}\barSigma P^{+} & = a_{*}\barSigma (\bigsqcup_{n\geq 1}P^{n}) & \text{def.\ $(-)^{+}$} \\
& = a_{*}\bigsqcup_{n\geq 1}\barSigma P^{n} & \text{$\barSigma$ pres.\ dir.\ joins} \\
& = a_{*}\bigsqcup_{n\geq 1}(\barSigma P)^{n} & \text{$\barSigma$ monoidal} \\ 
& = \bigsqcup_{n\geq 1}a_{*} (\barSigma P)^{n} & \text{\Cref{lem:clat-fib-props}\ref{lem:clat-fib-props-adjunction}} \\ 
& \sqleq \bigsqcup_{n\geq 1}(a_{*} \barSigma P)^{n} & \text{$p$ is a $\Qtl$-fibration} \\
& \sqleq \bigsqcup_{n\geq 1}P^{n} & \text{$P$ congruence} \\
& = P^{+} & \text{def.\ $(-)^{+}$} \\
\end{align*}
For the second step, note that the join $\bigsqcup_{n\geq 1} P^n$ is indeed directed: Since $P$ is reflexive by assumption, 
\[P = P \cdot 1 \sqleq P \cdot P =  P \cdot P\cdot 1\sqleq  P \cdot P\cdot P = \cdots, \]
hence the objects $P^n$ ($n\geq 1$) form an ascending chain in $\E_A$.\qedhere
\end{enumerate}

\subsection*{Proof of \Cref{lem:howe-closure-props}}
To prepare the proof we need a simple lemma:

\begin{lemma}\label{lem:cong-props-reverse} Let $\Sigma\colon \B\to\B$ be an endofunctor with a $\circ$-monoidal lifting $\barSigma\colon \E\to\E$ along the $\InQtl$-fibration $p\colon \E\to \B$. If $P$ is a $\barSigma$-congruence on a $\Sigma$-algebra $(A,a)$, then so is $P^\circ$.
\end{lemma}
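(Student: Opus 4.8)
The plan is to verify directly the inequality that characterizes a congruence. By \Cref{def:congruence}, showing that $P^\circ$ is a $\barSigma$-congruence on $(A,a)$ amounts to establishing $\barSigma(P^\circ)\sqleq a^{\star}(P^\circ)$, and by hypothesis we already have $\barSigma P\sqleq a^{\star}P$. The strategy is to transport this hypothesis under the fiberwise involution $(-)^\circ$, commuting it past both $\barSigma$ and the reindexing functor $a^{\star}$ using the assumptions on the lifting and on the fibration.

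First I would record that $(-)^\circ$ is monotone on each fiber: this is immediate from axiom \eqref{eq:inv-quant}, since an operation preserving all joins on a poset is automatically order-preserving. Next, $\circ$-monoidality of $\barSigma$ (\Cref{def:lifting}) says that the induced map $\barSigma\colon \E_A\to \E_{\Sigma A}$ is a $\circ$-morphism, so $\barSigma(P^\circ)=(\barSigma P)^\circ$. Applying monotonicity of $(-)^\circ$ to the hypothesis $\barSigma P\sqleq a^{\star}P$ yields $(\barSigma P)^\circ\sqleq (a^{\star}P)^\circ$. Finally, the $\InQtl$-fibration axiom \eqref{eq:iqtl-fibration} instantiated at the morphism $a$ gives $(a^{\star}P)^\circ\sqleq a^{\star}(P^\circ)$. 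Chaining these three steps,
\[ \barSigma(P^\circ)=(\barSigma P)^\circ\sqleq (a^{\star}P)^\circ\sqleq a^{\star}(P^\circ), \]
which is precisely the condition of \Cref{def:congruence} witnessing that $P^\circ$ is a $\barSigma$-congruence.

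I expect no serious obstacle, as the argument is a short three-line computation; the only point requiring care is the orientation of the involution axioms. In particular it is essential that $\barSigma$ be $\circ$-monoidal rather than merely \emph{laxly} so: the lax inequality would read $(\barSigma P)^\circ\sqleq \barSigma(P^\circ)$, which points the wrong way for this argument, whereas the strict equality is exactly what lets us rewrite $\barSigma(P^\circ)$ as $(\barSigma P)^\circ$. By contrast, for the reindexing step only the single lax inequality \eqref{eq:iqtl-fibration} is needed, since $a^{\star}$ is applied in the correct orientation.
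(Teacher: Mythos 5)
Your proof is correct and follows essentially the same three-step chain as the paper's, the only (cosmetic) difference being that you apply the involution to the reindexing form $\barSigma P\sqleq a^{\star}P$ of the congruence condition, which lets you invoke axiom \eqref{eq:iqtl-fibration} verbatim, whereas the paper applies it to the equivalent opreindexing form $a_{\star}\barSigma P\sqleq P$ and therefore uses the dual inequality $a_{\star}(Q^{\circ})\sqleq (a_{\star}Q)^{\circ}$ (obtained from \eqref{eq:iqtl-fibration} via the adjunction $a_{\star}\dashv a^{\star}$). Your side remarks — that $(-)^{\circ}$ is monotone because it preserves joins, and that strict (not lax) $\circ$-monoidality of $\barSigma$ is needed for the orientation to work out — are both accurate.
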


\begin{proof}
This follows from the computation
\[ a_{*}\barSigma P^\circ \sqleq a_{*} (\barSigma P)^\circ \sqleq (a_{*} \barSigma P)^\circ \sqleq P^\circ. \]
The first step uses that $\barSigma$ is $\circ$-monoidal, the second one that $p$ is an $\InQtl$-fibration, and third one that $P$ is a congruence.
\end{proof}

We now prove the three statements of \Cref{lem:howe-closure-props}.
\begin{enumerate}
\item Suppose that $P$ is reflexive. Then $\hat P$ is reflexive because
\[ 1 \sqleq P \sqleq P\sqcup (a_{*} \barSigma\hatP)\cdot P = \hatP;\]
the first step uses that $P$ is reflexive and the third one holds by definition of $\hatP$. To prove that $\hat P$ is a congruence on $(A,a)$, we compute
\[ a_{*} \barSigma\hatP = (a_{*} \barSigma\hatP)\cdot 1 \sqleq (a_{*} \barSigma\hatP)\cdot P \sqleq P\sqcup (a_{*} \barSigma\hatP)\cdot P = \hatP; \]
the first step uses that $\E_A$ is a monoid, and the second one that $P$ is reflexive. 
\item Suppose that $P$ is transitive (i.e.\ $P\cdot P \sqleq P$). Then
\[ \hat P \cdot P = (P\sqcup (a_{*} \barSigma\hatP)\cdot P)\cdot P = P\cdot P\sqcup (a_{*} \barSigma\hatP)\cdot P\cdot P \sqleq P\sqcup (a_{*} \barSigma\hatP)\cdot P = \hatP; \]
the first step uses the definition of $\hatP$, the second one that $\E_A$ is a quantale, the third one that $P$ is transitive, and the last one the definition of $\hatP$ again.
\item Suppose that $P$ is reflexive and symmetric. We only need to prove
\begin{equation}\label{eq:howes-trick-proof-goal} 
\hatP \sqleq (\hatP^{\circ})^{+},
\end{equation}
as this immediately implies that $\hatP^{+}$ is symmetric: we have
\[\hatP^{+}\sqleq (\hatP^\circ)^{++} = (\hatP^\circ)^{+} = (\hatP^{+})^\circ,\]
where the last step follows by \Cref{lem:trans-clos}\ref{lem:trans-clos-inv}. To prove \eqref{eq:howes-trick-proof-goal}, recall that $\hatP$ is the least fixed point (equivalently, the least pre-fixed point) of the map $Q\mapsto P\sqcup (a_{*}\bar\Sigma Q)\cdot P$ on $\E_A$. Therefore, it suffices to prove that $(\hatP^\circ)^{+}$ is a pre-fixed point, i.e.,
\[ P\sqcup (a_{*}\bar\Sigma (\hatP^\circ)^{+})\cdot P \sqleq (\hatP^\circ)^{+}, \]
or equivalently
\begin{align}
P &\sqleq (\hatP^\circ)^{+} \label{eq:howes-trick-proof-goal-2}\\
(a_{*}\bar\Sigma (\hatP^\circ)^{+})\cdot P & \sqleq (\hatP^\circ)^{+}. \label{eq:howes-trick-proof-goal-3}
\end{align}
The inequality \eqref{eq:howes-trick-proof-goal-2} follows by the computation
\[ P = P^\circ \sqleq \hatP^\circ \sqleq (\hatP^\circ)^{+}, \]
using that $P$ is symmetric in the first step, that $P\sqleq \hatP$ by definition of $\hatP$ in the second step, and that $(-)^{+}$ is the transitive closure in the third step. 

To prove \eqref{eq:howes-trick-proof-goal-3}, we compute
\[ (a_{*}\barSigma (\hatP^\circ)^{+})\cdot P \sqleq (\hatP^\circ)^{+} \cdot P \sqleq (\hatP^\circ)^{+} \cdot (\hatP^\circ)^{+} \sqleq (\hatP^\circ)^{+}. \] 
For the first step, we use that $\hatP$ is congruence on $(A,a)$ by part \ref{lem:howe-closure-props-reflexive}, so 
 $\hatP^\circ$ is congruence by \Cref{lem:cong-props-reverse}, so $(P^\circ)^{+}$ is a congruence by \Cref{lem:cong-props}\ref{lem:cong-props-transhull}. The second step follows from \eqref{eq:howes-trick-proof-goal-2}, and the third one from the definition of $(-)^{+}$.
\qedhere
\end{enumerate}

\subsection*{Proof of \Cref{prop:howe-closure-app-struc}}
\begin{notation}
\begin{enumerate}
\item 
For notational distinction, we write $\barSigma_0\colon \E_0\to \E_0$ and $\barB_0\colon \E_0^\opp\times \E_0\to \E_0$ for the liftings of $\Sigma$ and $B$ along $p_0$, that respectively, are restrictions of the liftings $\barSigma$ and $\barB$ of $\Sigma^{(2)}$ and $B^{(2)}$ along $p$ (see \Cref{not:restricted-liftings}). 
\item The liftability condition on $\S$ and $p$, which says that
\[ \rho\colon \barSigma_0 \barB_0^\infty(\Lambda,\Lambda)\xto{\thickcdot} \barB_0(\Lambda, \barSigma_0^{\star}\Lambda),  \]
is equivalent to
\[ (\rho,\rho)\colon \barSigma \barB^\infty(\Lambda,\Lambda)\xto{\thickcdot} \barB(\Lambda,\barSigma^{\star}\Lambda).  \]
\item We write $(f,g)^{*}$, $(f,g)_{*}$ for (op)reindexing w.r.t.~the lifting $p\colon \E\to \B^{(2)}$, and $f^{*}$, $f_{*}$ for (op)reindexing w.r.t.~$p_0\colon \E_0\to \B$. Note that $(f,f)^{*}=(f^{*},f^{*})$ and $(f,f)_{*}=(f_{*},f_{*})$for all $f$. 
\end{enumerate}
\end{notation}

Recall that the canonical model $\gamma$ of $\S$ is the join in $\HOCoalg_\Lambda(B)$ of the ordinal-indexed chain $(\gamma_\alpha)_\alpha$ given in \Cref{rem:complete-ahos}:
\[ \gamma = \bigvee_\alpha \gamma_\alpha. \]
To show that $\hatP$ is a $\barB_0(1_\Lambda,-)$-behavioural conformance on $(\Lambda,\gamma)$, we need to prove that
\[  \hatP \sqleq \gamma^{*}\barB_0(1_\Lambda,\hatP). \]
For this purpose, it suffices to prove that
\begin{equation}\label{eq:proof-goal-cong-app} \hatP \sqleq (\gamma_\alpha,\gamma)^{*}\barB(\hatP,\hatP) \qquad\text{for all ordinals $\alpha$}. \end{equation}
This statement immediately implies that $\hatP$ is a $\barB$-behavioural conformance on $(\Lambda,(\gamma,\gamma))$: we have
\[ \hatP\sqleq \bigsqcap_\alpha (\gamma_\alpha,\gamma)^{*}\barB(\hatP,\hatP) = (\gamma,\gamma)^{*}\barB(\hatP,\hatP) \sqleq (\gamma,\gamma)^{*}\barB(1_\Lambda,\hatP) = \gamma^{*} \barB(1_\Lambda,\hatP), \]
where the first step follows from \eqref{eq:proof-goal-cong-app}, the second one from our assumption that $\S$ is compatible with $\L$, and the third one from $\hatP$ being reflexive (\Cref{lem:howe-closure-props}\ref{lem:howe-closure-props-reflexive}) and $\barB$ being contravariant in the first component.

To prove \eqref{eq:proof-goal-cong-app}, we proceed by transfinite induction. The limit step is straightforward: if $\alpha$ is a limit ordinal, we have
\[ \hatP \sqleq \bigsqcap_{\beta<\alpha}(\gamma_{\beta},\gamma)^{*}\barB(\hatP,\hatP) = (\gamma_{\alpha},\gamma)^{*}\barB(\hatP,\hatP),\]
where the first step uses that $\hatP\sqleq (\gamma_{\beta},\gamma)^{*}\barB(\hatP,\hatP)$ for all $\beta<\alpha$ by induction, and the second one follows from out assumption $\S$ is compatible with $\L$, since $\gamma_\alpha=\bigvee_{\beta<\alpha} \gamma_\beta$.

For the successor step, suppose that \eqref{eq:proof-goal-cong-app} holds for some $\alpha$. We will prove the following five inequalities (the first three are auxiliary statements used for proving the last two):
\begin{align}
\barSigma \hatP &\sqleq (\ini,\ini)^{*}(\gamma_{\alpha+1},\gamma)^{*} \barB(\hatP,\hatP) \label{eq:howe-aux-1}\\
((\gamma_{\alpha+1},\gamma)^{*}\barB(\hatP,\hatP))\cdot P &\sqleq (\gamma_{\alpha+1},\gamma)^{*}\barB(\hatP,\hatP) \label{eq:howe-aux-2}\\
1_{\Lambda} & \sqleq (\gamma_{\alpha+1},\gamma)^{*} \barB(\hatP,\hatP) \label{eq:howe-3} \\
P &\sqleq (\gamma_{\alpha+1},\gamma)^{*}\barB(\hatP,\hatP) \label{eq:howe-1} \\
((\ini,\ini)_{*}\barSigma\hatP)\cdot P &\sqleq (\gamma_{\alpha+1},\gamma)^{*}\barB(\hatP,\hatP) \label{eq:howe-2} 
\end{align}
From
 \eqref{eq:howe-1} and \eqref{eq:howe-2} it then follows that 
\[\hatP = P\sqcup ((\ini,\ini)_{*}\barSigma \hat{P})\cdot P \sqleq (\gamma_{\alpha+1},\gamma)^{*}\barB(\hatP,\hatP),\] 
as required. It thus only remains to establish the above inequalities \eqref{eq:howe-aux-1}--\eqref{eq:howe-2}.

\medskip\noindent\textbf{Proof of \eqref{eq:howe-aux-1}:}
This follows from the computation
\begin{align*}
\bar\Sigma\hatP & \sqleq (\Sigma^{(2)}(\hat\gamma_\alpha,\hat\gamma))^{*} \barSigma\barB^\infty(\hatP,\hatP) & \text{see below} \\
& \sqleq (\Sigma^{(2)}(\hat\gamma_\alpha,\hat\gamma))^{*} (\rho,\rho)^{*} \barB(\hatP,\barSigma^{*} \hatP) & \text{$\S$ liftable w.r.t.\ $\L$} \\
& \sqleq (\Sigma^{(2)}(\hat\gamma_\alpha,\hat\gamma))^{*} (\rho,\rho)^{*} (B^{(2)}((\id,\id),(\hat\ini,\hat\ini)))^{*} \barB(\hatP,\hatP) & \text{\Cref{prop:initial-alg-cofree-coalg-lift}} \\
& = (\ini,\ini)^{*} (\gamma_{\alpha+1},\gamma)^{*} \barB(\hatP,\hatP) & \text{def. $\gamma_{\alpha+1}$}.
\end{align*}
For the first step, we argue as follows. Form the $\barB(\hatP,-)$ coalgebra
\[ k = (\, \hatP \xto{i} (\gamma_\alpha,\gamma)^{*}\barB(\hatP,\hatP) \xto{\ol{(\gamma_\alpha,\gamma)}} \barB(\hatP,\hatP)\,)   \]
where $i$ is the morphism in $\E_\Lambda$ witnessing the induction hypothesis, and $\ol{(\gamma_a,\gamma)}$ is the cartesian lift of $(\gamma_a,\gamma)\colon \Lambda\to B(\Lambda,\Lambda)=p\barB(\hatP,\hatP)$. Since $k$ is above $(\gamma_\alpha,\gamma)$, the cofree extension $\hat k\colon \hatP \to \barB^\infty(\hatP,\hatP)$ of $\id\colon \hatP\to \hatP$ is above the cofree extension $(\hat\gamma_\alpha,\hat\gamma)\colon \Lambda\to B^\infty(\Lambda,\Lambda)$ of $(\id,\id)\colon \Lambda\to\Lambda$ by the dual of \Cref{prop:initial-alg-cofree-coalg-lift}. Thus $\barSigma \hat k$ is above $\Sigma^{(2)}(\hat\gamma_\alpha,\hat\gamma)$, which proves  $\barSigma\hat P\sqleq (\Sigma^{(2)}(\hat\gamma_\alpha,\hat\gamma))^{*} \barSigma\barB^\infty(\hatP,\hatP)$.

\medskip\noindent\textbf{Proof of \eqref{eq:howe-aux-2}:}
This follows from the computation
\begin{align*}
((\gamma_{\alpha+1},\gamma)^{*}\barB(\hatP,\hatP))\cdot P &\sqleq (\gamma_{\alpha+1},\gamma)^{*}\barB(\hatP,\hatP)\cdot (\gamma,\gamma)^{*}\barB(1_\Lambda,P) & \text{$P$ a $\barB$-behavioural conformance} \\
&\sqleq (\gamma_{\alpha+1},\gamma)^{*}(\barB(\hatP,\hatP)\cdot \barB(1_\Lambda,P)) & \text{$p$ heterogeneous fib.} \\ 
&\sqleq (\gamma_{\alpha+1},\gamma)^{*}\barB(\hatP,\hatP\cdot P) & \text{$\barB$ laxly monoidal} \\
 &\sqleq (\gamma_{\alpha+1},\gamma)^{*}\barB(\hatP,\hatP) & \text{\Cref{lem:howe-closure-props}\ref{lem:howe-closure-props-weakly-transitive}}. \\
\end{align*}

\medskip\noindent\textbf{Proof of \eqref{eq:howe-3}:}
Since the structure $\ini\colon \Sigma\Lambda\to \Lambda$ of the initial $\Sigma$-algebra is an isomorphism, we have
\begin{align*}
1_\Lambda & \sqleq (\ini^{-1},\ini^{-1})^{*} 1_{\Sigma\Lambda} & \text{$p$ het.\ $\Qtl$-fibration} \\
& \sqleq (\ini^{-1},\ini^{-1})^{*} \barSigma 1_\Lambda & \text{$\barSigma$ laxly monoidal} \\
& \sqleq (\ini^{-1},\ini^{-1})^{*} \barSigma \hatP & \text{\Cref{lem:howe-closure-props}\ref{lem:howe-closure-props-reflexive}} \\
& \sqleq (\ini^{-1},\ini^{-1})^{*} (\ini,\ini)^{*}(\gamma_{\alpha+1},\gamma)^{*} \barB(\hatP,\hatP) & \text{by \eqref{eq:howe-aux-1}} \\
&= (\gamma_{\alpha+1},\gamma)^{*} \barB(\hatP,\hatP). &
\end{align*}

\medskip\noindent\textbf{Proof of \eqref{eq:howe-1}:}
This follows from the computation
\begin{align*}
P &= 1_\Lambda \cdot P  & \\
&\sqleq (\gamma_{\alpha+1},\gamma)^{*}\barB(\hatP,\hatP)\cdot P & \text{by \eqref{eq:howe-3}} \\
 &\sqleq (\gamma_{\alpha+1},\gamma)^{*}\barB(\hatP,\hatP) & \text{by \eqref{eq:howe-aux-2}}. 
\end{align*}

\medskip\noindent\textbf{Proof of \eqref{eq:howe-2}:}
This follows from the computation
\begin{align*}
((\ini,\ini)_{*}\bar\Sigma\hatP)\cdot P & \sqleq (\ini,\ini)_{*} (\ini,\ini)^{*}(\gamma_\alpha,\gamma)^{*}\barB(\hatP,\hatP)\cdot P & \text{by \eqref{eq:howe-aux-1}} \\
 & \sqleq (\gamma_\alpha,\gamma)^{*}\barB(\hatP,\hatP)\cdot P & \text{since $(-)_{*} \dashv (-)^{*}$} \\
 &\sqleq (\gamma_\alpha,\gamma)^{*}\barB(\hatP,\hatP) & \text{by \eqref{eq:howe-aux-2}}. 
\end{align*} 
This concludes the proof of \Cref{prop:howe-closure-app-struc}.

\subsection*{Proof of \Cref{lem:lifting-compatible}}
For the proof of compatibility of $\L$, we need a continuity property of the Wasserstein distance. Recall that  $\barD(\{\bot\}+X)$ carries a DCPO structure given by $\phi\leq \phi'$ iff $\phi(x)\leq \phi'(x)$ for all $x\in X$.

\begin{lemma}\label{lem:wasserstein-cont}
Let $(X,d)$ be a fuzzy relation and let $\bard = \barD(d_\bot)$ be the induced Wasserstein metric on $\D(\{\bot\}+X)$. Then for every directed family $\phi_k\in \D(\{\bot\}+X)$, $k\in K$, and every $\psi\in \D(\{\bot\}+X)$,
\[ \bard(\bigvee_{k\in K} \phi_k, \psi) = \bigvee_{k\in K} \bard(\phi_k,\psi)  \]
\end{lemma}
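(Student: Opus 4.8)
The plan is to prove the two inequalities separately: the direction ``$\geq$'' will follow from monotonicity of $\bard(-,\psi)$ in the DCPO order, and the harder direction ``$\leq$'' will follow from an explicit rerouting of optimal transportation plans whose extra cost is controlled by the $\bot$-mass. The feature that drives everything is the asymmetry of $d_\bot$ from \Cref{ex:liftings}: transporting out of $\bot$ is free ($d_\bot(\bot,-)=0$), whereas transporting an $X$-located unit into $\bot$ costs $1$. Combined with the fact that the DCPO order on $\D(\{\bot\}+X)$ compares only the masses on $X$, this makes $\bard(-,\psi)$ monotone.

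First I would record the shape of the join: for a directed family $(\phi_k)$ one has $(\bigvee_k\phi_k)(x)=\sup_k\phi_k(x)$ for $x\in X$, and, since suprema of directed families of nonnegative reals commute with countable sums, $(\bigvee_k\phi_k)(\bot)=1-\sum_{x\in X}\sup_k\phi_k(x)=\inf_k\phi_k(\bot)$. Writing $\phi=\bigvee_k\phi_k$, the inequality ``$\geq$'' is then immediate: each $\phi_k\leq\phi$ yields $\bard(\phi_k,\psi)\leq\bard(\phi,\psi)$ by monotonicity, whence $\sup_k\bard(\phi_k,\psi)\leq\bard(\phi,\psi)$. I would establish monotonicity itself by taking an optimal plan from the larger distribution, deleting the surplus outflow at each $X$-location, and re-supplying the demand thereby freed from $\bot$ for free; since the deleted $X$-flows have nonnegative cost and the new $\bot$-flows cost $0$, the resulting plan for the smaller distribution is no more expensive.

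For the converse ``$\leq$'', set $c=\sup_k\bard(\phi_k,\psi)$ and $\delta_k=\phi_k(\bot)-\phi(\bot)=\sum_{x\in X}(\phi(x)-\phi_k(x))\geq 0$. Given an optimal plan $t^k$ from $\phi_k$ to $\psi$ (which exists by the cited existence theorem for the transportation problem), I would construct a plan from $\phi$ to $\psi$ thus: reduce the outflow from $\bot$ by $\delta_k$ (legitimate, as $\bot$ carries $\phi_k(\bot)=\phi(\bot)+\delta_k$ units), collect the $\delta_k$ units of demand left uncovered, and route them from the surplus supply $\phi(x)-\phi_k(x)$ now present at the $X$-locations. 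The retained flows keep their cost, the removed $\bot$-flows cost $0$, and each newly added $X$-to-target flow costs at most $1$ per unit, so the new plan costs at most $\bard(\phi_k,\psi)+\delta_k\leq c+\delta_k$. Hence $\bard(\phi,\psi)\leq c+\delta_k$ for every $k$, and since $\inf_k\delta_k=0$ we conclude $\bard(\phi,\psi)\leq c$, completing the proof.

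The main obstacle is precisely this converse step: as $k$ grows, mass that previously sat on $\bot$ (and could be supplied for free) migrates onto the expensive $X$-locations in the limit $\phi$, and one must argue that the additional transportation cost this incurs does not accumulate. The key point is that the extra cost is bounded by the total migrated mass $\delta_k=\phi_k(\bot)-\phi(\bot)$, which tends to $0$ exactly because $\inf_k\phi_k(\bot)=\phi(\bot)$. The only remaining care is the (routine but genuine) bookkeeping of rearranging countably-supported plans, justified by absolute convergence of the sums involved.
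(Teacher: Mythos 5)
Your proof is correct, but it takes a genuinely different route from the paper's. The paper establishes the nontrivial inequality $\bard(\bigvee_k\phi_k,\psi)\leq\sup_k\bard(\phi_k,\psi)$ via linear-programming \emph{duality}: it fixes a bounded optimal solution $(a_i),(b_j)$ of the dual transportation problem for $(\phi,\psi)$, notes that the same dual variables are feasible for $(\phi_k,\psi)$ with objective value differing by at most $a\cdot\sum_i|p_i-p_i'|$, and derives a contradiction from $\sup_k\bard(\phi_k,\psi)<\bard(\phi,\psi)$ by choosing $k$ with the $X$-masses of $\phi_k$ sufficiently close to those of $\phi$. Your argument stays entirely on the primal side: you reroute a (near-)optimal plan for $\phi_k$ by withdrawing $\delta_k$ units of free outflow from $\bot$ and re-supplying the freed demand from the surplus $X$-mass at unit cost at most $1$, yielding the explicit bound $\bard(\phi,\psi)\leq\bard(\phi_k,\psi)+\delta_k$ with $\inf_k\delta_k=0$. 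This is more elementary --- it avoids the strong-duality theorem for countably infinite transportation problems, and in fact does not even require exact optimal plans --- but it leans on the specific shape of $d_\bot$ (transport out of $\bot$ is free, and all costs are bounded by $1$), whereas the duality argument would survive for any bounded cost admitting bounded optimal dual solutions. You also make explicit the easy direction, namely monotonicity of $\bard(-,\psi)$ in the DCPO order (again by rerouting through $\bot$), which the paper's proof leaves implicit; and your identification of the join, $\phi(x)=\sup_k\phi_k(x)$ on $X$ and $\phi(\bot)=\inf_k\phi_k(\bot)$ via the interchange of directed suprema with countable sums, is exactly what makes $\inf_k\delta_k=0$ and hence kills the error term.
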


\begin{proof}
Let $\phi=\bigvee_{k\in K} \phi_k$, and put  $\phi= p\cdot \bot + \sum_{i\in I} p_i\cdot x_i$ and $\psi=q\cdot \bot + \sum_{j\in J} q_j\cdot y_j$. Recall that $\bard(\phi,\psi)$ is the optimal value of the transportation problem from $\phi$ to $\psi$, that is, of the linear program $\TP(\phi,\psi)$ shown below:
\begin{equation*}
\begin{array}{ll@{}ll}
\text{minimize}  & \displaystyle\sum\limits_{i,j}& \,d_\bot(x_i,y_j) \cdot t_{i,j}&\\
\text{subject to}& \displaystyle\sum\limits_{j}&  \, t_{i,j} = p_i  & (i\in I)\\
& \displaystyle\sum\limits_{i}&   \,t_{i,j} = q_j  & (j\in J) \\
&& \,t_{i,j}\geq 0 & (i\in I,\, j\in J)
\end{array}
\end{equation*}
Its dual linear program $\TP_{\mathsf{d}}(\phi,\psi)$ is given by
\begin{equation*}
\begin{array}{ll@{}ll}
\text{maximize}  & \displaystyle\sum\limits_{i}  &  a_i\cdot p_i +  \displaystyle\sum\limits_{j} b_j\cdot q_j &\\
\text{subject to}& & a_i+b_j \leq d_\bot(x_i,y_j) & (i\in I,\, j\in J)\\
&& a_i,b_j\in \mathbb{R} & (i\in I,\, j\in J) \\
&& (a_i)_{i\in I}, (b_j)_{j\in J} \text{ bounded} & 
\end{array}
\end{equation*}
Both linear programs $\TP(\phi,\psi)$ and $\TP_{\mathsf{d}}(\phi,\psi)$ have an optimal solution, and their optimal values coincide~\cite[Thm.~2.1, Thm.~2.2]{kortanek_yamasaki95}. Take an optimal solution $(a_i)_{i\in I}$, $(b_j)_{j\in J}$ of $\TP_{\mathsf{d}}(\phi,\psi)$. Since the solution is bounded, there exists $a> 0$ such that $|a_i|\leq a$ for all $i$. For every $\phi'=\sum_{i\in I} p_i'\cdot x_i$, we note that $(a_i)$, $(b_j)$ is also a (not necessarily optimal) solution of $\TP_{\mathsf{d}}(\phi',\psi)$, and the difference between the values of $\TP_{\mathsf{d}}(\phi,\psi)$ and $\TP_{\mathsf{d}}(\phi',\psi)$ under these solutions is bounded:
\begin{equation}\label{eq:diff} |\sum_i a_i\cdot p_i - \sum_i a_i\cdot p_i'| \leq a\cdot \sum_i |p_i-p_i'|. \end{equation}
To prove the statement of the lemma, suppose towards a contradiction that
\[ C:=\bigvee_{k\in K} \bard(\phi_k,\psi) < \bard(\phi, \psi). \] 
Since the join $\phi=\bigvee_k \phi_k$ is directed, there exists $k\in K$ such that \[\sum_i |p_i-p_i'|<(\bard(\phi,\psi)-C)/a,\qquad \text{where}\qquad  \phi_k = \sum_{i} p_i'\cdot x_i.\]
By \eqref{eq:diff} this implies 
\[ |\sum_i a_i\cdot p_i - \sum_i a_i\cdot p_i'| < \bard(\phi,\psi)-C. \]
In other words, $(a_i)$, $(b_j)$ is a solution of $\TP_{\mathsf{d}}(\phi_k,\phi)$ whose value is greater than $C$. By duality, this means that the optimal solution of $\TP(\phi_k,\psi)$ has a value greater than $C$, that is, $\bard(\phi_k,\psi)>C$, a contradiction.
\end{proof}

We split the statement of \Cref{lem:lifting-compatible} into four sublemmas.

\begin{lemma}
The lifting situation $\L=(p,\barSigma,\barB)$ is compatible with $\S_\pBCK$.
\end{lemma}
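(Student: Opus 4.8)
The plan is as follows. By \Cref{def:compatibility}, compatibility of $\L$ with $\S_\pBCK$ demands that for every $\leq$-directed family $c_i\colon \Lambda\to B(\Lambda,\Lambda)$ ($i\in I$), every $d\colon \Lambda\to B(\Lambda,\Lambda)$, and all $P,Q\in \FRel_{\jne,\Lambda}$,
\[ (\textstyle\bigvee_{i\in I} c_i,\, d)^{\star}\barB(P,Q) = \bigsqcap_{i\in I}\, (c_i,d)^\star\barB(P,Q). \]
I would prove this equality of fuzzy relations on $\Lambda$ \emph{pointwise}, reducing it to the directed-continuity of the Wasserstein distance already established in \Cref{lem:wasserstein-cont}.

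First I would unfold both sides at an arbitrary pair $(t,t')\in \Lambda\times\Lambda$. Writing $e := \barB(P,Q)$ for the induced fuzzy relation on $B(\Lambda,\Lambda)=\D(\{\bot\}+\Lambda^\Lambda)$, the reindexing formula of the fibration $p_\jne$ (\Cref{ex:fibrations}\ref{ex:fibrations-frel2}) gives $(c_i,d)^\star e\,(t,t') = e(c_i(t), d(t'))$ and $(\bigvee_i c_i, d)^\star e\,(t,t') = e\bigl((\bigvee_i c_i)(t),\, d(t')\bigr)$. Two observations simplify these. The join $\bigvee_i c_i$ lives in the DCPO $\Set(\Lambda,B(\Lambda,\Lambda))$ equipped with the \emph{pointwise} order, hence is computed pointwise: $(\bigvee_i c_i)(t)=\bigvee_i c_i(t)$, where the right-hand join is the directed join in $B(\Lambda,\Lambda)$ (directed since $(c_i)$ is). Moreover, because the fiber order $\sqleq$ on $\FRel_{\jne,\Lambda}$ is the \emph{reversed} pointwise order of fuzzy relations, the fiber meet $\bigsqcap$ is the ordinary pointwise supremum. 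Thus the claimed equality amounts, at each $(t,t')$, to
\[ e\bigl(\textstyle\bigvee_{i} c_i(t),\, d(t')\bigr) = \sup_{i\in I} e\bigl(c_i(t),\, d(t')\bigr). \]

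It then remains to recognise this as an instance of \Cref{lem:wasserstein-cont}. By definition of the lifting $\barB=\barD\circ(-)_\bot\circ\barH$, the fuzzy relation $e=\barB(P,Q)$ is precisely the Wasserstein distance $\barD\bigl((\barH(P,Q))_\bot\bigr)$ on $\D(\{\bot\}+\Lambda^\Lambda)$ induced by the hom fuzzy relation $\barH(P,Q)$ on $X:=\Lambda^\Lambda$. Putting $\phi_i := c_i(t)$ — a directed family in $\D(\{\bot\}+X)$ — and $\psi := d(t')$, the displayed identity is exactly $\barD\bigl((\barH(P,Q))_\bot\bigr)(\bigvee_i \phi_i,\psi)=\sup_i \barD\bigl((\barH(P,Q))_\bot\bigr)(\phi_i,\psi)$, which is \Cref{lem:wasserstein-cont}. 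As $(t,t')$ was arbitrary, the two fuzzy relations coincide, establishing compatibility.

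The main obstacle is entirely contained in \Cref{lem:wasserstein-cont}: the directed-continuity of the Wasserstein distance in its first argument, whose proof rests on linear-programming duality and on the boundedness of an optimal dual solution. Once that lemma is available, the present statement is pure bookkeeping — unfolding reindexing, using that the DCPO join on coalgebras is computed pointwise, and that the fiber meet is a pointwise supremum because the fiber order on fuzzy relations is reversed.
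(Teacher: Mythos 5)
Your proof is correct and follows essentially the same route as the paper's: unfold both sides pointwise via the reindexing formula, identify the fiber meet with the pointwise supremum (the paper leaves this implicit by writing $\bigvee$ throughout), and invoke \Cref{lem:wasserstein-cont} for the directed-continuity of the Wasserstein distance. Your remarks on the pointwise computation of the DCPO join and on the reversed fiber order make explicit two small bookkeeping points the paper glosses over, but the substance is identical.
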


\begin{proof}
Let $f_i\colon \Lambda\to B(\Lambda,\Lambda)$ ($i\in I$) be a $\leq$-directed family, let $g\colon \Lambda\to B(\Lambda,\Lambda)$, and let $d,e\colon \Lambda\times\Lambda\to [0,1]$ be fuzzy relations. We need to prove
 \begin{equation}\label{eq:proof-goal-comp}
(\bigvee_{i\in I} f_i, g)^{*}\barB(d,e) = \bigsqcap_{i\in I}\, (f_i,g)^{*}\barB(d,e).\end{equation}
To follows from the computation below for all $s,t\in \Lambda$:
\begin{align*}
((\bigvee_{i\in I} f_i, g)^{*}\barB(d,e))(s,t) &= \barB(d,e)(\bigvee_i f_i(s),g(t)) & \\
&= \bigvee_i \barB(d,e)(f_i(s),g(t)) & \text{by \Cref{lem:wasserstein-cont}} \\
&= \bigvee_i ((f_i,g)^{*}\barB(d,e))(s,t). \qedhere
\end{align*}
\end{proof}

\begin{lemma}
The lifting $\barSigma$ is $\circ$-monoidal.
\end{lemma}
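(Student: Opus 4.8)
The plan is to verify, for each set $X$, that the induced map $\barSigma\colon \FRel_{\jne,X}\to \FRel_{\jne,\Sigma X}$ between fibers is a strict $\circ$-morphism of involutive quantales, i.e.\ that it preserves the unit, the composition, and the reversal operation on the nose (\Cref{def:lifting}). First I would spell out $\barSigma$ concretely on objects using \Cref{ex:congruence}\ref{ex:congruence-fuzzy} and \Cref{ex:liftings}\ref{ex:lifting:prod-non-can}: for a fuzzy relation $d$ on $X$, the relation $\barSigma d$ on $\Sigma X=\coprod_{\f\in\Sigma} X^{\ar(\f)}$ is given on two tuples $(a_i)_i,(b_i)_i$ lying in the \emph{same} component $X^{n}$ ($n=\ar(\f)$) by the truncated sum $\boxplus_{i=1}^{n} d(a_i,b_i)$, and by $1$ on tuples lying in \emph{distinct} components (the coproduct lifting). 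I would likewise recall the involutive quantale structure on each fiber (\Cref{def:quantale}): composition $(d\cdot e)(x,y)=\inf_z\{d(x,z)\boxplus e(z,y)\}$, reversal $d^\circ(x,y)=d(y,x)$, and unit $1_X=d_{=}$, the identity relation with $d_=(x,x)=0$ and $d_=(x,x')=1$ for $x\neq x'$.

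The unit and reversal axioms are then immediate from this description, and I expect both to hold with equality. For the unit, $\barSigma 1_X$ sends same-component tuples $(a_i)_i,(b_i)_i$ to $\boxplus_i d_=(a_i,b_i)$, which is $0$ exactly when all $a_i=b_i$ and $1$ otherwise, and distinct-component tuples to $1$; this is precisely $1_{\Sigma X}$ on $\Sigma X$. For reversal, commutativity of $\boxplus$ gives $(\barSigma d)^\circ((a_i)_i,(b_i)_i)=\boxplus_i d(b_i,a_i)=\boxplus_i d^\circ(a_i,b_i)=\barSigma(d^\circ)((a_i)_i,(b_i)_i)$ within a component, while both sides are $1$ across components.

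The composition axiom $\barSigma d_1\cdot\barSigma d_2=\barSigma(d_1\cdot d_2)$ is the crux. For $u,v$ in distinct components both sides evaluate to $1$: on the left every $w\in\Sigma X$ yields a summand $\barSigma d_1(u,w)\boxplus\barSigma d_2(w,v)=1$, since $w$ must differ in component from $u$ or from $v$. For $u=(a_i)_i,v=(b_i)_i$ in the same component $X^n$, the infimum over $w$ is realized by $w$ ranging over that same component $X^n$ (any $w$ in another component contributes the summand $1$, which cannot lower the infimum), so, using associativity and commutativity of $\boxplus$ to regroup, the goal reduces to the identity
\[ \inf_{(z_i)_i\in X^n}\;\boxplus_{i=1}^n\bigl(d_1(a_i,z_i)\boxplus d_2(z_i,b_i)\bigr)\;=\;\boxplus_{i=1}^n\;\inf_{z_i\in X}\bigl(d_1(a_i,z_i)\boxplus d_2(z_i,b_i)\bigr). \]

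The hard part will be justifying this exchange of the infimum over tuples with the truncated sum, especially since the infima need not be attained. I plan to handle it by writing $\boxplus$ as the composite of ordinary addition with the truncation $\min(-,1)\colon[0,\infty]\to[0,1]$: setting $g_i(z_i):=d_1(a_i,z_i)\boxplus d_2(z_i,b_i)$, each summand depends only on $z_i$, so independence of the variables yields $\inf_{(z_i)_i}\sum_i g_i(z_i)=\sum_i\inf_{z_i}g_i(z_i)$ for ordinary addition, and since $\min(-,1)$ is monotone and continuous one has $\inf_{(z_i)_i}\min(\sum_i g_i(z_i),1)=\min(\inf_{(z_i)_i}\sum_i g_i(z_i),1)$. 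Combining these two facts gives the displayed identity — continuity of the truncation is exactly what replaces attainment of the infima. This establishes strict preservation of composition and completes the verification that $\barSigma$ is $\circ$-monoidal.
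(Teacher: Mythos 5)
Your proof is correct and follows essentially the same route as the paper's: reduce to terms of the same shape (both sides being $1$ otherwise) and then exchange the infimum over tuples $(z_i)_i$ with the sum, using independence of the variables. You are in fact somewhat more careful than the paper, which writes the computation with ordinary sums and silently elides the truncation issue that your $\min(-,1)$-continuity argument addresses, and which also omits the (trivial) unit and reversal checks that you spell out.
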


\begin{proof}
We need to prove $\barSigma(d\cdot e)=\barSigma d \cdot \barSigma e$ for all fuzzy relations $d,e\colon X\times X\to [0,1]$. It suffice to show $\barSigma(d\cdot e)(s,t)=(\barSigma d \cdot \barSigma e)(s,t)$ for elements $s,t\in \Sigma X$ of the same shape, for otherwise both sides are $1$ and the equality holds trivially. For all $n$-ary $\f\in \Sigma$ and $x_i,y_i\in X$ ($i=1,\ldots,n$), we have
\begin{align*}
&\barSigma(d\cdot e)(\f(x_1,\ldots,x_n),\f(y_1,\ldots,y_n)) \\
&= \sum_{i=1}^n (d\cdot e)(x_i,y_i) \\
&= \sum_{i=1}^n \inf_{z} \{ d(x_i,z) + e(z,y_i) \} \\
&=\inf_{(z_1,\ldots,z_n)} \{ \sum_{i=1}^n (d(x_i,z_i) + e(z_i,y_i)) \} \\
&=\inf_{(z_1,\ldots,z_n)} \{ \sum_{i=1}^n d(x_i,z_i) + \sum_{i=1}^n e(z_i,y_i) \} \\
&=\inf_{(z_1,\ldots,z_n)} \{ \barSigma d(\f(x_1,\ldots,x_n), \f(z_1,\ldots,z_n)) + \barSigma e(\f(z_1,\ldots,z_n),\f(y_1,\ldots,y_n)) \} \\
&=(\barSigma d\cdot \barSigma e)(\f(x_1,\ldots,x_n),\f(y_1,\ldots,y_n)). \qedhere
\end{align*}
\end{proof}

\begin{lemma}
The map $\barSigma\colon \FRel_{\jne,X}\to \FRel_{\jne,\Sigma X}$ preserves directed joins.
\end{lemma}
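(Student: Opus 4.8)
The plan is to work entirely in the fibres of fuzzy relations and to exploit that, by faithfulness of $p_\jne$ (\Cref{lem:clat-fib-props}\ref{lem:clat-fib-props-faithful}), a lifting is determined by its action on objects; so it suffices to compute underlying fuzzy relations. The first step is to unwind the order and the notion of join. Since the order on the fibre $\FRel_{\jne,X}$ is the \emph{reversed} pointwise order of fuzzy relations (\Cref{ex:fibrations}\ref{ex:fibrations-frel2}), a $\sqleq$-join is a pointwise infimum, and a $\sqleq$-directed family $(d_i)_{i\in I}$ is a family that is \emph{downward} directed (filtered) in the pointwise order. Thus the goal $\barSigma(\bigsqcup_i d_i)=\bigsqcup_i \barSigma d_i$ becomes the statement that $\barSigma$ commutes with pointwise infima of filtered families, i.e.\ $\barSigma(\inf_i d_i)=\inf_i \barSigma d_i$ pointwise.

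Second, I would reduce this to a purely arithmetic claim by evaluating at an arbitrary pair of elements of $\Sigma X$. For two elements of \emph{different} shape (different leading operation symbol) every $\barSigma d$ assigns distance $1$, so both sides equal $1$ and there is nothing to prove. For matching shapes $\f(x_1,\dots,x_n)$ and $\f(y_1,\dots,y_n)$ with $n=\ar(\f)$, the definition of the summing lifting $\ol{\times}$ gives $\barSigma d(\f(x_1,\dots,x_n),\f(y_1,\dots,y_n)) = \min\{\sum_{k=1}^n d(x_k,y_k),\,1\}$ (the $n$-fold truncated addition $\boxplus$). Hence the lemma reduces, for each fixed $\f$ and $x_k,y_k$, to the identity $\min\{\sum_{k=1}^n (\inf_i d_i)(x_k,y_k),\,1\} = \inf_i \min\{\sum_{k=1}^n d_i(x_k,y_k),\,1\}$.

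Third, I would split this identity into two independent facts. (a) Truncation $t\mapsto \min\{t,1\}$ preserves arbitrary infima of reals; this is an elementary monotonicity argument. (b) An untruncated finite sum commutes with infima of a filtered family, i.e.\ $\sum_{k=1}^n \inf_i a^{(k)}_i = \inf_i \sum_{k=1}^n a^{(k)}_i$, where $a^{(k)}_i = d_i(x_k,y_k)$ and, crucially, the families $(a^{(k)}_i)_i$ all arise from the \emph{single} filtered family $(d_i)_i$. The inequality $\geq$ is automatic from monotonicity of $\inf$; the reverse inequality is the only real content and is where filteredness enters. Given $\epsilon>0$, choose for each coordinate $k$ an index $i_k$ with $a^{(k)}_{i_k}$ within $\epsilon$ of $\inf_i a^{(k)}_i$, then use filteredness of $(d_i)_i$ to pick a single index $j$ dominated pointwise by all the $d_{i_k}$, so that $a^{(k)}_j \le a^{(k)}_{i_k}$ for every $k$ simultaneously; summing over $k$ and letting $\epsilon\to 0$ yields $\le$. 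Combining (a) with (b) (the latter by an $n$-fold induction) establishes the reduced identity, and hence the lemma.

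The main obstacle is step (b): without filteredness the interchange of a finite sum and an infimum fails outright, so the heart of the argument is the choice of a common index $j$ dominated by all the coordinatewise near-optimal indices, which is exactly what directedness of the original family supplies. Everything else is routine bookkeeping.
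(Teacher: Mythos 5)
Your proof is correct and follows essentially the same route as the paper's: reduce to terms of the same shape (both sides being $1$ otherwise) and then interchange the finite sum of distances with the directed (pointwise-filtered) infimum, which is exactly the step the paper justifies by ``the meet is directed and the sum is finite.'' You are in fact slightly more careful than the paper in explicitly handling the truncation $\boxplus$ and in spelling out the common-index argument that directedness provides.
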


\begin{proof}
Let $d_i\colon X\times X\to [0,1]$ be a $\sqleq$-directed family of fuzzy relations with join $d$. We need to prove that 
$\Sigma d  = \bigsqcup_i \barSigma d_i$. It suffices to show $\Sigma d(s,t)  = (\bigsqcup_i \barSigma d_i)(s,t)$ for term $s,t\in \Sigma X$ of the same shape, for otherwise both sides are $1$. For all $n$-ary $\f\in \Sigma$ and $x_i,y_i\in X$ ($i=1,\ldots,n$), we have
 \begin{align*}
& \barSigma d(\f(x_1,\ldots,x_n),\f(y_1,\ldots,y_n)) \\
&=\sum_{k=1}^n d(x_k,y_k) \\
&=\sum_{k=1}^n \bigwedge_i d_i(x_k,y_k) \\
&=\bigwedge_i \sum_{k=1}^n d_i(x_k,y_k) \\
&= (\bigsqcup_i \barSigma d)(\f(x_1,\ldots,x_n),\f(y_1,\ldots,y_n))
\end{align*}
The third step uses that the meet is directed and the sum is finite.
\end{proof}

\begin{lemma}
The lifting $\barB$ is laxly monoidal.
\end{lemma}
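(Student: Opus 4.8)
The plan is to exploit the definition of $\barB$ as the composite $\barB=\barD\circ(-)_\bot\circ\barH$ (\Cref{ex:liftings}\ref{ex:liftings:dist},\ref{ex:lifting:bot},\ref{ex:liftings:hom-functor}) and to prove lax monoidality for each of the three constituent liftings separately, gluing them together at the end. Recall (\Cref{def:lifting}) that since the fibers $\FRel_{\jne,X}$ are quantales, lax monoidality of $\barB$ unfolds to the two conditions
\[ 1\sqleq\barB(1_X,1_Y) \qand \barB(P_1,Q_1)\cdot\barB(P_2,Q_2)\sqleq\barB(P_1\cdot P_2,Q_1\cdot Q_2) \]
for all $P_1,P_2\in\FRel_{\jne,X}$ and $Q_1,Q_2\in\FRel_{\jne,Y}$, where the fiber order $\sqleq$ is the \emph{reversed} pointwise order of fuzzy relations and the quantale product $\cdot$ is composition via truncated addition $\boxplus$. (This is exactly the form of the inequality consumed in the proof of \eqref{eq:howe-aux-2} in \Cref{prop:howe-closure-app-struc}.) The first reduction is formal: every lifting along a $\CLat_\sqcap$-fibration induces a \emph{monotone} map on fibers, and a short chase shows that if $G$ is a monotone lax monoidal covariant lifting and $\barH$ a lax monoidal mixed-variance lifting, then $G\barH(P_1,Q_1)\cdot G\barH(P_2,Q_2)\sqleq G(\barH(P_1,Q_1)\cdot\barH(P_2,Q_2))\sqleq G\barH(P_1\cdot P_2,Q_1\cdot Q_2)$, and similarly for units. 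Iterating over $G=(-)_\bot$ and $G=\barD$, it suffices to establish lax monoidality of $\barH$, of $(-)_\bot$, and of $\barD$ individually.

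For $\barH$ (\Cref{ex:liftings}\ref{ex:liftings:hom-functor}), writing $d_{Y^X}^{(a,b)}$ for the hom fuzzy relation built from $a$ on $X$ and $b$ on $Y$, the multiplication condition unfolds to $d_{Y^X}^{(a_1\cdot a_2,\,b_1\cdot b_2)}(f,h)\le\inf_{g\in Y^X}\bigl(d_{Y^X}^{(a_1,b_1)}(f,g)\boxplus d_{Y^X}^{(a_2,b_2)}(g,h)\bigr)$ for all $f,h\in Y^X$. Fixing $g$ and abbreviating the two defects by $\epsilon_1,\epsilon_2$, the defining inequalities give $b_1(f(x),g(x''))\le a_1(x,x'')+\epsilon_1$ and $b_2(g(x''),h(x'))\le a_2(x'',x')+\epsilon_2$; routing through the intermediate value $g(x'')$ and taking infima over $x''$ yields $(b_1\cdot b_2)(f(x),h(x'))\le(a_1\cdot a_2)(x,x')+(\epsilon_1\boxplus\epsilon_2)$, so the composite has defect at most $\epsilon_1\boxplus\epsilon_2$. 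The only bookkeeping here is the passage between $\boxplus$ and ordinary $+$, handled by the elementary inequality $m\boxplus(\delta_1+\delta_2)\le m+(\delta_1\boxplus\delta_2)$. The unit condition is immediate because the identity map has defect $0$ relative to the identity fuzzy relations. The lifting $(-)_\bot$ (\Cref{ex:liftings}\ref{ex:lifting:bot}) is dispatched by a case distinction on whether the arguments equal $\bot$: the clauses $d_\bot(\bot,-)=0$ and $d_\bot(-,\bot)=1$ make both $1\sqleq(1_X)_\bot$ and $(P)_\bot\cdot(Q)_\bot\sqleq(P\cdot Q)_\bot$ fall out directly.

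The main technical obstacle is lax monoidality of the Wasserstein lifting $\barD$ (\Cref{ex:liftings}\ref{ex:liftings:dist}), i.e.\ $\barD d\cdot\barD e\sqleq\barD(d\cdot e)$, which in the reversed order reads $\barD(d\cdot e)(\phi,\chi)\le\barD d(\phi,\psi)\boxplus\barD e(\psi,\chi)$ for every intermediate $\psi$. This is a \emph{gluing of couplings}. Writing $\phi=\sum_i p_i\cdot x_i$, $\psi=\sum_k r_k\cdot z_k$, $\chi=\sum_j q_j'\cdot y_j$ and taking optimal plans $(s_{i,k})$ from $\phi$ to $\psi$ and $(t_{k,j})$ from $\psi$ to $\chi$, I form the glued plan $u_{i,j}=\sum_{k:\,r_k>0} s_{i,k}t_{k,j}/r_k$ from $\phi$ to $\chi$; its marginals match because $\sum_i s_{i,k}=r_k$ and $\sum_j t_{k,j}=r_k$, and bounding its cost via $(d\cdot e)(x_i,y_j)\le d(x_i,z_k)\boxplus e(z_k,y_j)$ and $a\boxplus b\le a+b$ gives a total of at most $\barD d(\phi,\psi)+\barD e(\psi,\chi)$. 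Since $\barD(d\cdot e)(\phi,\chi)\in[0,1]$, this upgrades to the truncated bound. This is the exact analogue of the product-of-plans computation used for non-expansivity of $\mu$ in \Cref{lem:wasserstein-props}\ref{lem:wasserstein-props-mu-non-expansive}, and the genuine subtlety is that the index sets and the support of $\psi$ are countably infinite, so the glued plan and the cost rearrangements must be justified by the existence of optimal solutions for countably infinite transportation problems (\Cref{ex:liftings}\ref{ex:liftings:dist}). The unit condition $1\sqleq\barD 1_X$ holds because the identity plan realises distance $0$ on the diagonal while the off-diagonal entries of the identity fuzzy relation on $\D X$ are $1$. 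Combining the three lax monoidality statements through the composition principle of the first paragraph yields lax monoidality of $\barB$.
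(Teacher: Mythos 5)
Your proof is correct and follows essentially the same route as the paper's: decompose $\barB$ into $(-)_\bot$, $\barH$, and $\barD$, reduce to lax monoidality of each factor, handle $\barH$ by routing through an intermediate function and point, $(-)_\bot$ by a case split on $\bot$, and $\barD$ by composing transportation plans. You are somewhat more explicit where the paper is terse — the formal composition principle for lifted functors, the $\boxplus$-versus-$+$ bookkeeping, the concrete gluing formula $u_{i,j}=\sum_{k}s_{i,k}t_{k,j}/r_k$ for the coupling (which the paper dismisses as ``compose the two plans in the obvious way''), and the unit laws, which the paper's proof omits — but these are refinements of the same argument, all of which check out.
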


\begin{proof}
$\barB$ is composed of $(-)_\bot$, $\barH$, and $\barD$, so it suffices to show that all three lifted functors are laxly monoidal. 

\begin{enumerate}
\item For $(-)_\bot$, we need to show for $d,e\colon X\times X\to [0,1]$ and $x,x'\in \{\bot\}+X$ that
\[ (d\cdot e)_\bot(x,x') \leq (d_\bot\cdot e_\bot)(x,x').\]
For $x=\bot$ this holds because the left-hand side is $0$. For $x'=\bot$ it holds because either both sides are $0$ or the right-hand side is $1$. For $x,x'\in X$ it also holds because both sides are equal to $(d\cdot e)(x,y)$.
\item For $H$, we need to prove for $d,d'\colon X\to [0,1]$, $e,e'\colon Y\times Y\to [0,1]$ and $f,g\in Y^X$ that
\begin{equation}\label{eq:H-lax-mon} (e\cdot e')^{d\cdot d'}(f,g) \leq (e^d\cdot e'^{d'})(f,g).  \end{equation}
For all $x,x',x''\in X$ and $h\in Y^X$ we have
\begin{align*}
(e\cdot e')(f(x),g(x'')) &\leq e(f(x),h(x'))+e'(h(x'),g(x'')) \\
& \leq d(x,x')+e^d(f,h) + d'(x',x'') + e'^{d'}(h,g) \\
& = d(x,x')+d'(x',x'')+e^d(f,h)+e'^{d'}(h,g).
\end{align*}
Taking the infimum over all $h\in Y^X$ and $x'\in X$, thus proves
\[ (e\cdot e')(f(x),g(x'')) \leq (d\cdot d')(x,x'') + (e^d\cdot e'^{d'})(f,g). \]
Since this inequality holds for all $x,x''\in X$, we conclude \eqref{eq:H-lax-mon}.

\item For $\D$, let $d,e\colon X\times X\to [0,1]$. For $\phi,\psi\in \D X$ we need to prove that
\[ \barD(d\cdot e)(\phi,\psi) \leq (\barD d\cdot \barD e)(\phi,\psi). \]
For this it suffices to prove that for every $\xi \in \D X$, every transportation plan from $\phi$ to $\xi$ along $d$ of cost $C$, and every transportation plan from $\xi$ to $\psi$ along $e$ of cost $C'$, there is a transportation plan from $\phi$ to $\psi$ along $d\cdot e$ of cost at most $C+C'$. But this is clear: Just compose the two transportation plans in the obvious way.
\end{enumerate}
\end{proof}

\subsection*{Proof of \Cref{prop:ctx-greatest}}
\begin{enumerate}
\item $d_\ctx^\pBCK$ is adequate: apply its definition to the trivial context $C=[\cdot]$.
\item $d_\ctx^\pBCK$ is a $\barSigma$-congruence: Let $\f$ in $\Sigma$ and let $C[\cdot]$ be a context of input type, and let $t_i,s_i\in \Lambda$ for $i=\ldots,n$. We define 
\[ C_i[\cdot] := C[f(s_1,\cdots,s_{i-1},\cdot\,, t_{i+1},\cdots,t_n)] \qquad (i=1,\ldots,n). \]
Note that
\[
C_1[t_1]=C[f(t_1,\ldots,t_n)], \quad C_n[s_n]=C[f(s_1,\ldots,s_n)],\quad C_i[s_i]=C_{i+1}[t_{i+1}]\;\; (i=1,\ldots,n-1). 
\]
It follows that
\begin{align*}
&~|\gamma(C[\f(t_1,\ldots,t_n)])(\bot)-\gamma(C[s_1,\ldots,s_n](\bot))| \\
=&~ |\sum_{i=1}^{n} \big(\,\gamma(C_i[t_i])(\bot) - \gamma(C_{i}[s_i])(\bot)\,\big)| \\ 
\leq&~ \sum_{i=1}^{n} |\gamma(C_i[t_i])(\bot) - \gamma(C_{i+1}[s_i])(\bot)|\\
\leq&~ \sum_{i=1}^n d_\ctx^\pBCK(t_i,s_i).
\end{align*}
Since the above inequality holds for all contexts $C[\cdot]$, we conclude
\[ d_\ctx^\pBCK(\f(t_1,\ldots,t_n),\f(s_1,\ldots,s_n))\leq \sum_{i=1}^n d_\ctx^\pBCK(t_i,s_i), \]
proving that $d_\ctx^\pBCK$ is $\bar\Sigma$-congruence.
\item Every adequate $\barSigma$-congruence $d$ on $(\Lambda,\ini)$ satisfies $d\sqleq d_\ctx^\pBCK$: Let $C[\cdot]$ be a context and $t,s\in \Lambda$. Let us first observe that $d(C[t],C[s])\leq d(t,s)$. To see this, we proceed by structural induction on the context $C[\cdot]$. For the empty context $C=[\cdot]$ the statement is trivial. Thus suppose that $C=\f(t_1,\ldots,t_{i-1},C'[\cdot],t_{i+1},\cdots,t_n)$ for some $\f$ in $\Sigma$ and some context $C'[\cdot]$. Then
\[ d(C[t],C[s])\leq d(C'[t],C'[s]) + \sum_{j\neq i} d(t_j,t_j) \leq d(t,s)\]
where the first uses that $d$ is a $\barSigma$-congruence and the second one follows by induction; note that $d(t_j,t_j)=0$ because $d$ is a $\barSigma$-congruence and thus a pseudometric.

By adequacy of $d$ is now follows that
\[ |\gamma(C[t])(\bot)-\gamma(C[s])(\bot)| \leq d(C[t],C[s])\leq d(t,s),  \]
hence $d_\ctx^\pBCK(t,s)\leq d(t,s)$ for all $t,s$, which proves $d\sqleq d_\ctx^\pBCK$.\qedhere
\end{enumerate}

\subsection*{Proof of \Cref{lem:rho-pSKI-non-exp}}
For ease of notation, we denote by $\barR$ the derived relations on
 \[(\Lambda,R)^{(\Lambda,R)},\quad
\barR(\{\bot\}+(\Lambda,R)^{(\Lambda,R)}),\quad
 \barB((\Lambda,R),(\Lambda,R)),\quad \barB((\Lambda,R), \barB((\Lambda,R),(\Lambda,R))), \quad \barSigmas(\Lambda,R),\]
etc. Non-expansivity of $\rho$ means that if $\f$ in $\Sigma$ and  $(t_i,\phi_i,\Phi_i),(t_i',\phi_i',\Phi_i')\in \Lambda\times {B}(\Lambda,\Lambda)\times  {B}(\Lambda, {B}(\Lambda,\Lambda))$,
\begin{align*}
& R(t_i,t_i) \text{ and }  \barR(\phi_i,\phi_i') \text{ and }  \barR(\Phi_i,\Phi_i')\quad \text{for all $i=1,\dots,n$}\\
\implies~&
 \barR(\rho(\f((t_1,\phi_1,\Phi_1),\ldots, (t_n,\phi_n,\Phi_n))), \rho(\f((t_1',\phi_1',\Phi_1'),\ldots, (t_n',\phi_n',\Phi_n')))).  \end{align*}
For all operations $\f$ except $S$, $S'$, $S''$ this follows from \Cref{lem:rho-pBCK-non-exp} restricted to relations ($\{0,1\}$-valued fuzzy relations).
For $\f\in \{S,S',S''\}$ the map $\rho$ is defined as
\[ \rho(\f((t_1,-,-),\ldots,(t_n,-,-))) = 1\cdot (t\mapsto T)    \]
where $T=T[t,t_1,\ldots,t_n]\in \Sigmas\Lambda$ is a term in the variables $t,t_1,\ldots,t_n\in \Lambda$.  Now let $t_i,t_i'\in \Lambda$ and suppose that $R(t_i,t_i')$ for $i=1,\ldots,n$. By definition of the relation $\barR$ of the term algebra $\barSigma^\star(\Lambda,R)$, we have for all $t,t'\in \Lambda$ that
\[ R(t,t')\implies \barR(T[t,t_1,\dots,t_n], T[t',t_1',\ldots,t_n']).\]
This means that the functions $t\mapsto T[t,t_1,\ldots,t_n]$ and $t\mapsto T[t,t_1',\ldots,t_n']$ are related in the hom-set relation from \Cref{ex:liftings-rel}\ref{ex:liftings:hom-functor}. Therefore, $\barR(1\cdot (t\mapsto T[t,t_1,\ldots,t_n]),1\cdot (t\mapsto T[t,t_1'\ldots,t_n']))$ and so $\barR(\rho(\f((t_1,-,-),\ldots,(t_n,-,-)),\rho(\f((t_1',-,-),\ldots,(t_n',-,-))))$, as required.

\end{document}